\DeclareTextSymbolDefault{\textquotedbl}{T1}
\providecommand{\tabularnewline}{\\}
\DeclareRobustCommand{\lyxsout}[1]{\ifx\\#1\else\sout{#1}\fi}
\numberwithin{equation}{section}
\numberwithin{figure}{section}
\theoremstyle{plain}
\newtheorem{thm}{\protect\theoremname}
\theoremstyle{definition}
\newtheorem{defn}[thm]{\protect\definitionname}
\theoremstyle{plain}
\newtheorem{lem}[thm]{\protect\lemmaname}
\theoremstyle{remark}
\newtheorem{rem}[thm]{\protect\remarkname}
\theoremstyle{plain}
\newtheorem{cor}[thm]{\protect\corollaryname}
\theoremstyle{definition}
\newtheorem{example}[thm]{\protect\examplename}
\setlist[enumerate]{leftmargin=*,label=(\roman*),align=left}
\newcommand{\xyR}[1]{ \makeatletter
\xydef@\xymatrixrowsep@{#1} \makeatother} 
\newcommand{\xyC}[1]{ \makeatletter
\xydef@\xymatrixcolsep@{#1} \makeatother} 
\newcommand{\ra}{\longrightarrow}
\newcommand{\field}[1]{\mathbb{#1}}
\newcommand{\R}{\field{R}} 
\newcommand{\N}{\field{N}} 
\newcommand{\Z}{\ensuremath{\mathbb{Z}}} 
\newcommand{\D}{\mathcal{D}}
\newcommand{\eps}{\varepsilon} 
\renewcommand{\phi}{\varphi}
\newcommand{\diff}[1]{\ifmmode\mathchoice{\hbox{\rm d}#1}  
 {\hbox{\rm d}#1}  
 {\scalebox{0.75}{$\hbox{\rm d}#1$}}  
 {\scalebox{0.35}{$\hbox{\rm d}#1$}}  
 \fi} 
\newcommand{\abs}[2][\empty]{\ifx#1\empty\left|#2\right|%
\else#1\vert #2 #1\vert\fi}
\newcommand{\Coo}{\mbox{\ensuremath{\mathcal{C}}}^{\infty}} 
\DeclareMathOperator{\Set}{{\bf Set}} 
\newcommand{\st}[1]{{#1^\circ}} 
\newcommand{\Rtil}{\widetilde \R} 
\newcommand{\gs}{\mathcal{G}^s} 
\newcommand{\ns}{\mathcal{N}^s} 
\newcommand{\sint}[1]{\langle#1\rangle} 
\newcommand{\Eball}{B^{{\scriptscriptstyle \text{\rm E}}}} 
\newcommand{\csp}[1]{{\text{\rm c}}({#1})}
\newcommand{\fcmp}{\Subset_{\text{f}}}
\newcommand{\frontRise}[2]{\ifmmode\mathchoice{{\vphantom{#1}}^{\scalebox{0.6}{$#2$}}}  
 {{\vphantom{#1}}^{\scalebox{0.56}{$#2$}}}  
 {{\vphantom{#1}}^{\scalebox{0.47}{$#2$}}}  
 {{\vphantom{#1}}^{\scalebox{0.35}{$#2$}}}\fi} 
\newcommand{\RC}[1]{\frontRise{\R}{#1}\Rtil}
\newcommand{\frontRiseDown}[3]{\ifmmode\mathchoice{{\vphantom{#1}}^{\scalebox{0.6}{$#2$}}_{\scalebox{0.6}{$#3$}}}  
 {{\vphantom{#1}}^{\scalebox{0.56}{$#2$}}_{\scalebox{0.56}{$#3$}}}  
 {{\vphantom{#1}}^{\scalebox{0.47}{$#2$}}_{\scalebox{0.47}{$#3$}}}  
 {{\vphantom{#1}}^{\scalebox{0.35}{$#2$}}_{\scalebox{0.35}{$#3$}}}\fi} 
\newcommand{\rcrho}{\RC{\rho}}
\newcommand{\rti}{\RC{\rho}}
\newcommand{\rccrho}{\frontRise{\mathbb{C}}{\rho}\widetilde{\mathbb{C}}}
\newcommand{\gsf}{\frontRise{\mathcal{G}}{\rho}\mathcal{GC}^{\infty}}
\newcommand{\gckf}{\frontRise{\mathcal{G}}{\rho}\mathcal{GC}^{k}}
\newcommand{\gcf}[1]{\frontRise{\mathcal{G}}{\rho}\mathcal{GC}^{#1}}
\newcommand{\Dgsf}{\frontRise{\mathcal{D}}{\rho}\mathcal{GD}}
\newcommand{\subzero}{\subseteq_{0}}
\newcommand{\dom}[1]{\text{dom}(#1)}
\newcommand{\sbpt}[1]{#1_{\text{\rm s}}}
\newcommand{\norm}[1]{\Vert{#1}\Vert}
\newcommand{\ptind}{\displaystyle \mathop {\ldots\ldots\,}} 
\newcommand{\supp}{\mbox{supp}}
\newcommand{\cinfty}{{\mathcal C}^\infty}
\newcommand{\comp}{\Subset}
\newcommand{\esm}{{\mathcal E}_M}
\newcommand{\Om}{\Omega}
\newcommand{\sse}{\subseteq}
\providecommand{\corollaryname}{Corollary}
\providecommand{\definitionname}{Definition}
\providecommand{\examplename}{Example}
\providecommand{\lemmaname}{Lemma}
\providecommand{\remarkname}{Remark}
\providecommand{\theoremname}{Theorem}
\begin{document}
\title{Infinitesimal and infinite numbers in applied mathematics}
\author{Aleksandr Bryzgalov \and Kevin Islami \and Paolo Giordano}
\thanks{A.~Bryzgalov had been supported by grant P34113 of the Austrian Science
Fund FWF.}
\address{\textsc{The Martin Luther University Halle-Wittenberg (MLU), Faculty
of Medicine, Institute of Medical Epidemiology, Biometry and Informatics
(IMEBI), Magdeburger Straße 8, 06112 Halle (Saale), Germany}}
\email{aleksandr.bryzgalov@uk-halle.de}
\thanks{K.~Islami has been supported by grants P30407, P34113 and P33538
of the Austrian Science Fund FWF.}
\address{\textsc{Faculty of Mathematics, University of Vienna, Austria, Oskar-Morgenstern-Platz
1, 1090 Wien, Austria}}
\email{kevin.islami@univie.ac.at}
\thanks{P.~Giordano has been supported by grants P30407, P33538, P33945 and
P34113 of the Austrian Science Fund FWF.}
\address{\textsc{Faculty of Mathematics, University of Vienna, Austria, Oskar-Morgenstern-Platz
1, 1090 Wien, Austria}}
\subjclass[2020]{46Fxx, 46F30, 12J25, 37Nxx.}
\email{paolo.giordano@univie.ac.at}
\keywords{Schwartz distributions, generalized functions for nonlinear analysis,
non-Archimedean analysis, singular dynamical systems.}
\begin{abstract}
The need to describe abrupt changes or response of nonlinear systems
to impulsive stimuli is ubiquitous in applications. Also the informal
use of infinitesimal and infinite quantities is still a method used
to construct idealized but tractable models within the famous J.~von
Neumann \emph{reasonably wide area} of applicability. We review the
theory of \emph{generalized smooth functions} as a candidate to address
both these needs: a rigorous but simple language of infinitesimal
and infinite quantities, and the possibility to deal with continuous
and generalized function as if they were smooth maps: with pointwise
values, free composition and hence nonlinear operations, all the classical
theorems of calculus, a good integration theory, and new existence
results for differential equations. We exemplify the applications
of this theory through several models of singular dynamical systems:
deduction of the heat and wave equations extended to generalized functions,
a singular variable length pendulum wrapping on a parallelepiped,
the oscillation of a pendulum damped by different media, a nonlinear
stress-strain model of steel, singular Lagrangians as used in optics,
and some examples from quantum mechanics.
\end{abstract}

\maketitle
\tableofcontents{}

\section{\label{sec:Intro}Introduction: informal uses of infinitesimals and infinities in applied mathematics}

Even if infinitesimal numbers have been banished by modern mathematics,
several physicists, engineers and mathematicians still profitably
continue to use them. Usually, this is in order to simplify calculations,
to construct idealized but notwithstanding interesting models of physical
systems, or to relate different parts of physics, such as in passing
from quantum to classical mechanics if $\hbar$ is infinitesimal.
An authoritative example in this direction is given by A.~Einstein
when he writes
\begin{equation}
\frac{1}{\sqrt{1-{\displaystyle \frac{v^{2}}{c^{2}}}}}=1+\frac{v^{2}}{2c^{2}}\qquad\qquad\sqrt{1-h_{44}(x)}=1-\frac{1}{2}h_{44}(x)\label{eq:EinsteinInfinitesimal}
\end{equation}
with explicit use of infinitesimals $v/c\ll1$ or $h_{44}(x)\ll1$
such that e.g.~$h_{44}(x)^{2}=0$. More generally, in \cite{Ein}
Einstein writes the formula (using the equality sign and not the approximate
equality sign $\simeq$) 
\begin{equation}
f(x,t+\tau)=f(x,t)+\tau\cdot\frac{\partial f}{\partial t}(x,t)\label{eq:EinsteinDerivationFormula}
\end{equation}
justifying it with the words ``\emph{since $\tau$ is very small}'';
note that \eqref{eq:EinsteinInfinitesimal} are a particular case
of the general \eqref{eq:EinsteinDerivationFormula}. Also P.A.M.~Dirac
in \cite{Dir} writes an analogous equality studying the Newtonian
approximation in general relativity.

A certain degree of inconsistency appears also at the level of elementary
topics, e.g.~in the deduction of the wave and heat equations, see
e.g.~\cite{Vla71}. For example, if $u(x,t)$ is the string displacement,
then formula \eqref{eq:EinsteinDerivationFormula} is once again used
e.g.~``to ignore magnitudes of order greater than $\frac{\partial u}{\partial x}$''.
This means that we need to have $\left(\frac{\partial u}{\partial x}\right)^{2}=0$
to arrive at the final equation with an equality sign and not with
some kind of approximation $\simeq$. But then the length of the string
becomes
\[
L=\int_{a}^{b}\sqrt{1+\left[\frac{\partial u}{\partial x}(x,t)\right]^{2}}\,\diff{x}=b-a,
\]
and its clear that this necessarily yields that the function $u$
is constant. It clearly does not really help to use $\simeq$ when
we have a contradiction, but then to change it into $=$ when we need
the final equation. It is for this type of motivations that V.I.~Arnol’d
in \cite{Arn90} wrote: \emph{Nowadays, when teaching analysis, it
is not very popular to talk about infinitesimal quantities. Consequently
present-day students are not fully in command of this language. Nevertheless,
it is still necessary to have command of it}.

A similar, but sometimes more troublesome, practice concerns the use
of infinite numbers. A typical example is given by Heisenberg's uncertainty
principle
\begin{equation}
\Delta p_{x}\Delta x\geq\frac{\hbar}{2}.\label{eq:Heisenberg}
\end{equation}
It is frequently informally argued that if the position $x$ is measured
by a Dirac delta, then $\Delta x\approx0$ is infinitesimal; thereby,
\eqref{eq:Heisenberg} necessarily implies that $\Delta p_{x}$ must
be an infinite number.

Another classical example of informal use of infinite numbers concerns
Schwartz distributions and their point values. Many relevant physical
systems are in fact described by singular Hamiltonians. Among them,
we can e.g.~list:
\begin{enumerate}
\item Non smooth classical mechanics. For example, a classical particle
(or a high-frequency wave) moving through discontinuous media containing
barriers or interfaces where the Hamiltonian is discontinuous: see
e.g.~\cite{Bro1,Bro2,daSMa,TeKeTo,JiWuHu08,SaPaMa,SiJe13} and references
therein.
\item Discontinuous Lagrangian in geometrical optics: see e.g.~\cite{LaGhTh11}.
\item Nonlinear deformation in continuum mechanics, which includes non differentiable
stress-strain relations: see e.g.~\cite{UgFe11,ChCh11}.
\item The use of infinite quantities in quantum mechanics. An elementary
example is given by the solution of the stationary Schrödinger equation
for an infinite rectangular potential well (a case that cannot be
formalized using Schwartz distributions, see e.g.~\cite{GaPa90}).
\end{enumerate}
This type of problems is hence widely studied from the mathematical
point of view (see e.g.~\cite{CrLi83,Lim89,LeMoSj91,Tan93,Kunzle96,KuObStVi,LiHuZh14}),
even if the presented solutions are not general and hold only in special
conditions. In this sense, the fact that J.D.~Marsden's works \cite{Mar68,Mar69}
did not start a consolidated research thread to study singular Hamiltonian
mechanics using Schwartz distributions, can be considered as a clue
that the classical distributional framework is not well suited to
face this problem in general terms.

A related problem concerns nonlinear operations on Schwartz distributions,
which can be simply presented as follows. Let $A$ be an associative
and commutative algebra endowed with a derivation (satisfying the
Leibniz rule). Then any element $H$ of $A$ such that $H\cdot H=H$
is necessarily a constant, that is, $H'=0$. Indeed, $(H^{2})'=2HH'$
and $(H^{3})'=3H^{2}H'$ . Now $H=H^{2}=H^{3}$ , so this implies
$2HH'=H'=3HH'$ . Therefore, $HH'=0$ and hence also $H'=2HH'=0$.
Even worse, we also recall that Dirac in \cite{Dir2} uses terms of
the form $\sqrt{\delta}$ in his proposal for the foundation of quantum
mechanics.

There are obviously many possibilities to formalize this kind of intuitive
reasonings, obtaining a more or less good dialectic between informal
and formal thinking, and nowadays there are indeed several rigorous
theories of infinitesimals, infinities and generalized functions.
Concerning the notion of infinitesimal, we can distinguish two definitions:
in the first one we have at least a ring $R$ containing the real
field $\R$ and infinitesimals are elements $h\in R$ such that $-r<h<r$
for every positive standard real $r\in\R_{>0}$. The second type of
infinitesimal is defined using some algebraic property of nilpotency,
i.e. $h^{n}=0$ for some natural number $n\in\N$ (therefore, in this
case $h$ cannot be trivially invertible and we cannot form infinities
as reciprocal of infinitesimals). For some ring $R$ these definitions
can coincide, but anyway they lead, of course, only to the trivial
infinitesimal $h=0$ if $R=\R$ is the real field.

Mathematical theories of infinitesimals can also be classified as
belonging to two different classes. In the first one, we have theories
needing a certain amount of non trivial results of mathematical logic,
whereas in the second one we have attempts to define sufficiently
strong theories of infinitesimals without the use of non trivial results
of mathematical logic. In the first class, we can list \emph{nonstandard
analysis} and \emph{synthetic differential geometry} (also called
\emph{smooth infinitesimal analysis}, see e.g. \cite{Bel,Koc,Lav,MoRe}),
in the second one we have, e.g., \emph{Weil functors} (see \cite{KrMi2}),
\emph{Levi-Civita field} (see \cite{ShBe}), \emph{surreal numbers}
(see \cite{Con}), \emph{Fermat reals} (see \cite{Gio10}), \emph{Colombeau's
generalized numbers} (see \cite{C1} and \cite{Ehr,Kap} for a general
survey). More precisely, we can say that to work both in nonstandard
analysis and in synthetic differential geometry, one needs a formal
control stronger than the one used in ``standard mathematics''.
Indeed, to use nonstandard analysis one has to be able to formally
write sentences in order to apply the transfer theorem. Whereas synthetic
differential geometry does not admit models in classical logic, but
in intuitionistic logic only, and hence we have to be sure that in
our proofs there is no use of the law of the excluded middle, or e.g.~of
the classical part of De Morgan's law or of some form of the axiom
of choice or of the implication of double negation toward affirmation
and any other logical principle which do not hold in intuitionistic
logic. Physicists, engineers, but also the greatest part of mathematicians
are not used to have this strong formal control in their work, and
it is for this reason that there are attempts to present both nonstandard
analysis and synthetic differential geometry reducing as much as possible
the necessary formal control, even if at some level this is technically
impossible (see e.g. \cite{Kei,Hen}, and \cite{BeDN1,BeDN2} for
nonstandard analysis; \cite{Bel} and \cite{Lav} for synthetic differential
geometry).

On the other hand, nonstandard analysis is surely the best known theory
of invertible infinitesimals with results in several areas of mathematics
and its applications, see e.g. \cite{AFHL}. Synthetic differential
geometry is a theory of nilpotent infinitesimals with non trivial
results of differential geometry in infinite dimensional spaces.

Concerning mathematical theories of generalized functions, the difficulties
stem\-med from dealing with the lacking of well-posedness in PDE
initial value problems led to a zoo of spaces of generalized functions.
In an incomplete list we can mention: Schwartz distributions, Colombeau
generalized functions, ultradistributions, hyperfunctions, nonstandard
theory of Colombeau generalized functions, ultrafunctions, etc. See
e.g.~\cite{HoPi05} for a survey, and the International Conference
on Generalized Functions series, e.g.~\url{https://ps-mathematik.univie.ac.at/e/index.php?event=GF2022}.

Unfortunately, there is a certain lacking of dialog between the most
used theory of generalized functions, i.e.~Schwartz distributions,
and the actual use of generalized functions in physics and engineering,
where e.g.~point values and nonlinear operations are frequently needed,
see e.g.~\cite{C1}.

In the present paper, we describe the main results of \emph{generalized
smooth functions} (GSF) theory and some of its applications in applied
mathematics. GSF are an extension of classical distribution theory
and of Colombeau theory, which makes it possible to model nonlinear
singular problems, while at the same time sharing a number of fundamental
properties with ordinary smooth functions, such as the closure with
respect to composition and several non trivial classical theorems
of the calculus, see \cite{Gio-Kun-Ver15,Gio-Kun-Ver19,Gio-Kun18a,Gio-Kun16,LuGi20b,GiLu20a}.
One could describe GSF as a methodological restoration of Cauchy-Dirac's
original conception of generalized function (GF), see \cite{Laug89}.
In essence, the idea of Cauchy and Dirac (but also of Poisson, Kirchhoff,
Helmholtz, Kelvin and Heaviside) was to view generalized functions
as suitable types of smooth set-theoretical maps obtained from ordinary
smooth maps depending on suitable infinitesimal or infinite parameters.

The calculus of GSF is closely related to classical analysis, in fact:
\begin{enumerate}
\item GSF are set-theoretical maps defined on, and attaining values in the
non-Ar\-chi\-me\-de\-an ring $\rcrho$ of Robinson-Colombeau.
Therefore, in $\rcrho$ we have infinitesimals, infinities and also
a suitable language of nilpotent infinitesimals, see Sec.~\ref{sec:Numbers},
Sec.~\ref{sec:Functions}.
\item GSF include all Colombeau generalized functions and hence also all
Schwartz distributions, see Sec.~\ref{sec:Functions}.
\item They allow nonlinear operations on GF and to compose them unrestrictedly,
so that terms such as $\delta^{2}(x)$ or even $\delta(\delta(x))$
are possible, see Sec.~\ref{sec:Functions}.
\item GSF allow us to prove a number of analogues of theorems of classical
analysis: e.g., mean value theorem, intermediate value theorem, extreme
value theorem, Taylor's theorem, local and global inverse function
theorem, integrals via primitives, multidimensional integrals, theory
of compactly supported GF. Therefore, this approach to GF results
in a flexible and rich framework which allows both the formalization
of calculations appearing in physics and engineering and the development
of new applications in mathematics and mathematical physics. Some
of these results are presented in Sec.~\ref{sec:Functions}.
\item Several results of the classical theory of calculus of variations
and optimal control can be developed for GSF: the fundamental lemma,
second variation and minimizers, necessary Legendre condition, Jacobi
fields, conjugate points and Jacobi’s theorem, Noether’s theorem,
see \cite{LeLuGi17,FGBL}.
\item The closure with respect to composition leads to a solution concept
of differential equations close to the classical one. In GSF theory,
we have a non-Archimedean version of the Banach fixed point theorem,
a Picard-Lindelöf theorem for both ODE and PDE, results about the
maximal set of existence, Gronwall theorem, flux properties, continuous
dependence on initial conditions, full compatibility with classical
smooth solutions, etc., see Sec.~\ref{sec:Differential-equations-1}.
\end{enumerate}
As we will see in Sec.~\ref{sec:Formal-deductions} and Sec.~\ref{sec:Examples-of-applications},
using GSF theory, we have a rigorous theory of infinitesimal and infinite
numbers that can be used to develop mathematical models of physical
problems. On the other hand, it is also a flexible theory of GF that
can be used to model situations with singular (non smooth) physical
quantities. One of the main aim of this paper is to show that within
GSF theory several informal calculations of physics or engineering
now become perfectly rigorous without detaching too much from the
original deduction.

The structure of the paper is as follows: in Sec.~\ref{sec:Numbers},
we introduce our new ring of scalars $\rcrho$, the ring of Robinson-Colombeau.
In Sec.~\ref{sec:Functions}, we define GSF as suitable set-theoretical
functions defined and valued in the new ring of scalars; we will also
see the relationships with Colombeau GF and hence with Schwartz distributions.
In Sec.~\ref{sec:Differential-equations-1}, we see the Picard-Lindelöf
theorem for singular nonlinear ODE involving GSF. In Sec\@.~\ref{sec:Formal-deductions},
we see how to transform the classical deductions of the wave and heat
equations into formal mathematical theorems whose scope includes GSF.
Finally, in Sec.~\ref{sec:Examples-of-applications} we see applications
to non-smooth mechanics, an empirical non-linear stress-strain model
for steel, some applications in optics with discontinuous Lagrangians,
how to see the classical finite and infinite potential wells of QM
within GSF theory.

The paper is a review of GSF theory, so it is self-contained in the
sense that it contains all the statements required for the proofs
of Sec.~\ref{sec:Formal-deductions} and Sec.~\ref{sec:Examples-of-applications}.
We also introduce clear intuitions about the new mathematical objects
of this theory and references for the complete proofs. Therefore,
to understand this paper, only a basic knowledge of distribution theory
is needed.

\section{\label{sec:Numbers}Numbers: The ring of Robinson-Colombeau}

In this first section, we introduce our non-Archimedean ring of scalars.
More in depth details about the basic notions and the omitted proofs
as well can be found in \cite{Gio-Kun-Ver19,Gio-Kun-Ver15}.
\begin{defn}
\label{def:RCGN}Let $I:=(0,1]$ and $\rho=(\rho_{\eps})\in(0,1]^{I}$
be a net such that $(\rho_{\eps})\to0$ as $\eps\to0^{+}$ (in the
following, such a net will be called a \emph{gauge}), then
\begin{enumerate}
\item $\mathcal{I}(\rho):=\left\{ (\rho_{\eps}^{-a})\mid a\in\R_{>0}\right\} $
is called the \emph{asymptotic gauge} generated by $\rho$.
\item If $\mathcal{P}(\eps)$ is a property of $\eps\in I$, we use the
notation $\forall^{0}\eps:\,\mathcal{P}(\eps)$ to denote $\exists\eps_{0}\in I\,\forall\eps\in(0,\eps_{0}]:\,\mathcal{P}(\eps)$.
We can read $\forall^{0}\eps$ as: \emph{``for $\eps$ small}''.
\item We say that a net $(x_{\eps})\in\R^{I}$ \emph{is $\rho$-moderate},
and we write $(x_{\eps})\in\R_{\rho}$, if 
\begin{equation}
\exists(J_{\eps})\in\mathcal{I}(\rho):\ x_{\eps}=O(J_{\eps})\text{ as }\eps\to0^{+},\label{eq:yourlabel}
\end{equation}
i.e., if 
\[
\exists N\in\N\,\forall^{0}\eps:\ |x_{\eps}|\le\rho_{\eps}^{-N}.
\]
\item Let $(x_{\eps})$, $(y_{\eps})\in\R^{I}$, then we say that $(x_{\eps})\sim_{\rho}(y_{\eps})$
if 
\[
\forall(J_{\eps})\in\mathcal{I}(\rho):\ x_{\eps}=y_{\eps}+O(J_{\eps}^{-1})\text{ as }\eps\to0^{+},
\]
that is if 
\begin{equation}
\forall n\in\N\,\forall^{0}\eps:\ |x_{\eps}-y_{\eps}|\le\rho_{\eps}^{n}.\label{eq:rhoNegligible}
\end{equation}
This is a congruence relation on the ring $\R_{\rho}$ of moderate
nets with respect to pointwise operations, and we can hence define
\begin{equation}
\RC{\rho}:=\R_{\rho}/\sim_{\rho},\label{eq:RCN}
\end{equation}
which we call \emph{Robinson-Colombeau ring of generalized numbers}.
\end{enumerate}
\end{defn}

\noindent This name is justified by \cite{Rob73,C1}: Indeed, in \cite{Rob73}
A.~Robinson introduced the notion of moderate and negligible nets
depending on an arbitrary fixed infinitesimal $\rho$ (in the framework
of nonstandard analysis); independently, J.F.~Colombeau, cf.~e.g.~\cite{C1}
and references therein, studied the same concepts without using nonstandard
analysis, but considering only the particular infinitesimal $\rho_{\eps}=\eps$.
Equivalence classes of the quotient ring \eqref{eq:RCN} are simply
denoted with $[x_{\eps}]:=[(x_{\eps})_{\eps}]_{\sim_{\rho}}\in\rti$.

Considering constant net $x_{\eps}=r\in\R$ we have the embedding
$\R\subset\rcrho$. We define $[x_{\eps}]\le[y_{\eps}]$ if there
exists $(z_{\eps})\in\R^{I}$ such that $(z_{\eps})\sim_{\rho}0$
(we then say that $(z_{\eps})$ is \emph{$\rho$-negligible}) and
$x_{\eps}\le y_{\eps}+z_{\eps}$ for $\eps$ small. Equivalently,
we have that $x\le y$ if and only if there exist representatives
$[x_{\eps}]=x$ and $[y_{\eps}]=y$ such that $x_{\eps}\le y_{\eps}$
for all $\eps$. A proficient intuitive point of view on these generalized
numbers is to think at $[x_{\eps}]\in\rcrho$ as a \emph{dynamic point
in the time $\eps\to0^{+}$}; classical real numbers are hence \emph{static
points}. We say that $x=[x_{\eps}]\in\rti$ is \emph{near-standard}
if $\exists\lim_{\eps\to0^{+}}x_{\eps}=:\st{x}\in\R$.

Even though the order $\le$ is not total, we still have the possibility
to define the infimum $[x_{\eps}]\wedge[y_{\eps}]:=[\min(x_{\eps},y_{\eps})]$,
the supremum $[x_{\eps}]\vee[y_{\eps}]:=\left[\max(x_{\eps},y_{\eps})\right]$
of a finite amount of generalized numbers. Henceforth, we will also
use the customary notation $\RC{\rho}^{*}$ for the set of invertible
generalized numbers, and we write $x<y$ to say that $x\le y$ and
$x-y\in\rcrho^{*}$, i.e.~if $x$ is less of equal to $y$ and $x-y$
is invertible. The intervals are denoted by: $[a,b]:=\{x\in\RC{\rho}\mid a\le x\le b\}$,
$[a,b]_{\R}:=[a,b]\cap\R$. Finally, we set $\diff{\rho}:=[\rho_{\eps}]\in\RC{\rho}$,
which is a positive invertible infinitesimal, whose reciprocal is
$\diff{\rho}^{-1}=[\rho_{\eps}^{-1}]$, which is necessarily a strictly
positive infinite number. It is remarkable to note that $x=[x_{\eps}]\in\rti$
is an infinitesimal number, i.e.~$|x|\le r$ for all $r\in\R_{>0}$,
denoted by $x\approx0$, if and only if $\lim_{\eps\to0^{+}}x_{\eps}=0$;
similarly, $x$ is an infinite number, i.e.~$|x|\ge r$ for all $r\in\R_{>0}$,
if and only if $\lim_{\eps\to0^{+}}|x_{\eps}|=+\infty$. This intuitively
clear result is not possible neither in nonstandard analysis nor in
synthetic differential geometry, see \cite{Gio10,Hen,Koc}.

The following result proves to be useful in dealing with positive
and invertible generalized numbers. For its proof, see e.g.~\cite{GKOS}.
\begin{lem}
\label{lem:mayer} Let $x\in\RC{\rho}$. Then the following are equivalent:
\begin{enumerate}
\item \label{enu:positiveInvertible}$x$ is invertible and $x\ge0$, i.e.~$x>0$.
\item \label{enu:strictlyPositive}For each representative $(x_{\eps})\in\R_{\rho}$
of $x$ we have $\forall^{0}\eps:\ x_{\eps}>0$.
\item \label{enu:greater-i_epsTom}For each representative $(x_{\eps})\in\R_{\rho}$
of $x$ we have $\exists m\in\N\,\forall^{0}\eps:\ x_{\eps}>\rho_{\eps}^{m}$.
\item \label{enu:There-exists-a}There exists a representative $(x_{\eps})\in\R_{\rho}$
of $x$ such that $\exists m\in\N\,\forall^{0}\eps:\ x_{\eps}>\rho_{\eps}^{m}$.
\end{enumerate}
\end{lem}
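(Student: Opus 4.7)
The plan is to prove the cyclic chain (i) $\Rightarrow$ (iv) $\Rightarrow$ (iii) $\Rightarrow$ (ii) $\Rightarrow$ (i). Two of the arrows are essentially free: (iii) $\Rightarrow$ (iv) holds because (iv) is just (iii) restricted to the existence of one representative, and (iii) $\Rightarrow$ (ii) follows from $\rho_\eps^m > 0$. The step (iv) $\Rightarrow$ (iii) amounts to showing that a lower bound of the form $x_\eps > \rho_\eps^m$ is preserved under $\rho$-negligible perturbations: any other representative of $x$ has the form $(x_\eps + w_\eps)$ with $|w_\eps| \le \rho_\eps^{m+2}$ for $\eps$ small, and then $x_\eps + w_\eps \ge \rho_\eps^m(1 - \rho_\eps^2) > \rho_\eps^{m+1}$ eventually, using $\rho_\eps \to 0^+$.

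For (i) $\Rightarrow$ (iv), I would exploit the equivalent characterization of $\le$ stated just after Definition~\ref{def:RCGN} to choose a representative $(x_\eps) \in \R_\rho$ of $x$ with $x_\eps \ge 0$ for every $\eps$. From $x \in \rcrho^*$, pick $y = [y_\eps] \in \rcrho$ with $xy = 1$; then $x_\eps y_\eps = 1 + z_\eps$ for some $\rho$-negligible $(z_\eps)$, so $x_\eps y_\eps \ge 1/2$ for $\eps$ small. Since $(y_\eps)$ is $\rho$-moderate, $|y_\eps| \le \rho_\eps^{-N}$ eventually for some $N \in \N$, and combining these bounds yields $x_\eps \ge \rho_\eps^N / 2 \ge \rho_\eps^{N+1}$ for $\eps$ small.

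The main obstacle is (ii) $\Rightarrow$ (i). Non-negativity is immediate: taking any representative, by (ii) it satisfies $x_\eps \ge 0$ for $\eps$ small, so $0 \le x$ straight from the definition of $\le$. For invertibility I would argue by contraposition. If $x$ were not invertible, then for a representative $(x_\eps)$ with $x_\eps > 0$ for $\eps$ small (which exists by (ii)), the candidate inverse net $(1/x_\eps)$ cannot be $\rho$-moderate, i.e., for every $k \in \N$ there are arbitrarily small $\eps$ with $x_\eps < \rho_\eps^k$. A diagonal extraction then produces a strictly decreasing sequence $\eps_k \to 0^+$ with $x_{\eps_k} < \rho_{\eps_k}^k$. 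Setting $w_\eps := -2 x_\eps$ when $\eps = \eps_k$ for some $k$ and $w_\eps := 0$ otherwise, a direct check shows $(w_\eps)$ to be $\rho$-negligible (for each fixed $n$, the values $|w_{\eps_k}| \le 2\rho_{\eps_k}^k$ are dominated by $\rho_{\eps_k}^n$ as soon as $k > n$ and $\rho_{\eps_k} \le 1/2$). Consequently $(x_\eps + w_\eps)$ is another representative of $x$ with $x_{\eps_k} + w_{\eps_k} = -x_{\eps_k} < 0$ along $\eps_k \to 0^+$, contradicting (ii) applied to that representative.
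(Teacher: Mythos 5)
Your proof is correct. The paper gives no proof of this lemma (it simply cites \cite{GKOS}), and your argument is essentially the standard one found there: the only nontrivial implication, (ii) $\Rightarrow$ (i), is handled exactly as in the classical treatment by observing that non-invertibility of $x$ forces $(1/x_{\eps})$ to be non-moderate, extracting a subsequence $\eps_{k}\to 0^{+}$ with $x_{\eps_{k}}<\rho_{\eps_{k}}^{k}$, and perturbing by the $\rho$-negligible net that flips the sign along that subsequence to produce a representative contradicting (ii); the remaining implications are routine and you verify them correctly.
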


One can clearly feel insecure in working with a ring of scalar which
is not a totally ordered field. On the one hand, we can reread the
list of results presented in Sec.~\ref{sec:Intro} to get a reassurance
that these properties are actually not indispensable to obtain all
these well-known classical results. On the other hand, using the notion
of \emph{subpoint} (e.g.~a meaningful case is given by a subpoint
$[x_{\eps_{n}}]$ of $[x_{\eps}]$ which is considered only on a sequence
$(\eps_{n})_{n\in\N}\to0^{+}$), see \cite{MTAG}, we developed very
practical substitutes of both the field and the total order property.

\subsection{\label{subsec:subpt}The language of subpoints}

The following simple language allows us to simplify some proofs using
steps recalling the classical real field $\R$. We first introduce
the notion of \emph{subpoint}:
\begin{defn}
For subsets $J$, $K\subseteq I$ we write $K\subzero J$ if $0$
is an accumulation point of $K$ and $K\sse J$ (we read it as: $K$
\emph{is co-final in $J$}). Note that for any $J\subzero I$, the
constructions introduced so far in Def.~\ref{def:RCGN} can be repeated
using nets $(x_{\eps})_{\eps\in J}$. We indicate the resulting ring
with the symbol $\rcrho^{n}|_{J}$. More generally, no peculiar property
of $I=(0,1]$ will ever be used in the following, and hence all the
presented results can be easily generalized considering any other
directed set. If $K\subzero J$, $x\in\rcrho^{n}|_{J}$ and $x'\in\rcrho^{n}|_{K}$,
then $x'$ is called a \emph{subpoint} of $x$, denoted as $x'\subseteq x$,
if there exist representatives $(x_{\eps})_{\eps\in J}$, $(x'_{\eps})_{\eps\in K}$
of $x$, $x'$ such that $x'_{\eps}=x_{\eps}$ for all $\eps\in K$.
Intuitively, $x'$ is a subpoint of $x$ if the net of dynamical values
$(x'_{\eps})$ is extracted from $(x_{\eps})$ only for all $\eps\in K$,
and in this $K\subseteq I$ we can take $\eps\to0^{+}$. In other
words, it is the same dynamical point but only for some $\eps$-times
that accumulate around $0^{+}$. In this case we write $x'=x|_{K}$,
$\dom{x'}:=K$, and the restriction $(-)|_{K}:\rcrho^{n}\ra\rcrho^{n}|_{K}$
is a well defined operation. In general, for $X\sse\rcrho^{n}$ we
set $X|_{J}:=\{x|_{J}\in\rcrho^{n}|_{J}\mid x\in X\}$.
\end{defn}

In the next definition, we introduce binary relations that hold only
\emph{on subpoints}. This idea is inherited from nonstandard analysis,
where cofinal subsets are always taken in a fixed ultrafilter.
\begin{defn}
Let $x$, $y\in\rcrho$, $L\subzero I$, then we say
\begin{enumerate}
\item $x<_{L}y\ :\iff\ x|_{L}<y|_{L}$ (the latter inequality has to be
meant in the ordered ring $\rcrho|_{L}$). We read $x<_{L}y$ as ``\emph{$x$
is less than $y$ on $L$}''.
\item $x\sbpt{<}y\ :\iff\ \exists L\subzero I:\ x<_{L}y$. We read $x\sbpt{<}y$
as ``\emph{$x$ is less than $y$ on subpoints''.}
\end{enumerate}
Analogously, we can define other relations holding only on subpoints
such as e.g.: $\sbpt{\in}$, $\sbpt{\le}$, $\sbpt{=}$, $\sbpt{\subseteq}$,
etc.
\end{defn}

\noindent For example, we have
\begin{align*}
x\le y\  & \iff\ \forall L\subzero I:\ x\le_{L}y\\
x<y\  & \iff\ \forall L\subzero I:\ x<_{L}y,
\end{align*}
the former following from the definition of $\le$, whereas the latter
following from Lem.~\ref{lem:mayer}. Moreover, if $\mathcal{P}\left\{ x_{\eps}\right\} $
is an arbitrary property of $x_{\eps}$, then
\begin{equation}
\neg\left(\forall^{0}\eps:\ \mathcal{P}\left\{ x_{\eps}\right\} \right)\ \iff\ \exists L\subzero I\,\forall\eps\in L:\ \neg\mathcal{P}\left\{ x_{\eps}\right\} .\label{eq:negation}
\end{equation}

Note explicitly that, generally speaking, relations on subpoints,
such as $\sbpt{\le}$ or $\sbpt{=}$, do not inherit the same properties
of the corresponding relations for points. So, e.g., both $\sbpt{=}$
and $\sbpt{\le}$ are not transitive relations.

The next result clarifies how to equivalently write a negation of
an inequality or of an equality using the language of subpoints.
\begin{lem}
\label{lem:negationsSubpoints}Let $x$, $y\in\rcrho$, then
\begin{enumerate}
\item \label{enu:neg-le}$x\nleq y\quad\Longleftrightarrow\quad x\sbpt{>}y$
\item \label{enu:negStrictlyLess}$x\not<y\quad\Longleftrightarrow\quad x\sbpt{\ge}y$
\item \label{enu:negEqual}$x\ne y\quad\Longleftrightarrow\quad x\sbpt{>}y$
or $x\sbpt{<}y$.
\end{enumerate}
\end{lem}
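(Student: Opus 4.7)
The plan is to unfold each equivalence directly from the definitions of $\le$, $<$ and $=$ in $\rcrho$ given in Sec.~\ref{sec:Numbers}, using two translation tools throughout: Lemma~\ref{lem:mayer} to convert ``positive and invertible'' into the concrete estimate ``$x_\eps > \rho_\eps^m$ on a cofinal set'', and the negation principle~\eqref{eq:negation} to turn ``not for $\eps$ small'' into ``for $\eps$ on some cofinal $L \subzero I$''. I would treat the three items in order.

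For~(\ref{enu:neg-le}), the direction $(\Leftarrow)$ I would handle by contradiction: $x >_L y$ yields representatives with $x_\eps - y_\eps > \rho_\eps^m$ on $L$ via Lemma~\ref{lem:mayer}.(\ref{enu:There-exists-a}), while $x \le y$ would provide alternative representatives $\tilde x_\eps \le \tilde y_\eps$ differing from the originals by $\rho$-negligible nets, forcing $x_\eps - y_\eps \le \rho_\eps^{m+1}$ on $L$ for small $\eps$ and contradicting $\rho_\eps \to 0^+$. For~$(\Rightarrow)$, the key device is the net $u_\eps := \max(0, x_\eps - y_\eps) \ge 0$: if $(u_\eps)$ were $\rho$-negligible, the pointwise bound $x_\eps \le y_\eps + u_\eps$ would witness $x \le y$, so under the hypothesis $x \nleq y$ the net $(u_\eps)$ is not negligible, and~\eqref{eq:negation} produces $n \in \N$ and a cofinal $L$ with $u_\eps > \rho_\eps^n$; equivalently $x_\eps - y_\eps > \rho_\eps^n > 0$ on $L$, so $x >_L y$ by Lemma~\ref{lem:mayer}.(\ref{enu:There-exists-a}).

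For~(\ref{enu:negStrictlyLess}), I would split $x \not< y$ into $x \nleq y$ (reduced to the previous item, since $x \sbpt{>} y$ implies $x \sbpt{\ge} y$) and the substantive case ``$x \le y$ with $y - x \notin \rcrho^*$''. In this second case, fix representatives with $u_\eps := y_\eps - x_\eps \ge 0$; the negation of Lemma~\ref{lem:mayer}.(\ref{enu:greater-i_epsTom}) together with~\eqref{eq:negation} provides, for each $m \in \N$, a cofinal $L_m \subzero I$ on which $u_\eps \le \rho_\eps^m$. The plan is then a diagonal construction: pick $\eps_m \in L_m$ strictly decreasing to $0$ and set $L := \{\eps_m \mid m \in \N\} \subzero I$; for any fixed $M$, whenever $m \ge M$ one has $u_{\eps_m} \le \rho_{\eps_m}^m \le \rho_{\eps_m}^M$, so $(u_\eps)$ restricted to $L$ is $\rho$-negligible, i.e.\ $(y-x)|_L = 0$, giving $x|_L = y|_L$ and hence $x \sbpt{\ge} y$. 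The converse is short: restrictions preserve invertibility, so $x < y$ together with $x \sbpt{\ge} y$ on some $L$ would force the invertible element $(y-x)|_L$ to equal $0$ in $\rcrho|_L$, a contradiction.

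Finally,~(\ref{enu:negEqual}) is the most direct: $x \ne y$ unfolds, via the definition of $\sim_\rho$ and~\eqref{eq:negation}, to the existence of $n$ and $L \subzero I$ with $|x_\eps - y_\eps| > \rho_\eps^n$ on $L$; partitioning $L$ by the sign of $x_\eps - y_\eps$, at least one half remains cofinal and supplies the estimate needed by Lemma~\ref{lem:mayer}.(\ref{enu:There-exists-a}) to conclude $x \sbpt{>} y$ or $x \sbpt{<} y$. The converse is immediate since both $\sbpt{>}$ and $\sbpt{<}$ yield $(x-y)|_L$ invertible, hence non-negligible, hence $x \ne y$. The step I expect to require most care is precisely the diagonal construction inside~(\ref{enu:negStrictlyLess}): non-invertibility is a universally quantified statement over exponents $m$, each producing its own cofinal witness $L_m$, so collapsing these infinitely many sets into a single cofinal $L$ on which the negligibility bound holds for \emph{every} exponent simultaneously is the one step that is not a routine translation of definitions.
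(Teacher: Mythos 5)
Your proof is correct: all three items check out, including the two non-routine steps, namely the witness $\max(0,x_\eps-y_\eps)$ for the forward direction of (i) and the diagonal extraction of a single cofinal $L$ from the family $(L_m)_{m\in\N}$ in the non-invertibility case of (ii). The paper itself states this lemma without proof (deferring to the cited literature on subpoints), and your argument is essentially the standard one used there, so there is nothing to add.
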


Using the language of subpoints, we can write different forms of dichotomy
or trichotomy laws for inequality. The first form is the following
\begin{lem}
\label{lem:trich1st}Let $x$, $y\in\rcrho$, then
\begin{enumerate}
\item \label{enu:dichotomy}$x\le y$ or $x\sbpt{>}y$
\item \label{enu:strictDichotomy}$\neg(x\sbpt{>}y$ and $x\le y)$
\item \label{enu:trichotomy}$x=y$ or $x\sbpt{<}y$ or $x\sbpt{>}y$
\item \label{enu:leThen}$x\le y\ \Rightarrow\ x\sbpt{<}y$ or $x=y$
\item \label{enu:leSubpointsIff}$x\sbpt{\le}y\ \Longleftrightarrow\ x\sbpt{<}y$
or $x\sbpt{=}y$.
\end{enumerate}
\end{lem}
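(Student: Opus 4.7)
My plan is to derive the five parts of the lemma from Lemma~\ref{lem:mayer} and Lemma~\ref{lem:negationsSubpoints}, exploiting (a) the boolean fact that every assertion is either true or false, and (b) the translation between negations of $\le$, $<$, $=$ and subpoint relations provided in Lemma~\ref{lem:negationsSubpoints}.

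For \emph{(i)}, I would simply observe the excluded middle: either $x\le y$ or $x\nleq y$; in the latter case Lemma~\ref{lem:negationsSubpoints}(i) directly gives $x\sbpt{>}y$. For \emph{(iii)}, I would apply excluded middle to the equality $x=y$; if $x\neq y$, Lemma~\ref{lem:negationsSubpoints}(iii) yields exactly the disjunction $x\sbpt{<}y$ or $x\sbpt{>}y$.

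Part \emph{(ii)} is the one genuine computational step, and it is what makes \emph{(iv)} work. Here I assume for contradiction that $x\le y$ and $x\sbpt{>}y$ hold simultaneously. From $x\sbpt{>}y$ there exists $L\subzero I$ such that $(y-x)|_L>0$ in $\rcrho|_L$, so by Lemma~\ref{lem:mayer}(iii) applied in $\rcrho|_L$, every representative $(y_\eps-x_\eps)_{\eps\in L}$ satisfies
\[
\exists m\in\N\,\forall^{0}\eps\in L:\ y_\eps-x_\eps>\rho_\eps^{m}.
\]
On the other hand, $x\le y$ gives representatives and a $\rho$-negligible net $(z_\eps)$ with $x_\eps\le y_\eps+z_\eps$, hence $y_\eps-x_\eps\ge -z_\eps$; but by negligibility $|z_\eps|\le\rho_\eps^{m+1}$ eventually. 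Restricting to $\eps\in L$ small enough, one would need $\rho_\eps^{m}<y_\eps-x_\eps$ while also $-y_\eps+x_\eps\le z_\eps$; using instead the positive direction $y_\eps-x_\eps\le$ (something negligible or invert the orientation) the contradiction is $\rho_\eps^{m}<\rho_\eps^{m+1}$, i.e.\ $\rho_\eps>1$, contradicting $\rho_\eps\to 0^+$. This small estimate is the one place where care is needed, and I expect the main subtlety to be choosing representatives compatibly so that both inequalities actually apply to the same $(x_\eps,y_\eps)$, but this is standard for $\rcrho$.

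Part \emph{(iv)} follows by combining the previous steps: given $x\le y$, by \emph{(iii)} we have $x=y$, $x\sbpt{<}y$, or $x\sbpt{>}y$; the third option is excluded by \emph{(ii)}, leaving precisely the desired disjunction. For \emph{(v)}, the direction ``$\Leftarrow$'' is immediate since any witness $L\subzero I$ for $\sbpt{<}$ or $\sbpt{=}$ is also a witness for $\sbpt{\le}$. For ``$\Rightarrow$'', pick $L\subzero I$ with $x|_L\le y|_L$ in $\rcrho|_L$, and apply the already proved \emph{(iv)} inside the ring $\rcrho|_L$ to obtain either $x|_L=y|_L$ (which is $x\sbpt{=}y$) or $x|_L\sbpt{<}y|_L$; in the second case a further cofinal $K\subzero L\subzero I$ realises the strict subpoint inequality and, since $K\subzero I$, yields $x\sbpt{<}y$. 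The only real content is thus \emph{(ii)}; everything else is bookkeeping between points and subpoints, together with Lemma~\ref{lem:negationsSubpoints}.
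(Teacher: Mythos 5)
Your proof is correct and follows the standard route (the paper itself defers the proof to the reference \cite{MTAG}, where essentially this argument appears): everything reduces to part (ii), proved by contradiction via Lem.~\ref{lem:mayer}.\ref{enu:greater-i_epsTom} applied in $\rcrho|_{L}$, plus the translations of Lem.~\ref{lem:negationsSubpoints} and the observation that (iv) and (v) are pure bookkeeping over (ii) and (iii). The only blemish is the sign slip in (ii) --- $x\sbpt{>}y$ gives $(x-y)|_{L}>0$, hence $x_{\eps}-y_{\eps}>\rho_{\eps}^{m}$ for $\eps\in L$ small, not $y_{\eps}-x_{\eps}$ --- but you notice and correct the orientation before reaching the contradiction $\rho_{\eps}^{m}<\rho_{\eps}^{m+1}$, i.e.~$\rho_{\eps}>1$, so the argument stands.
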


\noindent As usual, we note that these results can also be trivially
repeated for the ring $\rcrho|_{L}$. So, e.g., we have $x\not\le_{L}y$
if and only if $\exists J\subzero L:\ x>_{J}y$, which is the analog
of Lem.~\ref{lem:negationsSubpoints}.\ref{enu:neg-le} for the ring
$\rcrho|_{L}$.

The second form of trichotomy (which for $\rcrho$ can be more correctly
named as \emph{quadrichotomy})\emph{ is} stated as follows:
\begin{lem}
\label{lem:trich2nd}Let $x=[x_{\eps}]$, $y=[y_{\eps}]\in\rcrho$,
then
\begin{enumerate}
\item \label{enu:quadrichotomyLessEq}$x\le y$ or $x\ge y$ or $\exists L\subzero I:\ L^{c}\subzero I,\ x\ge_{L}y\text{ and }x\le_{L^{c}}y$
\item \label{enu:fromL2noL}If for all $L\subzero I$ the following implication
holds
\begin{equation}
x\le_{L}y,\text{ or }x\ge_{L}y\ \Rightarrow\ \forall^{0}\eps\in L:\ \mathcal{P}\left\{ x_{\eps},y_{\eps}\right\} ,\label{eq:trichHp}
\end{equation}
then $\forall^{0}\eps:\ \mathcal{P}\left\{ x_{\eps},y_{\eps}\right\} $.
\item \label{enu:fromL2noLStrict}If for all $L\subzero I$ the following
implication holds
\begin{equation}
x<_{L}y,\text{ or }x>_{L}y\text{ or }x=_{L}y\ \Rightarrow\ \forall^{0}\eps\in L:\ \mathcal{P}\left\{ x_{\eps},y_{\eps}\right\} ,\label{eq:trichHpStrict}
\end{equation}
then $\forall^{0}\eps:\ \mathcal{P}\left\{ x_{\eps},y_{\eps}\right\} $.
\end{enumerate}
\end{lem}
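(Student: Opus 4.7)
The plan is to prove (i) by a direct partition argument on $I$, to derive (ii) immediately from (i), and to deduce (iii) from (ii) by a proof by contradiction invoking the subpoint dichotomy from Lem.~\ref{lem:trich1st}(iv).

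For (i), I would fix representatives $(x_\eps)$, $(y_\eps)$ of $x$, $y$ and set $L := \{\eps \in I \mid x_\eps \ge y_\eps\}$, with $L^c := I \setminus L = \{\eps \in I \mid x_\eps < y_\eps\}$. Since $0$ is an accumulation point of $L \cup L^c = I$, at least one of $L$, $L^c$ must itself be cofinal in $I$. If $L^c$ is not cofinal, then $\forall^{0}\eps:\ x_\eps \ge y_\eps$ and hence $x \ge y$; symmetrically, if $L$ is not cofinal then $x \le y$. Otherwise both $L \subzero I$ and $L^c \subzero I$, and the $\eps$-wise inequalities translate directly into $x \ge_L y$ and $x \le_{L^c} y$, which is exactly the third alternative.

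For (ii), I would apply (i). In the cases $x \le y$ or $x \ge y$, the conclusion follows from taking $L = I$ in the hypothesis of (ii) and reading off $\forall^{0}\eps:\ \mathcal{P}\{x_\eps, y_\eps\}$ directly. In the remaining case I apply the hypothesis once with $L$ (since $x \ge_L y$) and once with $L^c$ (since $x \le_{L^c} y$), obtaining both $\forall^{0}\eps \in L:\ \mathcal{P}\{x_\eps, y_\eps\}$ and $\forall^{0}\eps \in L^c:\ \mathcal{P}\{x_\eps, y_\eps\}$; since $L \cup L^c = I$, combining these two eventual statements yields $\forall^{0}\eps:\ \mathcal{P}\{x_\eps, y_\eps\}$.

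For (iii), the strategy is to reduce to (ii) by showing that its hypothesis is entailed by the hypothesis of (iii). Fix $L \subzero I$ with $x \le_L y$ (the case $x \ge_L y$ is symmetric) and suppose for contradiction that $\forall^{0}\eps \in L:\ \mathcal{P}\{x_\eps, y_\eps\}$ fails. By the subpoint reformulation of negation in~\eqref{eq:negation}, there is $L'' \subzero L$ with $\neg\mathcal{P}\{x_\eps, y_\eps\}$ for every $\eps \in L''$. Since $x \le_{L''} y$, applying Lem.~\ref{lem:trich1st}(iv) inside the ring $\rcrho|_{L''}$ yields either $x =_{L''} y$, whence the hypothesis of (iii) applied with $L''$ forces $\forall^{0}\eps \in L'':\ \mathcal{P}\{x_\eps, y_\eps\}$, contradicting the defining property of $L''$; or $x <_{L'''} y$ for some $L''' \subzero L''$, whence the hypothesis of (iii) applied with $L'''$ forces $\forall^{0}\eps \in L''':\ \mathcal{P}\{x_\eps, y_\eps\}$, again contradicting $\neg\mathcal{P}$ on the cofinal subset $L''' \subseteq L''$.

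The main obstacle, which only appears in (iii), is that strict inequality on a subpoint requires invertibility via Lem.~\ref{lem:mayer}, so a naive $\eps$-wise partition of $I$ according to whether $x_\eps<y_\eps$, $x_\eps=y_\eps$, or $x_\eps>y_\eps$ does not automatically promote to a statement of the form $x<_L y$ or $x>_L y$: one would need quantitative lower bounds $x_\eps - y_\eps > \rho_\eps^m$ that the partition alone does not supply. Routing through the already-established subpoint dichotomy Lem.~\ref{lem:trich1st}(iv) inside $\rcrho|_{L''}$ side-steps this issue and is what makes the contradiction argument work.
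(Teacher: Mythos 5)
Your proof is correct, and it follows the standard route (the paper itself omits the proof, deferring to the cited references, where the argument is essentially this one): part (i) by partitioning $I$ according to the sign of $x_{\eps}-y_{\eps}$ for fixed representatives, part (ii) by instantiating the hypothesis on the resulting cofinal pieces and gluing the two eventual statements, and part (iii) by reducing to (ii) through the negation formula \eqref{eq:negation} and Lem.~\ref{lem:trich1st}.\ref{enu:leThen} applied in $\rcrho|_{L''}$. Your closing remark correctly identifies the one genuine pitfall — that an $\eps$-wise trichotomy cannot be promoted to $<_{L}$ without the invertibility bound of Lem.~\ref{lem:mayer} — and your detour through the subpoint dichotomy is exactly what circumvents it.
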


\noindent Property Lem.~\ref{lem:trich2nd}.\ref{enu:fromL2noL}
represents a typical replacement of the usual dichotomy law in $\R$:
for arbitrary $L\subzero I$, we can assume to have two cases: either
$x\le_{L}y$ or $x\ge_{L}y$. If in both cases we are able to prove
$\mathcal{P}\{x_{\eps},y_{\eps}\}$ for $\eps\in L$ small, then we
always get that this property holds for all $\eps$ small. Similarly,
we can use the strict trichotomy law stated in \ref{enu:fromL2noLStrict}.

\subsection{\label{subsec:Topologies}Topologies on $\RC{\rho}^{n}$}

A first non-trivial conceptual step is to consider $\rcrho$ as our
new ring of scalar. The natural extension of the Euclidean norm on
the $\RC{\rho}$-module $\RC{\rho}^{n}$, i.e.~$|[x_{\eps}]|:=[|x_{\eps}|]\in\RC{\rho}$,
where $[x_{\eps}]\in\RC{\rho}^{n}$, goes exactly in this direction.
In fact, even if this generalized norm takes values in $\RC{\rho}$,
and not in the old $\R$, it shares some essential properties with
classical norms: 
\begin{align*}
 & |x|=x\vee(-x)\\
 & |x|\ge0\\
 & |x|=0\Rightarrow x=0\\
 & |y\cdot x|=|y|\cdot|x|\\
 & |x+y|\le|x|+|y|\\
 & ||x|-|y||\le|x-y|.
\end{align*}

\noindent It is therefore natural to consider on $\RC{\rho}^{n}$
topologies generated by balls defined by this generalized norm and
a set of radii. A second non-trivial step is to understand that the
meaningful set of radii we need to have continuity of our class of
generalized function is the set $\RC{\rho}_{\ge0}^{*}=\rcrho_{>0}$
of positive and invertible generalized numbers:
\begin{defn}
\label{def:setOfRadii}We define
\begin{enumerate}
\item $B_{r}(x):=\left\{ y\in\RC{\rho}^{n}\mid\left|y-x\right|<r\right\} $
for any $r\in\rcrho_{>0}$.
\item $\Eball_{r}(x):=\{y\in\R^{n}\mid|y-x|<r\}$, for any $r\in\R_{>0}$,
denotes an ordinary Euclidean ball in $\R^{n}$.
\end{enumerate}
\end{defn}

\noindent The relation $<$ has more beneficial topological properties
as compared to the usual strict order relation $x\le y$ and $x\ne y$
(a relation that we will therefore \emph{never} use) due to the following
result:
\begin{thm}
\label{thm:intersectionBalls}The set of balls $\left\{ B_{r}(x)\mid r\in\rcrho_{>0},\ x\in\RC{\rho}^{n}\right\} $
is a base for a topology on $\RC{\rho}^{n}$ called \emph{sharp topology},
and we call \emph{sharply open set} any open set in this topology.
\end{thm}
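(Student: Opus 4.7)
To prove that the family of balls forms a base for a topology on $\rcrho^n$, I need to verify the two standard axioms: that the balls cover $\rcrho^n$, and that for any point $x$ in the intersection of two basic balls there is a basic ball around $x$ contained in the intersection. The covering axiom is immediate, since for any $x\in\rcrho^n$ and any invertible positive radius, e.g.~$r=\diff\rho$, we have $x\in B_r(x)$. So the entire proof reduces to the intersection property.

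For the intersection property, given $x\in B_{r_1}(x_1)\cap B_{r_2}(x_2)$, my plan is to set the ``slack'' radii $s_i:=r_i-|x-x_i|$ for $i=1,2$. By the very definition of $B_{r_i}(x_i)$ and the relation $<$ on $\rcrho$, each $s_i$ is strictly positive, i.e.~positive and invertible ($s_i\in\rcrho_{>0}$). I will then take $s:=s_1\wedge s_2$ as the candidate radius and show that $B_s(x)\subseteq B_{r_1}(x_1)\cap B_{r_2}(x_2)$.

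The main technical obstacle is showing $s>0$, i.e.~that the infimum of two strictly positive generalized numbers is again strictly positive. Invertibility is not an obvious property of the pointwise minimum of two invertible nets, but this is exactly what Lem.~\ref{lem:mayer} is built to handle: choosing representatives with $s_{i,\eps}>\rho_\eps^{m_i}$ for $\eps$ small (part \ref{enu:There-exists-a}), and setting $m:=\max(m_1,m_2)$, we get $\min(s_{1,\eps},s_{2,\eps})>\rho_\eps^m$ for $\eps$ small, whence by \ref{enu:There-exists-a}$\Rightarrow$\ref{enu:positiveInvertible} we conclude $s>0$.

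Finally, given $y\in B_s(x)$, I will conclude via the listed triangle inequality
\[
|y-x_i|\le|y-x|+|x-x_i|<s+|x-x_i|\le s_i+|x-x_i|=r_i.
\]
The two mixed transitivity steps ``$p<q\le r\Rightarrow p<r$'' and ``$p\le q<r\Rightarrow p<r$'' used here are routine consequences of Lem.~\ref{lem:mayer}: if $r-q>0$ in the sense of being invertible with a representative bounded below by some $\rho_\eps^m$, and $q-p\ge 0$, then $r-p\ge r-q$ pointwise on suitable representatives, so $r-p>\rho_\eps^m$ eventually and hence $r-p>0$ by Lem.~\ref{lem:mayer} again. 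This gives $y\in B_{r_1}(x_1)\cap B_{r_2}(x_2)$ and completes the verification of the base axioms.
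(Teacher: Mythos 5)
Your proof is correct and follows exactly the standard argument for Thm.~\ref{thm:intersectionBalls} found in the references the paper cites (the paper itself states this theorem without proof). You correctly isolate the only non-trivial point — that the infimum $s_1\wedge s_2$ of two positive invertible slack radii is again invertible — and resolve it properly via the equivalences of Lem.~\ref{lem:mayer}, as do the mixed $\le$/$<$ transitivity steps at the end.
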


\noindent We also recall that the sharp topology on $\rcrho^{n}$
is Hausdorff and Cauchy complete, see e.g.~\cite{Gio-Kun-Ver19,Gio-Kun-Ver15}.
A peculiar property of the sharp topology is that it is also generated
by all the infinitesimal balls of the form $B_{\diff{\rho}^{q}}(x)$,
where $q\in\N_{>0}$. The necessity to consider infinitesimal neighborhoods
occurs in any theory containing continuous GF which have infinite
derivatives. Indeed, from the mean value theorem Thm.~\ref{thm:classicalThms}.\ref{enu:meanValue}
below, we have $f(x)-f(x_{0})=f'(c)\cdot(x-x_{0})$ for some $c\in[x,x_{0}]$.
Therefore, we have $f(x)\in B_{r}(f(x_{0}))$, for a given $r\in\RC{\rho}_{>0}$,
if and only if $|x-x_{0}|\cdot|f'(c)|<r$, which yields an infinitesimal
neighborhood of $x_{0}$ in case $f'(c)$ is infinite; see \cite{GK13b,Gio-Kun-Ver15}
for precise statements and proofs corresponding to this intuition.
On the other hand, the existence of infinitesimal neighborhoods implies
that the sharp topology induces the discrete topology on $\R$; once
again, this is a general result that occurs in all the theories of
infinitesimals, see \cite{GK13b}.

A natural way to obtain sharply open, closed and bounded sets in $\RC{\rho}^{n}$
is by using a net $(A_{\eps})$ of subsets $A_{\eps}\subseteq\R^{n}$.
Once again, thinking at $[x_{\eps}]$ and $(A_{\eps})$ as a dynamic
point and set respectively, we have two ways of extending the membership
relation $x_{\eps}\in A_{\eps}$ to generalized points $[x_{\eps}]\in\RC{\rho}^{n}$:
\begin{defn}
\label{def:internalStronglyInternal}Let $(A_{\eps})$ be a net of
subsets of $\R^{n}$, then
\begin{enumerate}
\item $[A_{\eps}]:=\left\{ [x_{\eps}]\in\RC{\rho}^{n}\mid\forall^{0}\eps:\,x_{\eps}\in A_{\eps}\right\} $
is called the \emph{internal set} generated by the net $(A_{\eps})$.
\item Let $(x_{\eps})$ be a net of points of $\R^{n}$, then we say that
$x_{\eps}\in_{\eps}A_{\eps}$, and we read it as $(x_{\eps})$ \emph{strongly
belongs to $(A_{\eps})$}, if
\begin{enumerate}
\item $\forall^{0}\eps:\ x_{\eps}\in A_{\eps}$.
\item If $(x'_{\eps})\sim_{\rho}(x_{\eps})$, then also $x'_{\eps}\in A_{\eps}$
for $\eps$ small.
\end{enumerate}
\noindent Moreover, we set $\sint{A_{\eps}}:=\left\{ [x_{\eps}]\in\RC{\rho}^{n}\mid x_{\eps}\in_{\eps}A_{\eps}\right\} $,
and we call it the \emph{strongly internal set} generated by the net
$(A_{\eps})$.
\item We say that the internal set $K=[A_{\eps}]$ is \emph{sharply bounded}
if there exists $M\in\RC{\rho}_{>0}$ such that $K\subseteq B_{M}(0)$.
\item Finally, we say that the net $(A_{\eps})$ is \emph{sharply bounded
}if there exists $N\in\R_{>0}$ such that $\forall^{0}\eps\,\forall x\in A_{\eps}:\ |x|\le\rho_{\eps}^{-N}$.
\end{enumerate}
\end{defn}

\noindent Therefore, $x\in[A_{\eps}]$ if there exists a representative
$[x_{\eps}]=x$ such that $x_{\eps}\in A_{\eps}$ for $\eps$ small,
whereas this membership is independent from the chosen representative
in case of strongly internal sets. An internal set generated by a
constant net $A_{\eps}=A\subseteq\R^{n}$ will simply be denoted by
$[A]$.

The following theorem shows that internal and strongly internal sets
have dual topological properties:
\begin{thm}
\noindent \label{thm:strongMembershipAndDistanceComplement}For $\eps\in I$,
let $A_{\eps}\subseteq\R^{n}$ and let $x_{\eps}\in\R^{n}$. Then
we have
\begin{enumerate}
\item \label{enu:internalSetsDistance}$[x_{\eps}]\in[A_{\eps}]$ if and
only if $\forall q\in\R_{>0}\,\forall^{0}\eps:\ d(x_{\eps},A_{\eps})\le\rho_{\eps}^{q}$.
Therefore $[x_{\eps}]\in[A_{\eps}]$ if and only if $[d(x_{\eps},A_{\eps})]=0\in\RC{\rho}$.
\item \label{enu:stronglyIntSetsDistance}$[x_{\eps}]\in\sint{A_{\eps}}$
if and only if $\exists q\in\R_{>0}\,\forall^{0}\eps:\ d(x_{\eps},A_{\eps}^{\text{c}})>\rho_{\eps}^{q}$,
where $A_{\eps}^{\text{c}}:=\R^{n}\setminus A_{\eps}$. Therefore,
if $(d(x_{\eps},A_{\eps}^{\text{c}}))\in\R_{\rho}$, then $[x_{\eps}]\in\sint{A_{\eps}}$
if and only if $[d(x_{\eps},A_{\eps}^{\text{c}})]>0$.
\item \label{enu:internalAreClosed}$[A_{\eps}]$ is sharply closed.
\item \label{enu:stronglyIntAreOpen}$\sint{A_{\eps}}$ is sharply open.
\item \label{enu:internalGeneratedByClosed}$[A_{\eps}]=\left[\text{\emph{cl}}\left(A_{\eps}\right)\right]$,
where $\text{\emph{cl}}\left(S\right)$ is the closure of $S\subseteq\R^{n}$.
\item \label{enu:stronglyIntGeneratedByOpen}$\sint{A_{\eps}}=\sint{\text{\emph{int}\ensuremath{\left(A_{\eps}\right)}}}$,
where $\emph{int}\left(S\right)$ is the interior of $S\subseteq\R^{n}$.
\end{enumerate}
\end{thm}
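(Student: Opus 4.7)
The plan is to first establish the distance characterisations (i) and (ii); the four topological items then drop out with little extra work.

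For (i), the forward direction uses a representative $(y_\eps)$ of $x$ with $y_\eps\in A_\eps$ for $\eps$ small (which exists by definition of $[A_\eps]$); then $d(x_\eps,A_\eps)\le|x_\eps-y_\eps|$ and the $\rho$-negligibility of $(x_\eps-y_\eps)$ yields $d(x_\eps,A_\eps)\le\rho_\eps^q$ for every $q\in\R_{>0}$. For the converse, I perform a diagonal construction: using the hypothesis, pick $\eps_n\searrow 0$ with $d(x_\eps,A_\eps)\le\rho_\eps^n$ on $(0,\eps_n]$; selecting $y_\eps\in A_\eps$ with $|y_\eps-x_\eps|\le 2\rho_\eps^n$ on each strip $(\eps_{n+1},\eps_n]$ produces a representative of $x$ lying in $A_\eps$ for small $\eps$. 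The reformulation $[d(x_\eps,A_\eps)]=0$ is immediate once one notes that $d(\cdot,A_\eps)$ is $1$-Lipschitz (so the class is representative-independent) and that $\rho$-negligibility is precisely the stated family of distance bounds.

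For (ii), the forward direction is a triangle inequality: given the witness $q$, for any $(x'_\eps)\sim_\rho(x_\eps)$ one has $|x'_\eps-x_\eps|\le\rho_\eps^{q+1}<\rho_\eps^q<d(x_\eps,A_\eps^{\text{c}})$ for $\eps$ small, hence $x'_\eps\in A_\eps$, giving $x\in\sint{A_\eps}$. The converse is the main technical point. Assume the existential fails; by \eqref{eq:negation}, for each $n\in\N$ there is $L_n\subzero I$ with $d(x_\eps,A_\eps^{\text{c}})\le\rho_\eps^n$ on $L_n$. I extract $\eps_n\in L_n$ strictly decreasing to $0$, choose $z_n\in A_{\eps_n}^{\text{c}}$ with $|z_n-x_{\eps_n}|\le 2\rho_{\eps_n}^n$, and glue $x'_\eps:=z_n$ at $\eps=\eps_n$ and $x'_\eps:=x_\eps$ elsewhere. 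The resulting $(x'_\eps)$ is $\rho$-equivalent to $(x_\eps)$ yet lies in $A_\eps^{\text{c}}$ along the cofinal set $K:=\{\eps_n\mid n\in\N\}$, contradicting $x\in\sint{A_\eps}$. The equivalence with $[d(x_\eps,A_\eps^{\text{c}})]>0$ under the moderateness hypothesis then follows from Lem.~\ref{lem:mayer}.

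The topological items fall out quickly. For (iii), the map $f(x):=[d(x_\eps,A_\eps)]$ is well-defined and sharply $1$-Lipschitz, and by (i) $[A_\eps]=f^{-1}(\{0\})$ is closed. For (iv), given $x\in\sint{A_\eps}$ with witness $q$, the ball $B_{\diff{\rho}^{q+1}}(x)$ is contained in $\sint{A_\eps}$: any $y$ in it admits a representative with $|y_\eps-x_\eps|<\rho_\eps^{q+1}$, so $d(y_\eps,A_\eps^{\text{c}})>\rho_\eps^q-\rho_\eps^{q+1}>\rho_\eps^{q+2}$ for $\eps$ small, and (ii) applies. Items (v) and (vi) are purely Euclidean consequences of the identities $d(z,A_\eps)=d(z,\text{cl}(A_\eps))$ and $d(z,A_\eps^{\text{c}})=d(z,(\text{int}(A_\eps))^{\text{c}})$ inserted into the characterisations (i) and (ii). The main obstacle is thus the diagonal argument in the converse of (ii): the perturbation must be $\rho$-negligible at every order $n$ simultaneously while still being supported on a set cofinal with $0^+$, and the subpoint formalism of Sec.~\ref{subsec:subpt} together with the logical equivalence \eqref{eq:negation} is precisely what makes this bookkeeping systematic.
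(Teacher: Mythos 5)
Your proof is correct and follows the same route as the source the paper defers to for this result (\cite{Gio-Kun-Ver15}): first the distance characterisations (i)--(ii), with the forward directions by the triangle inequality and the converses by a diagonal selection over cofinal sets (the negation formula \eqref{eq:negation} doing the bookkeeping), and then the four topological statements as corollaries. The only point in (iii) worth a word is that $f(x):=[d(x_{\eps},A_{\eps})]$ is not automatically a well-defined $\rcrho$-valued map on all of $\RC{\rho}^{n}$, since the net $(d(x_{\eps},A_{\eps}))$ can fail to be real-valued or moderate; this is harmless because the case $[A_{\eps}]=\emptyset$ is trivial, and otherwise any $a=[a_{\eps}]\in[A_{\eps}]$ gives $d(x_{\eps},A_{\eps})\le|x_{\eps}-a_{\eps}|$ for $\eps$ small, which secures moderateness.
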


\noindent For example, it is not hard to show that the closure in
the sharp topology of a ball of center $c=[c_{\eps}]$ and radius
$r=[r_{\eps}]>0$ is 
\begin{equation}
\overline{B_{r}(c)}=\left\{ x\in\rti^{d}\mid\left|x-c\right|\le r\right\} =\left[\overline{\Eball_{r_{\eps}}(c_{\eps})}\right],\label{eq:closureBall}
\end{equation}
whereas
\[
B_{r}(c)=\left\{ x\in\rti^{d}\mid\left|x-c\right|<r\right\} =\sint{\Eball_{r_{\eps}}(c_{\eps})}.
\]

\noindent The reader can be concerned with the fact that the ring
of scalar $\rti$ is not a totally ordered field. Besides the language
of subpoints (see Sec.~\ref{subsec:subpt}) that allows one to proceed
alternatively when total order or invertibility properties are in
play, the following result is also useful:
\begin{lem}
\label{lem:invDense}Invertible elements of $\rti$ are dense in the
sharp topology, i.e.
\[
\forall h\in\rti\,\forall\delta\in\rti_{>0}\,\exists k\in(h-\delta,h+\delta):\ k\text{ is invertible}.
\]
\end{lem}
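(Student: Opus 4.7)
The plan is to modify $h$ by a carefully sized perturbation so that the result is everywhere (in $\eps$) bounded below by a positive power of $\rho_\eps$, using the characterization of invertibility from Lem.~\ref{lem:mayer}, while keeping the perturbation small enough in the sense of the strict order to stay inside $(h-\delta,h+\delta)$. The only real tool needed is Lem.~\ref{lem:mayer}: on the side of $\delta$ it gives a uniform lower bound $\delta_\eps>\rho_\eps^m$; on the side of $k$ it tells us exactly what to aim for, namely $|k_\eps|>\rho_\eps^M$ for some $M$ and $\eps$ small.

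Concretely, fix representatives $(h_\eps)$ of $h$ and $(\delta_\eps)$ of $\delta$. Since $\delta>0$, Lem.~\ref{lem:mayer}.\ref{enu:greater-i_epsTom} yields $m\in\N$ with $\delta_\eps>\rho_\eps^m$ for $\eps$ small. Define
\[
k_\eps:=\begin{cases} h_\eps & \text{if } |h_\eps|>\rho_\eps^{m+2},\\ \rho_\eps^{m+2} & \text{otherwise,}\end{cases}
\]
and set $k:=[k_\eps]$. Moderateness of $(k_\eps)$ is immediate from that of $(h_\eps)$ since $|k_\eps|\le|h_\eps|+1$ for $\eps$ small, so $k\in\rti$ is well defined. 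By construction $|k_\eps|\ge\rho_\eps^{m+2}$ for all $\eps$ small, so Lem.~\ref{lem:mayer}.\ref{enu:There-exists-a} gives that $k$ is invertible (and in particular nonzero).

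It remains to check $k\in(h-\delta,h+\delta)$. In both branches of the definition one has $|k_\eps-h_\eps|\le 2\rho_\eps^{m+2}$: the first branch gives $0$, the second gives $|\rho_\eps^{m+2}-h_\eps|\le\rho_\eps^{m+2}+|h_\eps|\le 2\rho_\eps^{m+2}$. Combined with $\delta_\eps>\rho_\eps^m$, this yields for $\eps$ small
\[
\delta_\eps\pm(k_\eps-h_\eps)\ge\rho_\eps^m-2\rho_\eps^{m+2}=\rho_\eps^m(1-2\rho_\eps^2)>\rho_\eps^{m+1},
\]
so both $\delta-(k-h)$ and $\delta+(k-h)$ are positive and invertible by Lem.~\ref{lem:mayer}, which is exactly the statement $h-\delta<k<h+\delta$ under the definition of $<$ recalled after Lem.~\ref{lem:mayer}. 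The only mildly delicate point is the bookkeeping between the order relation $\le$ (a pointwise inequality up to a negligible net) and the invertibility gap needed for the strict inequality $<$; the uniform lower bound $\rho_\eps^{m+1}$ handles both simultaneously, so no further subpoint analysis is required.
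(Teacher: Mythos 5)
Your proof is correct and is the standard argument for this lemma (the paper itself only states the result, deferring to the references, where essentially this same $\eps$-wise perturbation is used): replace $h_{\eps}$ by $\rho_{\eps}^{m+2}$ exactly on those $\eps$ where $|h_{\eps}|$ is too small, then verify both invertibility and membership in $(h-\delta,h+\delta)$ from the lower bound $\delta_{\eps}>\rho_{\eps}^{m}$. One cosmetic point: Lem.~\ref{lem:mayer}.\ref{enu:There-exists-a} characterizes $x>0$ (positive \emph{and} invertible), so to conclude invertibility of $k$ (which need not be positive) from $|k_{\eps}|\ge\rho_{\eps}^{m+2}$ you should either apply it to $|k|=[|k_{\eps}|]$ and use $k\cdot k=|k|^{2}$, or simply note that $(1/k_{\eps})$ is a moderate net whose class inverts $k$.
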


\noindent This is even more important since our GSF are continuous
in the sharp topology, as we will see in the next section.

\section{\label{sec:Functions}Functions and distributions: generalized smooth
functions}

After the introduction of numbers, their sets and topologies, we introduce
the notion of function.

\subsection{\label{subsec:DefSharpCont}Definition of GSF and sharp continuity}

Using the ring $\rti$, it is easy to consider a Gaussian with an
infinitesimal standard deviation. If we denote this probability density
by $f(x,\sigma)$, and if we set $\sigma=[\sigma_{\eps}]\in\RC{\rho}_{>0}$,
where $\sigma\approx0$, we obtain the net of smooth functions $(f(-,\sigma_{\eps}))_{\eps\in I}$.
This is the basic idea we are going to develop in the following definitions.
We will first introduce the notion of a net of functions $(f_{\eps})$
defining a generalized smooth function of the type $X\longrightarrow Y$,
where $X\subseteq\RC{\rho}^{n}$ and $Y\subseteq\RC{\rho}^{d}$. This
is a net of smooth functions $f_{\eps}\in\Coo(\Omega_{\eps},\R^{d})$
that induces well-defined maps of the form $[\partial^{\alpha}f_{\eps}(-)]:\sint{\Omega_{\eps}}\ra\RC{\rho}^{d}$,
for every multi-index $\alpha\in\N^{n}$.
\begin{defn}
\label{def:netDefMap}Let $X\subseteq\RC{\rho}^{n}$ and $Y\subseteq\RC{\rho}^{d}$
be arbitrary subsets of generalized points. Let $(\Omega_{\eps})$
be a net of open subsets of $\R^{n}$, and $(f_{\eps})$ be a net
of smooth functions, with $f_{\eps}\in\Coo(\Omega_{\eps},\R^{d})$.
Then, we say that 
\[
(f_{\eps})\textit{ defines a generalized smooth function}:X\longrightarrow Y
\]
if:
\begin{enumerate}
\item \label{enu:dom-cod}$X\subseteq\sint{\Omega_{\eps}}$ and $[f_{\eps}(x_{\eps})]\in Y$
for all $[x_{\eps}]\in X$.
\item \label{enu:partial-u-moderate}$\forall[x_{\eps}]\in X\,\forall\alpha\in\N^{n}:\ (\partial^{\alpha}f_{\eps}(x_{\eps}))\in\R_{\rho}^{d}$.
\end{enumerate}
\noindent Where the notation 
\[
\forall[x_{\eps}]\in X:\ \mathcal{P}\{(x_{\eps})\}
\]
means 
\[
\forall(x_{\eps})\in\R_{\rho}^{n}:\ [x_{\eps}]\in X\ \Rightarrow\ \mathcal{P}\{(x_{\eps})\},
\]
i.e.~for all representatives $(x_{\eps})$ generating a point $[x_{\eps}]\in X$,
the property $\mathcal{P}\{(x_{\eps})\}$ holds.
\end{defn}

\noindent A generalized smooth function (or map, in this paper these
terms are used as synonymous) is simply a function of the form $f=[f_{\eps}(-)]|_{X}$:
\begin{defn}
\label{def:generalizedSmoothMap} Let $X\subseteq\RC{\rho}^{n}$ and
$Y\subseteq\RC{\rho}^{d}$ be arbitrary subsets of generalized points,
then we say that 
\[
f:X\longrightarrow Y\text{ is a \emph{generalized smooth function}}
\]
if $f\in\Set(X,Y)$ and there exists a net $f_{\eps}\in\Coo(\Omega_{\eps},\R^{d})$
defining a generalized smooth map of type $X\longrightarrow Y$, in
the sense of Def.\ \ref{def:netDefMap}, such that 
\begin{equation}
\forall[x_{\eps}]\in X:\ f\left([x_{\eps}]\right)=\left[f_{\eps}(x_{\eps})\right].\label{eq:f-u-relations}
\end{equation}

\noindent We will also say that $f$ \emph{is defined by the net of
smooth functions} $(f_{\eps})$ or that the net $(f_{\eps})$ \emph{represents}
$f$. The set of all these GSF will be denoted by $\gsf(X,Y)$.
\end{defn}

Let us note explicitly that definitions \ref{def:netDefMap} and \ref{def:generalizedSmoothMap}
state minimal logical conditions to obtain a set-theoretical map from
$X$ into $Y$ which is defined by a net of smooth functions such
that all the derivatives still lie in our ring of scalars for condition
Def.~\ref{def:netDefMap}.\ref{enu:partial-u-moderate}. In particular,
the following Thm.~\ref{thm:indepRepr} states that in equality \eqref{eq:f-u-relations}
we have independence from the representatives for all derivatives
$[x_{\eps}]\in X\mapsto[\partial^{\alpha}f_{\eps}(x_{\eps})]\in\RC{\rho}^{d}$,
$\alpha\in\N^{n}$.
\begin{thm}
\label{thm:indepRepr}Let $X\subseteq\RC{\rho}^{n}$ and $Y\subseteq\RC{\rho}^{d}$
be arbitrary subsets of generalized points. Let $(\Omega_{\eps})$
be a net of open subsets of $\R^{n}$, and $(f_{\eps})$ be a net
of smooth functions, with $f_{\eps}\in\Coo(\Omega_{\eps},\R^{d})$.
Assume that $(f_{\eps})$ defines a generalized smooth map of the
type $X\longrightarrow Y$, then 
\[
\forall\alpha\in\N^{n}\,\forall(x_{\eps}),(x'_{\eps})\in\R_{\rho}^{n}:\ [x_{\eps}]=[x'_{\eps}]\in X\ \Rightarrow\ (\partial^{\alpha}f_{\eps}(x_{\eps}))\sim_{\rho}(\partial^{\alpha}f_{\eps}(x'_{\eps})).
\]
\end{thm}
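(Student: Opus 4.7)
The plan is to leverage condition (ii) of Definition \ref{def:netDefMap} together with a first-order Taylor expansion along the segment joining the two representatives. Given $(x_\eps), (x'_\eps) \in \R_\rho^n$ with $[x_\eps] = [x'_\eps] \in X$, I have $(x'_\eps - x_\eps) \sim_\rho 0$, which means $|x'_\eps - x_\eps| \le \rho_\eps^m$ for every $m \in \N$ and $\eps$ small. The condition $X \subseteq \sint{\Omega_\eps}$ combined with Thm.~\ref{thm:strongMembershipAndDistanceComplement}.\ref{enu:stronglyIntSetsDistance} yields some $q \in \R_{>0}$ with $d(x_\eps, \Omega_\eps^c) > \rho_\eps^q$ for $\eps$ small; together with $|x'_\eps - x_\eps| \le \rho_\eps^{q+1}$ this guarantees that the whole segment $[x_\eps, x'_\eps]$ is contained in $\Omega_\eps$ for $\eps$ small, so that $\partial^\alpha f_\eps$ is smooth on this segment.

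Next, I apply the fundamental theorem of calculus to $\partial^\alpha f_\eps$ along the segment to obtain, componentwise,
\[
\partial^\alpha f_\eps(x'_\eps) - \partial^\alpha f_\eps(x_\eps) = \int_0^1 \nabla(\partial^\alpha f_\eps)\bigl(x_\eps + t(x'_\eps - x_\eps)\bigr) \cdot (x'_\eps - x_\eps)\,dt,
\]
and hence
\[
|\partial^\alpha f_\eps(x'_\eps) - \partial^\alpha f_\eps(x_\eps)| \le |x'_\eps - x_\eps| \cdot \sup_{t \in [0,1]} |\nabla(\partial^\alpha f_\eps)(x_\eps + t(x'_\eps - x_\eps))|.
\]
By smoothness of $f_\eps$ and compactness of $[0,1]$, the supremum is attained at some $t_\eps \in [0,1]$; set $z_\eps := x_\eps + t_\eps(x'_\eps - x_\eps)$. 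The main point to check is that $(z_\eps)$ is itself a legitimate representative in $\R_\rho^n$ of the point $[x_\eps] \in X$: moderateness of $(z_\eps)$ follows from moderateness of $(x_\eps)$ plus negligibility of $t_\eps(x'_\eps - x_\eps)$, and $(z_\eps) \sim_\rho (x_\eps)$ holds because $|z_\eps - x_\eps| \le |x'_\eps - x_\eps|$ is negligible. Therefore $[z_\eps] = [x_\eps] \in X$.

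Now Definition \ref{def:netDefMap}.\ref{enu:partial-u-moderate}, applied to each partial derivative $\partial_j \partial^\alpha f_\eps$ at the representative $(z_\eps)$ of $[x_\eps] \in X$, gives an $N \in \N$ (depending on $\alpha$ but not on $m$) with $|\nabla(\partial^\alpha f_\eps)(z_\eps)| \le C\rho_\eps^{-N}$ for $\eps$ small, where $C$ is a dimensional constant. Combining with $|x'_\eps - x_\eps| \le \rho_\eps^{m+N}$ (true for every $m$, $\eps$ small) yields
\[
|\partial^\alpha f_\eps(x'_\eps) - \partial^\alpha f_\eps(x_\eps)| \le C\rho_\eps^{m+N}\rho_\eps^{-N} = C\rho_\eps^{m},
\]
which for every $m \in \N$ and $\eps$ small establishes $(\partial^\alpha f_\eps(x_\eps)) \sim_\rho (\partial^\alpha f_\eps(x'_\eps))$ as required.

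The only delicate step is the one linking the pointwise moderation hypothesis of Def.~\ref{def:netDefMap}.\ref{enu:partial-u-moderate} to the supremum along the segment: one must avoid simply quoting moderation at the fixed endpoint $x_\eps$, since the segment's sup-value of $|\nabla \partial^\alpha f_\eps|$ could, a priori, blow up off the original representative. The trick of absorbing the $\eps$-dependent maximizer $t_\eps$ into a new representative $(z_\eps)$ of the \emph{same} generalized point $[x_\eps]\in X$ is what makes the quantifier structure of Def.~\ref{def:netDefMap}.\ref{enu:partial-u-moderate} (which ranges over \emph{all} representatives) precisely strong enough to yield the result.
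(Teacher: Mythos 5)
Your proof is correct and follows essentially the same route as the standard argument (the paper defers the proof to \cite{Gio-Kun-Ver15,Gio-Kun-Ver19}, where one applies the classical mean value theorem on the segment and observes that the intermediate point is again a representative of $[x_{\eps}]$, so that Def.~\ref{def:netDefMap}.\ref{enu:partial-u-moderate} bounds the gradient there, and a moderate net times a negligible net is negligible). Your identification of the key point --- that the quantifier in Def.~\ref{def:netDefMap}.\ref{enu:partial-u-moderate} ranges over \emph{all} representatives, which is exactly what lets you absorb the $\eps$-dependent maximizer $t_{\eps}$ into a new representative $(z_{\eps})$ --- is precisely the crux of the argument.
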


\noindent Note that taking arbitrary subsets $X\subseteq\RC{\rho}^{n}$
in Def.\ \ref{def:netDefMap}, we can also consider GSF defined on
closed sets, like the set of all infinitesimals (which is actually
clopen, like in all non trivial theories of infinitesimals), or like
a closed interval $[a,b]\subseteq\RC{\rho}$. We can also consider
GSF defined at infinite generalized points. A simple case is the exponential
map 
\begin{equation}
e^{(-)}:[x_{\eps}]\in\left\{ x\in\RC{\rho}\mid\exists z\in\RC{\rho}_{>0}:\ x\le\log z\right\} \mapsto\left[e^{x_{\eps}}\right]\in\RC{\rho}.\label{eq:exp}
\end{equation}
The domain of this map depends on the infinitesimal net $\rho$. For
instance, if $\rho=(\eps)$ then all its points are bounded by generalized
numbers of the form $[-N\log\eps]$, $N\in\N$; whereas if $\rho=\left(e^{-\frac{1}{\eps}}\right)$,
all points are bounded by $[N\eps^{-1}]$, $N\in\N$.

A first regularity property of GSF is the above cited continuity with
respect to the sharp topology, as proved in the following
\begin{thm}
\label{thm:GSF-continuity}Let $X\subseteq\RC{\rho}^{n}$, $Y\subseteq\RC{\rho}^{d}$
and $f_{\eps}\in\Coo(\Omega_{\eps},\R^{d})$ be a net of smooth functions
that defines a GSF of the type $X\longrightarrow Y$. Then
\begin{enumerate}
\item \label{enu:locLipSharp}For all $\alpha\in\N^{n}$, the GSF $g:[x_{\eps}]\in X\mapsto[\partial^{\alpha}f_{\eps}(x_{\eps})]\in\Rtil^{d}$
is locally Lipschitz in the sharp topology, i.e.~each $x\in X$ possesses
a sharp neighborhood $U$ such that $|g(x)-g(y)|\le L|x-y|$ for all
$x$, $y\in U$ and some $L\in\RC{\rho}$.
\item \label{enu:GSF-cont}Each $f\in\gsf(X,Y)$ is continuous with respect
to the sharp topologies induced on $X$, $Y$.
\item \label{enu:globallyDefNet}$f:X\longrightarrow Y$ is a GSF if and
only if there exists a net $v_{\eps}\in\cinfty(\R^{n},\R^{d})$ defining
a generalized smooth map of type $X\longrightarrow Y$ such that $f=[v_{\eps}(-)]|_{X}$.
\end{enumerate}
\end{thm}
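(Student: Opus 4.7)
The plan is to prove the three items in the order (iii), (i), (ii), with the only nontrivial content lying in a diagonalization argument behind (i).

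For (iii), I would produce a globally defined representing net by a cutoff. Let $n_\eps := \lfloor 1/\eps\rfloor$ and pick $\chi_\eps\in \Coo(\R^n,[0,1])$ that equals $1$ on
$$A_\eps:=\bigl\{y\in\R^n : d(y,\Omega_\eps^{\text{c}})>2\rho_\eps^{n_\eps},\ |y|<\tfrac{1}{2}\rho_\eps^{-n_\eps}\bigr\}$$
and vanishes outside a slightly larger set still contained in $\Omega_\eps$; then $v_\eps := \chi_\eps f_\eps$, extended by $0$, lies in $\Coo(\R^n,\R^d)$. For any $[x_\eps]\in X\subseteq \sint{\Omega_\eps}$, Thm.~\ref{thm:strongMembershipAndDistanceComplement}.\ref{enu:stronglyIntSetsDistance} together with $(x_\eps)\in\R_\rho^n$ furnishes $m\in\N$ with $d(x_\eps,\Omega_\eps^{\text{c}})>\rho_\eps^m$ and $|x_\eps|<\rho_\eps^{-m}$ for $\eps$ small; since $n_\eps\to\infty$, eventually $n_\eps>m$ and $\chi_\eps\equiv 1$ on a Euclidean neighborhood of $x_\eps$, so $\partial^\alpha v_\eps(x_\eps)=\partial^\alpha f_\eps(x_\eps)$ for every $\alpha$ and every $\eps$ small. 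Hence $(v_\eps)$ defines the same GSF $f$.

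The crux of (i) is the following key lemma: for every $x=[x_\eps]\in X$ and every $\beta\in\N^n$, there exist $m\in\N$ and $\eps_0\in I$ such that
$$\sup_{|y-x_\eps|\le \rho_\eps^m} |\partial^\beta f_\eps(y)| \le \rho_\eps^{-m} \qquad \text{for every } \eps\le\eps_0.$$
I would prove this by contradiction and diagonalization. If the bound fails, for each $k\in\N$ one finds $\eps_k\le 1/k$ and $y_k\in\R^n$ with $|y_k - x_{\eps_k}|\le \rho_{\eps_k}^k$ and $|\partial^\beta f_{\eps_k}(y_k)|>\rho_{\eps_k}^{-k}$. Define a new net by $\tilde y_\eps := y_k$ when $\eps=\eps_k$ and $\tilde y_\eps := x_\eps$ otherwise. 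Then $(\tilde y_\eps)\sim_\rho (x_\eps)$, because for each $N\in\N$ and $\eps$ small either $\eps\notin\{\eps_k\}_k$ (so the difference vanishes) or $\eps=\eps_k$ with $k\ge N$ (so $|\tilde y_\eps - x_\eps|\le\rho_\eps^k\le\rho_\eps^N$). Thus $[\tilde y_\eps]=x\in X$, and Def.~\ref{def:netDefMap}.\ref{enu:partial-u-moderate} applied to this alternate representative forces $(\partial^\beta f_\eps(\tilde y_\eps))\in\R_\rho$, contradicting the growth $|\partial^\beta f_{\eps_k}(\tilde y_{\eps_k})|>\rho_{\eps_k}^{-k}$.

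Applying the key lemma to each $\partial^{\alpha+e_i}f_\eps$, $i=1,\dots,n$, and taking the largest resulting $m$, I set $U := B_{\diff{\rho}^{m+1}}(x)\cap X$. For $y,z\in U$ I may choose representatives with $|y_\eps-x_\eps|, |z_\eps-x_\eps|<\rho_\eps^{m+1}$ for $\eps$ small, so the Euclidean segment joining $y_\eps$ and $z_\eps$ lies in $\Eball_{\rho_\eps^m}(x_\eps)$; the classical mean value inequality then gives
$$|\partial^\alpha f_\eps(y_\eps)-\partial^\alpha f_\eps(z_\eps)|\le \sqrt{n}\,\rho_\eps^{-m}\,|y_\eps-z_\eps|,$$
which upon passing to classes yields $|g(y)-g(z)|\le L|y-z|$ with $L:=\sqrt{n}\,\diff{\rho}^{-m}\in\rcrho$, proving (i). Item (ii) is then immediate from (i) with $\alpha=0$: if $U$ is the neighborhood supplied by (i) at $x$, then $|f(y)-f(x)|\le L|y-x|\to 0$ as $y\to x$ in the sharp topology, because multiplication by the fixed $L\in\rcrho$ is sharply continuous. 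The main obstacle is the key lemma, where one must upgrade the pointwise moderateness hypothesis at individual generalized points of $X$ into uniform moderateness on a Euclidean ball $\Eball_{\rho_\eps^m}(x_\eps)$; this is the only step where the ``for all representatives'' quantifier hidden in Def.~\ref{def:netDefMap}.\ref{enu:partial-u-moderate} is essential, with the diagonal extraction converting a failure of uniform moderateness into a bad alternate representative of $x$ that the definition rules out.
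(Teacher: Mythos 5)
Your proof is correct and follows essentially the same route as the argument the paper relies on (the proof of Thm.~\ref{thm:GSF-continuity} is omitted here and deferred to \cite{Gio-Kun-Ver15,Gio-Kun-Ver19}): the key step there is precisely your uniform-moderateness lemma on infinitesimal Euclidean balls, obtained by diagonalizing a failure of uniformity into a bad alternate representative that violates the ``for all representatives'' quantifier in Def.~\ref{def:netDefMap}.\ref{enu:partial-u-moderate}, followed by the classical mean value inequality and the cutoff construction for \ref{enu:globallyDefNet}. Only cosmetic care is needed: take $m$ at least as large as the exponent furnished by Thm.~\ref{thm:strongMembershipAndDistanceComplement}.\ref{enu:stronglyIntSetsDistance} so that $\Eball_{\rho_\eps^{m}}(x_\eps)\subseteq\Omega_\eps$ and the suprema are finite, and choose the $\eps_k$ in the diagonal extraction strictly decreasing so that $(\tilde y_\eps)$ is well defined.
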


\subsection{\label{sec:Embeddings}Embedding of Schwartz distributions and Colombeau generalized functions}

Among the re-occurring themes of this work are the choices which the
solution of a given problem within our framework may depend upon.
For instance, \eqref{eq:exp} shows that the domain of a GSF depends
on the infinitesimal net $\rho$. It is also easy to show that the
trivial Cauchy problem 
\[
\begin{cases}
x'(t)=[\eps^{-1}]\cdot x(t)\\
x(0)=1
\end{cases}
\]
has no solution in $\gsf(\R,\R)$ if $\rho_{\eps}=\eps$ because the
solution is not moderate e.g.~at $t=1$. Nevertheless, it has the
unique solution $x(t)=\left[e^{\frac{1}{\eps}t}\right]\in\gsf(\R,\R)$
if $\rho_{\eps}=e^{-\frac{1}{\eps}}$. Therefore, the choice of the
infinitesimal net $\rho$ is closely tied to the possibility of solving
a given class of differential equations in \emph{non infinitesimal}
intervals (a solution in a suitable \emph{infinitesimal} interval
always exists, see Sec.~\ref{sec:Differential-equations-1}). This
illustrates the dependence of the theory on the infinitesimal net
$\rho$.

Further choices concern the embedding of Schwartz distributions: Since
we need to associate a net of smooth functions $(f_{\eps})$ to a
given distribution $T\in\D'(\Omega)$, this embedding is naturally
built upon a regularization process. In our approach, this regularization
will depend on an infinite number $b\in\rcrho$, and the choice of
$b$ depends on what properties we need from the embedding. For example,
if $\delta$ is the (embedding of the) one-dimensional Dirac delta,
then we have the property 
\begin{equation}
\delta(0)=b,\label{eq:deltaAt0}
\end{equation}
We can also choose the embedding so as to get the property 
\begin{equation}
H(0)=\frac{1}{2},\label{eq:H-at-0}
\end{equation}
where $H$ is the (embedding of the) Heaviside step function. Equalities
like these are used in diverse applications (see, e.g., \cite{C1}
and references therein). In fact, we are going to construct a family
of structures of the type $(\mathcal{G},\partial,\iota)$, where $(\mathcal{G},\partial)$
is a sheaf of differential algebra and $\iota:\D'\ra\mathcal{G}$
is an embedding. The particular structure we need to consider depends
on the problem we have to solve. Of course, one may be more interested
in having an intrinsic embedding of distributions. This can be done
by following the ideas of the full Colombeau algebra (see e.g.~\cite{GKOS}).
Nevertheless, this choice decreases the simplicity of the present
approach and is incompatible with properties like \eqref{eq:deltaAt0}
and \eqref{eq:H-at-0}.

If $\phi\in\mathcal{D}(\R^{n})$, $r\in\R_{>0}$ and $x\in\R^{n}$,
we use the notation $r\odot\phi$ for the function $x\in\R^{n}\mapsto\frac{1}{r^{n}}\cdot\phi\left(\frac{x}{r}\right)\in\R$.
Our embedding procedure will ultimately rely on convolution with suitable
mollifiers. To construct these, we need some technical preparations.
\begin{lem}
\label{lem:ColombeauMollifier} For any $n\in\N_{>0}$ there exists
some $\mu_{n}\in\mathcal{S}(\R)$ with the following properties:
\begin{enumerate}
\item \label{enu:iColMoll}$\int\mu_{n}(x)\,\diff{x}=1$.
\item \label{enu:nullMoments}$\int_{0}^{\infty}x^{\frac{j}{n}}\mu_{n}(x)\,\diff{x}=0$
for all $j\in\N_{>0}$.
\item $\mu_{n}(0)=1$.
\item $\mu_{n}$ is even.
\item \label{enu:vColMoll}$\mu_{n}(k)=0$ for all $k\in\mathbb{Z}\setminus\{0\}$.
\end{enumerate}
\end{lem}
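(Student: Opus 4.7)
The plan is to build $\mu_n$ in two stages, first securing the integral conditions (i), (ii) and evenness (iv), then adjusting to produce the prescribed values (iii), (v) at the integer lattice.

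\textbf{Stage one.} For the moment condition on \emph{integer} powers, the standard Colombeau-mollifier construction via Fourier transform suffices: picking an even, smooth, compactly supported $\hat\phi$ which is identically $1$ near the origin, one obtains $\phi := \mathcal{F}^{-1}\hat\phi \in \mathcal{S}(\R)$ even, with $\int\phi = \hat\phi(0) = 1$ and $\int x^j\phi(x)\,dx = 0$ for all $j\ge 1$. Condition (ii) is stronger, since it imposes vanishing of the \emph{fractional} moments $\int_0^\infty x^{j/n}\phi(x)\,dx$, equivalent to zeros of the Mellin transform of $\phi|_{(0,\infty)}$ at the sequence $s = 1 + j/n$, $j \ge 1$; this cannot be enforced by any local cutoff in Fourier space. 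To get around this, I would expand $\phi = \sum_{k\ge 0}c_k\,h_{2k}$ in the even Hermite basis, regard (i) and the countable family (ii) as a linear system on the coefficients $(c_k)$, and solve by a Borel/Mittag-Leffler diagonalization in which each newly chosen $c_k$ cancels the next fractional moment while contributing negligibly to every Schwartz seminorm. The resulting $\phi$ satisfies (i), (ii), (iv).

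\textbf{Stage two.} Set $\mu_n := \phi + \chi$ with $\chi$ an even Schwartz correction such that $\chi(0) = 1 - \phi(0)$ and $\chi(k) = -\phi(k)$ for $k \in\Z\setminus\{0\}$, while additionally $\int\chi = 0$ and $\int_0^\infty x^{j/n}\chi(x)\,dx = 0$ for all $j\ge 1$, so that the integral and moment properties of $\phi$ are preserved. I would construct $\chi$ as a series of localized bumps $\chi = c_0\sigma_0 + \sum_{k\ge 1}c_k(\sigma_k + \sigma_{-k})$, where each $\sigma_k$ is concentrated in a small neighborhood of $k$ with $\sigma_k(k) = 1$ and $\sigma_k(\ell) = 0$ for every $\ell \in \Z\setminus\{k\}$, and where each $\sigma_k$ individually has vanishing integral and vanishing fractional moments; these latter requirements on the bumps are settled by another Hermite/Borel construction analogous to stage one. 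The coefficients $c_0 := 1 - \phi(0)$, $c_k := -\phi(k)$ are rapidly decreasing because $\phi\in\mathcal{S}(\R)$, ensuring convergence in $\mathcal{S}(\R)$; the symmetrized pairing $\sigma_k + \sigma_{-k}$ keeps $\chi$ even.

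\textbf{Main obstacle.} The essential difficulty is the coexistence of the countably infinite family of fractional moment conditions (ii) with the countably infinite family of pointwise conditions at $\Z$ from (iii), (v). Since the fractional moments are Mellin-type data at a sequence of complex parameters and not local quantities in Fourier space, the familiar truncate-in-frequency trick alone does not suffice, and one must instead simultaneously solve two infinite systems of continuous linear constraints on $\mathcal{S}(\R)$. The technical heart of the argument is the careful Borel-style estimation needed to guarantee that the iteratively constructed corrections — both inside each building block $\sigma_k$ and in the outer assembly of $\chi$ — converge in every Schwartz seminorm and do not destroy the constraints already satisfied at previous stages.
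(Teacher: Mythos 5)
The paper itself does not prove this lemma (it is quoted from the references \cite{Gio-Kun-Ver19,Gio-Kun-Ver15}), so I can only judge your proposal on its own terms. Your overall architecture --- first the integral/moment conditions, then a lattice correction that preserves them --- is a reasonable plan, and you correctly identify that the fractional moments are Mellin-type data that no Fourier-side cutoff can handle. But both stages have gaps, and the second is serious. In Stage one, ``each newly chosen $c_k$ cancels the next fractional moment while contributing negligibly to every Schwartz seminorm'' is not yet an argument: adding a multiple of $h_{2k}$ perturbs \emph{all} the fractional moments, not just the $k$-th, so the system on the Hermite coefficients is not triangular. What actually makes this stage work is an Eidelheit-type statement: the functionals $f\mapsto\int_0^\infty x^{j/n}f$ are linearly independent on even Schwartz functions and no infinite subfamily of them is continuous with respect to a single Schwartz seminorm, so for each $m$ one can build a corrector annihilating the first $m$ functionals and normalizing the $(m+1)$-st, and then run a genuinely triangular scheme with seminorm control. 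This is standard and fixable, but it is the missing heart of your Stage one.

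Stage two is where the proposal breaks down. Each block $\sigma_k$ is required to satisfy $\sigma_k(k)=1$, $\sigma_k(\ell)=0$ for all $\ell\in\Z\setminus\{k\}$, $\int\sigma_k=0$ and $\int_0^\infty x^{j/n}\sigma_k\,\diff{x}=0$ for \emph{all} $j$ --- that is, exactly the interpolation problem of the lemma itself, translated to $k$; and translation does not commute with the one-sided fractional moments, so you cannot construct one $\sigma_0$ and shift it. The reduction is therefore structurally circular: it replaces one instance of the hard problem by infinitely many. Moreover, by M\"untz--Sz\'asz the powers $x^{j/n}$ span a dense subspace of $C[k-\frac12,k+\frac12]$, so no $\sigma_k$ with all these moments vanishing can be genuinely localized near $k$; each one is a global object produced by its own infinite iterative scheme, and nothing in your construction controls the growth in $k$ of its Schwartz seminorms. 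Without a bound of the form $\Vert\sigma_k\Vert_m\le C_m(1+k)^{N_m}$, the series $\sum_k c_k(\sigma_k+\sigma_{-k})$ need not converge in $\mathcal{S}(\R)$, and the rapid decay of $c_k=-\phi(k)$ cannot rescue it. A cleaner way to finish is to impose (iii)--(v) by an explicit factor: seek $\mu_n=f_0+\sin^2(\pi x)\,h(x)$, where $f_0$ is an even Schwartz function already satisfying (iii)--(v) (e.g.\ $\frac{\sin(\pi x)}{\pi x}e^{-x^2}$) and $h$ is even Schwartz. The factor $\sin^2(\pi x)$ kills all the lattice conditions at once, and (i)--(ii) reduce to a single moment-interpolation problem for $h$ against the weights $x^{j/n}\sin^2(\pi x)$, solvable by the same Eidelheit argument as in Stage one.
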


\noindent We call \emph{Colombeau mollifier} (for a fixed dimension
$n$) any function $\mu$ that satisfies the properties of the previous
lemma. Concerning embeddings of Schwartz distributions, the idea is
classically to regularize distributions using a mollifier. The use
of a Colombeau mollifier allows us, on the one hand, to identify the
distribution $\phi\in\D(\Omega)\mapsto\int f\phi$ with the GSF $f\in\Coo(\Omega)\subseteq\gsf(\Omega,\R)$
(thanks to property \ref{enu:nullMoments}); on the other hand, it
allows us to explicitly calculate compositions such as $\delta\circ\delta$,
$H\circ\delta$, $\delta\circ H$ (see below).
\noindent \begin{center}
\begin{figure}
\label{fig: Col_mol}
\begin{centering}
\includegraphics[scale=0.2]{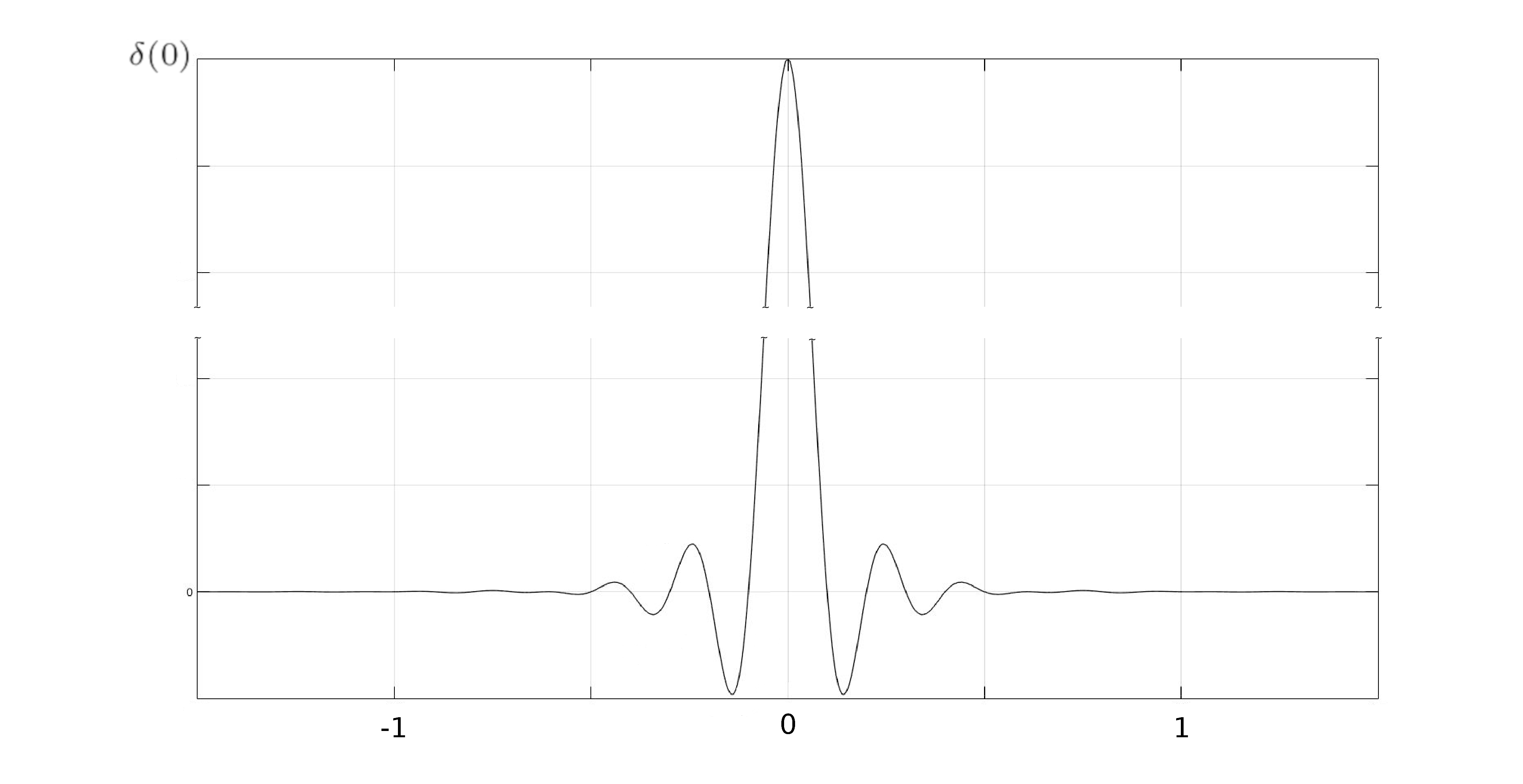}
\par\end{centering}
\begin{centering}
\includegraphics[scale=0.2]{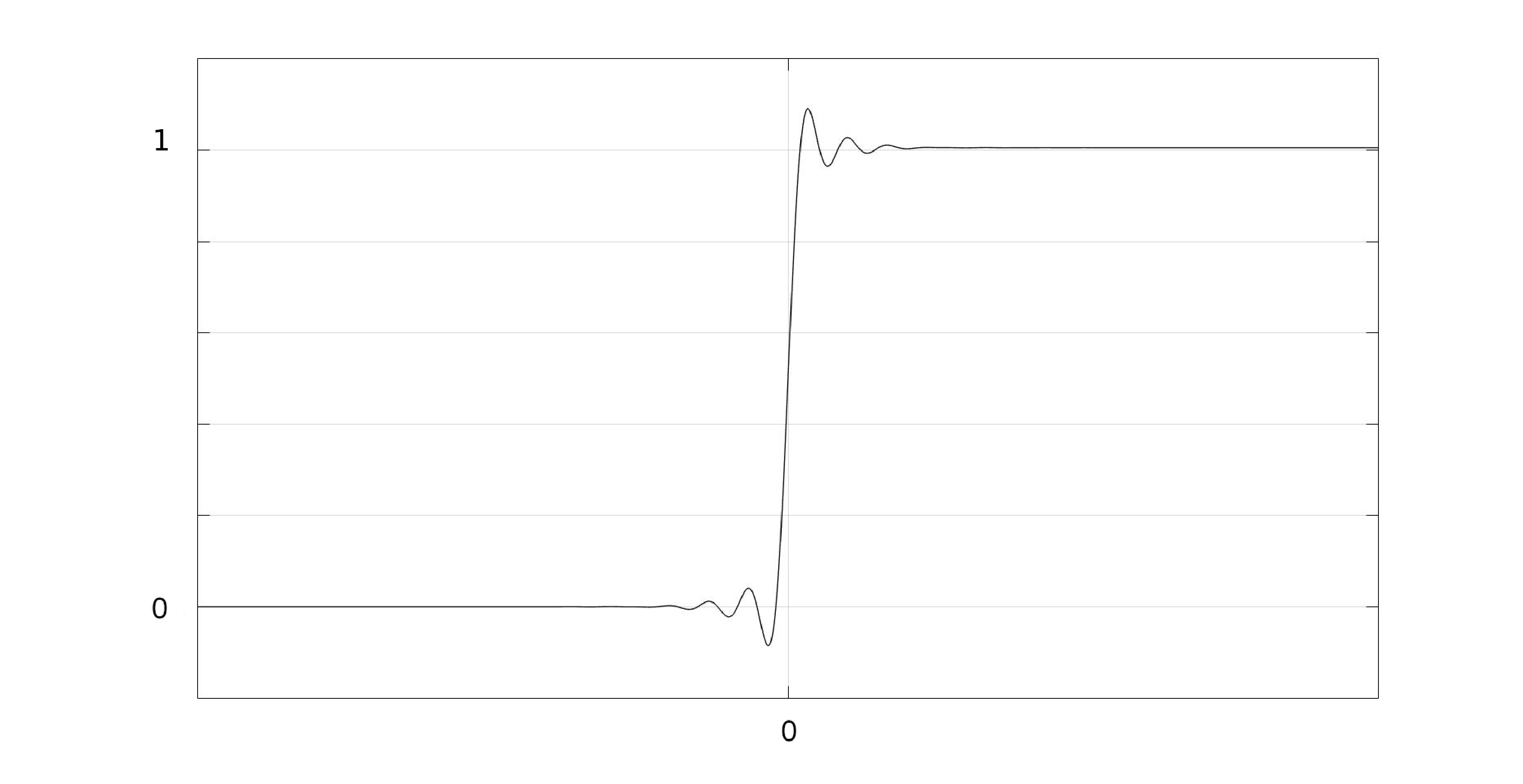}
\par\end{centering}
\caption{\label{fig:MollifierHeaviside}A representation of Dirac delta and
Heaviside function. A Colombeau mollifier has a representation similar
to Dirac delta (but with finite values).}
\end{figure}
\par\end{center}

As a final preparation for the embedding of $\D'(\Omega)$ into $\gsf(\csp{\Omega},\RC{\rho})$
we need to construct suitable $n$-dimensional mollifiers from a Colombeau
mollifier $\mu$ as given by Lemma \ref{lem:ColombeauMollifier}.
To this end, let $\omega_{n}$ denote the surface area of $S^{n-1}$
and set 
\[
c_{n}:=\left\{ \begin{array}{lr}
\frac{2n}{\omega_{n}} & \text{for }n>1\\
1 & \text{for }n=1.
\end{array}\right.
\]
Then let $\tilde{\mu}:\R^{n}\to\R$, $\tilde{\mu}(x):=c_{n}\mu(|x|^{n})$.
Since $\mu$ is even, $\tilde{\mu}$ is smooth. Moreover, by Lemma
\ref{lem:ColombeauMollifier}, it has unit integral and all its higher
moments $\int x^{\alpha}\tilde{\mu}(x)\,dx$ vanish ($|\alpha|\ge1$).

Both Schwartz distributions and Colombeau generalized functions are
naturally defined only on \emph{finite }points of $\langle\Omega\rangle$
(also called \emph{compactly supported points}), i.e.~on the set
\[
\csp{\Omega}:=\{x\in\langle\Omega\rangle\mid\exists R\in\R_{>0}:\ |x|\le R,\ d(x,\partial\Omega)\in\R_{>0}\}
\]
of finite points that remain sufficiently far from the boundary. This
underscores an important difference between this type of GF and GSF,
since the latter can also be defined on purely infinitesimal domains
(note that $\Omega\subseteq\csp{\Omega}$) or on infinite points.
\begin{thm}
\label{thm:embeddingD'} Let $(\emptyset\not=)\Omega\subseteq\R^{n}$
be an open set. Set
\[
\Omega{}_{\eps}:=\left\{ x\in\Omega\mid d(x,\Omega^{c})\ge\eps,\ |x|\le\frac{1}{\eps}\right\} 
\]
and fix some $\chi\in\D(\R^{n})$, $\chi=1$ on $\overline{\Eball_{1}(0)}$,
$0\le\chi\le1$ and $\chi=0$ on $\R^{n}\setminus\Eball_{2}(0)$.
Take $\kappa_{\eps}\in\D(\Omega)$ such that $\kappa_{\eps}=1$ on
a neighborhood $L_{\eps}$ of $\Omega_{\eps}$. Also, let $b=[b_{\eps}]\in\rcrho$
be an infinite positive number, i.e.~$\lim_{\eps\to0^{+}}b_{\eps}=+\infty$.
Set 
\begin{equation}
\mu_{\eps}^{b}(x):=(b_{\eps}^{-1}\odot\tilde{\mu})(x)\chi(x|\log(b_{\eps})|)=b_{\eps}^{n}\tilde{\mu}(b_{\eps}x)\chi(x|\log(b_{\eps})|).\label{col_mol-1}
\end{equation}
Then the map 
\begin{equation}
\iota_{\Omega}^{b}:T\in\D'(\Omega)\mapsto\left[\left(\left(\kappa_{\eps}\cdot T\right)*\mu_{\eps}^{b}\right)(-)\right]\in\gsf(\csp{\Omega},\RC{\rho}).\label{eq:ColEmb}
\end{equation}
satisfies:
\begin{enumerate}
\item \label{enu:restriction}$\iota^{b}:\mathcal{D}'\ra\gsf(\csp{-},\rti)$
is a sheaf-morphism of real vector spaces: If $\Omega'\subseteq\Omega$
is another open set and $T\in\mathcal{D}'(\Omega)$, then $\iota_{\Omega}^{b}(T)|_{\csp{\Omega'}}=\iota_{\Omega'}^{b}(T|_{\Omega'})$.
\item \label{enu:embInjective} $\iota^{b}$ preserves supports, where $\supp(f):=\left(\bigcup\left\{ \Omega'\subseteq\Omega\mid\Omega'\text{ open},\ f|_{\Omega'}=0\right\} \right)^{\text{c}}$,
hence is in fact a sheaf-monomorphism.
\item \label{enu:embeddingSmoothUpToInfinitesimals} Any $f\in\cinfty(\Omega)$
can naturally be considered an element of $\gsf(\csp{\Omega},\RC{\rho})$
via $[x_{\eps}]\mapsto[f(x_{\eps})]$. Moreover, $\forall q\in\N_{>0}\ \forall x\in\csp{\Omega}:\ \left|\iota_{\Omega}^{b}(f)(x)-f(x)\right|\le b^{-q}$.
\item \label{enu:smoothEmb} If $f\in\cinfty(\Omega)$ and if $b\ge\diff{\rho}^{-a}$
for some $a\in\R_{>0}$, then $\iota_{\Omega}^{b}(f)=f$. In particular,
$\iota^{b}$ then provides a multiplicative sheaf-monomorphism $\Coo(-)$
$\hookrightarrow\gsf(c(-),\R)$.
\item \label{enu:commute_der} For any $T\in\D'(\Omega)$ and any $\alpha\in\N^{n}$,
$\iota_{\Omega}^{b}(\partial^{\alpha}T)=\partial^{\alpha}\iota_{\Omega}^{b}(T)$.
\item \label{enu:valuesDistr}Let $b\ge\diff{\rho}^{-a}$ for some $a\in\R_{>0}$.
Then for any $\phi\in\mathcal{D}(\Omega)$ and any $T\in\mathcal{D}'(\Omega)$,
\[
\big[\int_{\Omega}\iota_{\Omega}^{b}(T)_{\eps}(x)\cdot\phi(x)\,\diff{x}\big]=\langle T,\phi\rangle\quad\text{in }\rcrho.
\]
\item \label{enu:deltaH}$\iota_{\R^{n}}^{b}(\delta)(0)=c_{n}b^{n}$ and
if $b\ge\diff{\rho}^{-a}$ for some $a\in\R_{>0}$, then $\iota_{\R}^{b}(H)(0)=\frac{1}{2}$.
\item \label{enu:depXi-e}The embedding $\iota^{b}$ does not depend on
the particular choice of $(\kappa_{\eps})$ and (if $b\ge\diff{\rho}^{-a}$
for some $a\in\R_{>0}$) $\chi$ as above.
\item \label{enu:iota_indep_of_repr} $\iota^{b}$ does not depend on the
representative $(b_{\eps})$ of $b$.
\end{enumerate}
\end{thm}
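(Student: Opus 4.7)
The plan is first to check that $\iota_\Omega^b(T) := [(\kappa_\eps T) * \mu_\eps^b]$ defines an element of $\gsf(\csp{\Omega}, \rcrho)$ in the sense of Def.~\ref{def:netDefMap}, and then to address items \ref{enu:restriction}--\ref{enu:iota_indep_of_repr} in an order that allows a single set of estimates to be reused. Well-definedness amounts to $\rho$-moderateness of $(\partial^\alpha((\kappa_\eps T) * \mu_\eps^b)(x_\eps))$ for every $[x_\eps] \in \csp{\Omega}$ and every $\alpha$; independence from the representative $(x_\eps)$ is then automatic via Thm.~\ref{thm:indepRepr}. This reduces to a polynomial-in-$b_\eps$ bound on $\|\partial^\alpha \mu_\eps^b\|_{L^\infty}$ (using that $b_\eps$ and $|\log b_\eps|$ are $\rho$-moderate), combined with a local seminorm bound of $T$ on a fixed compact subset of $\Omega$ that, for $\eps$ small, contains the effective support of $\mu_\eps^b$ translated to $x_\eps$.

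\textbf{The decisive computation.} The central analytic ingredient is Taylor expansion combined with the vanishing of all positive moments of $\tilde\mu$, which is inherited from Lem.~\ref{lem:ColombeauMollifier}.\ref{enu:nullMoments} by passing to spherical coordinates and substituting $u = r^n$. Writing
\[
(f * \mu_\eps^b)(x) = \int f(x - b_\eps^{-1} z)\, \tilde\mu(z)\, \chi\!\left(b_\eps^{-1} z\, |\log b_\eps|\right) dz
\]
for $f \in \cinfty(\Omega)$ and Taylor-expanding $f$ to an arbitrary order $N$, every term of order $1 \le |\alpha| \le N$ integrates to zero, while the Schwartz decay of $\tilde\mu$ absorbs the $\chi$-truncation error faster than any power of $b_\eps^{-1}$. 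This proves \ref{enu:embeddingSmoothUpToInfinitesimals}, and \ref{enu:smoothEmb} then follows because the hypothesis $b \ge \diff{\rho}^{-a}$ turns $b^{-q}$ into a $\rho$-negligible quantity for sufficiently large $q$. A dualised version of the same expansion proves \ref{enu:valuesDistr}: one writes $\int (\iota_\Omega^b T)_\eps(x)\phi(x)\,dx = \langle T, \kappa_\eps (\check\mu_\eps^b * \phi) \rangle$ and identifies $\check\mu_\eps^b * \phi$ with $\phi$ in every Schwartz seminorm up to arbitrarily large negative powers of $b$.

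\textbf{The remaining items and the main obstacle.} Linearity and the sheaf restriction \ref{enu:restriction} reduce to the fact that, for $x \in \csp{\Omega'}$ and $\eps$ small, both cutoff families (that for $\Omega$ and that for $\Omega'$) equal $1$ on a neighbourhood of $x$ whose radius eventually dominates the effective support of $\mu_\eps^b$, so the two convolutions agree up to $\rho$-negligible errors. Support preservation \ref{enu:embInjective} combines this localisation argument with \ref{enu:embeddingSmoothUpToInfinitesimals} applied on open subsets where $T$ vanishes. Commutation with derivatives \ref{enu:commute_der} comes from the classical identity $\partial^\alpha(T * \mu_\eps^b) = (\partial^\alpha T) * \mu_\eps^b$ together with the fact that $\partial^\alpha \kappa_\eps$ is supported away from $\csp{\Omega}$ for $\eps$ small. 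The explicit values \ref{enu:deltaH} are direct substitutions using $\mu(0) = 1$, $\chi(0) = 1$, and (for $H$) the evenness of $\tilde\mu$ and of a suitably chosen $\chi$, which \ref{enu:depXi-e} permits. Finally \ref{enu:depXi-e} and \ref{enu:iota_indep_of_repr} are verified by showing that swapping $\chi$, $\kappa_\eps$ or the representative $(b_\eps)$ within the admitted class produces a correction that is $\rho$-negligible at each $x \in \csp{\Omega}$. The recurring technical obstacle is the simultaneous bookkeeping of the two cutoffs $\kappa_\eps$ and $\chi(\cdot |\log b_\eps|)$: both become essentially harmless at compactly supported points for $\eps$ small, but promoting this into the strict $\rho$-negligibility demanded by the sharp topology requires estimates uniform in every representative $(x_\eps)$ of $x$, which is the delicate point in every one of the nine items.
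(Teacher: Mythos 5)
The paper itself does not prove Thm.~\ref{thm:embeddingD'}; being a review, it defers the proof to \cite{Gio-Kun-Ver19,Gio-Kun-Ver15,GKOS}. Your proposal reconstructs the standard argument used there: moderateness from polynomial-in-$b_\eps$ bounds on $\partial^{\alpha}\mu_{\eps}^{b}$ together with local seminorm estimates of $T$; the vanishing of the higher moments of $\tilde{\mu}$ (correctly traced back to Lem.~\ref{lem:ColombeauMollifier} via the substitution $u=r^{n}$) combined with Taylor expansion and the Schwartz decay of $\tilde{\mu}$ for items \ref{enu:embeddingSmoothUpToInfinitesimals}, \ref{enu:smoothEmb} and \ref{enu:valuesDistr}; and localisation of the two cutoffs at compactly supported points for the rest. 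In approach and in substance this is the intended proof, and the individual estimates you describe do go through.

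The one step that does not close as written is item \ref{enu:embInjective}. Combining the restriction property \ref{enu:restriction} with \ref{enu:embeddingSmoothUpToInfinitesimals} on open sets where $T$ vanishes yields only the inclusion $\supp(\iota_{\Omega}^{b}T)\subseteq\supp(T)$; the sheaf-monomorphism claim requires the reverse inclusion, i.e.\ that $\iota_{\Omega}^{b}(T)|_{\csp{\Omega'}}=0$ forces $T|_{\Omega'}=0$, and this cannot be obtained by applying \ref{enu:embeddingSmoothUpToInfinitesimals} to sets where $T$ vanishes. You must instead reuse the duality computation you set up for \ref{enu:valuesDistr}: vanishing as a GSF on $\csp{\Omega'}$ means (by the standard identification of pointwise and uniform negligibility on compact sets) that the net $\left(\left(\kappa_{\eps}T\right)*\mu_{\eps}^{b}\right)$ is negligible uniformly on $\supp\phi$ for each $\phi\in\D(\Omega')$, so $\int(\iota_{\Omega}^{b}T)_{\eps}\,\phi\to0$, while classically $\langle\kappa_{\eps}T,\check{\mu}_{\eps}^{b}*\phi\rangle\to\langle T,\phi\rangle$ as $\eps\to0^{+}$; hence $\langle T,\phi\rangle=0$. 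Note that this last argument uses only the ordinary limit $\eps\to0^{+}$ and therefore works without the hypothesis $b\ge\diff{\rho}^{-a}$, which item \ref{enu:embInjective} — unlike item \ref{enu:valuesDistr} — does not assume. With that supplement the proposal is complete.
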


Whenever we use the notation $\iota^{b}$ for an embedding, we assume
that $b\in\rcrho$ satisfies the overall assumptions of Thm.~\ref{thm:embeddingD'}
and of \ref{enu:smoothEmb} in that Theorem, and that $\iota^{b}$
has been defined as in \eqref{eq:ColEmb} using a Colombeau mollifier
$\mu$ for the given dimension.
\begin{rem}
\label{rem:embedding_properties}~
\begin{enumerate}
\item \label{enu:embDelta}Let $\delta$, $H\in\gsf(\rcrho,\rcrho)$ be
the corresponding $\iota^{b}$-embeddings of the Dirac delta and of
the Heaviside function. Then $\delta(x)=b\cdot\mu(b\cdot x)$ and
$\delta(x)=0$ if $x$ is near-standard and $\st{x}\ne0$ or if $x$
is infinite because $\mu\in\mathcal{S}(\R)$. Also, by construction
of $\mu_{\eps}^{b}$, $\delta$ can be represented like in the first
diagram of Fig.~\ref{fig:MollifierHeaviside}. E.g., $\delta(k/b)=0$
for each $k\in\Z\setminus\{0\}$, and each $\frac{k}{b}$ is a nonzero
infinitesimal. Similar properties can be stated e.g.~for $\delta^{2}(x)=b^{2}\cdot\mu(b\cdot x)^{2}$.
\item \label{enu:embH}Analogously, we have $H(x)=1$ if $x$ is near-standard
and $\st{x}>0$ or if $x>0$ is infinite; $H(x)=0$ if $x$ is near-standard
and $\st{x}<0$ or if $x<0$ is infinite.
\item Let $\text{vp}(\frac{1}{x})\in\mathcal{D}'(\R)$ be the Cauchy principal
value. If $x=[x_{\eps}]$ is far from the origin, in the sense that
$|x|\ge r$ for some $r\in\R_{>0}$. Then $\iota_{\R}^{b}(\text{vp}(\frac{1}{x}))(x)=\frac{1}{x}$.
The behavior of the GSF $\iota_{\R}^{b}(\text{vp}(\frac{1}{x}))(-)$
in an infinitesimal neighborhood of the origin depends on the Colombeau
mollifier $\mu$. For example, if in Lem.~\ref{lem:ColombeauMollifier}
we add the linear condition $\int\frac{\mu_{n}(x)}{x}\,\diff x=0$,
then also $\iota_{\R}^{b}(\text{vp}(\frac{1}{x}))(0)=0$.
\end{enumerate}
Colombeau's special (or simplified) algebra $\mathcal{G}(\Omega)$
(see \cite{C1,GKOS}) is defined, for $\Om\sse\R^{n}$ open, as the
quotient $\gs(\Om):=\esm(\Om)/\ns(\Om)$ of \emph{moderate nets} modulo
\emph{negligible nets}, where 
\[
\mathcal{E}_{M}(\Omega):=\{(u_{\eps})\in\cinfty(\Omega)^{I}\mid\forall K\Subset\Omega\,\forall\alpha\in\N^{n}\,\exists N\in\N:\sup_{x\in K}|\partial^{\alpha}u_{\eps}(x)|=O(\eps^{-N})\}
\]
and 
\[
\ns(\Omega):=\{(u_{\eps})\in\cinfty(\Omega)^{I}\mid\forall K\Subset\Omega\,\forall\alpha\in\N^{n}\,\forall m\in\N:\sup_{x\in K}|\partial^{\alpha}u_{\eps}(x)|=O(\eps^{m})\}.
\]
It follows from \cite[Th. 37]{GKOS} that $\gs(\Om)$ can be identified,
in the special case of $\rho(\eps)=\eps$, with the algebra $\gsf(\csp{\Omega},\RC{\rho})$
of GSF defined on finite points of $\Omega$. In this setting, Thm.~\ref{thm:embeddingD'}
gives an alternative point of view of the well known facts that the
Colombeau algebra contains $\cinfty(\Om)$ as a faithful subalgebra,
$\D'(\Om)$ as a linear subspace and that the embedding is a sheaf
morphism that commutes with partial derivatives. More general domains
are both useful in applications, for solutions of differential equations,
see Sec.~\ref{sec:Differential-equations-1}, for Fourier transform,
see \cite{MT-Gio22}, and are indeed a necessary requirement for obtaining
a construction that is closed with respect to composition of GF.
\end{rem}

\subsection{\label{subsec:ClosureComposition}Closure with respect to composition}

In contrast to the case of distributions, there is no problem in considering
the composition of two GSF. This property opens new interesting possibilities,
e.g.~in considering differential equations $y'=f(y,t)$, where $y$
and $f$ are GSF. For instance, there is no problem in studying $y'=\delta(y)$
(see Sec.~\ref{sec:Differential-equations-1}).
\begin{thm}
\label{thm:GSFcategory} Subsets $S\subseteq\RC{\rho}^{s}$ with the
trace of the sharp topology, and generalized smooth maps as arrows
form a subcategory of the category of topological spaces. We will
call this category $\gsf$, the \emph{category of GSF}.
\end{thm}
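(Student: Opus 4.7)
The plan is to verify the three requirements for a subcategory of $\mathbf{Top}$: (a) every identity map $\mathrm{id}_S : S \ra S$ (for $S \sse \rcrho^s$) is a GSF; (b) the composition of two composable GSF is again a GSF; (c) every GSF is continuous with respect to the trace of the sharp topology. Item (c) has already been established in Thm.~\ref{thm:GSF-continuity}.\ref{enu:GSF-cont}, so only (a) and (b) need fresh work.

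For (a), I would present the constant net $\mathrm{id}_\eps := \mathrm{id}_{\R^s}$ as a defining net: its partial derivatives are either components of the identity matrix or zero, hence trivially $\rho$-moderate; the domain condition $S \sse \sint{\R^s} = \rcrho^s$ is automatic; and via \eqref{eq:f-u-relations} the associated set-theoretic map on $S$ is $\mathrm{id}_S$. Moreover, composition and identities in $\gsf$ agree with those in $\mathbf{Top}$ since both are just set-theoretic operations on continuous maps, so there is no categorical subtlety beyond (a), (b), (c).

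For (b), fix $f \in \gsf(X,Y)$ and $g \in \gsf(Y,Z)$ with $X \sse \rcrho^n$, $Y \sse \rcrho^d$, $Z \sse \rcrho^k$. A crucial preliminary step is to invoke Thm.~\ref{thm:GSF-continuity}.\ref{enu:globallyDefNet} to replace the given defining nets by globally defined representatives $f_\eps \in \Coo(\R^n,\R^d)$ and $g_\eps \in \Coo(\R^d,\R^k)$. This manoeuvre neutralises what would otherwise be the most annoying technicality, namely verifying that $f_\eps(x_\eps)$ actually lands in a possibly shrinking open set on which $g_\eps$ is defined. With globally defined nets, the candidate $h_\eps := g_\eps \circ f_\eps \in \Coo(\R^n, \R^k)$ is unambiguous, and the computation
\[
[h_\eps(x_\eps)] = [g_\eps(f_\eps(x_\eps))] = g\bigl([f_\eps(x_\eps)]\bigr) = g\bigl(f([x_\eps])\bigr) \in Z
\]
simultaneously confirms Def.~\ref{def:netDefMap}.\ref{enu:dom-cod} and shows that $(h_\eps)$ represents the set-theoretic composition $g \circ f$ on $X$.

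The main obstacle, and the only genuinely computational point, is verifying Def.~\ref{def:netDefMap}.\ref{enu:partial-u-moderate} for $(h_\eps)$, i.e.~the $\rho$-moderateness of $(\partial^\alpha h_\eps(x_\eps))_\eps$ for every $\alpha \in \N^n$ and every representative $(x_\eps)$ of a point of $X$. For this I would apply Faà di Bruno's formula, which writes $\partial^\alpha(g_\eps\circ f_\eps)(x_\eps)$ as a finite polynomial expression in the quantities $\partial^\beta g_\eps(f_\eps(x_\eps))$ with $|\beta|\le|\alpha|$ and $\partial^\gamma f_\eps(x_\eps)$ with $|\gamma|\le|\alpha|$. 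Moderateness of each $(\partial^\gamma f_\eps(x_\eps))$ is immediate from Def.~\ref{def:netDefMap}.\ref{enu:partial-u-moderate} applied to $f$; moderateness of each $(\partial^\beta g_\eps(f_\eps(x_\eps)))$ follows by applying the same definition to $g$ at the point $[f_\eps(x_\eps)] = f([x_\eps]) \in Y$, using that $(f_\eps(x_\eps))$ is a moderate representative of this point (the $|\gamma|=0$ case for $f$). Since $\R_\rho$ is a ring, finite sums and products of $\rho$-moderate nets remain $\rho$-moderate, which closes the verification and yields $g \circ f \in \gsf(X,Z)$. Combined with (a) and (c), this establishes $\gsf$ as a subcategory of $\mathbf{Top}$.
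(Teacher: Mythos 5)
Your proposal is correct and follows essentially the same route as the proof in the literature the paper cites (the paper itself states Thm.~\ref{thm:GSFcategory} without proof): identities via the constant net $\mathrm{id}_{\R^{s}}$, continuity via Thm.~\ref{thm:GSF-continuity}.\ref{enu:GSF-cont}, and closure under composition by passing to globally defined representatives (Thm.~\ref{thm:GSF-continuity}.\ref{enu:globallyDefNet}) and checking moderateness of $\partial^{\alpha}(g_{\eps}\circ f_{\eps})(x_{\eps})$ with the chain rule/Fa\`a di Bruno, using that $(f_{\eps}(x_{\eps}))$ is a moderate representative of $f([x_{\eps}])\in Y$. No gaps.
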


\noindent For instance, we can think of the Dirac delta as a map of
the form $\delta:\RC{\rho}\longrightarrow\RC{\rho}$, and therefore
the composition $e^{\delta}$ is defined in $\{x\in\RC{\rho}\mid\exists z\in\RC{\rho}_{>0}:\ \delta(x)\le\log z\}$,
which of course does not contain $x=0$ but only suitable non zero
infinitesimals. On the other hand, $\delta\circ\delta:\RC{\rho}\ra\RC{\rho}$.
Moreover, from the inclusion of ordinary smooth functions (Thm.~\ref{thm:embeddingD'})
and the closure with respect to composition, it directly follows that
every $\gsf(U,\RC{\rho})$ is an algebra with pointwise operations
for every subset $U\subseteq\RC{\rho}^{n}$.

A natural way to define a GSF is to follow the original idea of classical
authors (see \cite{Laug89,Dir2}) to fix an infinitesimal or infinite
parameter in a suitable ordinary smooth function. We will call this
type of GSF of \emph{Cauchy-Dirac type}; the next theorem specifies
this notion and states that GSF are of Cauchy-Dirac type whenever
the generating net $(f_{\eps})$ is smooth in $\eps$.
\begin{cor}
\label{cor:quasi-stdAreGSF}Let $X\subseteq\R^{n}$, $Y\subseteq\R^{d}$,
$P\subseteq\R^{m}$ be open sets and $\phi\in\cinfty(P\times X,Y)$
be an ordinary smooth function. Let $p\in[P]$, and define $f_{\eps}:=\phi(p_{\eps},-)\in\cinfty(X,Y)$,
then $[f_{\eps}(-)]:[X]\longrightarrow[Y]$ is a GSF. In particular,
if $f:[X]\longrightarrow[Y]$ is a GSF defined by $(f_{\eps})$ and
the net $(f_{\eps})$ is smooth in $\eps$, i.e.\ if 
\[
\exists\phi\in\cinfty((0,1)\times X,Y):\ f_{\eps}=\phi(\eps,-)\quad\forall\eps\in(0,1),
\]
and if $[\eps]\in\rcrho$, then the GSF $f$ is of Cauchy-Dirac type
because $f(x)=\phi([\eps],x)$ for all $x\in[X]$. Finally, Cauchy-Dirac
GSF are closed with respect to composition.
\end{cor}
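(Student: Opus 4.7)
The plan is to verify the two defining conditions of Definition~\ref{def:netDefMap} for the net $f_\eps := \phi(p_\eps, -) \in \cinfty(X, Y)$ with constant $\Omega_\eps := X$, and then read off the conclusion from Definition~\ref{def:generalizedSmoothMap}. I fix a representative $(p_\eps) \in \R_\rho^m$ of $p \in [P]$, so that $p_\eps \in P$ for $\eps$ small. For any $[x_\eps] \in [X]$, Theorem~\ref{thm:strongMembershipAndDistanceComplement}.\ref{enu:internalGeneratedByClosed} allows me to select a representative with $x_\eps \in X$ eventually, so that $(p_\eps, x_\eps)$ belongs to $P\times X$ and $\phi(p_\eps,x_\eps)$ is well defined. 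The value condition is immediate from $f_\eps(x_\eps) = \phi(p_\eps,x_\eps) \in Y$, which yields $[f_\eps(x_\eps)] \in [Y]$.

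The substantive step is $\rho$-moderation of every partial derivative, Def.~\ref{def:netDefMap}.\ref{enu:partial-u-moderate}. Since $\partial^\alpha f_\eps(x) = (\partial_x^\alpha \phi)(p_\eps, x)$ and each $\partial_x^\alpha \phi$ is itself a fixed ordinary smooth function on $P\times X$, this reduces to the claim: if $\psi \in \cinfty(P\times X, \R^d)$ and $(p_\eps, x_\eps) \in \R_\rho^{m+n}$ lies in $P\times X$ eventually in the sense prescribed by $p \in [P]$ and $[x_\eps] \in [X]$, then $(\psi(p_\eps,x_\eps)) \in \R_\rho^d$. I expect this moderation step to be the main obstacle, because an arbitrary $\psi \in \cinfty$ may blow up toward the boundary of $P\times X$; handling it rigorously requires that the representatives of $(p,x)$ remain in a $\rho$-polynomially thick interior of $P\times X$, which is exactly what the combination of $\rho$-moderation with $p\in[P]$ and $[x_\eps]\in[X]$ encodes once one chooses the representatives judiciously. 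The polynomial bound $|\psi(p_\eps,x_\eps)| \le \rho_\eps^{-N}$ then follows from the boundedness of $\psi$ on sharply bounded subsets of its open domain.

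Once the first assertion is established, the Cauchy-Dirac clause follows by specialising to $P = (0,1)$, $m=1$ and $p := [\eps] \in \rcrho$: then $f_\eps = \phi(\eps,-) = \phi(p_\eps,-)$ and, for any $x=[x_\eps]\in[X]$, both $f(x)$ and $\phi(p,x)$ are represented by the net $\phi(\eps,x_\eps)$, giving $f(x) = \phi([\eps],x)$. For closure under composition, if $f$ is Cauchy-Dirac via $\phi(p,-)$ with $f\colon[X]\to[Y]$ and $g$ is Cauchy-Dirac via $\psi(q,-)$ with $g\colon[Y]\to[Z]$, then $g\circ f$ is represented by $\psi(q_\eps,\phi(p_\eps,x_\eps)) = \Psi((p_\eps,q_\eps),x_\eps)$ for the smooth function $\Psi(p,q,x) := \psi(q,\phi(p,x))$. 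Since $(p,q) \in [P\times Q]$ and $\Psi \in \cinfty((P\times Q)\times X, Z)$, applying the first assertion of the corollary to $\Psi$ presents $g\circ f$ as a Cauchy-Dirac GSF, completing the proof.
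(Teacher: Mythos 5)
The skeleton of your argument is the right one (verify Def.~\ref{def:netDefMap} for the net $\phi(p_{\eps},-)$, then specialise to $P=(0,1)$, $p=[\eps]$ for the Cauchy--Dirac clause, and compose via $\Psi(p,q,x):=\psi(q,\phi(p,x))$ for closure under composition), and those last two reductions are correct modulo the first assertion. The genuine gap is precisely the step you yourself flag as the main obstacle: neither of the two justifications you offer for moderateness is valid. First, membership in the internal set $[X]$ does \emph{not} place a representative in a ``$\rho$-polynomially thick interior'' of $X$: by Thm.~\ref{thm:strongMembershipAndDistanceComplement}.\ref{enu:stronglyIntSetsDistance} that lower bound $d(x_{\eps},X^{\text{c}})>\rho_{\eps}^{q}$ is the defining feature of the \emph{strongly} internal set $\sint{X}$, whereas $[x_{\eps}]\in[X]$ only controls $d(x_{\eps},X)$ and allows $x_{\eps}$ to approach $\partial X$ arbitrarily fast (indeed $[X]=[\text{cl}(X)]$). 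For the same reason your choice $\Omega_{\eps}:=X$ does not satisfy Def.~\ref{def:netDefMap}.\ref{enu:dom-cod}, which requires $[X]\subseteq\sint{\Omega_{\eps}}$; one must enlarge the domains or pass through Thm.~\ref{thm:GSF-continuity}.\ref{enu:globallyDefNet}. Second, a smooth function on an open set is \emph{not} bounded on bounded subsets of that set ($1/x$ on $(0,1)$); it is bounded on compact subsets, and nothing in $p\in[P]$, $[x_{\eps}]\in[X]$ confines the representatives to a fixed compact subset of $P\times X$ — that is what compactly supported points $\csp{X}$ would buy you.

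This is not a fixable presentation issue: no soft argument closes the step in the generality in which you (and the statement) pose it. The paper's own example \eqref{eq:exp} already shows this: for $\phi(p,x)=e^{x}$ on $P\times\R$ one has $[X]=[\R]=\rcrho$, yet $[e^{x_{\eps}}]$ is moderate only on the proper subset displayed in \eqref{eq:exp}, so the required bound $|\phi(p_{\eps},x_{\eps})|\le\rho_{\eps}^{-N}$ fails at $x=[\rho_{\eps}^{-1}]$; similarly $\phi(p,x)=\sin\bigl(e^{e^{1/x}}\bigr)$ on $P\times(0,1)$ destroys derivative moderateness at points of $[X]$ infinitesimally close to $\partial X$. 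A correct treatment must therefore either restrict the evaluation points (to $\csp{X}$, or to strongly internal/functionally compact subsets together with growth control on $\phi$ and its derivatives) or carry the moderateness of $(\partial^{\alpha}_{x}\phi(p_{\eps},x_{\eps}))$ as an explicit hypothesis, as in Def.~\ref{def:netDefMap}.\ref{enu:partial-u-moderate}. Your proof asserts exactly the inequality that is in question, with incorrect supporting claims, so the central step would fail as written.
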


\begin{example}
\label{enu:deltaCompDelta}The composition $\delta\circ\delta\in\gsf(\rcrho,\rcrho)$
is given by $(\delta\circ\delta)(x)=b\mu\left(b^{2}\mu(bx)\right)$
and is an even function. If $x$ is near-standard and $\st{x}\ne0$,
or $x$ is infinite, then $(\delta\circ\delta)(x)=b$. Since $(\delta\circ\delta)(0)=0$,
by the intermediate value theorem (see Cor\@.~\ref{cor:intermValue}
below), we have that $\delta\circ\delta$ attains any value in the
interval $[0,b]\subseteq\rcrho$. If $0\le x\le\frac{1}{2b}$, then
(for a $\mu$ as in Fig.\ \ref{fig: Col_mol}) $x$ is infinitesimal
and $(\delta\circ\delta)(x)=0$ because $\delta(x)\ge b\mu\left(\frac{1}{k}\right)$
is an infinite number. If $x=\frac{k}{b}$ for some $k\in\N_{>0}$,
then $x$ is still infinitesimal but $(\delta\circ\delta)(x)=b$ because
$\mu(bx)=0$. A representation of $\delta\circ\delta$ is given in
Fig.~\ref{fig:deltaCompDelta}. Analogously, one can deal with $H\circ\delta$
and $\delta\circ H$.
\end{example}

\begin{figure}
\begin{centering}
\includegraphics[scale=0.2]{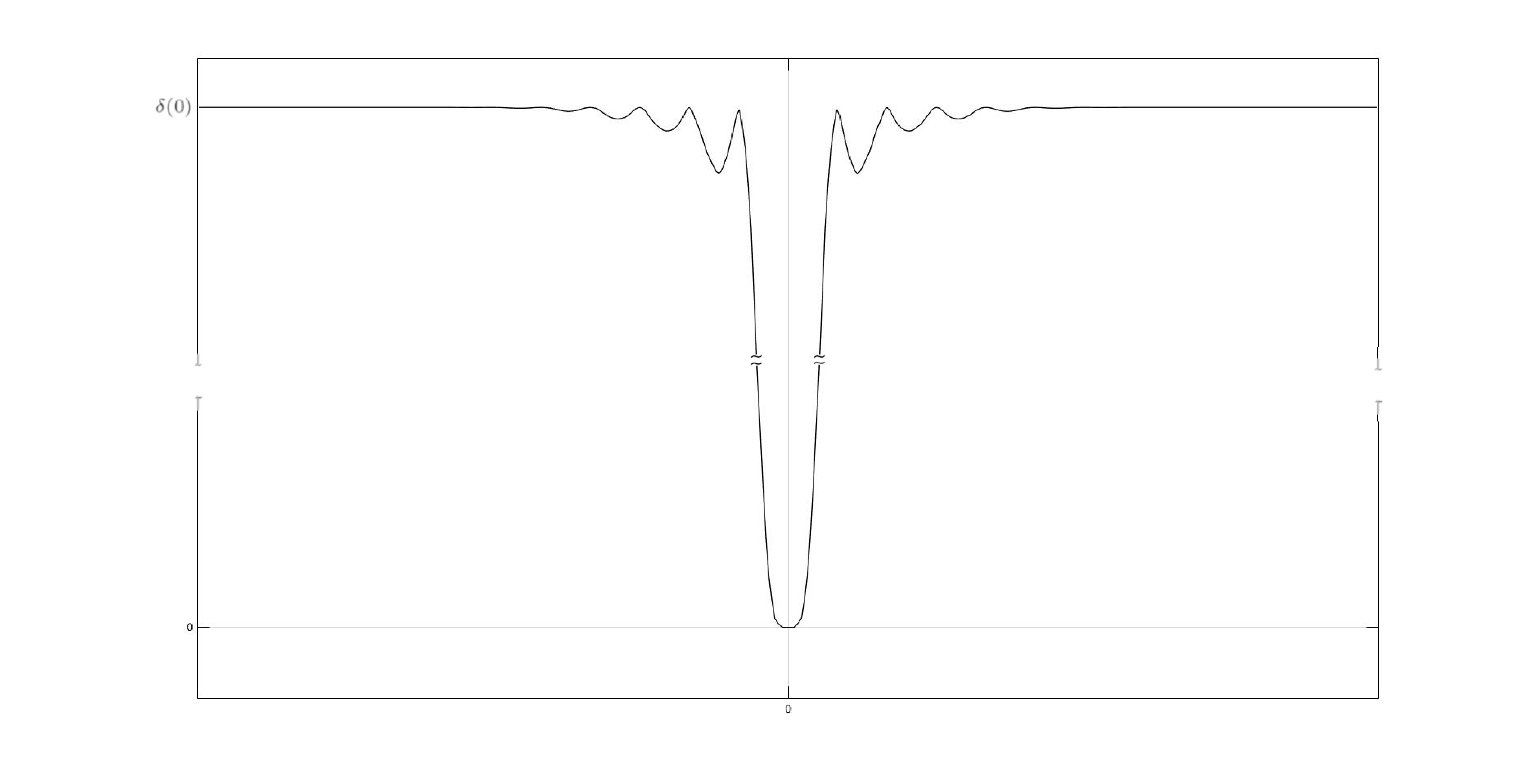}
\par\end{centering}
\caption{\label{fig:deltaCompDelta}A representation of $\delta\circ\delta$}
\end{figure}

The theory of GSF originates from the theory of Colombeau quotient
algebras. In this well-developed approach, strong analytic tools,
including microlocal analysis, and an elaborate theory of pseudodifferential
and Fourier integral operators have been developed over the past few
years (cf.~\cite{C1,GKOS} and references therein). As we already
mentioned above, in these quotient algebras, each generalized function
generates a unique GSF defined on a subset of $\RC{(\eps)}$. On the
other hand, Colombeau GF are in general not closed with respect to
composition, see e.g.~\cite{GKOS}.

Similarly, we can define generalized functions of class $\gckf$,
with $k\le+\infty$:
\begin{defn}
Let $X\subseteq\RC{\rho}^{n}$ and $Y\subseteq\RC{\rho}^{d}$ be arbitrary
subsets of generalized points and $k\in\N\cup\{+\infty\}$. Then we
say that 
\[
f:X\longrightarrow Y\text{ is a \emph{generalized }}\mathcal{C}^{k}\text{ \emph{function}}
\]
if there exists a net $f_{\eps}\in\mathcal{C}^{k}(\Omega_{\eps},\R^{d})$
defining $f$ in the sense that
\begin{enumerate}
\item $X\subseteq\langle\Omega_{\eps}\rangle$,
\item $f([x_{\eps}])=[f_{\eps}(x_{\eps})]\in Y$ for all $x=[x_{\eps}]\in X$,
\item \label{enu:moderCk}$(\partial^{\alpha}f_{\eps}(x_{\eps}))\in\R_{{\scriptscriptstyle \rho}}^{d}$
for all $x=[x_{\eps}]\in X$ and all $\alpha\in\N^{n}$ such that
$|\alpha|\le k$.
\item \label{enu:k1}$\forall\alpha\in\N^{n}\,\forall[x_{\eps}],[x'_{\eps}]\in X:\ |\alpha|=k,\ [x_{\eps}]=[x'_{\eps}]\ \Rightarrow\ [\partial^{\alpha}f_{\eps}(x_{\eps})]=[\partial^{\alpha}f_{\eps}(x'_{\eps})]$.
\item \label{enu:k2}For all $\alpha\in\N^{n}$, with $|\alpha|=k$, the
map $[x_{\eps}]\in X\mapsto[\partial^{\alpha}f_{\eps}(x_{\eps})]\in\rti^{d}$
is continuous in the sharp topology.
\end{enumerate}
The space of generalized $\mathcal{C}^{k}$ functions from $X$ to
$Y$ is denoted by $\gckf(X,Y)$.
\end{defn}

\noindent Note that properties \ref{enu:k1}, \ref{enu:k2} are required
only for $|\alpha|=k$ because for lower length they can be proved
using property \ref{enu:moderCk} and the classical mean value theorem
for $f_{\eps}$ (see e.g.~\cite{Gio-Kun-Ver19}). From Thm.~\ref{thm:indepRepr}
and Thm.~\ref{thm:GSF-continuity}.\ref{enu:GSF-cont} it follows
that this definition of $\gckf$ is equivalent to Def.~\ref{def:netDefMap}
if $k=+\infty$. Moreover, properties similar to \ref{enu:globallyDefNet}
and Thm.~\ref{thm:GSFcategory} can also be proved for $\gckf$.

Note that the absolute value function $|-|:\rti\ra\rti$ is not a
GSF because its derivative is not sharply continuous at the origin;
clearly, it is a $\gcf{0}$ function.

\subsection{\label{sec:Differential-calculus}Differential calculus of GSF}

In this section we show how the derivatives of a GSF can be calculated
using a form of incremental ratio. The idea is to prove the Fermat-Reyes
theorem for GSF (see \cite{Gio-Kun-Ver19,GK13b,Koc}). Essentially,
this theorem shows the existence and uniqueness of another GSF serving
as incremental ratio. This is the first of a long list of results
demonstrating the close similarities between ordinary smooth functions
and GSF.

\noindent In the present setting, the Fermat-Reyes theorem (also called
Carathéodory definition of derivative) is the following.
\begin{thm}
\noindent \label{thm:FR-forGSF} Let $U\subseteq\RC{\rho}^{n}$ be
a sharply open set, let $v=[v_{\eps}]\in\RC{\rho}^{n}$, $k\in\N\cup\{+\infty\}$,
and let $f\in\gcf{k+1}(U,\RC{\rho})$ be a $\gcf{k+1}$map generated
by the net of functions $f_{\eps}\in\mathcal{C}^{k+1}(\Omega_{\eps},\R)$.
Then
\begin{enumerate}
\item \label{enu:existenceRatio}There exists a sharp neighborhood $T$
of $U\times\{0\}$ and a map $r\in\gckf(T,\RC{\rho})$, called the
\emph{generalized incremental ratio} of $f$ \emph{along} $v$, such
that 
\[
\forall(x,h)\in T:\ f(x+hv)=f(x)+h\cdot r(x,h).
\]
\item \label{enu:uniquenessRatio}Any two generalized incremental ratios
coincide on a sharp neighborhood of $U\times\{0\}$, so that we can
use the notation $\frac{\partial f}{\partial v}[x;h]:=r(x,h)$ if
$(x,h)$ are sufficiently small.
\item \label{enu:defDer}We have $\frac{\partial f}{\partial v}[x;0]=\left[\frac{\partial f_{\eps}}{\partial v_{\eps}}(x_{\eps})\right]$
for every $x\in U$ and we can thus define $df(x)\cdot v:=\frac{\partial f}{\partial v}(x):=\frac{\partial f}{\partial v}[x;0]$,
so that $\frac{\partial f}{\partial v}\in\gckf(U,\RC{\rho})$.
\end{enumerate}
\end{thm}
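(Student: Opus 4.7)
The natural candidate for the incremental ratio comes from the fundamental theorem of calculus applied representative-wise: set
\[
r_\eps(x,h) := \int_0^1 \nabla f_\eps(x + t h v_\eps)\cdot v_\eps\,\diff{t},
\]
so that whenever the segment $\{x + thv_\eps : t\in[0,1]\}$ is contained in $\Omega_\eps$, one has the classical identity $f_\eps(x+hv_\eps) - f_\eps(x) = h\cdot r_\eps(x,h)$. The strategy is to (a) locate a sharp neighborhood $T$ of $U\times\{0\}$ on which this makes sense, (b) check that $(r_\eps)$ defines a $\gckf$ map there, (c) establish uniqueness via density of invertibles, and (d) read off $\partial f/\partial v[x;0]$.

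\emph{Step 1: Construction of $T$.} For each $x_0=[x_{0,\eps}]\in U$, sharp openness gives some $\delta(x_0)\in\rcrho_{>0}$ with $B_{\delta(x_0)}(x_0)\subseteq U\subseteq\sint{\Omega_\eps}$. By Thm.~\ref{thm:strongMembershipAndDistanceComplement}.\ref{enu:stronglyIntSetsDistance} (applied to the closed ball description \eqref{eq:closureBall}), we may, after shrinking $\delta(x_0)$, assume $\overline{B_{\delta(x_0)/2}(x_0)}\subseteq\sint{\Omega_\eps}$. Define
\[
T := \bigcup_{x_0\in U} B_{\delta(x_0)/4}(x_0)\times B_{\delta(x_0)/(4(|v|+1))}(0),
\]
which is a sharp open neighborhood of $U\times\{0\}$. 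A direct triangle-inequality estimate shows that for any $(x,h)\in T$ and any $t\in[0,1]_\R$ one has $x+thv\in B_{\delta(x_0)/2}(x_0)$, so the integral is well defined at the representative level and $r_\eps(x_\eps,h_\eps)$ is a well defined element of $\R$ for $\eps$ small.

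\emph{Step 2: $(r_\eps)$ defines a $\gckf$-map on $T$.} Differentiation under the integral sign expresses any $\partial^{\alpha}r_\eps$ with $|\alpha|\leq k$ as a finite sum of integrals over $[0,1]$ of partial derivatives of $f_\eps$ of order at most $k+1$, multiplied by polynomial expressions in $v_\eps$. Moderation of these derivatives on the sharp neighborhood follows from the $\gcf{k+1}$-moderation hypothesis on $f$ together with moderation of $v$; independence from representatives at order $k$ follows analogously from the $\gcf{k+1}$ regularity of $f$ (by the mean-value argument sketched after Def.~of $\gckf$). Sharp continuity of the top-order derivatives is inherited from Thm.~\ref{thm:GSF-continuity}.\ref{enu:GSF-cont}. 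Passing to equivalence classes yields $r\in\gckf(T,\rcrho)$ with $f(x+hv)=f(x)+h\cdot r(x,h)$ for all $(x,h)\in T$, proving~\ref{enu:existenceRatio}.

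\emph{Step 3: Uniqueness.} Suppose $r_1,r_2\in\gckf(T',\rcrho)$ are two incremental ratios on a common sharp neighborhood $T'$ of $U\times\{0\}$. Then $h\bigl(r_1(x,h)-r_2(x,h)\bigr)=0$ for all $(x,h)\in T'$. For $(x,h)\in T'$ with $h$ invertible in $\rcrho$ we may divide and conclude $r_1(x,h)=r_2(x,h)$. By Lem.~\ref{lem:invDense}, invertible elements are sharply dense, so in every sharp neighborhood of any $(x,0)\in T'$ we can find $(x,h)$ with $h$ invertible; since $r_1-r_2$ is sharply continuous by Thm.~\ref{thm:GSF-continuity}.\ref{enu:GSF-cont}, the equality extends to all of $T'$.

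\emph{Step 4: The value at $h=0$.} Evaluating the explicit representative gives $r_\eps(x_\eps,0)=\nabla f_\eps(x_\eps)\cdot v_\eps=\frac{\partial f_\eps}{\partial v_\eps}(x_\eps)$, hence $\partial f/\partial v[x;0]=[\partial f_\eps/\partial v_\eps(x_\eps)]$, and $\partial f/\partial v\in\gckf(U,\rcrho)$ as the restriction of $r$ to the sharp open set $U\times\{0\}\subseteq T$, yielding~\ref{enu:defDer}.

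\emph{Main obstacle.} The genuinely delicate part is Step~1 together with Step~2: one must guarantee that the line segment $[x,x+hv_\eps]$ lies inside $\Omega_\eps$ uniformly in $\eps$, that $T$ is genuinely a sharp open neighborhood of $U\times\{0\}$ despite $v$ being a generalized (possibly infinite) vector, and that after differentiating the integral formula the resulting expressions remain $\rho$-moderate with values independent of representatives. Once this bookkeeping is handled, uniqueness via Lem.~\ref{lem:invDense} and evaluation at $h=0$ are short.
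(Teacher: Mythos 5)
Your proposal is correct and follows essentially the same route as the proof the paper refers to (in \cite{Gio-Kun-Ver19,GK13b}): the incremental ratio is constructed representative-wise as $r_\eps(x,h)=\int_0^1\nabla f_\eps(x+thv_\eps)\cdot v_\eps\,\diff{t}$ on a sharp neighborhood of $U\times\{0\}$ obtained by shrinking radii so that the segments stay in $\Omega_\eps$ uniformly in $t$ and $\eps$, and uniqueness is obtained exactly as you do, by dividing at invertible $h$ and using Lem.~\ref{lem:invDense} together with sharp continuity. The only cosmetic slip is calling $U\times\{0\}$ a ``sharp open set'' in Step 4 — it is a slice, not open in $\rcrho^{n+1}$; one should instead say that $\frac{\partial f}{\partial v}$ is the GSF on $U$ defined by the net $x\mapsto\nabla f_\eps(x)\cdot v_\eps$.
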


Note that this result allows us to consider the partial derivative
of $f$ with respect to an arbitrary generalized vector $v\in\rcrho^{n}$
which can be, e.g., near-standard or infinite. Since any partial derivative
of a GSF is still a GSF, higher order derivatives $\frac{\partial^{\alpha}f}{\partial v^{\alpha}}\in\gsf(U,\rcrho)$
are simply defined recursively.

As follows from Thm.\ \ref{thm:FR-forGSF}.\ref{enu:existenceRatio}
and Thm\@.~\ref{thm:embeddingD'}.\ref{enu:commute_der}, the concept
of derivative defined using the Fermat-Reyes theorem is compatible
with the classical derivative of Schwartz distributions via the embeddings
$\iota^{b}$ from Thm.\ \ref{thm:embeddingD'}. The following result
follows from the analogous properties for the nets of smooth functions
defining $f$ and $g$ or directly from the Fermat-Reyes Thm.~\ref{thm:FR-forGSF}.
\begin{thm}
\label{thm:linearityLeibnizDer} Let $U\subseteq\rcrho^{n}$ be an
open subset in the sharp topology, let $v\in\rcrho^{n}$ and $f$,
$g:U\longrightarrow\rcrho$ be generalized smooth maps. Then
\begin{enumerate}
\item $\frac{\partial(f+g)}{\partial v}=\frac{\partial f}{\partial v}+\frac{\partial g}{\partial v}$
\item $\frac{\partial(r\cdot f)}{\partial v}=r\cdot\frac{\partial f}{\partial v}\quad\forall r\in\rcrho$
\item $\frac{\partial(f\cdot g)}{\partial v}=\frac{\partial f}{\partial v}\cdot g+f\cdot\frac{\partial g}{\partial v}$
\item For each $x\in U$, the map $\diff{f}(x).v:=\frac{\partial f}{\partial v}(x)\in\rcrho$
is $\rcrho$-linear in $v\in\rcrho^{n}$.
\item Let $V\subseteq\rcrho^{d}$ be open subsets in the sharp topology
and $h\in\gsf(V,U)$ be a generalized smooth maps. Then for all $x\in V$
and all $v\in\rcrho^{d}$ 
\begin{align*}
\frac{\partial\left(f\circ h\right)}{\partial v}(x) & =\diff{f}\left(h(x)\right).\frac{\partial h}{\partial v}(x)\\
\,\diff{\!\left(f\circ h\right)}(x) & =\diff{f}\left(h(x)\right)\circ\diff{h}(x).
\end{align*}
\end{enumerate}
\end{thm}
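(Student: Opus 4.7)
The plan is to reduce each assertion to the corresponding classical identity for the representing smooth nets by appealing to Thm.~\ref{thm:FR-forGSF}.\ref{enu:defDer}, which gives the pointwise representation $\frac{\partial f}{\partial v}(x) = \left[\frac{\partial f_\eps}{\partial v_\eps}(x_\eps)\right]$. Fix representatives $f = [f_\eps]$, $g = [g_\eps]$, $r = [r_\eps]$, $v = [v_\eps]$, $x = [x_\eps]$; the sum $f + g$, scalar multiple $r \cdot f$, and product $f \cdot g$ are themselves GSF defined on $U$ by the nets $(f_\eps + g_\eps)$, $(r_\eps f_\eps)$, and $(f_\eps g_\eps)$ respectively, all the moderateness conditions of Def.~\ref{def:netDefMap}.\ref{enu:partial-u-moderate} being inherited from those of $f$, $g$, $r$ via the Leibniz rule applied $\alpha$-times. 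Properties (i)--(iii) then reduce at once to the classical smooth-function identities
\[
\partial_{v_\eps}(f_\eps + g_\eps) = \partial_{v_\eps} f_\eps + \partial_{v_\eps} g_\eps, \qquad \partial_{v_\eps}(r_\eps f_\eps) = r_\eps \, \partial_{v_\eps} f_\eps, \qquad \partial_{v_\eps}(f_\eps g_\eps) = (\partial_{v_\eps} f_\eps)\, g_\eps + f_\eps\, (\partial_{v_\eps} g_\eps),
\]
evaluated at $x_\eps$; taking $\sim_\rho$-classes yields the required equalities in $\rcrho$.

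For (iv), the classical directional derivative $\partial_{v_\eps} f_\eps(x_\eps) = df_\eps(x_\eps) \cdot v_\eps$ is $\R$-linear in $v_\eps$ for each $\eps$; combining the moderateness of $(\partial^{\alpha} f_\eps(x_\eps))$ from Def.~\ref{def:netDefMap}.\ref{enu:partial-u-moderate} with the moderateness of $(v_\eps)$ shows that $(df_\eps(x_\eps) \cdot v_\eps)$ is $\rho$-moderate for every $v \in \rcrho^n$, and passing to equivalence classes upgrades the $\R$-linearity to $\rcrho$-linearity of the map $v \mapsto \frac{\partial f}{\partial v}(x)$.

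The chain rule (v) is the most delicate statement. The composition $f \circ h$ is a GSF on $V$ by the categorical closure in Thm.~\ref{thm:GSFcategory}, so again by Thm.~\ref{thm:FR-forGSF}.\ref{enu:defDer},
\[
\frac{\partial(f \circ h)}{\partial v}(x) = \left[\frac{\partial(f_\eps \circ h_\eps)}{\partial v_\eps}(x_\eps)\right] = \left[df_\eps(h_\eps(x_\eps)) \cdot \frac{\partial h_\eps}{\partial v_\eps}(x_\eps)\right],
\]
where the second equality is the classical chain rule applied at each $\eps$. Here $\bigl(\frac{\partial h_\eps}{\partial v_\eps}(x_\eps)\bigr)$ is $\rho$-moderate because $h$ is a GSF, while $(df_\eps(h_\eps(x_\eps)))$ is $\rho$-moderate because $h(x) \in U$, which lets us apply Def.~\ref{def:netDefMap}.\ref{enu:partial-u-moderate} for $f$ at the point $h(x)$. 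Identifying the resulting class as $df(h(x)) \cdot \frac{\partial h}{\partial v}(x)$ using (iv) yields the first formula, and varying $v$ gives the operator identity $d(f \circ h)(x) = df(h(x)) \circ dh(x)$.

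The main obstacle is precisely the domain bookkeeping inside (v): one needs $h_\eps(x_\eps)$ to lie in the domain $\Omega_\eps^f$ of the net defining $f$ for $\eps$ small, which is encoded in $h(V) \subseteq U$ via Def.~\ref{def:netDefMap}.\ref{enu:dom-cod} applied to the composition. An alternative, more intrinsic route would construct incremental ratios directly from Thm.~\ref{thm:FR-forGSF}.\ref{enu:existenceRatio}, for instance showing that $r_{fg}(x, h) := r_f(x, h)\, g(x) + f(x + hv)\, r_g(x, h)$ satisfies the defining Fermat-Reyes equation for $fg$, and then specializing to $h = 0$ to recover Leibniz via the uniqueness clause \ref{enu:uniquenessRatio}. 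This avoids representatives but shifts the burden onto verifying that the constructed ratios are themselves GSF on a common sharp neighborhood of $U \times \{0\}$.
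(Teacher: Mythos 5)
Your proposal is correct and follows exactly the route the paper indicates: the paper proves this theorem by the one-line remark that it ``follows from the analogous properties for the nets of smooth functions defining $f$ and $g$ or directly from the Fermat-Reyes Thm.~\ref{thm:FR-forGSF}'', and your argument is precisely a fleshed-out version of the first of these routes (reduction to the $\eps$-wise classical identities via Thm.~\ref{thm:FR-forGSF}.\ref{enu:defDer}, with the domain bookkeeping for the composition handled through Def.~\ref{def:netDefMap}.\ref{enu:dom-cod}), while your closing alternative via incremental ratios and the uniqueness clause \ref{enu:uniquenessRatio} is the second route the paper mentions.
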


\subsection{\label{sec:Integral-calculus}Integral calculus using primitives}

In this section, we inquire existence and uniqueness of primitives
$F$ of a GSF $f\in\gsf([a,b],\rcrho)$. To this end, we shall have
to introduce the derivative $F'(x)$ at boundary points $x\in[a,b]$,
i.e.~such that $x-a$ or $b-x$ is not invertible. Let us note explicitly,
in fact, that the Fermat-Reyes Theorem \ref{thm:FR-forGSF} is stated
only for sharply open domains.

The following result shows that every GSF can have at most one primitive
GSF up to an additive constant.
\begin{thm}
\label{thm:uniquenessOfPrimitives}Let $X\subseteq\RC{\rho}$ and
let $f\in\gsf(X,\rcrho)$ be a generalized smooth function. Let $a$,
$b\in\rcrho$, with $a<b$, such that $(a,b)\subseteq X$. If $f'(x)=0$
for all $x\in\text{\emph{int}}(a,b)$, then $f$ is constant on $(a,b)$.
An analogous statement holds if we take any other type of interval
(closed or half closed) instead of $(a,b)$.
\end{thm}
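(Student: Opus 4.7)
The plan is to emulate the classical proof via the mean value theorem, adapted to the non-Archimedean setting. Write $f = [f_\eps(-)]|_X$ with $f_\eps \in \mathcal{C}^\infty(\R,\R)$ (possible by Thm.~\ref{thm:GSF-continuity}.\ref{enu:globallyDefNet}). Observe first that $(a,b) \subseteq \rcrho$ is already sharply open, hence coincides with $\mathrm{int}(a,b)$: given $x \in (a,b)$, by Lem.~\ref{lem:mayer} the strict inequalities $a < x$ and $x < b$ give $m \in \N$ with $x - a > \diff{\rho}^m$ and $b - x > \diff{\rho}^m$, so $B_{\diff{\rho}^m/2}(x) \subseteq (a,b)$. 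The hypothesis therefore reads $f'(z) = 0$ for every $z \in (a,b)$.

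Fix arbitrary $x = [x_\eps]$, $y = [y_\eps] \in (a,b)$; we aim to show $(f_\eps(y_\eps) - f_\eps(x_\eps)) \sim_\rho 0$. Applying the classical mean value theorem to the smooth $f_\eps$ between $x_\eps$ and $y_\eps$ produces $c_\eps$ lying between them, such that
\[
f_\eps(y_\eps) - f_\eps(x_\eps) = f'_\eps(c_\eps) \, (y_\eps - x_\eps).
\]
Moderateness of $(x_\eps), (y_\eps)$ is inherited by $(c_\eps)$, giving $c := [c_\eps] \in \rcrho$.

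The crux is showing $c \in (a,b)$. Invoking Lem.~\ref{lem:mayer} on the strict positivity of $x - a$, $y - a$, $b - x$, $b - y$ produces $m \in \N$ such that, for arbitrary chosen representatives and $\eps$ small, $x_\eps - a_\eps > \rho_\eps^m$, $y_\eps - a_\eps > \rho_\eps^m$, $b_\eps - x_\eps > \rho_\eps^m$, $b_\eps - y_\eps > \rho_\eps^m$. Since $\min(x_\eps, y_\eps) \leq c_\eps \leq \max(x_\eps, y_\eps)$, the same lower bounds transfer to $c_\eps - a_\eps$ and $b_\eps - c_\eps$, so both $c - a$ and $b - c$ are strictly positive and invertible, i.e., $c \in (a,b)$. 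The hypothesis then gives $f'(c) = 0$, equivalently $|f'_\eps(c_\eps)| \leq \rho_\eps^n$ for every $n \in \N$ and $\eps$ small. Combined with a moderate bound $|y_\eps - x_\eps| \leq \rho_\eps^{-N}$, this yields $|f_\eps(y_\eps) - f_\eps(x_\eps)| \leq \rho_\eps^{n-N}$, whence the increment is $\rho$-negligible and $f(x) = f(y)$.

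For half-closed or closed intervals, constancy on $(a,b)$ extends to any included endpoint by the sharp continuity of $f$ (Thm.~\ref{thm:GSF-continuity}.\ref{enu:GSF-cont}): if $a \in [a,b] \subseteq X$ and $K$ denotes the common value of $f$ on $(a,b)$, then for $q \in \N$ large enough the point $a + \diff{\rho}^q$ belongs to $(a,b)$ (because $b - a > 0$ is invertible), and $a + \diff{\rho}^q \to a$ sharply as $q \to \infty$; continuity of $f$ forces $f(a) = K$, and the same argument handles $b$. The only substantive obstacle throughout is keeping the mean-value point $c$ inside the open interval in the sharp sense, which is precisely what the quantitative form of strict positivity furnished by Lem.~\ref{lem:mayer} accomplishes; the rest proceeds as in the classical one-variable proof.
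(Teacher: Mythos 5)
Your proof is correct and follows the standard route: it is essentially the proof of the mean value theorem for GSF (Thm.~\ref{thm:classicalThms}.\ref{enu:meanValue}) unfolded at the level of representatives, with the one genuinely non-classical point — that the $\eps$-wise mean value point $c=[c_\eps]$ lands in the sharply open set $(a,b)$, so that the hypothesis $f'(c)=0$ actually applies — handled correctly via the quantitative invertibility criterion of Lem.~\ref{lem:mayer}. You could shorten the argument by invoking Thm.~\ref{thm:classicalThms}.\ref{enu:meanValue} directly, since the same invertibility estimate shows $[x,y]\subseteq(a,b)$ whenever $x,y\in(a,b)$; the extension to endpoints by sharp continuity is also fine.
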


\begin{rem}
\label{rem:I-function}From the Fermat-Reyes Thm.~\ref{thm:FR-forGSF}
and from Thm.~\ref{thm:uniquenessOfPrimitives}, it follows that
the function $i(x):=1$ if $x\approx0$ and $i(x):=0$ otherwise cannot
be a GSF on any large neighborhood of $x=0$. This example stems from
the property that different standard real numbers can always be separated
by infinitesimal balls.
\end{rem}

At interior points $x\in[a,b]$ in the sharp topology, the definition
of derivative $f^{(k)}(x)$ follows from the Fermat-Reyes Theorem
\ref{thm:FR-forGSF}. At boundary points, we have the following
\begin{thm}
\label{thm:existenceOfDerivativesAtBorderPoints}Let $a$, $b\in\rcrho$
with $a<b$, and $f\in\gsf([a,b],\rcrho)$ be a generalized smooth
function. Then for all $x\in[a,b]$, the following limit exists in
the sharp topology 
\[
\lim_{\substack{y\to x\\
y\in\text{\emph{int}}\left([a,b]\right)
}
}f^{(k)}(y)=:f^{(k)}(x).
\]
Moreover, if the net $f_{\eps}\in\cinfty(\Omega_{\eps},\R)$ defines
$f$ and $x=[x_{\eps}]$, then $f^{(k)}(x)=[f_{\eps}^{(k)}(x_{\eps})]$
and hence $f^{(k)}\in\gsf([a,b],\rcrho)$.
\end{thm}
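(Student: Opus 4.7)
The strategy is to show that the pointwise $k$-th derivative net $(f_\eps^{(k)})$ already defines a single GSF $\tilde f \in \gsf([a,b],\rcrho)$, that $\tilde f$ agrees with $f^{(k)}$ on the sharp interior of $[a,b]$, and that the sharp continuity of $\tilde f$ together with density of the interior yields the announced limit.

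\emph{Step 1 (candidate extension is a GSF).} By Thm.~\ref{thm:GSF-continuity}.\ref{enu:globallyDefNet} I may assume $f$ is represented by a globally defined net $f_\eps \in \cinfty(\R,\R)$. Set $g_\eps := f_\eps^{(k)}$. For every $x = [x_\eps] \in [a,b]$ and every $\alpha \in \N$ one has $\partial^\alpha g_\eps(x_\eps) = f_\eps^{(\alpha+k)}(x_\eps)$, which is $\rho$-moderate by Def.~\ref{def:netDefMap}.\ref{enu:partial-u-moderate} applied to $f$. Hence $(g_\eps)$ defines a GSF $\tilde f \in \gsf([a,b],\rcrho)$, and by Thm.~\ref{thm:indepRepr} the value $\tilde f(x) = [f_\eps^{(k)}(x_\eps)]$ is independent of the representative.

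\emph{Step 2 (agreement on the interior).} On the sharply open set $U := \mathrm{int}([a,b])$ I apply the Fermat-Reyes theorem (Thm.~\ref{thm:FR-forGSF}.\ref{enu:defDer}) $k$ times to $f|_U$, obtaining $f^{(k)}(y) = [f_\eps^{(k)}(y_\eps)] = \tilde f(y)$ for every $y \in U$.

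\emph{Step 3 (density of $U$ in $[a,b]$).} For $x \in [a,b]$ and $n \in \N_{>0}$ large enough that $2\diff{\rho}^n < 1$, let $y_n := (1-2\diff{\rho}^n)x + \diff{\rho}^n(a+b)$. Then
\[
y_n - a = (1-2\diff{\rho}^n)(x-a) + \diff{\rho}^n(b-a) \ge \diff{\rho}^n(b-a),
\]
and symmetrically $b - y_n \ge \diff{\rho}^n(b-a)$. Since $b-a$ is positive and invertible (as $a<b$) and $\diff{\rho}^n$ is a positive invertible infinitesimal, both lower bounds are positive and invertible, so $y_n \in U$. Moreover $|y_n - x| = \diff{\rho}^n|a+b-2x|$, and since $a+b-2x$ is $\rho$-moderate this tends to $0$ in the sharp topology as $n \to \infty$. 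Thus every point of $[a,b]$ is a sharp limit of a sequence in $U$.

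\emph{Step 4 (conclusion).} By Thm.~\ref{thm:GSF-continuity}.\ref{enu:GSF-cont}, $\tilde f$ is sharply continuous on $[a,b]$. Combining Steps 2 and 3, for every $x \in [a,b]$,
\[
\lim_{\substack{y \to x \\ y \in U}} f^{(k)}(y) \;=\; \lim_{\substack{y \to x \\ y \in U}} \tilde f(y) \;=\; \tilde f(x) \;=\; [f_\eps^{(k)}(x_\eps)],
\]
which proves existence of the limit, its value, and the membership $f^{(k)} \in \gsf([a,b],\rcrho)$.

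The main obstacle is Step 3: because $\rcrho$ is not Archimedean and $\le$ is not total, approximating a boundary point like $x=a$ by interior points is nontrivial and requires perturbing by a positive \emph{invertible} infinitesimal; the explicit convex combination above is built precisely so that $y_n-a$ and $b-y_n$ stay positive and invertible uniformly in $n$. Once density is established, the rest is a routine combination of Thm.~\ref{thm:indepRepr}, Thm.~\ref{thm:GSF-continuity} and the Fermat-Reyes theorem.
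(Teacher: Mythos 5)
Your proof is correct and follows essentially the same route as the one the paper defers to \cite{Gio-Kun-Ver19}: the net $(f_{\eps}^{(k)})$ already defines a sharply continuous GSF on all of $[a,b]$ (Def.~\ref{def:netDefMap} plus Thm.~\ref{thm:indepRepr} and Thm.~\ref{thm:GSF-continuity}), it agrees with $f^{(k)}$ on the sharp interior by Fermat--Reyes, and the explicit convex combination $y_{n}=(1-2\diff{\rho}^{n})x+\diff{\rho}^{n}(a+b)$ correctly settles the only delicate point, namely that the interior is sharply dense in $[a,b]$. The only cosmetic remark is that the reduction to a globally defined net in Step 1 is unnecessary (and slightly obscures the ``Moreover'' clause, which concerns an arbitrary defining net $(f_{\eps})$): since $[a,b]\subseteq\sint{\Omega_{\eps}}$ by Def.~\ref{def:netDefMap}.\ref{enu:dom-cod}, the argument applies verbatim to $g_{\eps}:=f_{\eps}^{(k)}\in\Coo(\Omega_{\eps},\R)$.
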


We can now state existence and uniqueness of primitives of GSF:
\begin{thm}
\label{thm:existenceUniquenessPrimitives}Let $k\in\N\cup\{+\infty\}$
and $f\in\gckf([a,b],\rcrho)$ be defined in the interval $[a,b]\subseteq\RC{\rho}$,
where $a<b$. Let $c\in[a,b]$. Then, there exists one and only one
generalized $\mathcal{C}^{k+1}$ map $F\in\gcf{k+1}([a,b],\rcrho)$
such that $F(c)=0$ and $F'(x)=f(x)$ for all $x\in[a,b]$. Moreover,
if $f$ is defined by the net $f_{\eps}\in\mathcal{C}^{k}(\R,\R)$
and $c=[c_{\eps}]$, then $F(x)=\left[\int_{c_{\eps}}^{x_{\eps}}f_{\eps}(s)\diff{s}\right]$
for all $x=[x_{\eps}]\in[a,b]$.
\end{thm}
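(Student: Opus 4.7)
The plan is to establish uniqueness via a zero-derivative argument (Thm.~\ref{thm:uniquenessOfPrimitives}), then construct the candidate primitive by integrating a representing net of $f$, then verify moderateness and representative-independence using a sup-attained-at-a-point device, and finally read off the derivative property from Fermat--Reyes combined with Thm.~\ref{thm:existenceOfDerivativesAtBorderPoints}. For uniqueness, if $F_{1},F_{2}\in\gcf{k+1}([a,b],\rcrho)$ both satisfy the stated conditions, I set $G:=F_{1}-F_{2}$; then $G'\equiv 0$ on $[a,b]$ and $G(c)=0$, so Thm.~\ref{thm:uniquenessOfPrimitives} gives that $G$ is constant on $(a,b)$, and the constant is forced to be $0$ (directly if $c\in(a,b)$, otherwise by a sharp-continuous limit toward $c$); sharp continuity at the endpoints then yields $F_{1}=F_{2}$ on $[a,b]$.

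For existence, I use the $\gckf$-analog of Thm.~\ref{thm:GSF-continuity}.\ref{enu:globallyDefNet} to pick a net $f_{\eps}\in\mathcal{C}^{k}(\R,\R)$ representing $f$, fix $(c_{\eps})$ representing $c$, and set
\[
F_{\eps}(x):=\int_{c_{\eps}}^{x}f_{\eps}(s)\,\diff{s},\qquad F\bigl([x_{\eps}]\bigr):=\bigl[F_{\eps}(x_{\eps})\bigr].
\]
Then $F_{\eps}\in\mathcal{C}^{k+1}(\R,\R)$ with $F_{\eps}^{(j+1)}=f_{\eps}^{(j)}$ and $F_{\eps}(c_{\eps})=0$. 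The central moderateness bound is $|F_{\eps}(x_{\eps})|\le|x_{\eps}-c_{\eps}|\cdot\sup_{s\in I_{\eps}}|f_{\eps}(s)|$ with $I_{\eps}:=[c_{\eps}\wedge x_{\eps},c_{\eps}\vee x_{\eps}]$. Classical compactness of $I_{\eps}\subset\R$ and continuity of $f_{\eps}$ produce a maximizer $y^{*}_{\eps}\in I_{\eps}$; the net $(y^{*}_{\eps})$ is moderate and sandwiched between $c_{\eps}$ and $x_{\eps}$, so $[y^{*}_{\eps}]\in[a,b]$, and Def.~\ref{def:netDefMap}.\ref{enu:partial-u-moderate} forces $(f_{\eps}(y^{*}_{\eps}))\in\R_{\rho}$. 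Combined with the moderate factor $|x_{\eps}-c_{\eps}|$, this makes $(F_{\eps}(x_{\eps}))\in\R_{\rho}$; higher-order moderateness $F_{\eps}^{(j+1)}(x_{\eps})=f_{\eps}^{(j)}(x_{\eps})\in\R_{\rho}$ for $j\le k$ is immediate from $f\in\gckf$.

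The same sup-at-a-point device yields the required representative-independence. If $(x_{\eps}')\sim_{\rho}(x_{\eps})$, then $|F_{\eps}(x_{\eps})-F_{\eps}(x_{\eps}')|\le|x_{\eps}-x_{\eps}'|\cdot\sup|f_{\eps}|$ is negligible times moderate, hence negligible; analogously for a different representative of $c$. If $(\tilde f_{\eps})$ is another defining net of $f$, the difference of integrals is bounded by $|x_{\eps}-c_{\eps}|\cdot\sup_{s\in I_{\eps}}|f_{\eps}(s)-\tilde f_{\eps}(s)|$, and the sup-attaining point $[y^{*}_{\eps}]\in[a,b]$ yields the negligible factor $(f_{\eps}-\tilde f_{\eps})(y^{*}_{\eps})$, since both nets represent $f$ at that generalized point. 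The remaining items now fall into place: $F(c)=[F_{\eps}(c_{\eps})]=0$; on the sharp interior Thm.~\ref{thm:FR-forGSF}.\ref{enu:defDer} gives $F'(x)=[F_{\eps}'(x_{\eps})]=[f_{\eps}(x_{\eps})]=f(x)$; at the endpoints Thm.~\ref{thm:existenceOfDerivativesAtBorderPoints} together with sharp continuity of $f$ extends this identity; and $F^{(k+1)}=f^{(k)}$ inherits both representative-independence and sharp continuity from $f\in\gckf$, so $F\in\gcf{k+1}$.

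The main obstacle I anticipate is precisely this transfer from the \emph{pointwise} moderateness and negligibility enjoyed by GSF at generalized points to \emph{uniform} bounds along the real integration interval $I_{\eps}\subset\R$. The sup-attained-at-a-point trick above bridges that gap, exploiting both classical compactness of $I_{\eps}$ and the fact that $[a,b]\subseteq\rcrho$ is stable under the pointwise lattice operations $\wedge,\vee$; once it is in place, the rest is a routine verification of the $\gcf{k+1}$ axioms.
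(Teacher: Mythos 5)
Your proof is correct and follows essentially the same route as the proof the paper defers to \cite{Gio-Kun-Ver19}: uniqueness via Thm.~\ref{thm:uniquenessOfPrimitives}, and existence by $\eps$-wise integration of a globally defined representing net, with moderateness, representative-independence and net-independence of $\left[\int_{c_\eps}^{x_\eps}f_\eps\right]$ all controlled by evaluating $f$ at the point of $[c\wedge x,\,c\vee x]\subseteq[a,b]$ where $|f_\eps|$ (resp.\ the difference of two defining nets) attains its supremum --- which is exactly the extreme value device of Lem.~\ref{lem:extremeValueCont} and Cor.~\ref{cor:extremeValues}. The only cosmetic point is that Thm.~\ref{thm:uniquenessOfPrimitives} and Thm.~\ref{thm:existenceOfDerivativesAtBorderPoints} are stated for $\gsf$ rather than $\gckf$, but the paper already records that such statements carry over to $\gckf$, so this does not affect the argument.
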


\begin{defn}
\label{def:integral}Under the assumptions of Theorem \ref{thm:existenceUniquenessPrimitives},
we denote by $\int_{c}^{(-)}f:=\int_{c}^{(-)}f(s)\,\diff{s}\in\gsf([a,b],\rcrho)$
the unique generalized smooth function such that:
\begin{enumerate}
\item $\int_{c}^{c}f=0$
\item $\left(\int_{c}^{(-)}f\right)'(x)=\frac{\diff{}}{\diff{x}}\int_{c}^{x}f(s)\,\diff{s}=f(x)$
for all $x\in[a,b]$.
\end{enumerate}
\end{defn}

To consider a generalization of this concept of integration to GSF
in several variables and to more general domains of integration $M\subseteq\rcrho^{d}$,
see Sec.~\ref{subsec:Multidimensional-integration} and \cite{Gio-Kun-Ver19}.
\begin{example}
~
\begin{enumerate}
\item Since $\rcrho$ contains both infinitesimal and infinite numbers,
our notion of definite integral also includes ``improper integrals''.
Let e.g.~$f(x)=\frac{1}{x}$ for $x\in\rcrho_{>0}$ and $a=1$, $b=\diff{\rho}^{-q}$,
$q>0$. Then 
\begin{equation}
\int_{a}^{b}f(s)\,\diff{s}=\left[\int_{1}^{\rho_{\eps}^{-q}}\frac{1}{s}\,\diff{s}\right]=[\log\rho_{\eps}^{-q}]-\log1=-q\log\diff{\rho},\label{eq:imprInt}
\end{equation}
which is, of course, a positive infinite generalized number. This
apparently trivial result is closely tied to the possibility to define
GSF on arbitrary domains, like $F\in\gsf([a,b],\rcrho)$ in Thm.~\ref{thm:existenceUniquenessPrimitives}
where $b$ is an infinite number as in \eqref{eq:imprInt}, which
is one of the key properties allowing one to get the closure with
respect to composition.
\item If $p$, $q\in\rti$, $p<0<q$ and both $p$ and $q$ are not infinitesimal,
then $\int_{p}^{q}\delta(t)\,\diff{t}\approx1$. If $p\le-r$ and
$q\ge s$ where $r$, $s\in\R_{>0}$, then $\int_{p}^{q}\delta(t)\,\diff t=1$.
\end{enumerate}
\end{example}

\begin{thm}
\label{thm:intRules}Let $f\in\gsf(X,\rcrho)$ and $g\in\gsf(Y,\rcrho)$
be generalized smooth functions defined on arbitrary domains in $\rcrho$.
Let $a$, $b\in\rcrho$ with $a<b$ and $[a,b]\subseteq X\cap Y$,
then
\begin{enumerate}
\item \label{enu:additivityFunction}$\int_{a}^{b}\left(f+g\right)=\int_{a}^{b}f+\int_{a}^{b}g$
\item \label{enu:homog}$\int_{a}^{b}\lambda f=\lambda\int_{a}^{b}f\quad\forall\lambda\in\rcrho$
\item \label{enu:additivityDomain}$\int_{a}^{b}f=\int_{a}^{c}f+\int_{c}^{b}f$
for all $c\in[a,b]$
\item \label{enu:chageOfExtremes}$\int_{a}^{b}f=-\int_{b}^{a}f$
\item \label{enu:fundamental}$\int_{a}^{b}f'=f(b)-f(a)$
\item \label{enu:intByParts}$\int_{a}^{b}f'\cdot g=\left[f\cdot g\right]_{a}^{b}-\int_{a}^{b}f\cdot g'$
\item \label{enu:intMonotone}If $f(x)\le g(x)$ for all $x\in[a,b]$, then
$\int_{a}^{b}f\le\int_{a}^{b}g$.
\end{enumerate}
\label{thm:changeOfVariablesInt}Let $f\in\gsf(T,\rcrho)$ and $\phi\in\gsf(S,T)$
be generalized smooth functions defined on arbitrary domains in $\rcrho$.
Let $a$, $b\in\rcrho$, with $a<b$, such that $[a,b]\subseteq S$,
$\phi(a)<\phi(b)$ and $[\phi(a),\phi(b)]\subseteq T$. Finally, assume
that $\phi([a,b])\subseteq[\phi(a),\phi(b)]$. Then 
\[
\int_{\phi(a)}^{\phi(b)}f(t)\,\diff{t}=\int_{a}^{b}f\left[\phi(s)\right]\cdot\phi'(s)\,\diff{s}.
\]
\end{thm}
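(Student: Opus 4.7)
The unifying strategy for Theorem~\ref{thm:intRules} and the change-of-variables identity is to exploit uniqueness of primitives (Theorem~\ref{thm:existenceUniquenessPrimitives}): to show that two elements of $\gsf([a,b],\rcrho)$ coincide at $b$, it suffices to exhibit them as values at $b$ of two primitives of the same GSF that agree at a common point of $[a,b]$. Combined with the differential calculus rules of Theorem~\ref{thm:linearityLeibnizDer}, this reduces nearly all items to routine one-line verifications.

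Concretely, for (i) the map $x\mapsto\int_a^x f+\int_a^x g$ is a $\gsf$ primitive of $f+g$ vanishing at $a$, hence by uniqueness agrees with $\int_a^{(-)}(f+g)$; evaluation at $b$ gives the claim. Item~(ii) is analogous using $\rcrho$-linearity. For~(iii), $x\mapsto\int_a^x f-\int_a^c f$ is a primitive of $f$ vanishing at $c$ and therefore equals $\int_c^{(-)}f$; evaluate at $b$. For~(iv), I would apply the same trick to $x\mapsto\int_a^x f-\int_a^b f$, which is a primitive of $f$ vanishing at $b$ and hence equals $\int_b^{(-)}f$; evaluating at $a$ delivers $-\int_a^b f=\int_b^a f$. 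For~(v), the map $x\mapsto f(x)-f(a)$ is a primitive of $f'$ vanishing at $a$ (the boundary derivative being well defined by Theorem~\ref{thm:existenceOfDerivativesAtBorderPoints}), so it equals $\int_a^{(-)}f'$. Item~(vi) follows by applying~(v) to the Leibniz identity $(fg)'=f'g+fg'$ together with~(i) and~(ii). For the change-of-variables formula, I set $F(x):=\int_a^x f[\phi(s)]\cdot\phi'(s)\,\diff{s}$ and $G(x):=\int_{\phi(a)}^{\phi(x)}f(t)\,\diff{t}$; the hypothesis $\phi([a,b])\subseteq[\phi(a),\phi(b)]\subseteq T$ is precisely what guarantees that $G\in\gsf([a,b],\rcrho)$ is well defined, and the chain rule (Theorem~\ref{thm:linearityLeibnizDer}(v)) together with~(v) yields $F'(x)=G'(x)=f(\phi(x))\phi'(x)$ with $F(a)=G(a)=0$, so uniqueness of primitives gives $F=G$.

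The genuine obstacle is the monotonicity~(vii), which does not reduce to uniqueness of primitives and must instead be read off the defining representatives. My plan is to set $h:=g-f\in\gsf([a,b],\rcrho)$, so that $h\ge 0$ pointwise on $[a,b]$, and to show $\int_a^b h\ge 0$ by contradiction. Assuming the contrary, Lemma~\ref{lem:negationsSubpoints}(i) produces a cofinal $L\subzero I$ on which $\int_a^b h<_L 0$, and Lemma~\ref{lem:mayer} then promotes this to a quantitative bound $\int_{a_\eps}^{b_\eps}h_\eps(s)\,\diff{s}\le-\rho_\eps^m$ for some $m\in\N$ and all small $\eps\in L$, where $(h_\eps)$ defines $h$ and $[a_\eps]=a$, $[b_\eps]=b$. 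Since $a<b$, another application of Lemma~\ref{lem:mayer} yields $b_\eps-a_\eps\ge\rho_\eps^k$, while moderateness of $b-a$ forces $b_\eps-a_\eps\le\rho_\eps^{-N}$; an averaging argument on the smooth net $(h_\eps)$ then produces $s_\eps\in[a_\eps,b_\eps]$ with $h_\eps(s_\eps)\le-\rho_\eps^{m+N}$. Extending $s_\eps$ arbitrarily inside $[a_\eps,b_\eps]$ for $\eps\in I\setminus L$ yields $s:=[s_\eps]\in[a,b]$ satisfying $h(s)\sbpt{<}0$, which contradicts $h\ge 0$ on $[a,b]$ via Lemma~\ref{lem:trich1st}(ii). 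The delicate step is precisely this passage from a negligibility-type lower bound on the integral to a quantitative, $\rho_\eps$-controlled negative pointwise value of $h_\eps$, since this is where the non-Archimedean structure genuinely enters; all other assertions follow painlessly from the interplay of Theorems~\ref{thm:existenceUniquenessPrimitives} and~\ref{thm:linearityLeibnizDer}.
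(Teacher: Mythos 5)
The paper states this theorem without proof, deferring to \cite{Gio-Kun-Ver19}; judged against the standard argument there, your derivation is correct and follows essentially the same route, namely reducing (i)--(vi) and the change of variables to uniqueness of primitives (Thm.~\ref{thm:existenceUniquenessPrimitives} together with Thm.~\ref{thm:uniquenessOfPrimitives}) plus the differentiation rules, and all of your candidate primitives check out. Two remarks. First, in the change-of-variables step you invoke the chain rule of Thm.~\ref{thm:linearityLeibnizDer}, which is stated only for sharply open domains, whereas $[a,b]$ and $[\phi(a),\phi(b)]$ are closed; you should either apply it on the sharp interior and pass to the boundary via the limit characterization of Thm.~\ref{thm:existenceOfDerivativesAtBorderPoints} and sharp continuity, or argue directly on the representatives $\int_{\phi_{\eps}(a_{\eps})}^{\phi_{\eps}(x_{\eps})}f_{\eps}$ supplied by Thm.~\ref{thm:existenceUniquenessPrimitives}. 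This is routine but deserves a sentence. Second, your proof of (vii) is correct --- the passage from $\int_a^b h\sbpt{<}0$ to a point $s\in[a,b]$ with $h(s)\sbpt{<}0$, via Lem.~\ref{lem:mayer}, moderateness of $b-a$, and the classical integral mean value theorem applied $\eps$-wise, is exactly the kind of subpoint argument this framework calls for --- but it re-proves a special case of a tool the paper already provides: by Thm.~\ref{thm:classicalThms}.\ref{enu:integralMeanValue} there is $c\in[a,b]$ with $\int_a^b(g-f)=(b-a)\cdot\left(g(c)-f(c)\right)$, and since both factors admit representatives that are nonnegative for all $\eps$, their product is $\ge 0$. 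That one-line version buys brevity; yours buys independence from Thm.~\ref{thm:classicalThms} and makes the quantitative mechanism (the trade-off between the negligibility threshold $\rho_{\eps}^{m}$ and the moderate length $\rho_{\eps}^{-N}$) explicit.
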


\subsection{Classical theorems for GSF}

It is natural to expect that several classical theorems of differential
and integral calculus can be extended from the ordinary smooth case
to the generalized smooth framework. Once again, we underscore that
these faithful generalizations are possible because we do not have
a priori limitations in the evaluation $f(x)$ for GSF. For example,
one does not have similar results in Colombeau theory, where an arbitrary
generalized function can be evaluated only at compactly supported
points.

\noindent We start from the intermediate value theorem.
\begin{cor}
\label{cor:intermValue}Let $f\in\gsf(X,\rcrho)$ be a generalized
smooth function defined on the subset $X\subseteq\rcrho$. Let $a$,
$b\in\rcrho$, with $a<b$, such that $[a,b]\subseteq X$. Assume
that $f(a)<f(b)$. Then 
\[
\forall y\in\rcrho:\ f(a)\le y\le f(b)\ \Rightarrow\ \exists c\in[a,b]:\ y=f(c).
\]
\end{cor}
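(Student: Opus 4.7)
\bigskip

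\noindent\textbf{Proof plan.}
The natural strategy is to lift the classical intermediate value theorem to the level of representatives, exploiting the continuity of the defining net $f_{\eps}\in\cinfty(\R,\R)$. By Thm.~\ref{thm:GSF-continuity}.\ref{enu:globallyDefNet} we may assume $f$ is defined by a net of smooth functions on the whole real line, so continuity of each $f_\eps$ on $[a_\eps,b_\eps]$ is automatic. Fix representatives $a=[a_\eps]$, $b=[b_\eps]$, $y=[y_\eps]$. Since $a<b$ and $f(a)<f(b)$, Lem.~\ref{lem:mayer} gives $m\in\N$ such that $b_\eps-a_\eps>\rho_\eps^m$ and $f_\eps(b_\eps)-f_\eps(a_\eps)>\rho_\eps^m$ for $\eps$ small; in particular $a_\eps<b_\eps$ eventually and the classical IVT is available on $[a_\eps,b_\eps]$. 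Furthermore, from $f(a)\le y\le f(b)$ and the definition of $\le$ on $\rcrho$, for each $n\in\N$ we have
\[
f_\eps(a_\eps)-y_\eps\le\rho_\eps^n,\qquad y_\eps-f_\eps(b_\eps)\le\rho_\eps^n\quad\text{for }\eps\text{ small.}
\]

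The idea is then to define $c_\eps$ by cases: if $y_\eps\le f_\eps(a_\eps)$, set $c_\eps:=a_\eps$; if $y_\eps\ge f_\eps(b_\eps)$, set $c_\eps:=b_\eps$; otherwise $f_\eps(a_\eps)<y_\eps<f_\eps(b_\eps)$ and classical IVT yields $c_\eps\in(a_\eps,b_\eps)$ with $f_\eps(c_\eps)=y_\eps$. In all three cases $c_\eps\in[a_\eps,b_\eps]$, so $(c_\eps)$ is $\rho$-moderate (being bounded by the moderate nets $a_\eps$, $b_\eps$), and $a_\eps\le c_\eps\le b_\eps$ for $\eps$ small yields $c:=[c_\eps]\in[a,b]$ by definition of $\le$.

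It remains to check $f(c)=y$, i.e.~$(f_\eps(c_\eps)-y_\eps)\sim_\rho 0$. Given $n\in\N$, take $\eps$ small enough that both bounds displayed above hold. In case one $|f_\eps(c_\eps)-y_\eps|=f_\eps(a_\eps)-y_\eps\in[0,\rho_\eps^n]$; in case two $|f_\eps(c_\eps)-y_\eps|=y_\eps-f_\eps(b_\eps)\in[0,\rho_\eps^n]$; in case three $f_\eps(c_\eps)-y_\eps=0$. Thus $|f_\eps(c_\eps)-y_\eps|\le\rho_\eps^n$ for $\eps$ small, which is exactly \eqref{eq:rhoNegligible}; hence $f(c)=y$.

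The only mild obstacle is the seemingly non-uniform, case-based construction of $c_\eps$: since the partition of $\eps$ into the three cases may be wild, one must resist the temptation to apply classical IVT on all of $I$ and instead accept the casewise definition. The point is that moderateness of $c_\eps$ is automatic from the enclosing interval, while the negligibility of $f_\eps(c_\eps)-y_\eps$ is gained uniformly by the same bound $\rho_\eps^n$ in every case. No extraction of subpoints or appeal to Lem.~\ref{lem:trich2nd} is needed, because the casewise bound is symmetric across the partition.
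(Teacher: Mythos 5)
Your proof is correct: the $\eps$-wise application of the classical intermediate value theorem, combined with truncating $c_\eps$ to the endpoints when $y_\eps$ falls outside $(f_\eps(a_\eps),f_\eps(b_\eps))$ and observing that the negligibility bound $|f_\eps(c_\eps)-y_\eps|\le\rho_\eps^n$ holds uniformly across the three cases, is exactly the argument in the reference the paper cites for this corollary (the paper itself omits the proof). The one step worth keeping explicit, which you do at the outset, is the passage to a globally defined net via Thm.~\ref{thm:GSF-continuity}.\ref{enu:globallyDefNet}, since this is what guarantees that $f_\eps$ is continuous on all of the real interval $[a_\eps,b_\eps]$ and not merely near the points of $[a,b]$.
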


\noindent Using this theorem we can conclude that no GSF can assume
only a finite number of values which are comparable with respect to
the relation $<$ on any nontrivial interval $[a,b]\subseteq X$,
unless it is constant. For example, this provides an alternative way
of seeing that the function $i$ of Rem.~\ref{rem:I-function} cannot
be a generalized smooth map.

\noindent We note that the solution $c\in[a,b]$ of the previous generalized
smooth equation $y=f(x)$ need not even be continuous in $\eps$,
see e.g.~\cite{Gio-Kun-Ver19} for an explicit counter example. This
allows us to draw the following general conclusion: if we consider
generalized numbers as solutions of smooth equations, then we are
forced to work on a non-totally ordered ring of scalars derived from
discontinuous (in $\eps$) representatives. To put it differently:
if we choose a ring of scalars with a total order or continuous representatives,
we will not be able to solve every smooth equation, and the given
ring can be considered, in some sense, incomplete.

The next theorem deals with different version of the mean value theorem
\begin{thm}
\label{thm:classicalThms}Let $f\in\gsf(X,\rcrho^{d})$ be a generalized
smooth function defined in the sharply open set $X\subseteq\rcrho^{n}$.
Let $a$, $b\in\rcrho^{n}$ such that $[a,b]\subseteq X$. Then
\begin{enumerate}
\item \label{enu:meanValue}If $n=d=1$, then $\exists c\in[a,b]:\ f(b)-f(a)=(b-a)\cdot f'(c)$.
\item \label{enu:integralMeanValue}If $n=d=1$, then $\exists c\in[a,b]:\ \int_{a}^{b}f(t)\,\diff{t}=(b-a)\cdot f(c)$.
\item \label{enu:meanValueSevVars}If $d=1$, then $\exists c\in[a,b]:\ f(b)-f(a)=\nabla f(c)\cdot(b-a)$.
\item \label{enu:IMVSevVars}Let $h:=b-a$, then $f(a+h)-f(a)=\int_{0}^{1}\diff{f}(a+t\cdot h).h\,\diff{t}$.
\end{enumerate}
\end{thm}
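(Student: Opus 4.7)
The plan for all four parts is to choose representatives, apply the corresponding classical theorem $\eps$-wise in $\R$, and then pass to equivalence classes in $\rcrho$. Two preparatory observations do all the heavy lifting: (a) since $a<b$ in $\rcrho$, Lem.~\ref{lem:mayer} yields $b_\eps-a_\eps>\rho_\eps^m$ for some $m\in\N$ and all $\eps$ small, so $[a_\eps,b_\eps]\subseteq\R$ is a nondegenerate compact interval on which classical one-variable analysis applies; and (b) any $c_\eps\in[a_\eps,b_\eps]$ is automatically $\rho$-moderate since $|c_\eps|\le\max(|a_\eps|,|b_\eps|)$, and the pointwise inequalities $a_\eps\le c_\eps\le b_\eps$ lift to $a\le c\le b$ in $\rcrho$, so that $c:=[c_\eps]\in[a,b]\subseteq X$ is a legitimate point in the domain of $f$.

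For (i), fix a net $(f_\eps)$ defining $f$ and, for each small $\eps$, invoke the classical mean value theorem on $[a_\eps,b_\eps]$ to produce $c_\eps\in[a_\eps,b_\eps]$ with $f_\eps(b_\eps)-f_\eps(a_\eps)=(b_\eps-a_\eps)f'_\eps(c_\eps)$. Because $X$ is sharply open and $c\in X$, Thm.~\ref{thm:FR-forGSF}.\ref{enu:defDer} gives $f'(c)=[f'_\eps(c_\eps)]$, and the $\eps$-wise equality passes verbatim to the quotient ring. For (ii), apply the classical integral mean value theorem to $f_\eps$ on $[a_\eps,b_\eps]$ and combine the resulting identity with the defining formula $\int_a^b f=[\int_{a_\eps}^{b_\eps} f_\eps(s)\,\diff s]$ from Thm.~\ref{thm:existenceUniquenessPrimitives} and Def.~\ref{def:integral}.

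Part (iii) admits two routes; the cleaner one reduces to (i). Define the one-variable map $g:=f\circ\gamma\in\gsf([0,1],\rcrho)$ with $\gamma(t):=a+t(b-a)$, which lies in the category $\gsf$ by closure under composition (Thm.~\ref{thm:GSFcategory}), and apply the chain rule of Thm.~\ref{thm:linearityLeibnizDer} to obtain $g'(t)=\nabla f(\gamma(t))\cdot(b-a)$. Then (i) applied to $g$ yields $t^*\in[0,1]_\rcrho$ with $g(1)-g(0)=g'(t^*)$, so $c:=\gamma(t^*)\in[a,b]$ satisfies the claim. (Equivalently one may apply the classical multivariable MVT $\eps$-wise to $f_\eps$, writing $c_\eps=a_\eps+t_\eps(b_\eps-a_\eps)$ with $t_\eps\in[0,1]$, and then quotient.) Part (iv) is a direct application of the fundamental theorem of calculus: setting $g(t):=f(a+th)\in\gcf{\infty}([0,1],\rcrho)$, the chain rule gives $g'(t)=\diff f(a+th).h$, and Thm.~\ref{thm:intRules}.\ref{enu:fundamental} delivers $f(a+h)-f(a)=g(1)-g(0)=\int_0^1\diff f(a+th).h\,\diff t$.

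The only genuine obstacle is conceptual rather than technical: the $\eps$-wise Lagrange points $c_\eps$ are produced non-constructively and could in principle be wildly discontinuous in $\eps$, so one has to check that they nevertheless yield a well-defined generalized number. This is precisely what the containment $c_\eps\in[a_\eps,b_\eps]$ ensures, bypassing any need for a measurable or continuous selection. A related subtlety for (iii) and (iv) is that the chain-rule reduction requires the full segment $\{a+t(b-a):t\in[0,1]_\rcrho\}$ to remain inside the sharply open set $X$, which follows from the hypothesis $[a,b]\subseteq X$; no further machinery is needed.
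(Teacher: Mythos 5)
Your argument is correct and is exactly the strategy used in the sources the paper cites for this theorem: pick representatives, use Lem.~\ref{lem:mayer} to see that $[a_\eps,b_\eps]$ is a nondegenerate real interval contained in $\Omega_\eps$ for small $\eps$, apply the classical theorem $\eps$-wise, and note that the (possibly $\eps$-discontinuous) points $c_\eps\in[a_\eps,b_\eps]$ are automatically moderate and define $c\in[a,b]\subseteq X$, after which Thm.~\ref{thm:FR-forGSF} and Thm.~\ref{thm:existenceUniquenessPrimitives} identify $f'(c)$ and $\int_a^b f$ with the $\eps$-wise quantities. The only point to tighten is in (iii)--(iv): $g=f\circ\gamma$ lives on the closed interval $[0,1]$, which is not sharply open, so to invoke (i) and the chain rule verbatim you should either extend $\gamma$ to a sharp neighborhood of $[0,1]$ whose image stays in $X$ (possible since $[a,b]\fcmp X$ with $X$ sharply open) or, as your parenthetical already suggests, run the $\eps$-wise argument directly on $f_\eps$ and use Thm.~\ref{thm:existenceOfDerivativesAtBorderPoints} for derivatives at boundary points; either repair is routine.
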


Internal and sharply bounded sets generated by a net of compact sets
serve as a substitute for compact subsets for GSF, as can be seen
from the following extreme value theorem:
\begin{lem}
\label{lem:extremeValueCont}Let $\emptyset\ne K=[K_{\eps}]\subseteq\rti^{n}$
be an internal set generated by compact sets $K_{\eps}\comp\R^{n}$
such that $K$ is \emph{sharply bounded}, i.e.~$K\subseteq B_{R}(0)$
for some $R\in\rti_{>0}$. Assume that $\alpha:K\ra\rti$ is a well-defined
map given by $\alpha(x)=[\alpha_{\eps}(x_{\eps})]$ for all $x\in K$,
where $\alpha_{\eps}:K_{\eps}\ra\R$ are continuous maps (e.g.~$\alpha(x)=|x|$).
Then 
\[
\exists m,M\in K\,\forall x\in K:\ \alpha(m)\le\alpha(x)\le\alpha(M).
\]
\end{lem}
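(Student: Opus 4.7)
My plan is to lift the classical Euclidean extreme value theorem $\eps$-wise to the net of defining functions, and then check that the resulting maximizer and minimizer are moderate and actually belong to $K = [K_\eps]$.

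\medskip

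\textbf{Step 1 (reduce to a uniformly moderate net).} The hypothesis is sharp boundedness of the generalized set $K\subseteq B_R(0)$ for some $R=[R_\eps]\in\rti_{>0}$, but for the classical Weierstrass theorem I want each $K_\eps$ to lie in a $\rho$-moderately bounded Euclidean ball. Pick $N_0\in\N$ with $R_\eps\le\rho_\eps^{-N_0}$ for $\eps$ small, and replace $K_\eps$ by $K'_\eps:=K_\eps\cap\overline{\Eball_{\rho_\eps^{-N_0-1}}(0)}$, which is still compact. Using Lem.~\ref{lem:mayer} applied to $R-|x|>0$ for each $x=[x_\eps]\in K$ with $x_\eps\in K_\eps$, one gets $|x_\eps|<R_\eps-\rho_\eps^m\le\rho_\eps^{-N_0-1}$ for $\eps$ small, hence $[K'_\eps]=[K_\eps]=K$. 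So WLOG we may assume the net itself is sharply bounded in the sense of Def.~\ref{def:internalStronglyInternal}.(4).

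\medskip

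\textbf{Step 2 (extremizers $\eps$-wise).} Since $K\neq\emptyset$, any $x=[x_\eps]\in K$ gives $x_\eps\in K_\eps$ for $\eps$ small, so $K_\eps$ is nonempty there. For such $\eps$, the continuous function $\alpha_\eps$ attains its max and min on the compact $K_\eps$ by the classical extreme value theorem; pick $M_\eps,m_\eps\in K_\eps$ realizing these. For the remaining $\eps$, pick arbitrary values. By the uniform bound from Step~1, $|M_\eps|,|m_\eps|\le\rho_\eps^{-N_0-1}$ for $\eps$ small, so $(M_\eps),(m_\eps)\in\R_\rho^n$, which gives well-defined $M:=[M_\eps]$, $m:=[m_\eps]\in\rti^n$. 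Since $M_\eps,m_\eps\in K_\eps$ for $\eps$ small, Def.~\ref{def:internalStronglyInternal}.(1) yields $M,m\in[K_\eps]=K$.

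\medskip

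\textbf{Step 3 (inequality).} For any $x=[x_\eps]\in K$, choose a representative with $x_\eps\in K_\eps$ for $\eps$ small. Then the pointwise inequality $\alpha_\eps(m_\eps)\le\alpha_\eps(x_\eps)\le\alpha_\eps(M_\eps)$ holds for $\eps$ small, and, by the definition of the order on $\rti$ and the hypothesis $\alpha(y)=[\alpha_\eps(y_\eps)]$ for every $y\in K$, this passes to the quotient as $\alpha(m)\le\alpha(x)\le\alpha(M)$ in $\rti$.

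\medskip

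\textbf{Main obstacle.} The only subtle point is Step~1: the hypothesis is stated at the level of generalized sets ($K\subseteq B_M(0)$ with $M\in\rti_{>0}$) rather than at the level of the net, whereas the classical Weierstrass theorem provides an extremizer \emph{inside} $K_\eps$ with no a priori $\rho$-moderate control. Without trimming $K_\eps$ down to $K'_\eps$, the chosen $M_\eps\in K_\eps$ could be non-moderate (there is no reason for the generalized maximizer of $\alpha_\eps$ on the full $K_\eps$ to stay in $B_R(0)$ in the $\rti$-sense), so one cannot form $[M_\eps]\in\rti^n$. The reduction via intersecting with $\overline{\Eball_{\rho_\eps^{-N_0-1}}(0)}$, together with Lem.~\ref{lem:mayer} used to turn the strict sharp bound $|x|<M$ into an $\eps$-wise moderate bound on representatives, is exactly what bridges the ``internal'' hypothesis and the ``external'' classical tool.
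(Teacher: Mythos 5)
Your proof is correct and takes essentially the same route as the paper's (which is the $\eps$-wise Weierstrass argument from the cited GSF references): reduce to a representative net that is sharply bounded in the sense of Def.~\ref{def:internalStronglyInternal}, extract classical extremizers on each compact $K_\eps$, and pass the pointwise inequalities to the quotient. Your Step~1, turning the set-level bound $K\subseteq B_R(0)$ into an $\eps$-wise moderate bound via Lem.~\ref{lem:mayer} so that the classical maximizers are guaranteed to be moderate, is exactly the point that needs care, and you handle it correctly.
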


\begin{cor}
\label{cor:extremeValues}Let $f\in\gsf(X,\rcrho)$ be a generalized
smooth function defined in the subset $X\subseteq\rcrho^{n}$. Let
$\emptyset\ne K=[K_{\eps}]\subseteq X$ be as above, then 
\begin{equation}
\exists m,M\in K\,\forall x\in K:\ f(m)\le f(x)\le f(M).\label{eq:epsExtreme}
\end{equation}
\end{cor}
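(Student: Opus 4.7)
The plan is to deduce this corollary directly from the preceding extreme value lemma (Lem.~\ref{lem:extremeValueCont}) by checking that a GSF $f$ on a sharply bounded internal set $K$ fits the hypotheses of that lemma with $\alpha := f$. The conceptual point is that the lemma is stated for maps $\alpha \colon K \to \rti$ whose defining net consists of continuous (not necessarily smooth) representatives on compact sets $K_\eps$, so the work reduces to producing such representatives for $f$ and verifying well-definedness.

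First, using Thm.~\ref{thm:GSF-continuity}.\ref{enu:globallyDefNet}, I would replace the original defining net of $f$ by a net $v_\eps \in \cinfty(\R^n,\R)$ that still defines $f$ on all of $X$. Since each $K_\eps \comp \R^n$ is compact and contained in $\R^n$, the restrictions $\alpha_\eps := v_\eps|_{K_\eps} \colon K_\eps \to \R$ are automatically continuous. This gives a candidate assignment $\alpha(x) := [\alpha_\eps(x_\eps)]$ for $x = [x_\eps] \in K$, and by construction $\alpha(x) = f(x)$.

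Next, I need to check that $\alpha$ is a \emph{well-defined} map $K \to \rti$ in the sense required by Lem.~\ref{lem:extremeValueCont}: this amounts to (i) moderateness of $(v_\eps(x_\eps))$ for $[x_\eps] \in K$ and (ii) independence of the representative. Both are immediate from the fact that $(v_\eps)$ defines $f$ as a GSF on $X \supseteq K$: moderateness is part of Def.~\ref{def:netDefMap}.\ref{enu:partial-u-moderate} (for $\alpha = 0$), and independence from the representative follows from Thm.~\ref{thm:indepRepr}. Thus the hypotheses of Lem.~\ref{lem:extremeValueCont} are satisfied, yielding $m, M \in K$ with $f(m) = \alpha(m) \le \alpha(x) = f(x) \le \alpha(M) = f(M)$ for every $x \in K$, which is exactly \eqref{eq:epsExtreme}.

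The only mildly delicate step is the well-definedness check, since in $\rti$ two different representatives $(x_\eps), (x'_\eps)$ of a point $x \in K$ need not both lie in $K_\eps$ for small $\eps$ (internal, not strongly internal, membership). However, Thm.~\ref{thm:indepRepr} was precisely designed to handle this: it ensures $[v_\eps(x_\eps)] = [v_\eps(x'_\eps)]$ whenever $[x_\eps] = [x'_\eps] \in X$, regardless of where the individual $x_\eps$ lie. Hence the corollary really is a one-line consequence of the lemma once the defining net is chosen globally on $\R^n$.
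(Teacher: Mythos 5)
Your proposal is correct and follows the intended route: the corollary is meant to be a direct application of Lem.~\ref{lem:extremeValueCont} with $\alpha=f$, and you supply exactly the two ingredients needed to make that legitimate, namely a globally defined representing net via Thm.~\ref{thm:GSF-continuity}.\ref{enu:globallyDefNet} and representative-independence via Thm.~\ref{thm:indepRepr}. Your remark on the delicacy of internal (as opposed to strongly internal) membership of representatives is precisely the right point to flag, and your resolution of it is the standard one.
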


These results motivate the following
\begin{defn}
\label{def:functCmpt-1} A subset $K$ of $\rcrho^{n}$ is called
\emph{functionally compact}, denoted by $K\fcmp\rcrho^{n}$, if there
exists a net $(K_{\eps})$ such that
\begin{enumerate}
\item \label{enu:defFunctCmpt-internal-1}$K=[K_{\eps}]\subseteq\rcrho^{n}$
\item \label{enu:defFunctCmpt-sharpBound-1} $K$ is sharply bounded, i.e.~$\exists R\in\rti_{>0}:\ K\subseteq B_{R}(0)$
\item \label{enu:defFunctCmpt-cmpt-1}$\forall\eps\in I:\ K_{\eps}\Subset\R^{n}$
\end{enumerate}
If, in addition, $K\subseteq U\subseteq\rcrho^{n}$ then we write
$K\fcmp U$. Any net $(K_{\eps})$ such that $[K_{\eps}]=K$ is called
a \emph{representative} of $K$.
\end{defn}

\noindent We motivate the name \emph{functionally compact subset}
by noting that on this type of subsets, GSF have properties very similar
to those that ordinary smooth functions have on standard compact sets.
\begin{rem}
\noindent \label{rem:defFunctCmpt}\ 
\begin{enumerate}
\item \label{enu:rem-defFunctCmpt-closed}By Thm.~\ref{thm:strongMembershipAndDistanceComplement}.\ref{enu:internalAreClosed},
any internal set $K=[K_{\eps}]$ is closed in the sharp topology.
Therefore, functionally compact sets are sharply closed and bounded
subsets of $\rti^{n}$. In particular, the open interval $(0,1)\subseteq\Rtil$
is not functionally compact since it is not closed.
\item \label{enu:rem-defFunctCmpt-ordinaryCmpt}If $H\Subset\R^{n}$ is
a non-empty ordinary compact set, then the internal set $[H]$ is
functionally compact. In particular, $[0,1]=\left[[0,1]_{\R}\right]$
is functionally compact.
\item \label{enu:rem-defFunctCmpt-empty}The empty set $\emptyset=\widetilde{\emptyset}\fcmp\Rtil$.
\item \label{enu:rem-defFunctCmpt-equivDef}$\Rtil^{n}$ is not functionally
compact since it is not sharply bounded.
\item \label{enu:rem-defFunctCmpt-cmptlySuppPoints}The set of finite points
$\csp{\R}$ is not functionally compact because the GSF $f(x)=x$
does not satisfy the conclusion \eqref{eq:epsExtreme} of Cor.~\ref{cor:extremeValues}.
\end{enumerate}
\end{rem}

\noindent We also underscore the following properties of functionally
compact sets.
\begin{thm}
\label{thm:image}Let $K\subseteq X\subseteq\Rtil^{n}$, $f\in\gckf(X,\Rtil^{d})$.
Then $K\fcmp\Rtil^{n}$ implies $f(K)\fcmp\Rtil^{d}$.
\end{thm}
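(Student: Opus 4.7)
The plan is to verify the three conditions of Def.~\ref{def:functCmpt-1} for $f(K)$ with the natural candidate representative $(f_\eps(K_\eps))$, where $(K_\eps)$ represents $K$ and $(f_\eps)$ represents $f$. Using the analog for $\gckf$ of Thm.~\ref{thm:GSF-continuity}.\ref{enu:globallyDefNet} (which the excerpt asserts holds), I may assume $f_\eps\in\mathcal{C}^{k}(\R^{n},\R^{d})$ is globally defined, so $f_\eps(K_\eps)$ makes sense directly. Each $f_\eps(K_\eps)$ is then compact in $\R^{d}$ as a continuous image of the compact $K_\eps\Subset\R^{n}$, which immediately gives condition \ref{enu:defFunctCmpt-cmpt-1}.

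For condition \ref{enu:defFunctCmpt-internal-1}, I would prove $f(K)=[f_\eps(K_\eps)]$. The inclusion $f(K)\subseteq[f_\eps(K_\eps)]$ is immediate: any $x=[x_\eps]\in K$ can be represented with $x_\eps\in K_\eps$ for $\eps$ small, and then $f(x)=[f_\eps(x_\eps)]\in[f_\eps(K_\eps)]$. The reverse inclusion is where the first subtlety enters: given $y=[y_\eps]$ with $y_\eps=f_\eps(x_\eps)$ for some $x_\eps\in K_\eps$ and $\eps$ small, I need to know that $(x_\eps)$ is $\rho$-moderate so that $x:=[x_\eps]\in K$ makes sense. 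Here I would exploit sharp boundedness of $K$: since $K\subseteq B_{R}(0)$ for some $R\in\rcrho_{>0}$, one can pass to a modified representative $K'_\eps:=K_\eps\cap\overline{\Eball_{R_\eps}(0)}$ (with $(R_\eps)$ a moderate representative of $R$), verify $[K'_\eps]=[K_\eps]=K$ using Thm.~\ref{thm:strongMembershipAndDistanceComplement}.\ref{enu:internalGeneratedByClosed} and the fact that points in $K$ have norm $\le R$, and note $K'_\eps$ is still compact. With this trimmed representative, any selection $x_\eps\in K'_\eps$ is automatically moderate, and $x=[x_\eps]\in K$ with $f(x)=y$.

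For condition \ref{enu:defFunctCmpt-sharpBound-1}, I would apply Lem.~\ref{lem:extremeValueCont} with $\alpha_\eps(z):=|f_\eps(z)|$ (continuous on $K_\eps$), so that $\alpha(x)=|f(x)|$ is a well-defined map $K\to\rcrho$ by Thm.~\ref{thm:indepRepr} applied to $f$. The lemma yields $M^{*}\in K$ with $|f(x)|\le|f(M^{*})|$ for every $x\in K$. Setting $R':=|f(M^{*})|+1\in\rcrho_{>0}$ (which is invertible because $R'\ge 1$), we obtain $|f(x)|<R'$, i.e.\ $f(K)\subseteq B_{R'}(0)$.

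The main obstacle I anticipate is the bookkeeping in the second paragraph: one must be careful that the "trimmed" representative $(K'_\eps)$ genuinely defines the same internal set $K$, so that the quantification over $x_\eps\in K_\eps$ can be replaced by the moderate selection $x_\eps\in K'_\eps$. Once this is settled, the extreme value argument for the bound and the compact-image argument for condition \ref{enu:defFunctCmpt-cmpt-1} are routine. Finally, $f(K)\subseteq\rcrho^{d}$ is clear from $f\in\gckf(X,\rcrho^{d})$, and if additionally $K\subseteq U$ with $f(K)\subseteq V$ for given sets $U,V$, the notation $f(K)\fcmp\rcrho^{d}$ from Def.~\ref{def:functCmpt-1} follows.
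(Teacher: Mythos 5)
Your proof is correct and follows essentially the same route as the proof in the reference the paper cites for this theorem: trim the representative to $K'_{\eps}=K_{\eps}\cap\overline{\Eball_{R_{\eps}}(0)}$ using sharp boundedness so that selections are automatically moderate, take $(f_{\eps}(K'_{\eps}))$ as the representative of the image, and obtain the sharp bound from Lem.~\ref{lem:extremeValueCont} applied to $|f_{\eps}(-)|$. The only point left implicit is the trivial case $K=\emptyset$ (needed since Lem.~\ref{lem:extremeValueCont} assumes $K\ne\emptyset$), where $f(K)=\emptyset\fcmp\rti^{d}$ by Rem.~\ref{rem:defFunctCmpt}.\ref{enu:rem-defFunctCmpt-empty}.
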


\noindent As a corollary of this theorem and Rem.\ \eqref{rem:defFunctCmpt}.\ref{enu:rem-defFunctCmpt-ordinaryCmpt}
we get
\begin{cor}
\label{cor:intervalsFunctCmpt}If $a$, $b\in\Rtil$ and $a\le b$,
then $[a,b]\fcmp\Rtil$.
\end{cor}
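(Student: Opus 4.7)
The plan is to directly produce a defining net of compact real intervals. By the equivalent formulation of $\le$ stated right after Def.~\ref{def:RCGN}, the hypothesis $a\le b$ lets me choose representatives $(a_\eps)$ of $a$ and $(b_\eps)$ of $b$ satisfying $a_\eps\le b_\eps$ for all $\eps\in I$. Setting $K_\eps:=[a_\eps,b_\eps]_{\R}$ makes each $K_\eps$ a non-empty ordinary compact subset of $\R$, so condition \ref{enu:defFunctCmpt-cmpt-1} of Def.~\ref{def:functCmpt-1} is immediate. It remains to verify the equality $[K_\eps]=[a,b]$ (condition \ref{enu:defFunctCmpt-internal-1}) and sharp boundedness (condition \ref{enu:defFunctCmpt-sharpBound-1}).

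For the inclusion $[K_\eps]\subseteq[a,b]$, any $x=[x_\eps]\in[K_\eps]$ admits a representative with $a_\eps\le x_\eps\le b_\eps$ for $\eps$ small, so $a\le x\le b$ directly from the definition of $\le$. The converse inclusion $[a,b]\subseteq[K_\eps]$ is the one requiring care: for $x\in[a,b]$, I invoke Thm.~\ref{thm:strongMembershipAndDistanceComplement}.\ref{enu:internalSetsDistance} (the distance characterization of internal sets), which reduces membership to checking $d(x_\eps,K_\eps)\le\rho_\eps^{q}$ eventually for every $q\in\R_{>0}$. From $a\le x$ and $x\le b$ the definition of $\le$ furnishes $\rho$-negligible nets $(z_\eps),(w_\eps)$ with $a_\eps-z_\eps\le x_\eps\le b_\eps+w_\eps$ for small $\eps$; hence either $x_\eps\in K_\eps$, or $x_\eps<a_\eps$ and $d(x_\eps,K_\eps)\le|z_\eps|$, or $x_\eps>b_\eps$ and $d(x_\eps,K_\eps)\le|w_\eps|$. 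In all cases $d(x_\eps,K_\eps)\le|z_\eps|+|w_\eps|$, which is $\rho$-negligible as required.

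Sharp boundedness follows by setting $R:=|a|+|b|+1$. Using Lem.~\ref{lem:mayer} with the representative $R_\eps:=|a_\eps|+|b_\eps|+1\ge 1$, one sees that $R$ is invertible and positive, hence a legitimate sharp radius in $\Rtil_{>0}$. Any $x=[x_\eps]\in[K_\eps]$ satisfies $|x_\eps|\le|a_\eps|+|b_\eps|$, so $|x|\le|a|+|b|<R$, giving $[K_\eps]\subseteq B_{R}(0)$ as needed for condition \ref{enu:defFunctCmpt-sharpBound-1}.

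The only genuine obstacle is the direction $[a,b]\subseteq[K_\eps]$, since this step must pass from the pointwise inequality $a\le x$ (which holds for \emph{some} representatives) to metric information about \emph{every} representative $(x_\eps)$ of $x$; Thm.~\ref{thm:strongMembershipAndDistanceComplement}.\ref{enu:internalSetsDistance} is exactly the bridge that permits this. An alternative route would apply Thm.~\ref{thm:image} to the GSF $t\mapsto a+t(b-a)$ on $[0,1]\fcmp\Rtil$ (see Rem.~\ref{rem:defFunctCmpt}.\ref{enu:rem-defFunctCmpt-ordinaryCmpt}), yielding a functionally compact image; however, the identification of this image with $[a,b]$ as a set would invoke the intermediate value theorem Cor.~\ref{cor:intermValue}, whose hypothesis $f(a)<f(b)$ demands that $b-a$ be invertible. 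Covering the case when $b-a$ fails to be invertible would then force a subpoint decomposition, so the direct construction above is preferable.
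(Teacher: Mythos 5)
Your proof is correct, but it takes a genuinely different route from the paper's. The paper obtains this statement as an immediate corollary of Thm.~\ref{thm:image} combined with Rem.~\ref{rem:defFunctCmpt}.\ref{enu:rem-defFunctCmpt-ordinaryCmpt}: since $[0,1]=\left[[0,1]_{\R}\right]\fcmp\Rtil$, the image of $[0,1]$ under the GSF $t\mapsto a+t(b-a)$ is functionally compact, and this image is identified with the order interval $[a,b]$. You instead verify Def.~\ref{def:functCmpt-1} directly: pick representatives with $a_\eps\le b_\eps$ everywhere, set $K_\eps:=[a_\eps,b_\eps]_{\R}$, and prove $[K_\eps]=[a,b]$ via the distance characterization Thm.~\ref{thm:strongMembershipAndDistanceComplement}.\ref{enu:internalSetsDistance}, plus sharp boundedness by $R=|a|+|b|+1$ (correctly certified invertible via Lem.~\ref{lem:mayer}). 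All steps check out; in particular the key inclusion $[a,b]\subseteq[K_\eps]$ is handled cleanly by the estimate $d(x_\eps,K_\eps)\le|z_\eps|+|w_\eps|$ with negligible $(z_\eps),(w_\eps)$. What each approach buys: the paper's argument is a one-liner once the machinery is in place, but the identification of the image of the segment map with the order interval $[a,b]$ is precisely the point that needs care when $b-a$ is not invertible — as you observe, the intermediate value theorem Cor.~\ref{cor:intermValue} is not directly applicable there, and one would instead need an $\eps$-wise clamping argument morally equivalent to your distance computation. Your direct construction makes that content explicit and self-contained, at the cost of being longer; it also has the side benefit of exhibiting a concrete representative net $(K_\eps)$ of the functionally compact set, which is what one actually uses in Lem.~\ref{lem:normSpaceGSF} and Thm.~\ref{thm:muMeasurableAndIntegral}.
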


\noindent Let us note that $a$, $b\in\Rtil$ can also be infinite
numbers, e.g.~$a=\diff{\rho}^{-N}$, $b=\diff{\rho}^{-M}$ or $a=-\diff{\rho}^{-N}$,
$b=\diff{\rho}^{-M}$ with $M$, $N\in\N_{>0}$, so that e.g.~$[-\diff{\rho}^{-N},\diff{\rho}^{M}]\supseteq\R$.
Therefore, despite very similar properties shared by functionally
compact sets and classical compact sets, the former can also be unbounded
from the classical point of view.

\noindent Finally, in the following result we consider the product
of functionally compact sets:
\begin{thm}
\noindent \label{thm:product}Let $K\fcmp\Rtil^{n}$ and $H\fcmp\Rtil^{d}$,
then $K\times H\fcmp\Rtil^{n+d}$. In particular, if $a_{i}\le b_{i}$
for $i=1,\ldots,n$, then $\prod_{i=1}^{n}[a_{i},b_{i}]\fcmp\Rtil^{n}$.
\end{thm}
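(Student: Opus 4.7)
The plan is to pick representatives $(K_\eps)$ of $K$ and $(H_\eps)$ of $H$ as provided by Definition~\ref{def:functCmpt-1}, and verify that the product net $(K_\eps \times H_\eps)$ witnesses $K \times H \fcmp \Rtil^{n+d}$. Three items have to be checked in turn: the set-theoretic identity $K \times H = [K_\eps \times H_\eps]$ inside $\Rtil^{n+d}$, sharp boundedness of $K \times H$, and classical compactness $K_\eps \times H_\eps \Subset \R^{n+d}$ for every $\eps$.

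For the set identity, the inclusion $\subseteq$ is obtained by picking representatives $x = [x_\eps]$ and $y = [y_\eps]$ with $x_\eps \in K_\eps$ and $y_\eps \in H_\eps$ for $\eps$ small (available since $K = [K_\eps]$ and $H = [H_\eps]$); the concatenated net $(x_\eps, y_\eps)$ is then a representative of $(x,y)$ lying in $K_\eps \times H_\eps$ for $\eps$ small. Conversely, if $(z,w) = [(z_\eps, w_\eps)] \in [K_\eps \times H_\eps]$, the very same representative satisfies $z_\eps \in K_\eps$ and $w_\eps \in H_\eps$ for $\eps$ small, whence $z = [z_\eps] \in K$ and $w = [w_\eps] \in H$.

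Sharp boundedness is immediate from the hypotheses $K \subseteq B_R(0)$ and $H \subseteq B_S(0)$ with $R, S \in \rcrho_{>0}$: the pointwise Euclidean estimate $|(x_\eps, y_\eps)| \le |x_\eps| + |y_\eps|$ lifts to $|(x,y)| \le |x| + |y| \le R + S$ for every $(x,y) \in K \times H$, and since $(R+S+1) - (R+S) = 1$ is invertible we conclude $K \times H \subseteq B_{R+S+1}(0)$ with $R+S+1 \in \rcrho_{>0}$. Classical compactness of $K_\eps \times H_\eps$ in $\R^{n+d}$ is standard. The second assertion then falls out by induction on $n$: the base case $[a,b] \fcmp \Rtil$ is Corollary~\ref{cor:intervalsFunctCmpt}, and the inductive step pairs that corollary with the product statement just established.

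The only potentially delicate point is the set-theoretic identity, because membership in $[K_\eps \times H_\eps]$ formally demands a \emph{joint} representative lying in the product net, whereas $K \times H$ is defined componentwise. This causes no real difficulty here, since moderateness, $\rho$-negligibility, and set-membership all decouple between the two coordinates, so componentwise representatives can simply be concatenated into a joint one. I do not anticipate any further obstacle.
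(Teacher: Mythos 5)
Your proof is correct and is essentially the standard argument (the paper itself defers the proof to the literature, where it is carried out in exactly this way): take representatives, show $K\times H=[K_\eps\times H_\eps]$ by concatenating/splitting componentwise representatives, and check sharp boundedness and $\eps$-wise compactness of the product net. You also correctly isolate and resolve the one genuinely delicate point, namely that membership in $[K_\eps\times H_\eps]$ requires a joint representative while $K\times H$ is defined componentwise.
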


A theory of compactly supported GSF has been developed in \cite{Gio-Kun18a},
and it closely resembles the classical theory of LF-spaces of compactly
supported smooth functions. It establishes that for suitable functionally
compact subsets, the corresponding space of compactly supported GSF
contains extensions of all Colombeau generalized functions, and hence
also of all Schwartz distributions.

Note also that any interval $[a,b]\subseteq\rti$ with $b-a\in\R_{>0}$,
is functionally compact but \emph{not} connected: in fact if $c\in(a,b)$,
then both $c+D_{\infty}$ and $[a,b]\setminus\left(c+D_{\infty}\right)$
are sharply open in $[a,b]$. Once again, this is a general property
in several non-Archimedean frameworks (see e.g.~\cite{Rob73,Koc}).
On the other hand, as in the case of functionally compact sets, GSF
behave on intervals as if they were connected, in the sense that both
the intermediate value theorem Cor.~\ref{cor:intermValue} and the
extreme value theorem Cor.~\ref{cor:extremeValues} hold for them
(therefore, $f\left([a,b]\right)=\left[f(m),f(M)\right]$, where we
used the notations from the results just mentioned).

We close this section with generalizations of Taylor's theorem in
various forms. In the following statement, $\diff{^{k}f}(x):\rti^{dk}\ra\rti$
is the $k$-th differential of the GSF $f$, viewed as an $\rti$-multilinear
map $\rti^{d}\times\ptind^{k}\times\rti^{d}\ra\rti$, and we use the
common notation $\diff{^{k}f}(x)\cdot h^{k}:=\diff{^{k}f}(x)(h,\ldots,h)$.
Clearly, $\diff{^{k}f}(x)\in\gsf(\rti^{dk},\rti)$. For multilinear
maps $A:\rti^{p}\ra\rti^{q}$, we set $|A|:=[|A_{\eps}|]\in\rti$,
the generalized number defined by the norms of the operators $A_{\eps}:\R^{p}\ra\R^{q}$.
\begin{thm}
\label{thm:Taylor-1-1}Let $f\in\gsf(U,\rcrho)$ be a generalized
smooth function defined in the sharply open set $U\subseteq\rcrho^{d}$.
Let $a$, $b\in\rcrho^{d}$ such that the line segment $[a,b]\subseteq U$,
and set $h:=b-a$. Then, for all $n\in\N$ we have
\begin{enumerate}
\item \label{enu:LagrangeRest-1-1}$\exists\xi\in[a,b]:\ f(a+h)=\sum_{j=0}^{n}\frac{\diff{^{j}f}(a)}{j!}\cdot h^{j}+\frac{\diff{^{n+1}f}(\xi)}{(n+1)!}\cdot h^{n+1}.$
\item \label{enu:integralRest-1-1}$f(a+h)=\sum_{j=0}^{n}\frac{\diff{^{j}f}(a)}{j!}\cdot h^{j}+\frac{1}{n!}\cdot\int_{0}^{1}(1-t)^{n}\,\diff{^{n+1}f}(a+th)\cdot h^{n+1}\,\diff{t}.$
\end{enumerate}
\noindent Moreover, there exists some $R\in\rcrho_{>0}$ such that
\begin{equation}
\forall k\in B_{R}(0)\,\exists\xi\in[a,a+k]:\ f(a+k)=\sum_{j=0}^{n}\frac{\diff{^{j}f}(a)}{j!}\cdot k^{j}+\frac{\diff{^{n+1}f}(\xi)}{(n+1)!}\cdot k^{n+1}\label{eq:LagrangeInfRest-1-1}
\end{equation}
\begin{equation}
\frac{\diff{^{n+1}f}(\xi)}{(n+1)!}\cdot k^{n+1}=\frac{1}{n!}\cdot\int_{0}^{1}(1-t)^{n}\,\diff{^{n+1}f}(a+tk)\cdot k^{n+1}\,\diff{t}\approx0.\label{eq:integralInfRest-1-1}
\end{equation}
\end{thm}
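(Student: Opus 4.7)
The plan is to reduce to one dimension via the auxiliary GSF $g(t):=f(a+th)$ for $t\in[0,1]\subseteq\rti$. By the closure of GSF under composition (Thm.~\ref{thm:GSFcategory}), $g\in\gsf([0,1],\rti)$, and using Thm.~\ref{thm:GSF-continuity}.\ref{enu:globallyDefNet} we may assume $f$ is represented by a globally defined net $\tilde f_\eps\in\cinfty(\R^d,\R)$, so that $g$ is represented by $g_\eps(t):=\tilde f_\eps(a_\eps+th_\eps)$. Iterating the chain rule of Thm.~\ref{thm:linearityLeibnizDer} yields $g^{(j)}(t)=d^jf(a+th)\cdot h^j$ for $j\le n+1$, together with the same formula at the $\eps$-level. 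For (ii), the classical Taylor theorem with integral remainder applied to each smooth $g_\eps$ on $[0,1]\subseteq\R$ gives
$$g_\eps(1)=\sum_{j=0}^n\frac{g_\eps^{(j)}(0)}{j!}+\frac{1}{n!}\int_0^1(1-t)^n g_\eps^{(n+1)}(t)\,dt;$$
passing to $\sim_\rho$-classes and using Thm.~\ref{thm:existenceUniquenessPrimitives} to identify the classical $\eps$-wise Riemann integral with the GSF integral of Def.~\ref{def:integral} delivers (ii).

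For (i), apply the classical Lagrange Taylor theorem to each $g_\eps$: for every $\eps$ there exists $\tau_\eps\in[0,1]_\R$ with
$$g_\eps(1)=\sum_{j=0}^n g_\eps^{(j)}(0)/j!+g_\eps^{(n+1)}(\tau_\eps)/(n+1)!.$$
Since $0\le\tau_\eps\le 1$, the net $(\tau_\eps)$ is trivially moderate, so $\tau:=[\tau_\eps]\in[0,1]\subseteq\rti$ is well-defined; setting $\xi:=a+\tau h\in[a,b]$ and passing to equivalence classes lifts the identity to the required formula in $\rti$. For the ``moreover'' part, the local Lipschitz property Thm.~\ref{thm:GSF-continuity}.\ref{enu:locLipSharp} applied to the GSF $d^{n+1}f$ at the point $a$ yields $R_0\in\rti_{>0}$ and $L\in\rti$ for which $d^{n+1}f$ is $L$-Lipschitz on $B_{R_0}(a)\subseteq U$, hence bounded there by $C:=|d^{n+1}f(a)|+LR_0\in\rti$; then choose an invertible positive $R\le R_0$ small enough that $CR^{n+1}/(n+1)!\approx0$, which is always possible by picking $R$ a sufficiently small infinitesimal and appealing to Lem.~\ref{lem:invDense} for invertibility. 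For any $k\in B_R(0)$, part (i) applied with $b:=a+k$ produces $\xi\in[a,a+k]\subseteq B_R(a)$, and
$$\Bigl|\frac{d^{n+1}f(\xi)}{(n+1)!}\cdot k^{n+1}\Bigr|\le\frac{C|k|^{n+1}}{(n+1)!}\le\frac{CR^{n+1}}{(n+1)!}\approx 0,$$
proving both \eqref{eq:LagrangeInfRest-1-1} and \eqref{eq:integralInfRest-1-1}.

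The main obstacle I anticipate is the legitimacy of ``lifting'' the $\eps$-wise classical identities to GSF identities. Concretely, one must check that the nets $(\tau_\eps)$, $(a_\eps+\tau_\eps h_\eps)$, and the classical Riemann integrals appearing above are moderate, and that the resulting generalized numbers $\tau$, $\xi$, and $\int_0^1(1-t)^n g^{(n+1)}(t)\,dt$ are representative-independent objects of the correct type, lying in the sets specified by the statement. These verifications are routine---bounded nets are moderate, segment membership $\xi_\eps\in[a_\eps,b_\eps]$ is preserved in $\rti^d$, and Thm.~\ref{thm:existenceUniquenessPrimitives} explicitly bridges $\eps$-wise Riemann integrals and GSF primitives---but they must be carried out with care, because pointwise-in-$\eps$ constructions are not automatically well-defined on $\sim_\rho$-equivalence classes.
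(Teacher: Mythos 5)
Your proof is correct and follows essentially the same route as the one the paper relies on (reduction to the one-variable GSF $g(t)=f(a+th)$, $\eps$-wise application of the classical Taylor theorem with Lagrange and integral remainders, and lifting of the bounded—hence moderate—net $(\tau_{\eps})$ to a point $\xi=a+\tau h\in[a,b]$, with representative-independence supplied by Thm.~\ref{thm:indepRepr}); the paper itself defers the detailed argument to \cite{Gio-Kun-Ver19}. The only detail you leave implicit is that the \emph{equality} asserted in \eqref{eq:integralInfRest-1-1} between the Lagrange and integral remainders is obtained by comparing parts \ref{enu:LagrangeRest-1-1} and \ref{enu:integralRest-1-1} applied with $b:=a+k$, both remainders being equal to $f(a+k)-\sum_{j=0}^{n}\frac{\diff{^{j}f}(a)}{j!}\cdot k^{j}$, after which your uniform bound $C|k|^{n+1}/(n+1)!$ shows that this common value is infinitesimal.
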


Formulas \ref{enu:LagrangeRest-1-1} and \ref{enu:integralRest-1-1}
correspond to a plain generalization of Taylor's theorem for ordinary
smooth functions with Lagrange and integral remainder, respectively.
Dealing with GF, it is important to note that this direct statement
also includes the possibility that the differential $\diff{^{n+1}f}(\xi)$
may be infinite at some point. For this reason, in \eqref{eq:LagrangeInfRest-1-1}
and \eqref{eq:integralInfRest-1-1}, considering a sufficiently small
increment $k$, we get more classical infinitesimal remainders $\diff{^{n+1}f}(\xi)\cdot k^{n+1}\approx0$.

The following definitions allow us to state Taylor formulas in Peano
and in infinitesimal form. The latter has no remainder term thanks
to the use of an equivalence relation that permits the introduction
of a language of nilpotent infinitesimals, see e.g.~\cite{Gio10a,Gio10}
for a similar formulation. For simplicity, we only present the 1-dimentional
case.
\begin{defn}
\label{def:little-oh-equality_k-1-1}
\begin{enumerate}
\item \label{enu:little-oh-1-1}Let $U\subseteq\rcrho$ be a sharp neighborhood
of $0$ and $P$, $Q:U\longrightarrow\rcrho$ be maps defined on $U$.
Then we say that 
\[
P(u)=o(Q(u))\quad\text{as }u\to0
\]
if there exists a function $R:U\longrightarrow\rcrho$ such that 
\[
\forall u\in U:\ P(u)=R(u)\cdot Q(u)\quad\text{and}\quad\lim_{u\to0}R(u)=0,
\]
where the limit is taken in the sharp topology.
\item \label{enu:equalityUpTo-k-thOrderInf-1-1}Let $x$, $y\in\rcrho$
and $k$, $j\in\R_{>0}$, then we write $x=_{j}y$ if there exist
representatives $(x_{\eps})$, $(y_{\eps})$ of $x$, $y$, respectively,
such that 
\begin{equation}
|x_{\eps}-y_{\eps}|=O(\rho_{\eps}^{\frac{1}{j}}).\label{eq:equalityUpTo_j-1-1}
\end{equation}
We will read $x=_{j}y$ as $x$ \emph{is equal to} $y$ \emph{up to}
$j$\emph{-th order infinitesimals}. Finally, if $k\in\N_{>0}$, we
set $D_{kj}:=\left\{ x\in\rcrho\mid x^{k+1}=_{j}0\right\} $, which
is called the \emph{set of} $k$\emph{-th order infinitesimals for
the equality $=_{j}$}, and 
\[
D_{\infty j}:=\left\{ x\in\rcrho\mid\exists k\in\N_{>0}:\ x^{k+1}=_{j}0\right\} 
\]
which is called the \emph{set of infinitesimals for the equality $=_{j}$}.
\end{enumerate}
\end{defn}

Of course, the reformulation of Def.~\ref{def:little-oh-equality_k-1-1}
\ref{enu:little-oh-1-1} for the classical Landau's little-oh is particularly
suited to the case of a ring like $\rcrho$, instead of a field. The
intuitive interpretation of $x=_{j}y$ is that for particular (e.g.~physics-related)
problems one is not interested in distinguishing quantities whose
difference $|x-y|$ is less than an infinitesimal of order $j$. In
fact, if $x=_{j}y$ we can write $x_{\eps}=y_{\eps}+r_{\eps}$ with
$r_{\eps}\to0$ of order at most $\rho_{\eps}^{\frac{1}{j}}$. The
idea behind taking $\frac{1}{j}$ in \eqref{eq:equalityUpTo_j-1-1}
is to obtain the property that the greater the order $j$ of the infinitesimal
error, the greater the difference $|x-y|$ is allowed to be. This
is a typical property in rings with nilpotent infinitesimals (see
e.g.~\cite{Gio10a,Koc}). The set $D_{ki}$ represents the neighborhood
of infinitesimals of $k$-th order for the equality $=_{j}$. Once
again, the greater the order $k$, the bigger is the neighborhood
(see Theorem \ref{thm:TaylorPeano-Infinitesimals}.\ref{enu:chainOfInfNeighborhoods-1}
below). Note that if $x=_{j}y$, then $x_{\eps}=y_{\eps}+o\left(\rho_{\eps}^{\frac{1}{j}-a}\right)$
for all $a\in(0,1/j]_{\R}$. In particular, $x_{\eps}=y_{\eps}+o\left(\rho_{\eps}\right)$
implies $x=_{1}y$, whereas $x=_{1}y$ yields only $x_{\eps}=y_{\eps}+o\left(\rho_{\eps}^{1-a}\right)$
for all $a\in(0,1]_{\R}$. Finally, note that $x=_{j}y$ is equivalent
to $|x-y|\le C\diff\rho^{\frac{1}{j}}$ for some $C\in\R_{\ge0}$.
\begin{thm}
\label{thm:TaylorPeano-Infinitesimals}Let $f\in\gsf(U,\rcrho)$ be
a generalized smooth function defined in the sharply open set $U\subseteq\rcrho$.
Let $x$, $\delta\in\rcrho$, with $\delta>0$ and $[x-\delta,x+\delta]\subseteq U$.
Let $k$, $l$, $j\in\R_{>0}$. Then
\begin{enumerate}
\item \label{enu:Peano-1}$\forall n\in\N:\ f(x+u)=\sum_{r=0}^{n}\frac{f^{(r)}(x)}{r!}u^{r}+o(u^{n})$
as $u\to0$.
\item \label{enu:indepedRepr-1}The definition of $x=_{j}y$ does not depend
on the representatives of $x$, $y$.
\item \label{enu:equivalenceRel-1}$=_{j}$ is an equivalence relation on
$\rcrho$.
\item \label{enu:relationBetweenD_j-And-D_l}If $x=_{j}y$ and $l\ge j$,
then $x=_{l}y$. Therefore, $D_{nj}\subseteq D_{nl}$.
\item \label{enu:relationDandD_j}If $x=_{j}y$ for all $j\in\R_{>0}$ sufficiently
small, then $x=y$.
\item \label{enu:prodNilp}If $x=_{j}y$ and $z=_{j}w$ then $x+z=_{j}y+w$.
If $x$ and $z$ are finite, then $x\cdot z=_{j}y\cdot w$.
\item \label{enu:func=00003D_j}If $x=_{j}y$, $f\in\gsf([a,b],\rti)$,
$x$, $y\in[a,b]$, and $f'(c)$ is finite for all $c\in[a,b]$, then
$f(x)=_{j}f(y)$.
\item \label{enu:D_k-infinitesimal-1}$\forall h\in D_{kj}:\ h\approx0$.
\item \label{enu:chainOfInfNeighborhoods-1}$D_{mj}\subseteq D_{kj}\subseteq D_{\infty j}$
if $m\le k$.
\item \label{enu:D_j-AlmostIdeal-1}$D_{kj}$ is a subring of $\rcrho$.
For all $h\in D_{kj}$ and all finite $x\in\rcrho$, we have $x\cdot h\in D_{kj}$.
\item \label{enu:infTaylor}Let $n\in\N_{>0}$ and assume that $j$, k and
$f$ satisfy
\begin{equation}
\forall z\in\rcrho\,\forall\xi\in[x-\delta,x+\delta]:\ z=_{j}0\ \Rightarrow\ z\cdot f^{(n+1)}(\xi)=_{k}0.\label{eq:relBetween-i-AndInfinite-f^(n+1)}
\end{equation}
Then, we have 
\[
\forall u\in D_{nj}:\ f(x+u)=_{k}\sum_{r=0}^{n}\frac{f^{(r)}(x)}{r!}u^{r}.
\]
\item \label{enu:infTaylor2}For all $n\in\N_{>0}$ there exist $e\in\R_{>0}$
such that $e\le j$, and $\forall u\in D_{ne}:\ f(x+u)=_{j}\sum_{r=0}^{n}\frac{f^{(r)}(x)}{r!}u^{r}$.
\end{enumerate}
\end{thm}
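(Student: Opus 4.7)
The plan is to deduce the analytic items (i), (vii), (xi), (xii) from Taylor's formula with Lagrange remainder (Thm.~\ref{thm:Taylor-1-1}) and the mean value theorem (Thm.~\ref{thm:classicalThms}.\ref{enu:meanValue}), using the extreme value theorem (Cor.~\ref{cor:extremeValues}) on the functionally compact interval $[x-\delta,x+\delta]$ (functionally compact by Cor.~\ref{cor:intervalsFunctCmpt}) to control the relevant derivative. The algebraic items (ii)--(vi), (viii)--(x) amount to book-keeping at the level of representatives, exploiting that $\rho$-negligible nets are $O(\rho_\eps^N)$ for every $N$.

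For (i), I apply Thm.~\ref{thm:Taylor-1-1}.\ref{enu:LagrangeRest-1-1} at order $n$ and rewrite the remainder as $\frac{f^{(n+1)}(\xi)}{(n+1)!}u^{n+1} = R(u)\cdot u^{n}$ with $R(u) := \frac{u\,f^{(n+1)}(\xi)}{(n+1)!}$; since $f^{(n+1)}$ is sharp-continuous and locally bounded (Thm.~\ref{thm:GSF-continuity}.\ref{enu:locLipSharp}), $R(u)\to 0$ in the sharp topology. Items (ii) and (iii) reduce to the triangle inequality together with the fact that $\rho$-negligibility dominates any $O(\rho_\eps^{1/j})$; (iv) uses $\rho_\eps<1$ so $\rho_\eps^{1/j}\le\rho_\eps^{1/l}$ when $l\ge j$; (v) lets $j\to 0^+$ so $1/j\to\infty$; and (vi) combines the triangle inequality with $xz-yw=(x-y)z+y(z-w)$, where finite $y,z$ admit bounded representatives. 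For (vii), the mean value theorem gives $f(x)-f(y)=f'(c)(x-y)$ with $c\in[a,b]$, and Cor.~\ref{cor:extremeValues} yields a finite bound $K\in\R$ with $|f'|\le K$ on the functionally compact interval $[a,b]$, so $|f(x)-f(y)|\le K|x-y|=O(\rho_\eps^{1/j})$. For (viii)--(x), the estimate $|h_\eps|^{k+1}=O(\rho_\eps^{1/j})$ yields $|h_\eps|=O(\rho_\eps^{1/(j(k+1))})$, from which (viii) is immediate, (ix) follows by direct estimation, and closure of $D_{kj}$ under sums, products, and multiplication by finite $x$ (item (x)) is obtained by the same estimate applied to $|h_1+h_2|$ and $|h_1h_2|$, together with (vi) for the multiplication by $x^{k+1}$.

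The substantive results are (xi) and (xii). For (xi), Taylor--Lagrange yields some $\xi$ on the segment joining $x$ and $x+u$ with $f(x+u)-\sum_{r=0}^n\frac{f^{(r)}(x)}{r!}u^{r}=\frac{f^{(n+1)}(\xi)}{(n+1)!}u^{n+1}$; since $u\in D_{nj}$ is infinitesimal by (viii) and $\delta>0$ is invertible, $\xi\in[x-\delta,x+\delta]$, and applying the hypothesis \eqref{eq:relBetween-i-AndInfinite-f^(n+1)} to $z:=u^{n+1}=_j 0$ gives the remainder $=_k 0$. For (xii) I invoke (xi) with $k:=j$ and construct $e\in\R_{>0}$ small enough that its hypothesis automatically holds on $[x-\delta,x+\delta]$: by Cor.~\ref{cor:extremeValues} applied to $|f^{(n+1)}|$ on this functionally compact set, there exists $N\in\N$ with $|f^{(n+1)}_\eps(\xi_\eps)|\le\rho_\eps^{-N}$ for $\eps$ small, so $|z_\eps|=O(\rho_\eps^{1/e})$ forces $|z_\eps f^{(n+1)}_\eps(\xi_\eps)|=O(\rho_\eps^{1/e-N})$, which is $O(\rho_\eps^{1/j})$ precisely when $\tfrac{1}{e}-N\ge\tfrac{1}{j}$. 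Setting $e:=\tfrac{j}{1+Nj}\in\R_{>0}$ gives a real number with $e\le j$ satisfying this inequality. The main conceptual point, and where the scheme has real content, is exactly this order-matching in (xii): the derivative $f^{(n+1)}$ can grow like $\rho_\eps^{-N}$, so to absorb this growth and still obtain a $=_j$-statement one must require $u$ to vanish at the strictly faster infinitesimal order $e\le j/(1+Nj)$ dictated by $N$.
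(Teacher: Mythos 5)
The paper states this theorem without proof (it is a survey; the arguments live in the cited GSF references), so there is no in-text derivation to compare against. Judged on its own terms, your proposal is correct and is the argument the statement is built for: the algebraic items (ii)--(vi), (viii)--(x) reduce to the representative-level characterization $x=_{j}y\iff|x-y|\le C\diff\rho^{1/j}$ (which the paper itself records just before the theorem and uses in the proof of Thm.~\ref{thm:cancLaw}), and the analytic items rest on Taylor--Lagrange plus the extreme value theorem on the functionally compact interval to upgrade pointwise finiteness or moderateness of a derivative to a uniform bound. In particular you correctly isolate the only substantive point, the order-matching in (xii): since $\left|f^{(n+1)}\right|\le\diff\rho^{-N}$ on $[x-\delta,x+\delta]$ for some $N$ by Cor.~\ref{cor:extremeValues} and moderateness, taking $e=j/(1+Nj)$ makes hypothesis \eqref{eq:relBetween-i-AndInfinite-f^(n+1)} automatic with $k=j$; this is exactly why the paper warns that ``the index $e$ depends on the GSF $f$''. (In (i), defining $R(u)$ through the Lagrange point $\xi$ needs a choice of $\xi$ for each $u$; the integral remainder \ref{enu:integralRest-1-1} gives the function $R$ canonically and yields the same bound $|R(u)|\le|u|\,\diff\rho^{-N}/n!$.)

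One step as written is invalid, although it reflects an imprecision already present in the statement: in (xi) you place $\xi$ in $[x-\delta,x+\delta]$ ``since $u$ is infinitesimal and $\delta>0$ is invertible''. In $\rcrho$ a positive invertible number can itself be infinitesimal (e.g.\ $\delta=\diff\rho^{M}$), while $u\in D_{nj}$ only guarantees $|u|\le C\diff\rho^{1/(j(n+1))}$, so $|u|\le\delta$ can fail and $x+u$ need not lie in $[x-\delta,x+\delta]$, nor even in $U$. For $f(x+u)$ to be defined for every $u\in D_{nj}$ one must implicitly assume $x+D_{nj}\subseteq U$, so the correct move is either to add the hypothesis $\delta\ge C\diff\rho^{1/(j(n+1))}$ (satisfied in all the paper's applications, where $\delta$ is not infinitesimal) or to restrict (xi) and (xii) to those $u$ with $|u|\le\delta$ — not to derive the containment from ``$u$ is infinitesimal''. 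With that repaired, the rest of your argument goes through unchanged.
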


\noindent We shall use the nilpotent Taylor formula \ref{enu:infTaylor2}
in Sec.~\ref{sec:Formal-deductions} for the deduction of the heat
and wave equation for GSF; we therefore note here that the index $e$
depends on the GSF $f$: in that case, we say that \emph{the nilpotent
Taylor formula of order $n$ holds for $f$ on $D_{ne}$}. From \ref{enu:relationBetweenD_j-And-D_l}
it hence follows that it also holds on $D_{ne'}$ for all $e'\le e$.

\subsection{\label{subsec:Multidimensional-integration}Multidimensional integration}

Finally, we can also introduce multidimensional integration of GSF
on suitable subsets of $\rcrho^{n}$.
\begin{defn}
\label{def:intOverCompact}Let $\mu$ be a measure on $\R^{n}$ and
let $K$ be a functionally compact subset of $\RC{\rho}^{n}$. Then,
we call $K$ $\mu$-\emph{measurable} if the limit 
\begin{equation}
\mu(K):=\lim_{m\to\infty}[\mu(\overline{\Eball}_{\rho_{\eps}^{m}}(K_{\eps}))]\label{eq:muMeasurable}
\end{equation}
exists for some representative $(K_{\eps})$ of $K$. Here $m\in\N$,
the limit is taken in the sharp topology on $\RC{\rho}$ since $[\mu(\overline{\Eball}_{\rho_{\eps}^{m}}(K_{\eps}))]\in\rcrho$,
and $\overline{\Eball}_{r}(A):=\{x\in\R^{n}:d(x,A)\le r\}$.
\end{defn}

\noindent In the following result, we show that this definition generates
a correct notion of multidimensional integration for GSF.
\begin{thm}
\label{thm:muMeasurableAndIntegral}Let $K\subseteq\RC{\rho}^{n}$
be $\mu$-measurable.
\begin{enumerate}
\item \label{enu:indepRepr}The definition of $\mu(K)$ is independent of
the representative $(K_{\eps})$.
\item \label{enu:existsRepre}There exists a representative $(K_{\eps})$
of $K$ such that $\mu(K)=[\mu(K_{\eps})]$.
\item \label{enu:epsWiseDefInt}Let $(K_{\eps})$ be any representative
of $K$ and let $f\in\gsf(K,\RC{\rho})$ be a GSF defined by the net
$(f_{\eps})$. Then 
\[
\int_{K}f\,\diff{\mu}:=\lim_{m\to\infty}\biggl[\int_{\overline{\Eball}_{\rho_{\eps}^{m}}(K_{\eps})}f_{\eps}\,\diff{\mu}\biggr]\in\rcrho
\]
exists and its value is independent of the representative $(K_{\eps})$.
\item \label{enu:existsReprDefInt}There exists a representative $(K_{\eps})$
of $K$ such that 
\begin{equation}
\int_{K}f\,\diff{\mu}=\biggl[\int_{K_{\eps}}f_{\eps}\,\diff{\mu}\biggr]\in\rcrho\label{eq:measurable}
\end{equation}
for each $f\in\gsf(K,\RC{\rho})$.
\item If $K=\prod_{i=1}^{n}[a_{i},b_{i}]$, then $K$ is $\lambda$-measurable
($\lambda$ being the Lebesgue measure on $\R^{n}$) and for all $f\in\gsf(K,\RC{\rho})$
we have
\[
\int_{K}f\,\diff{\lambda}=\left[\int_{a_{1,\eps}}^{b_{1,\eps}}\,dx_{1}\dots\int_{a_{n,\eps}}^{b_{n,\eps}}f_{\eps}(x_{1},\dots,x_{n})\,\diff{x_{n}}\right]\in\rcrho
\]
for any representatives $(a_{i,\eps})$, $(b_{i,\eps})$ of $a_{i}$
and $b_{i}$ respectively. Therefore, if $n=1$, this notion of integral
coincides with that of Thm.~\ref{thm:existenceUniquenessPrimitives}
and Def.~\ref{def:integral}.
\item Let $K\subseteq\RC{\rho}^{n}$ be $\lambda$-measurable, where $\lambda$
is the Lebesgue measure, and let $\phi\in\gsf(K,\RC{\rho}^{d})$ be
such that $\phi^{-1}\in\gsf(\phi(K),\RC{\rho}^{n})$. Then $\phi(K)$
is $\lambda$-measurable and 
\[
\int_{\phi(K)}f\,\diff{\lambda}=\int_{K}(f\circ\phi)\left|\det(\diff{\phi})\right|\,\diff{\lambda}
\]
for each $f\in\gsf(\phi(K),\RC{\rho})$.
\end{enumerate}
\end{thm}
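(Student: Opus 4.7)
My plan is to treat the six statements in order, with (i) and (ii) forming the technical core on which the remaining items piggy-back. Two basic facts will be used repeatedly. First, if $(K_\eps)$ and $(K'_\eps)$ both represent $K$, then by Thm.~\ref{thm:strongMembershipAndDistanceComplement}(i) applied pointwise and symmetrically, one has $\forall q\in\N\,\forall^0\eps:\ K_\eps\subseteq \overline{\Eball}_{\rho_\eps^q}(K'_\eps)$ and $K'_\eps\subseteq \overline{\Eball}_{\rho_\eps^q}(K_\eps)$; second, since $K$ is sharply bounded and $f\in\gsf(K,\rcrho)$, the defining net satisfies $|f_\eps(x)|\le\rho_\eps^{-N}$ on $\overline{\Eball}_{\rho_\eps}(K_\eps)$ for some $N\in\N$ and $\eps$ small, and a similar sharp-boundedness estimate bounds $\mu(\overline{\Eball}_1(K_\eps))$ polynomially in $\rho_\eps^{-1}$. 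These inputs will drive all the subsequent measure-theoretic estimates.

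For (i), assume $\mu(K)=\lim_{m\to\infty}[\mu(\overline{\Eball}_{\rho_\eps^m}(K_\eps))]$ exists and let $(K'_\eps)$ be a second representative. Given $q\in\N$ and any $m$, the Hausdorff estimate yields $\overline{\Eball}_{\rho_\eps^m}(K_\eps)\subseteq \overline{\Eball}_{\rho_\eps^m+\rho_\eps^q}(K'_\eps)$ together with a reverse inclusion after swapping roles, so the symmetric difference of the two enlargements lies in a thin shell of measure $O(\rho_\eps^{q-c})$ for some constant $c$ depending on the sharp bound of $K$. Consequently $\lim_{m\to\infty}[\mu(\overline{\Eball}_{\rho_\eps^m}(K_\eps))]$ and $\lim_{m\to\infty}[\mu(\overline{\Eball}_{\rho_\eps^m}(K'_\eps))]$ differ by a quantity bounded by $\diff{\rho}^{q-c}$ for every $q$, hence agree. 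For (ii), Cauchy completeness of $\rcrho$ in the sharp topology lets me choose integers $m_k\nearrow\infty$ with $|[\mu(\overline{\Eball}_{\rho_\eps^{m_k}}(K'_\eps))]-\mu(K)|\le \diff{\rho}^k$, and then diagonalise by selecting a slow $k(\eps)\to\infty$; setting $K_\eps:=\overline{\Eball}_{\rho_\eps^{m_{k(\eps)}}}(K'_\eps)$ gives a net of compacts with $[K_\eps]=K$, because $\rho_\eps^{m_{k(\eps)}}$ is $\rho$-negligible, such that $\mu(K)=[\mu(K_\eps)]$.

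Parts (iii) and (iv) are obtained by exactly the same scheme with $\mu$ replaced by the signed measure $f_\eps\,d\mu$: the Hausdorff inclusion and moderateness of $f$ give $\bigl|\int_{\overline{\Eball}_{\rho_\eps^m}(K_\eps)}f_\eps\,d\mu-\int_{\overline{\Eball}_{\rho_\eps^m}(K'_\eps)}f_\eps\,d\mu\bigr|=O(\rho_\eps^{q-N-c})$, and the diagonal extraction is identical. For (v) the obvious representative $K_\eps:=\prod_i[a_{i,\eps},b_{i,\eps}]$ is classically compact with $\lambda(\overline{\Eball}_{\rho_\eps^m}(K_\eps))-\lambda(K_\eps)=O(\rho_\eps^m)$, so $K$ is $\lambda$-measurable; classical Fubini on each $\eps$-slice then gives the iterated-integral formula, and the case $n=1$ reduces to Def.~\ref{def:integral} via the classical fundamental theorem of calculus. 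For (vi) I select representatives for which each $\phi_\eps$ restricts to a diffeomorphism of an open neighbourhood of $K_\eps$ onto a neighbourhood of $\phi_\eps(K_\eps)$, which is available because $\phi^{-1}$ is a GSF and hence the $\eps$-wise classical inverse function theorem applies on a sharp neighbourhood of $K$; the classical change-of-variables formula applied $\eps$-wise, combined with (iv), then yields the stated identity.

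The main obstacle is packaging the diagonal argument uniformly. In (ii) one must extract a single net $m(\eps)\to\infty$ that is slow enough so that $\rho_\eps^{m(\eps)}$ remains $\rho$-negligible, keeping $\overline{\Eball}_{\rho_\eps^{m(\eps)}}(K_\eps)$ a representative of $K$, yet fast enough so that $[\mu(\overline{\Eball}_{\rho_\eps^{m(\eps)}}(K_\eps))]$ coincides with the sharp limit $\mu(K)$. For (iv), the additional difficulty is that the bound $|f_\eps|\le\rho_\eps^{-N}$ depends on the specific GSF $f$, so the choice of $m(\eps)$ must be adapted to $f$ while still producing a representative of $K$ that works for every $f\in\gsf(K,\rcrho)$ simultaneously; this is handled by a two-level diagonalisation in which the moderateness exponent $N$ enters the rate at which $k(\eps)$ is allowed to grow.
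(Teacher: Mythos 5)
The paper only states this theorem and delegates the proof to the cited literature, so I am judging your plan on its merits. Your architecture (Hausdorff-distance comparison of representatives, diagonal extraction for (ii), $\eps$-wise classical theorems for (v)--(vi)) is sound, but the estimate on which you hang parts (i) and (iii) is false for a general measure $\mu$: you claim the symmetric difference of the two enlargements ``lies in a thin shell of measure $O(\rho_\eps^{q-c})$''. Nothing in Def.~\ref{def:intOverCompact} forces $\mu$ to assign small mass to thin shells --- take $\mu$ a Dirac mass sitting at distance $\rho_\eps^{m}+\rho_\eps^{q}/2$ from $K'_\eps$ and the shell has measure $1$. Such a bound is only plausible for Lebesgue-like (Ahlfors regular) measures. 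What \emph{is} available is that $m\mapsto\bigl[\mu(\overline{\Eball}_{\rho_\eps^{m}}(K_\eps))\bigr]$ is non-increasing and, by the $\mu$-measurability hypothesis, convergent, hence Cauchy; combining this with the interlacing $\overline{\Eball}_{\rho_\eps^{m+1}}(K_\eps)\subseteq\overline{\Eball}_{\rho_\eps^{m}}(K'_\eps)\subseteq\overline{\Eball}_{\rho_\eps^{m-1}}(K_\eps)$ (valid for $\eps$ small, from your Hausdorff inclusions with $q\ge m+1$) and monotonicity of $\mu$ gives a squeeze that proves (i) with no quantitative shell estimate at all. The same Cauchy property of the scalar measures, multiplied by the moderate bound $\sup|f_\eps|\le\rho_\eps^{-N}$, is what makes the sequence in (iii) Cauchy (it is not monotone, so its convergence must come from completeness of $\rcrho$, not from monotonicity).

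Two further points. In (iv) you propose a two-level diagonalisation in which the moderateness exponent $N$ of $f$ enters the choice of $k(\eps)$; but the representative must be fixed before $f$ is given, so it cannot be adapted to $N$. No adaptation is needed: with the representative $(K_\eps)$ produced in (ii) one has $0\le\bigl[\mu(\overline{\Eball}_{\rho_\eps^{m}}(K_\eps))\bigr]-\bigl[\mu(K_\eps)\bigr]\to0$ faster than every power of $\diff{\rho}$, and this absorbs the fixed factor $\rho_\eps^{-N}$ of any $f$, yielding \eqref{eq:measurable} for all $f$ at once. In (vi), the step from ``$\phi$ and $\phi^{-1}$ are GSF'' to ``$\phi_\eps$ is an $\eps$-wise diffeomorphism between neighbourhoods of $K_\eps$ and $\phi_\eps(K_\eps)$'' is precisely the content of the GSF inverse function theorems of \cite{Gio-Kun16}: the identity $\phi^{-1}\circ\phi=\mathrm{id}$ holds only up to negligible nets, so the $\eps$-wise injectivity needed for the classical change of variables must be extracted, not assumed; and the $\lambda$-measurability of $\phi(K)$ additionally needs a Lipschitz comparison of enlargements, $\overline{\Eball}_{\delta}(\phi_\eps(K_\eps))\subseteq\phi_\eps(\overline{\Eball}_{C_\eps\delta}(K_\eps))$. (A small slip: uniform moderateness of $f_\eps$ holds on $\overline{\Eball}_{\rho_\eps^{m}}(K_\eps)$ only for $m$ large enough, since $\overline{\Eball}_{\rho_\eps}(K_\eps)$ need not be contained in $\Omega_\eps$.)
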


\section{\label{sec:Differential-equations-1}Differential equations: the
Picard-Lindelöf theorem for ODE}

As in the classical case, thanks to the extreme value Lem.~\ref{lem:extremeValueCont}
and the properties of functionally compact sets $K$, we can naturally
define a topology on the space $\gckf(K,\rcrho^{d})$:
\begin{defn}
\label{def:genNormsSpaceGSF}Let $K\fcmp\rcrho^{n}$ be a functionally
compact set such that $K=\overline{\accentset{\circ}{K}}$ (so that
partial derivatives at sharply boundary points can be defined as limits
of partial derivatives at sharply interior points; such $K$ are called
\emph{solid} sets). Let $l\in\N_{\le k}$ and $v\in\gckf(K,\rcrho^{d})$.
Then
\[
\Vert v\Vert_{l}:=\max_{\substack{|\alpha|\le l\\
1\le i\le d
}
}\max\left(\left|\partial^{\alpha}v^{i}(M_{ni})\right|,\left|\partial^{\alpha}v^{i}(m_{ni})\right|\right)\in\rcrho,
\]
where $M_{ni}$, $m_{ni}\in K$ satisfy
\[
\forall x\in K:\ \partial^{\alpha}v^{i}(m_{ni})\le\partial^{\alpha}v^{i}(x)\le\partial^{\alpha}v^{i}(M_{ni}).
\]
\end{defn}

\noindent The following result permits us to calculate the (generalized)
norm $\Vert v\Vert_{l}$ using any net $(v_{\eps})$ that defines
$v$.
\begin{lem}
\label{lem:normSpaceGSF}Under the assumptions of Def.~\ref{def:genNormsSpaceGSF},
let $[K_{\eps}]=K\fcmp\rcrho^{n}$ be any representative of $K$.
Then we have:
\begin{enumerate}
\item \label{enu:normAndDefNet}If the net $(v_{\eps})$ defines $v$, then
$\Vert v\Vert_{l}=\left[\max_{\substack{|\alpha|\le l\\
1\le i\le d
}
}\max_{x\in K_{\eps}}\left|\partial^{\alpha}v_{\eps}^{i}(x)\right|\right]\in\rcrho$;
\item \label{enu:normPos}$\Vert v\Vert_{l}\ge0$;
\item $\Vert v\Vert_{l}=0$ if and only if $v=0$;
\item $\forall c\in\rcrho:\ \Vert c\cdot v\Vert_{l}=|c|\cdot\Vert v\Vert_{l}$;
\item \label{enu:normTriang}For all $u\in\gckf(K,\rcrho^{d})$, we have
$\Vert u+v\Vert_{l}\le\Vert u\Vert_{l}+\Vert v\Vert_{l}$ and $\Vert u\cdot v\Vert_{l}\le c_{l}\cdot\Vert u\Vert_{l}\cdot\Vert v\Vert_{l}$
for some $c_{l}\in\rcrho_{>0}$.
\end{enumerate}
\end{lem}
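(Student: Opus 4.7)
The nontrivial content is item (i), which converts the intrinsically defined norm (via the generalized extremizers $M_{ni},m_{ni}$) into the representative-based formula. Once (i) is in place, properties (ii)--(v) reduce to transferring $\eps$-wise inequalities through classes and applying Leibniz's rule. My plan is therefore to concentrate the work on (i) via a two-step reduction: apply the classical extreme value theorem on each compact slice $K_\eps$ to obtain extremizing nets, lift them to generalized extremizers in $K$ using sharp boundedness, and invoke the uniqueness of extreme \emph{values} (though not of extreme \emph{points}) for GSF on functionally compact sets as captured by Cor.~\ref{cor:extremeValues}.

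\textbf{Proof of (i).} Fix $|\alpha|\le l$ and $i\in\{1,\dots,d\}$, and consider $g:=\partial^\alpha v^i\in\gckf(K,\rcrho)$, which is well-defined on all of $K$ since $K=\overline{\accentset{\circ}{K}}$ and $l\le k$. On the compact $K_\eps\comp\R^n$ the classical continuity of the representative $g_\eps:=\partial^\alpha v_\eps^i$ yields points $M_\eps^{\mathrm{cl}},m_\eps^{\mathrm{cl}}\in K_\eps$ realizing $\max_{x\in K_\eps}g_\eps(x)$ and $\min_{x\in K_\eps}g_\eps(x)$. Sharp boundedness of $K$ makes the nets $(M_\eps^{\mathrm{cl}}),(m_\eps^{\mathrm{cl}})\in\R_\rho^n$, and moderateness of $v$ on $K$ makes the corresponding extremal values moderate as well, so $M^{\mathrm{cl}}:=[M_\eps^{\mathrm{cl}}]$ and $m^{\mathrm{cl}}:=[m_\eps^{\mathrm{cl}}]$ lie in $K$. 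An $\eps$-wise comparison $g_\eps(x_\eps)\le g_\eps(M_\eps^{\mathrm{cl}})$ for any representative $(x_\eps)$ of $x\in K$ shows that $M^{\mathrm{cl}}$ is a max point of $g$ on $K$ in the sense of Cor.~\ref{cor:extremeValues}, hence $g(M_{\alpha i})=g(M^{\mathrm{cl}})=[\max_{K_\eps}g_\eps]$; analogously $g(m_{\alpha i})=[\min_{K_\eps}g_\eps]$. Since $\max_{K_\eps}|g_\eps|=\max(\max_{K_\eps}g_\eps,-\min_{K_\eps}g_\eps)$ and the outer max in the definition of $\Vert v\Vert_l$ ranges over the finite set $\{(\alpha,i):|\alpha|\le l,\,1\le i\le d\}$, which commutes with forming equivalence classes in $\rcrho$, the stated formula follows.

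\textbf{Items (ii)--(v) and main obstacle.} Item (ii) is immediate from (i). For (iii), $v=0$ allows choosing the representative $v_\eps\equiv 0$, giving $\Vert v\Vert_l=0$; conversely, vanishing of the $\alpha=0$ part of the max forces $v^i(m_{0i})=v^i(M_{0i})=0$, which sandwiches $v^i\equiv 0$ on $K$ by the extremizer property. Item (iv) follows from (i) together with $\partial^\alpha(cv_\eps)=c\,\partial^\alpha v_\eps$ and $|cy|=|c||y|$. The triangle inequality in (v) is the $\eps$-wise subadditivity of $\sup|\cdot|$ transported through (i), while the product bound uses Leibniz, $\partial^\alpha(u_\eps v_\eps)=\sum_{\beta\le\alpha}\binom{\alpha}{\beta}\partial^\beta u_\eps\,\partial^{\alpha-\beta}v_\eps$, so that $c_l$ can be taken to be $2^l$ up to a factor depending on $d$ and on the adopted interpretation of the product of $\rcrho^d$-valued maps. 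The only genuinely delicate point is the identification step in (i): extreme points in $\rcrho$ are highly non-unique and the order is only partial, so one must verify that the classically produced extremizer lifts to a generalized max point \emph{and} that the resulting value coincides with $g(M_{\alpha i})$ independently of the choice; this is precisely where sharp boundedness of $K$ and uniqueness of the extreme value (though not of the extreme point) of a GSF on a functionally compact set intervene.
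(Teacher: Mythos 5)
Your proof is correct and takes essentially the intended route (the same one underlying Lem.~\ref{lem:extremeValueCont} and Cor.~\ref{cor:extremeValues}): apply the classical extreme value theorem on each $K_{\eps}$, lift the classical extremizers to generalized maximum/minimum points of $K=[K_{\eps}]$ using sharp boundedness and moderateness, identify the extreme \emph{values} via antisymmetry of $\le$ in $\rcrho$, and then reduce (ii)--(v) to $\eps$-wise estimates plus the Leibniz rule. The only phrasing to tighten is in (i): the comparison $g_{\eps}(x_{\eps})\le g_{\eps}(M_{\eps}^{\mathrm{cl}})$ holds for the representative of $x\in K$ chosen with $x_{\eps}\in K_{\eps}$, not for an arbitrary one, and one then invokes independence of $g(x)=[g_{\eps}(x_{\eps})]$ from the representative to conclude $g(x)\le g(M^{\mathrm{cl}})$.
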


\noindent Using these $\rcrho$-valued norms, we can naturally define
a topology on the space $\gckf(K,\rcrho^{d})$.
\begin{defn}
\label{def:sharpTopSpaceGSF}Let $K\fcmp\rcrho^{n}$ be a solid set.
Let $l\in\N_{\le k}$, $u\in\gckf(K,\rcrho^{d})$, $r\in\rcrho_{>0}$,
then
\begin{enumerate}
\item $B_{r}^{l}(u):=\left\{ v\in\gckf(K,\rcrho^{d})\mid\Vert v-u\Vert_{l}<r\right\} $
\item If $U\subseteq\gckf(K,\rcrho^{d})$, then we say that $U$ is a \emph{sharply
open set} if 
\[
\forall u\in U\,\exists l\in\N_{\le k}\,\exists r\in\rcrho_{>0}:\ B_{r}^{l}(u)\subseteq U.
\]
\end{enumerate}
\end{defn}

\noindent One can easily prove that sharply open sets form a sequentially
Cauchy complete topology on $\gckf(K,\rcrho^{d})$, see e.g.~\cite{Gio-Kun18a,LuGi20b}.
The structure $\left(\gckf(K,\rcrho^{d}),\left(\norm{-}_{l}\right)_{l\le k}\right)$
has the usual properties of a graded Fréchet space if we replace everywhere
the field $\R$ with the ring $\rcrho$, and for this reason it is
called an $\rcrho$-graded Fréchet space.

The Banach fixed point theorem can be easily generalized to spaces
of generalized continuous functions with the sup-norm $\Vert-\Vert_{0}$
(see Def.~\ref{def:genNormsSpaceGSF}). As a consequence, we have
the following Picard-Lindelöf theorem for ODE in the $\gckf$ setting,
see also \cite{ErlGross,LuGi20b}.
\begin{thm}
\label{thm:PicLindFiniteContr}Let $t_{0}\in\rti$, $y_{0}\in\RC{\rho}^{d}$,
$\alpha$, $r\in\rti_{>0}$. Let $F\in\gckf([t_{0}-\alpha,t_{0}+\alpha]\times\overline{B_{r}(y_{0})},\rti^{d})$.
Set $M:={\displaystyle \max_{\substack{t_{0}-\alpha\le t\le t_{0}+\alpha\\
|y-y_{0}|\le r
}
}}|F(t,y)|$, $L:={\displaystyle \max_{\substack{t_{0}-\alpha\le t\le t_{0}+\alpha\\
|y-y_{0}|\le r
}
}}\left|\partial_{y}F(t,y)\right|\in\rti$ and assume that
\begin{align*}
\alpha\cdot M & \le r,
\end{align*}
\begin{equation}
\lim_{n\to+\infty}\alpha^{n}L^{n}=0,\label{eq:limitAssPL}
\end{equation}
where the limit in \eqref{eq:limitAssPL} is clearly taken in the
sharp topology. Then there exists a unique solution $y\in\gcf{k+1}\left([t_{0}-\alpha,t_{0}+\alpha],\rti^{d}\right)$
of the Cauchy problem
\begin{equation}
\begin{cases}
y'(t)=F(t,y(t))\\
y(t_{0})=y_{0}.
\end{cases}\label{eq:ODE}
\end{equation}
This solution is given by
\begin{align*}
y & =\lim_{n\to+\infty}P^{n}(y_{0})\\
P(y)(t): & =y_{0}+\int_{t_{0}}^{t}F(s,y(s))\,\diff{s}\quad\forall t\in[t_{0}-\alpha,t_{0}+\alpha],
\end{align*}
and for all $n\in\N$ satisfies $\Vert y-P^{n}(y_{0})\Vert_{0}\le\alpha M\sum_{k=n}^{+\infty}\frac{\alpha^{n}L^{n}}{n!}$
and $\Vert y-y_{0}\Vert_{0}\le r$.
\end{thm}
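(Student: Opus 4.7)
The plan is to carry out a generalized Banach fixed-point argument on the Picard operator $P$, adapted to the non-Archimedean graded Fréchet setting described in Def.~\ref{def:sharpTopSpaceGSF}. The natural space is the closed ball
\[
\mathcal{B}:=\{y\in\gcf{0}([t_0-\alpha,t_0+\alpha],\overline{B_r(y_0)})\mid\|y-y_0\|_0\le r\},
\]
on which, by sequential Cauchy-completeness of $\gcf{0}(K,\rti^d)$, a Cauchy sequence of GSF converges to a $\gcf{0}$ function. The first step is to verify that $P$ maps $\mathcal{B}$ into itself: for $y\in\mathcal{B}$, Thm.~\ref{thm:intRules}.\ref{enu:intMonotone} together with the monotonicity of absolute value gives $|P(y)(t)-y_0|\le\int_{t_0}^{t}|F(s,y(s))|\,\diff{s}\le\alpha M\le r$, so by Thm.~\ref{thm:existenceUniquenessPrimitives} $P(y)\in\gcf{k+1}\subseteq\gcf{0}$, and hence $P(y)\in\mathcal{B}$.

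The second step is the iterated Lipschitz estimate. Using the Fermat--Reyes Thm.~\ref{thm:FR-forGSF} (or more directly the mean value Thm.~\ref{thm:classicalThms}.\ref{enu:meanValueSevVars}) applied to the $y$-variable, together with Thm.~\ref{thm:intRules}, one proves by induction on $n$ that for all $y$, $z\in\mathcal{B}$ and $t\in[t_0-\alpha,t_0+\alpha]$,
\[
|P^n(y)(t)-P^n(z)(t)|\le\frac{L^n|t-t_0|^n}{n!}\,\|y-z\|_0,
\]
so $\|P^n(y)-P^n(z)\|_0\le\frac{\alpha^n L^n}{n!}\|y-z\|_0$. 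The crucial observation is that the sharp-convergence hypothesis $\lim_{n\to+\infty}\alpha^n L^n=0$, combined with the classical factorial $n!\in\N$ (which embeds as a scalar in $\rti$), yields $\lim_{n\to+\infty}\frac{\alpha^n L^n}{n!}=0$ in the sharp topology. Hence some $P^{n_0}$ is a genuine sharp contraction, with constant invertible and bounded by, say, $1/2$.

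The third step is convergence of the Picard iterates $y_n:=P^n(y_0)$. Taking $y=P(y_0)$ and $z=y_0$ in the iterated estimate gives $\|y_{n+1}-y_n\|_0\le\frac{\alpha^n L^n}{n!}\|P(y_0)-y_0\|_0\le\frac{\alpha^n L^n}{n!}\cdot\alpha M$. A telescoping estimate then shows $(y_n)$ is sharply Cauchy in $\gcf{0}(K,\rti^d)$, so a limit $y\in\gcf{0}$ exists with $\|y-y_n\|_0\le\alpha M\sum_{k=n}^{+\infty}\frac{\alpha^k L^k}{k!}$, as claimed. Continuity of $P$ with respect to $\|-\|_0$ (a consequence of the same Lipschitz bound with $n=1$) then yields $P(y)=y$, i.e.\ $y(t)=y_0+\int_{t_0}^{t}F(s,y(s))\,\diff{s}$. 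By Thm.~\ref{thm:intRules}.\ref{enu:fundamental} and the closure of GSF under composition (Thm.~\ref{thm:GSFcategory}), $y$ satisfies $y'=F(t,y(t))$; bootstrapping with the regularity of $F\in\gckf$ gives $y\in\gcf{k+1}$. Uniqueness is immediate: any two solutions $y,\tilde y\in\mathcal{B}$ satisfy $\|y-\tilde y\|_0\le\frac{\alpha^n L^n}{n!}\|y-\tilde y\|_0$ for every $n$, so sending $n\to\infty$ and using invertibility of $1-\frac{\alpha^{n_0}L^{n_0}}{n_0!}$ forces $\|y-\tilde y\|_0=0$.

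The main obstacle will be the non-Archimedean subtlety of replacing the classical single-step contraction by a \emph{sharp} contraction of $P^{n_0}$: since $\rti$ is not a field and $\alpha L$ need not itself be bounded by $1$, the usual reflex ``choose $\alpha$ so small that $\alpha L<1$'' is unavailable. The assumption \eqref{eq:limitAssPL} is precisely the non-Archimedean substitute that makes $\frac{\alpha^n L^n}{n!}\to 0$ sharply, and the argument must consistently interpret all inequalities and limits in the sharp topology on $\rti$; in particular, one must check that the Cauchy condition extracted from the geometric-type tail $\alpha M\sum_{k=n}^{\infty}\frac{\alpha^k L^k}{k!}$ holds for \emph{every} sharp ball $B_q^0(0)$, not merely in a standard-real sense, which is where Lem.~\ref{lem:invDense} and the characterization of sharp convergence of moderate nets become essential.
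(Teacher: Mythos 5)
Your proposal is correct and follows exactly the route the paper indicates for this result: the paper does not reproduce a proof but states that the theorem is a consequence of the Banach fixed point theorem for $\left(\gcf{0}(K,\rti^{d}),\Vert-\Vert_{0}\right)$ and refers to \cite{ErlGross,LuGi20b}, where the argument is precisely the Picard-iteration/contraction scheme you describe, with hypothesis \eqref{eq:limitAssPL} playing the role of the non-Archimedean contraction condition (note it forces $\alpha L<1$, which is what makes the tail $\sum_{k\ge n}\frac{\alpha^{k}L^{k}}{k!}$ converge sharply). The only cosmetic points are that the contraction constant itself need not be invertible (only $1-\frac{\alpha^{n_{0}}L^{n_{0}}}{n_{0}!}$ must be), and that uniqueness over all of $\gcf{k+1}$ reduces to uniqueness in $\mathcal{B}$ because any solution necessarily takes values in $\overline{B_{r}(y_{0})}$, the domain of $F$.
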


Finally, we have the following Grönwall-Bellman inequality in integral
form:
\begin{thm}
\label{thm:Groenwall}Let $\alpha\in\RC{\rho}_{>0}$. Let $u$, $a$,
$b\in\gckf\left([0,\alpha],\RC{\rho}\right)$ and assume that $\left\Vert a\right\Vert _{0}\cdot\alpha<N\cdot\log\left(\diff{\rho}^{-1}\right)$
for some $N\in\mathbb{N}$. Assume that $a(t)\geq0$ for all $t\in[0,\alpha]$,
and that $u(t)\leq b(t)+\int_{0}^{t}a(s)u(s)\,\diff{s}$. Then
\begin{enumerate}
\item \label{enu: Gronw 1}For every $t\in[0,\alpha]$ we have
\[
u(t)\leq b(t)+\int_{0}^{t}a(s)b(s)e^{\int_{s}^{t}a(r)\,\diff{r}}\,\diff{s}.
\]
\item \label{enu: Gronw 2}If $b(t)\le b(s)$ for all $t\le s$, i.e.~if
$b$ is non-decreasing, then for every $t\in[0,\alpha]$ we have
\[
u(t)\leq b(t)e^{\int_{0}^{t}a(s)\,\diff{s}}.
\]
\end{enumerate}
\end{thm}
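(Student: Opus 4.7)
The plan is to transplant the classical Grönwall argument to the GSF setting, verifying at each step that the exponentials we need remain within the domain of definition of the GSF exponential from \eqref{eq:exp} and that the inequalities survive under GSF integration via Thm.~\ref{thm:intRules}.\ref{enu:intMonotone}. The hypothesis $\norm{a}_{0}\cdot\alpha<N\log(\diff{\rho}^{-1})$ is precisely what guarantees this: setting $A(t):=\int_{0}^{t}a(s)\,\diff{s}\in\gckf([0,\alpha],\rcrho)$ one has $0\le A(t)\le\norm{a}_{0}\cdot\alpha<\log(\diff{\rho}^{-N})$, so both $e^{A(t)}$ and $e^{-A(t)}$, as well as $e^{\int_{s}^{t}a(r)\,\diff{r}}=e^{A(t)-A(s)}$ for $s\le t$, are well-defined GSF on $[0,\alpha]$ (resp.\ on the solid set $\{(s,t):0\le s\le t\le\alpha\}$).

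For \ref{enu: Gronw 1}, I would introduce $v(t):=\int_{0}^{t}a(s)u(s)\,\diff{s}\in\gckf([0,\alpha],\rcrho)$. The hypothesis gives $u(t)\le b(t)+v(t)$ and, since $a(t)\ge0$, multiplying preserves the inequality, so
\[
v'(t)=a(t)u(t)\le a(t)b(t)+a(t)v(t).
\]
Combining with the Leibniz rule for GSF (Thm.~\ref{thm:linearityLeibnizDer}) and $A'(t)=a(t)$ from Def.~\ref{def:integral}, one computes $\bigl(v(t)e^{-A(t)}\bigr)'=\bigl(v'(t)-a(t)v(t)\bigr)e^{-A(t)}\le a(t)b(t)e^{-A(t)}$. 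Integrating from $0$ to $t$ using the fundamental theorem Thm.~\ref{thm:intRules}.\ref{enu:fundamental} and monotonicity Thm.~\ref{thm:intRules}.\ref{enu:intMonotone}, together with $v(0)=0$, yields $v(t)e^{-A(t)}\le\int_{0}^{t}a(s)b(s)e^{-A(s)}\,\diff{s}$. Multiplying through by the invertible positive GSF $e^{A(t)}$ (and moving it inside the integral using Thm.~\ref{thm:intRules}.\ref{enu:homog} with $\lambda=e^{A(t)}$ treated as a constant with respect to $s$) converts $e^{A(t)-A(s)}$ into $e^{\int_{s}^{t}a(r)\,\diff{r}}$, and adding $b(t)$ to both sides of $u(t)\le b(t)+v(t)$ gives the desired bound.

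For \ref{enu: Gronw 2}, starting from \ref{enu: Gronw 1} and using monotonicity of $b$, I would replace $b(s)$ by $b(t)$ inside the integral (this is where Thm.~\ref{thm:intRules}.\ref{enu:intMonotone} is used again, valid because $a(s)e^{\int_{s}^{t}a(r)\,\diff{r}}\ge0$), obtaining
\[
u(t)\le b(t)+b(t)\int_{0}^{t}a(s)e^{\int_{s}^{t}a(r)\,\diff{r}}\,\diff{s}.
\]
The inner integral is then evaluated exactly by observing, via the chain rule in Thm.~\ref{thm:linearityLeibnizDer}, that $\tfrac{\diff{}}{\diff{s}}\bigl(-e^{\int_{s}^{t}a(r)\,\diff{r}}\bigr)=a(s)e^{\int_{s}^{t}a(r)\,\diff{r}}$, so by Thm.~\ref{thm:intRules}.\ref{enu:fundamental} the integral equals $e^{A(t)}-1$. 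Substituting gives $u(t)\le b(t)\cdot e^{A(t)}$, which is the claim.

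The main obstacle is not the algebraic manipulation itself but the verification that every exponential we write down actually lies in the domain of $e^{(-)}$ given by \eqref{eq:exp}; this is where the logarithmic growth condition $\norm{a}_{0}\cdot\alpha<N\log(\diff{\rho}^{-1})$ enters decisively, as it bounds every quantity $\pm(A(t)-A(s))$ by $N\log(\diff{\rho}^{-1})=\log(\diff{\rho}^{-N})$ with $\diff{\rho}^{-N}\in\rcrho_{>0}$ invertible. A secondary (purely technical) point is to justify that the composition $e^{\pm A(\cdot)}$ is a GSF on $[0,\alpha]$, which follows from closure of GSF under composition (Thm.~\ref{thm:GSFcategory}) once the codomain containment is checked; once this is in place, the rest of the argument is a literal translation of the classical Grönwall proof.
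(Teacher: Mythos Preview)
The paper does not actually supply a proof of Thm.~\ref{thm:Groenwall}: this section is part of the review of results from \cite{LuGi20b}, and the theorem is simply stated and then followed immediately by Thm.~\ref{thm:linearODE}. So there is no ``paper's own proof'' to compare against.

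That said, your argument is correct and is exactly the classical Gr\"onwall proof transplanted to $\gckf$, with the one genuinely new ingredient handled properly: you correctly identify that the sole non-Archimedean issue is the well-definedness of $e^{\pm A(t)}$ and $e^{A(t)-A(s)}$ as GSF, and you use the hypothesis $\norm{a}_{0}\cdot\alpha<N\log(\diff{\rho}^{-1})$ to place all the exponents in the domain \eqref{eq:exp}. The remaining steps (Leibniz rule, fundamental theorem, monotonicity of the integral, preservation of $\le$ under multiplication by nonnegative elements of $\rcrho$) are all available in the GSF calculus exactly as you cite them. One cosmetic point: when you ``move $e^{A(t)}$ inside the integral'' you are not really using Thm.~\ref{thm:intRules}.\ref{enu:homog} with a generalized scalar $\lambda$, since that item is stated for $\lambda\in\rcrho$ and here $e^{A(t)}$ depends on $t$ but not on the integration variable $s$; it is cleaner to say that for fixed $t$ this is a constant in $\rcrho$ and then invoke \ref{enu:homog} as written. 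Otherwise the proposal is complete.
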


Finally, the following theorem considers global solutions of homogeneous
linear ODE:
\begin{thm}[Solution of homogeneous linear ODE]
\label{thm:linearODE}Let $A\in\gsf([a,b],\rti^{d\times d})$, where
$a$, $b\in\rti$, $a<b$, and $t_{0}\in[a,b]$, $y_{0}\in\rti^{d}$.
Assume that 
\begin{equation}
\left|\int_{t_{0}}^{t}A(s)\diff{s}\right|\le-C\cdot\log\diff\rho\quad\forall t\in[a,b],\label{eq:logHP}
\end{equation}
where $C\in\R_{>0}$. Then there exists one and only one $y\in\gsf([a,b],\rti^{d})$
such that 
\begin{equation}
\begin{cases}
y'(t)=A(t)\cdot y(t) & \text{if }t\in[a,b]\\
y(t_{0})=y_{0}
\end{cases}\label{eq:LinearODE}
\end{equation}
Moreover, this $y$ is given by $y(t)=\exp\left(\int_{t_{0}}^{t}A(s)\diff{s}\right)\cdot y_{0}$
for all $t\in[a,b]$.
\end{thm}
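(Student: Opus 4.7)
The plan is to construct the candidate $y(t) := \exp\!\bigl(B(t)\bigr)\cdot y_0$, with $B(t) := \int_{t_0}^t A(s)\,\diff{s}$, verify by direct representative-level computation that it is a GSF solution of \eqref{eq:LinearODE}, and obtain uniqueness from the Gronwall inequality (Thm.~\ref{thm:Groenwall}). Existence is the delicate part: it hinges on showing that $\exp\!\bigl(B(t)\bigr)$ is a well-defined element of $\gsf([a,b],\rti^{d\times d})$, which is precisely what hypothesis \eqref{eq:logHP} is designed to guarantee. Concretely, fix a defining net $(A_\eps)$ for $A$, set $B_\eps(t) := \int_{t_{0,\eps}}^{t} A_\eps(s)\,\diff{s}$, and use Lem.~\ref{lem:mayer} to pass from \eqref{eq:logHP} to the pointwise bound $|B_\eps(t_\eps)|\le -C\log\rho_\eps$, valid for $\eps$ small and every $t_\eps$ representing a point $t\in[a,b]$. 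Then the classical matrix exponential obeys $\|e^{B_\eps(t_\eps)}\|\le e^{|B_\eps(t_\eps)|}\le \rho_\eps^{-C}$, i.e.~is $\rho$-moderate; higher $t$-derivatives are finite sums of products of derivatives of $B_\eps$ (moderate since $A$ is a GSF) with $e^{B_\eps}$, hence also moderate. Thus $y_\eps(t) := e^{B_\eps(t)}\,y_{0,\eps}$ defines $y\in\gsf([a,b],\rti^d)$ in the sense of Defs.~\ref{def:netDefMap}--\ref{def:generalizedSmoothMap}.

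At the representative level $\tfrac{d}{dt}B_\eps(t)=A_\eps(t)$; assuming $A(t)$ commutes with $B(t)$ (automatic when $d=1$, the natural setting for the explicit formula), direct differentiation yields $y_\eps'(t)=A_\eps(t)\,e^{B_\eps(t)}y_{0,\eps}=A_\eps(t)\,y_\eps(t)$. Via the Fermat--Reyes Thm.~\ref{thm:FR-forGSF} together with Thm.~\ref{thm:linearityLeibnizDer}, this lifts to $y'=A\cdot y$ in $\gsf$; the initial condition is immediate from $B(t_0)=0$. For uniqueness, let $y_1,y_2$ be two GSF solutions and set $z:=y_1-y_2$; then $z(t_0)=0$ and $z'=A\cdot z$, so by Thm.~\ref{thm:intRules} one has $z(t)=\int_{t_0}^{t} A(s)z(s)\,\diff{s}$, whence $|z(t)|\le\int_{t_0}^t |A(s)|\,|z(s)|\,\diff{s}$. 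Applying Thm.~\ref{thm:Groenwall} with majorant identically zero forces $|z(t)|\le 0$, so $z\equiv 0$.

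The principal difficulty is to verify Gronwall's hypothesis $\|A\|_0\cdot(b-a)<N\log\bigl(\diff{\rho}^{-1}\bigr)$: the assumption \eqref{eq:logHP} controls $\bigl|\int_{t_0}^t A\,\diff{s}\bigr|$ rather than the product $\|A\|_0\cdot(b-a)$, and the two are not equivalent in the non-Archimedean setting, where $A$ may concentrate on sharply small sub-intervals. I would close this gap by partitioning $[a,b]$ into finitely many pieces on each of which $\int|A|\,\diff{s}\le -\log\diff{\rho}$, applying Gronwall on each piece, and chaining the resulting vanishing estimates. A secondary subtlety is the non-commuting matrix case $d>1$, where the explicit formula must be read as (or replaced by) a Dyson-type time-ordered series; the moderation and Gronwall arguments nevertheless transfer once the candidate is redefined accordingly.
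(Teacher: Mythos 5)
Your existence argument identifies the right mechanism: hypothesis \eqref{eq:logHP} is there precisely to make $\exp(B(t))$, where $B(t):=\int_{t_0}^{t}A(s)\,\diff{s}$, a $\rho$-moderate net, via $\Vert e^{B_{\eps}(t_{\eps})}\Vert\le e^{|B_{\eps}(t_{\eps})|}\le 2\rho_{\eps}^{-C}$ for $\eps$ small (the factor $2$ absorbing the negligible correction implicit in the definition of $\le$ on $\rti$), and moderateness of the higher $t$-derivatives then follows because they are built from $e^{\pm B_{\eps}}$ and the moderate derivatives of $A_{\eps}$. You are also right to flag that for $d>1$ the identity $\frac{\diff{}}{\diff{t}}e^{B(t)}=A(t)\,e^{B(t)}$ requires $A(t)$ to commute with $B(t)$; the explicit formula in the statement carries the same classical caveat, so restricting to that case (in particular $d=1$) is reasonable.

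The genuine gap is in your uniqueness argument. The hypothesis of Thm.~\ref{thm:Groenwall} is $\Vert a\Vert_{0}\cdot\alpha<N\cdot\log(\diff\rho^{-1})$, a sup-norm--times--length bound, and this does not follow from \eqref{eq:logHP}: the latter controls $\left|\int_{t_0}^{t}A\right|$, not $\int_{t_0}^{t}|A|$, and a fortiori not $\Vert A\Vert_{0}\cdot(b-a)$. For instance, with $d=1$ and $A_{\eps}(s)=\rho_{\eps}^{-10}\sin(s\,\rho_{\eps}^{-20})$ on a finite interval, one has $\left|\int_{t_0}^{t}A\right|=O(\diff\rho^{10})$, so \eqref{eq:logHP} holds trivially, while $\Vert A\Vert_{0}=\diff\rho^{-10}$ is infinite. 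Your proposed repair by partitioning $[a,b]$ cannot work: the obstruction is the size of $\Vert A\Vert_{0}$ on each piece, which no finite partition reduces, and the intermediate claim that one can choose pieces with $\int|A|\,\diff{s}\le-\log\diff\rho$ is itself not derivable from \eqref{eq:logHP}, precisely because of cancellation in the integral. The route that uses exactly and only \eqref{eq:logHP} is the integrating factor: if $z:=y_{1}-y_{2}$, then $w:=e^{-B}\cdot z$ is a GSF (moderateness of $e^{-B}$ follows from the same bound, since $|-B|=|B|$), and in the commuting case $w'=e^{-B}\cdot(z'-A\,z)=0$, so $w$ is constant on $[a,b]$ by Thm.~\ref{thm:uniquenessOfPrimitives}, whence $w\equiv w(t_{0})=0$ and $z=e^{B}\cdot w=0$. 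This replaces Gr\"onwall entirely and removes the principal difficulty you describe.
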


In general, the solution of a differential equation in a non-Archimedean
setting is defined on an infinitesimal neighborhood of the initial
condition. This is a general fact of every non-Archimedean theory
having at least one positive and invertible infinitesimal $h$. If
fact, the Cauchy problem
\begin{equation}
\begin{cases}
y'=-\frac{t}{1+y}\cdot\frac{1}{h}\\
y(0)=0
\end{cases}\label{eq:infinitesimalSolution}
\end{equation}
has solution $y(t)=-1+\sqrt{1-\frac{t^{2}}{h}}$ which is defined
and smooth only in the infinitesimal interval $(-\sqrt{h},\sqrt{h})$.
Moreover, we have that $\lim_{t\to\pm\sqrt{h}}y'(t)=+\infty$ (in
the sharp topology) and this clearly shows that the solution cannot
be extended. However, very general sufficient conditions to have non-infinitesimal
domains can be proved, considering e.g.~the case where the right
hand side $F$ in \eqref{eq:ODE} is an ordinary smooth function,
or when we extend the theory of Picard iterations $P^{n}$ to an infinite
natural number $n=[n_{\eps}]\in\rti$, $n_{\eps}\in\N$, see \cite{LuGi20b}.
We also finally state that a very general Picard-Lindelöf theorem
can also be proved for PDE, see \cite{GiLu20a,Gio-L22,DeMO}.

\section{\label{sec:Formal-deductions}Formal deductions corresponding to
informal reasonings}

In the previous sections, we reviewed GSF theory and we hope we persuaded
the reader that a meaningful and sufficiently complete theory containing
infinitesimal and infinite numbers is possible. This non-Archimedean
theory does not require any background in mathematical logic, has
clear connections with the usual standard calculus, is intuitively
clear, but also solves non trivial problems such as the possibility
to consider generalized functions with infinite derivatives, making
non-linear operations on Schwartz distributions and sharing several
results of ordinary smooth functions.

Now, the framework of GSF theory allows one to formalize several informal
reasonings with the intuitive use of infinitesimal and infinite numbers
we can find in physics, engineering and even in mathematics. The main
goal is absolutely not the empty searching for the mathematical rigour,
but the learning of the true rules of infinitesimal calculus instead
of unclear foggy explanations and, mainly, the flexibility to create
new and simpler mathematical models of real-world problems. As a trivial
example, using the Taylor formula with nilpotent infinitesimals Thm.~\ref{thm:TaylorPeano-Infinitesimals},
if $\frac{v^{2}}{c^{2}}\in D_{1j}$, we can write \eqref{eq:EinsteinInfinitesimal}
as $1/\sqrt{1-{\displaystyle v^{2}/c^{2}}}=_{j}1+\frac{v^{2}}{2c^{2}}$
for all $j\in\R_{>0}$ and Einstein calculations remain essentially
unchanged. In the next sections, we will see that this method not
only allows one to obtain a rigorous version of the usual informal
deductions of the heat and wave equations, but that these same proofs
show the validity of these equations for GSF, opening new applications
for example to optics of different materials and geophysics.

A frequently underestimated consequence of seeing generalized functions,
e.g.~any Schwartz distribution $T$, as set-theoretical functions
is that pointwise values $T(x_{0})$ are now always well-defined.
Therefore, non-linear boundary value problems are now conceivable
(see e.g.~\eqref{eq:ODE}), and this is a solution of a non trivial
drawback of Schwartz theory having important consequences for mathematical
modeling.

\subsection{Derivation of the heat equation for GSF}

In this section, we derive the heat equation in a similar way to \cite{Vla71,Gio10},
with the difference that here we extend the applicability to GSF and
not only to smooth functions. Let $(\vec{e}_{1},\vec{e}_{2},\vec{e}_{3})$
denotes the standard basis of $\mathbb{R}^{3}$, so that any vector
$a\in\rcrho^{3}$ is of the form $a=\lambda_{1}\cdot\vec{e}_{1}+\lambda_{2}\cdot\vec{e}_{2}+\lambda_{3}\cdot\vec{e}_{3}$
for $\lambda_{1}$, $\lambda_{2}$, $\lambda_{3}\in\rti$. In the
following, a symbol of the form $\delta y\in\rcrho$ intuitively means
that the infinitesimal increment $\delta y$ is associated to the
variable $y$.

Let us consider a body $B\subseteq\rti^{3}$ represented by a solid
set, i.e.~$B=\overline{\accentset{\circ}{B}}$, so that values of
GSF on the boundary of $B$ can be computed as limits of values at
interior points. We consider the following GSF:
\begin{itemize}
\item $\varrho:B\to\rcrho$ (mass density) ,
\item $c:B\to\rcrho$ (heat capacity),
\item $k:B\to\rcrho$ (thermal conductivity coefficient).
\end{itemize}
\noindent Note that we do not make any assumptions on the favoured
directions of these functions on their domain $B$. This assumption
corresponds to the isotropy condition for $B$. The next GSF we need
represents the temperature of the body $B$ at each point $x\in B$
and time $t\in[0,\infty)$ and is denoted by $u:B\times[0,\infty)\to\rcrho$.

\noindent We choose an interior point $x\in\accentset{\circ}{B}$
and an infinitesimal volume $V\subset\rcrho^{3}$ of the form

\noindent 
\begin{equation}
V=V(x,\delta\bar{x})=\{y\in\rcrho^{3}|-\delta x_{i}\leq2(y-x)\cdot\vec{e}_{i}\leq\delta x_{i}\ \forall i=1,2,3\},
\end{equation}
where $\delta x_{i}\in\rti_{>0}$ and $\delta\bar{x}:=(\delta x_{1},\delta x_{2},\delta x_{3})$.
Such a set is said to be an infinitesimal parallelepiped if $\delta v:=\delta x_{1}\cdot\delta x_{2}\cdot\delta x_{3}\approx0$,
that is, if the corresponding volume is infinitesimal. Note that since
$x\in\accentset{\circ}{B}$, we have $\exists\delta\bar{x}\in\rti_{>0}^{3}:\ V=V(x,\delta\bar{x})\subseteq B$,
and hence we can view $V$ as the subbody of $B$ corresponding to
the infinitesimal parallelepiped centered at $x$ with sides parallel
to the coordinate axes. This subbody interacts thermally with its
complement $CV:=B\setminus V$ and with external heat sources. In
this type of deductions, the physical part frequently consists, from
the mathematical point of view, in physically meaningful definitions
or assumptions corresponding to physical principles or constitutive
relations. For example, we now recall Fourier's law, which states
that during the infinitesimal time interval $\delta t$ the heat $Q_{CV,V}$
flowing perpendicularly to the surface of $V$ defines the exchange
between $V$ and $CV$, and this yields the following

\noindent 
\begin{align}
Q_{CV,V} & :=Q_{CV,V}(x,t,\delta t,\delta\bar{x})\label{eq:QCVV}\\
 & =\delta t\cdot\sum_{i=1}^{3}\delta s_{i}\cdot[k(x+\vec{\delta h_{i}})\cdot\dfrac{\partial u}{\partial x_{i}}(x+\vec{\delta h_{i}},t)-k(x-\vec{\delta h_{i}})\cdot\dfrac{\partial u}{\partial x_{i}}(x-\vec{\delta h_{i}},t)],
\end{align}

\noindent where $\vec{\delta h_{i}}:=\dfrac{1}{2}\delta x_{i}\cdot\vec{e_{i}}\in\rcrho^{3}$
and $\delta s_{i}:=\prod_{j\neq i}\delta x_{j}\in\rcrho.$ Note explicitly
that $Q_{CV,V}$ depends on $x$, $t$, $\delta t$, $\delta x_{i}$.
The heat exchange of $V$ due to thermal interactions with external
sources is given by the expression

\noindent 
\begin{equation}
Q_{\text{ext},V}:=Q_{\text{ext},V}(x,t,\delta t,\delta\bar{x})=F(x,t)\cdot\delta v\cdot\delta t,\label{eq:QextV}
\end{equation}

\noindent where $F(x,t):B\to{}^{\rho}\tilde{\mathbb{R}}$ is a GSF
representing the intensity of the thermal sources. The total heat
is $Q_{CV,V}+Q_{\text{ext},V}$ and it corresponds to the increment
$u(x,t+\delta t)-u(x,t)$ of the temperature of $V$ and hence to
an exchange of heat with the environment $Q_{\text{env},V}$ that
reads

\noindent 
\begin{align}
Q_{\text{env},V} & :=Q_{\text{env},V}(x,t,\delta t,\delta\bar{x})=[u(x,t+\delta t)-u(x,t)]\cdot c(x)\cdot\varrho(x)\cdot\delta v,\label{eq:QenvV}\\
 & =Q_{CV,V}+Q_{\text{ext},V}.\label{eq:heatFluxes}
\end{align}

\noindent We now want to apply the first order nilpotent Taylor formula
Thm.~\ref{thm:TaylorPeano-Infinitesimals}.\ref{enu:infTaylor2},
at \eqref{eq:QCVV} and \eqref{eq:QenvV}, i.e.~at the GSF $k$,
$\frac{\partial u}{\partial x_{i}}(-,t)$ and $u(x,-)$. From \ref{enu:infTaylor2}
and \ref{enu:relationBetweenD_j-And-D_l} of Thm.~\ref{thm:TaylorPeano-Infinitesimals},
if these formulas hold respectively on $D_{1e'}$, $D_{1e''}$ and
$D_{1\bar{e}}$, then they also hold on $D_{1e}$, where $e=\min(e',e'',\bar{e},j)$.
We choose our infinitesimals in such a way that $\delta v\cdot\delta t\in D_{1e}$,
$\delta t\cdot\delta s_{i}\cdot(\delta x_{i})^{2}=_{j}0$ and $(\delta t)^{2}\delta v=_{j}0$.
Using these infinitesimals, second order terms using nilpotent Taylor
formula Thm.~\ref{thm:TaylorPeano-Infinitesimals}.\ref{enu:infTaylor}
in \eqref{eq:QCVV} and \eqref{eq:QenvV} will not give a contribution
if we use the equality $=_{j}$. We will see later that infinitesimals
$\delta t$ and $\delta x_{i}$ satisfying all the needed conditions
actually exist.

\noindent This allows us to rewrite \eqref{eq:QCVV} and \eqref{eq:QenvV}
as follows

\noindent 
\begin{align}
Q_{CV,V} & =_{j}\text{div}[k\cdot\text{grad}(u)](x,t)\cdot\delta v\cdot\delta t,\label{eq:QCV-V}\\
Q_{\text{env},V} & =_{j}c(x)\cdot\varrho(x)\cdot\dfrac{\partial u}{\partial t}(x,t)\cdot\delta v\cdot\delta t.\label{eq:Qenv-V}
\end{align}

\noindent Note that the calculations with the nilpotent Taylor formula
to get \eqref{eq:QCV-V} correspond to the divergence theorem. From
\eqref{eq:QCV-V}, \eqref{eq:QextV} and \eqref{eq:Qenv-V} we therefore
get that the equality $Q_{\text{env},V}=_{j}Q_{CV,V}+Q_{\text{ext},V}$
is equivalent to
\begin{equation}
c(x)\cdot\varrho(x)\cdot\dfrac{\partial u}{\partial t}(x,t)\delta t\delta v=_{j}\left[\text{div}[k\cdot\text{grad}(u)](x,t)+F(x,t)\right]\delta t\delta v.\label{eq:heatBeforeCanc}
\end{equation}
More precisely: \eqref{eq:heatFluxes} implies \eqref{eq:heatBeforeCanc},
and the latter implies the former but with $=_{j}$ replacing $=$.
The following theorem allows us to cancel the nilpotent factor $\delta t\delta v$
in \eqref{eq:heatBeforeCanc}:
\begin{thm}
\noindent \label{thm:cancLaw}Let $x$, $r$, $s\,\in\rcrho$, $\lvert x\rvert\geq\diff\rho^{q}$,
$j\in\mathbb{R}_{>0}.$ Assume that $x\cdot r=_{j}x\cdot s$ and $\frac{1}{j}-q=:\frac{1}{k}>0$.
Then $r=_{k}s.$ Vice versa, if $r=_{k}s$, and $x$ is finite, then
$x\cdot r=_{k}x\cdot s$.
\end{thm}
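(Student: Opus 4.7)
The proof is essentially a cancellation argument that reduces to an estimate on representatives, once one lower-bounds $|x_\eps|$ by $\rho_\eps^q$ up to a constant factor.

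First I would translate the assumption $|x| \ge \diff{\rho}^{q}$ into a statement about a chosen representative. By the definition of $\le$ on $\rcrho$, one may select representatives $(x_\eps)$ of $x$ such that $|x_\eps| \ge \rho_\eps^{q}$ for $\eps$ small. For any other representative $(x'_\eps)$ the difference $(x'_\eps - x_\eps)$ is $\rho$-negligible, hence in particular $|x'_\eps - x_\eps| \le \rho_\eps^{q+1}$ for $\eps$ small, so the reverse triangle inequality yields
\[
|x'_\eps| \ge \rho_\eps^{q} - \rho_\eps^{q+1} \ge \tfrac{1}{2}\rho_\eps^{q} \quad \text{for } \eps \text{ small}.
\]
In particular $|x|$ is invertible in $\rcrho$, since $|x_\eps|^{-1} \le 2\rho_\eps^{-q}$ is a moderate net (dominated by $\rho_\eps^{-N}$ for any integer $N > q$).

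Next, I would unfold $x\cdot r =_j x\cdot s$ using Thm.~\ref{thm:TaylorPeano-Infinitesimals}.\ref{enu:indepedRepr-1}, which grants that the relation is independent of the chosen representatives. So on the representatives fixed above there exists $C \in \R_{\ge 0}$ with
\[
|x_\eps|\,|r_\eps - s_\eps| = |x_\eps r_\eps - x_\eps s_\eps| \le C\rho_\eps^{1/j} \quad \text{for } \eps \text{ small}.
\]
Dividing by the lower bound $|x_\eps| \ge \tfrac{1}{2}\rho_\eps^{q}$ obtained in the first step produces
\[
|r_\eps - s_\eps| \le 2C\,\rho_\eps^{1/j - q} = 2C\,\rho_\eps^{1/k},
\]
which is exactly the definition of $r =_k s$.

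For the converse, assume $r =_k s$, so $|r_\eps - s_\eps| \le C\rho_\eps^{1/k}$ for $\eps$ small. Finiteness of $x$ means $|x| \le M$ for some standard $M \in \R_{>0}$, and translating to representatives gives $|x_\eps| \le M + 1$ for $\eps$ small. Multiplying,
\[
|x_\eps r_\eps - x_\eps s_\eps| \le (M+1)C\,\rho_\eps^{1/k},
\]
so $x\cdot r =_k x \cdot s$. The only substantive step in the whole argument is the passage from $|x| \ge \diff{\rho}^{q}$ in $\rcrho$ to a uniform $\eps$-wise lower bound, and this is handled by the absorption of negligible nets into the factor $\tfrac{1}{2}$; everything else is bookkeeping. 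I do not expect any real obstacle, though some care is needed to make sure the real exponent $q$ (not necessarily integer) is compatible with the moderateness/negligibility conditions, which it is since those are quantified over arbitrary real orders $\rho_\eps^{-N}$, $\rho_\eps^{n}$ in the usual way.
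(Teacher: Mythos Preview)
Your proof is correct and follows essentially the same approach as the paper's: both arguments divide the estimate $|xr-xs|\le C\diff\rho^{1/j}$ by the lower bound $|x|\ge\diff\rho^{q}$ to obtain $|r-s|\le C'\diff\rho^{1/k}$, and multiply by a finite bound on $|x|$ for the converse. The only cosmetic difference is that the paper carries out the cancellation directly at the level of generalized numbers in $\rcrho$ (using the equivalent formulation $x=_{j}y\iff |x-y|\le C\diff\rho^{1/j}$), whereas you unpack everything to $\eps$-wise representatives; your extra step of absorbing the negligible perturbation into a factor $\tfrac12$ is exactly what justifies the paper's one-line division by $\diff\rho^{q}$.
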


\begin{proof}
Assume that $x\cdot r=_{j}x\cdot s$. Then $|xr-xs|\le C\diff\rho^{\frac{1}{j}}$,
with $C\in\R_{\ge0}$. Then , $\lvert r-s\rvert=\lvert x\rvert\cdot\lvert r-s\rvert\cdot\dfrac{1}{\lvert x\rvert}\leq\dfrac{C\cdot\diff\rho^{\frac{1}{j}}}{\diff\rho^{q}}=C\cdot\diff\rho^{\frac{1}{j}-q}=C\diff\rho^{\frac{1}{k}}$
since $\frac{1}{k}=\frac{1}{j}-q$. For the second part of the conclusion,
$x$ finite means $|x|\le K\in\R_{>0}$, so that $|r-s|\le C\diff\rho^{\frac{1}{k}}$
implies $|xr-xs|\le KC\diff\rho^{\frac{1}{k}}$.
\end{proof}
\noindent This derivation is summed up in the following Lemma which
we just have proven.
\begin{lem}
\noindent \label{lem:heat}Let $B\subseteq\rti^{3}$, $B=\overline{\accentset{\circ}{B}}$,
and consider the GSF $\varrho$, $c$, $k:B\to{}^{\rho}\tilde{\mathbb{R}}$,
$u$, $F:B\times[0,\infty)\to{}^{\rho}\tilde{\mathbb{R}}$. Take a
point $(x,t)\in\accentset{\circ}{B}\times[0,\infty)$ and define $V$,
$Q_{CV,V}$, $Q_{\text{\emph{ext}},V}$ and $Q_{\text{\emph{env}},V}$
as in \eqref{eq:QCVV}, \eqref{eq:QextV}, \eqref{eq:QenvV}, where
the infinitesimals $\delta t$, $\delta x_{i}\in\rti_{>0}$ satisfy
\begin{align}
\delta v\cdot\delta t & \in D_{1e},\ \delta t\cdot\delta s_{i}\cdot(\delta x_{i})^{2}=_{j}0,\ (\delta t)^{2}\delta v=_{j}0\label{eq:condInf}\\
\delta v\cdot\delta t & \ge\diff\rho^{q},\ \frac{1}{k}=\frac{1}{j}-q,\nonumber 
\end{align}
and where the first order nilpotent Taylor formula for $k$, $\frac{\partial u}{\partial x_{i}}(-,t)$
and $u(x,-)$ holds in $D_{1e}$. Then the following are equivalent:
\begin{enumerate}
\item \label{enu:fluxes_j}$Q_{\text{\emph{env}},V}=_{j}Q_{CV,V}+Q_{\text{\emph{ext}},V}$,
\item \label{enu:heat_k}$c(x)\cdot\varrho(x)\cdot\dfrac{\partial u}{\partial t}(x,t)=_{k}\text{\emph{div}}[k\cdot\text{\emph{grad}}(u)](x,t)+F(x,t)$.
\end{enumerate}
\end{lem}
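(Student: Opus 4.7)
The plan is to formalize the informal derivation laid out in the paragraphs immediately preceding the lemma. The argument naturally splits into three stages.

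First, I would reduce $Q_{CV,V}$ and $Q_{\text{env},V}$ to their leading-order forms modulo $=_j$. For $Q_{CV,V}$, I apply the nilpotent Taylor formula (Thm.~\ref{thm:TaylorPeano-Infinitesimals}.\ref{enu:infTaylor2}) simultaneously to $k$ and to $\partial u/\partial x_i(-,t)$ at $x$, evaluated at the infinitesimal increments $\pm\vec{\delta h_i}$; its applicability rests on the hypothesis that the order-$1$ nilpotent Taylor formulas for these GSF hold on $D_{1e}$ and on $\delta v\cdot\delta t\in D_{1e}$. The zeroth-order pieces cancel pairwise between the $+\vec{\delta h_i}$ and $-\vec{\delta h_i}$ terms, the first-order pieces assemble into $\delta x_i\cdot\partial_i[k\,\partial_i u](x,t)$, and the quadratic remainders are absorbed by the hypothesis $\delta t\cdot\delta s_i\cdot(\delta x_i)^2=_j 0$. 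Summing over $i$ and converting $\delta s_i\cdot\delta x_i=\delta v$ yields $Q_{CV,V}=_j \text{div}[k\,\text{grad}(u)](x,t)\cdot\delta v\cdot\delta t$. An analogous first-order nilpotent Taylor expansion of $u(x,-)$ at $t$, combined with $(\delta t)^2\delta v=_j 0$, gives $Q_{\text{env},V}=_j c(x)\varrho(x)\partial_t u(x,t)\cdot\delta v\cdot\delta t$.

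Second, for the implication (i) $\Rightarrow$ (ii), I substitute the two approximations above into (i) and use transitivity of $=_j$ (Thm.~\ref{thm:TaylorPeano-Infinitesimals}.\ref{enu:equivalenceRel-1}) together with the exact identity $Q_{\text{ext},V}=F(x,t)\cdot\delta v\cdot\delta t$ to obtain
\[
c(x)\varrho(x)\partial_t u(x,t)\cdot\delta v\delta t \;=_j\; \bigl(\text{div}[k\,\text{grad}(u)](x,t)+F(x,t)\bigr)\cdot\delta v\delta t.
\]
Applying Thm.~\ref{thm:cancLaw} with the cancellable factor $\delta v\delta t\geq\diff\rho^{q}$ and the given relation $1/k=1/j-q$ then removes the common infinitesimal factor and turns $=_j$ into $=_k$, which is precisely (ii). For the converse (ii) $\Rightarrow$ (i), I multiply (ii) by $\delta v\delta t$ via the second part of Thm.~\ref{thm:cancLaw} (legitimate because $\delta v\delta t$ is infinitesimal, hence finite), and then chain the resulting factored equality with the two Taylor-reduced expressions for $Q_{CV,V}$ and $Q_{\text{env},V}$ obtained in step one.

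The delicate point I expect to be the main obstacle is the index bookkeeping in the converse direction: the second part of Thm.~\ref{thm:cancLaw} produces an equality with $=_k$, and since $1/k<1/j$ implies $k>j$, this is a strictly weaker relation than the $=_j$ asserted in (i). To close the equivalence exactly as stated, one must exploit the upper bound $|\delta v\delta t|\leq C\diff\rho^{1/(2e)}$ coming from $\delta v\delta t\in D_{1e}$: multiplying a $=_k$ identity by a quantity of size at most $\diff\rho^{1/(2e)}$ improves the error from $\diff\rho^{1/k}$ to $\diff\rho^{1/k+1/(2e)}$, which reaches $\diff\rho^{1/j}$ as soon as $1/(2e)\geq q$. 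The remark in the text that infinitesimals satisfying \emph{all} the needed conditions actually exist is precisely what licenses this compatibility, so I would add $1/(2e)\geq q$ (or equivalently tighten $e$ relative to $j$ and $q$) to the list of admissible relations between the parameters and then verify the index chain $=_j\leftrightarrow=_k$ closes cleanly.
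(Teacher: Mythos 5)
Your proof follows the same route as the paper's: the paper's proof of this lemma \emph{is} the derivation immediately preceding it, namely (a) the first-order nilpotent Taylor reduction of $Q_{CV,V}$ and $Q_{\text{env},V}$ modulo $=_{j}$, with the three conditions in \eqref{eq:condInf} absorbing the second-order terms, (b) the observation that \ref{enu:fluxes_j} is then equivalent to \eqref{eq:heatBeforeCanc}, and (c) the cancellation law Thm.~\ref{thm:cancLaw} applied to the factor $\delta v\,\delta t\ge\diff\rho^{q}$. Your forward implication reproduces this exactly.

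You are also right that the converse is the delicate point, and in fact you have put your finger on something the paper's text passes over silently: the ``vice versa'' of Thm.~\ref{thm:cancLaw} only returns $x\cdot r=_{k}x\cdot s$ from $r=_{k}s$, and since $1/k=1/j-q<1/j$ gives $k>j$, the relation $=_{k}$ is strictly weaker than the $=_{j}$ required in \ref{enu:fluxes_j}. To upgrade it one needs an \emph{upper} bound $\delta v\,\delta t\le C\,\diff\rho^{q}$, so that $|\delta v\,\delta t|\cdot\diff\rho^{1/k}\le C\,\diff\rho^{q+1/k}=C\,\diff\rho^{1/j}$; in other words $\delta v\,\delta t$ should have order exactly $\diff\rho^{q}$. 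Your patch ($1/(2e)\ge q$, via the bound $|\delta v\,\delta t|\le C\,\diff\rho^{1/(2e)}$ coming from $\delta v\,\delta t\in D_{1e}$) is a sufficient condition of this kind, but two caveats: combined with the hypothesis $\delta v\,\delta t\ge\diff\rho^{q}$ it forces $q=1/(2e)$, and it is \emph{not} satisfied by the paper's own sample parameters ($\delta t=\diff\rho^{1/(3e)}$, $\delta x_{i}=\diff\rho^{1/(5e)}$, $q=14/(15j)$, for which $1/(2e)<q$) --- there the converse closes because $\delta v\,\delta t=\diff\rho^{14/(15e)}$ coincides with $\diff\rho^{q}$ on the nose. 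So the cleaner supplementary hypothesis is $\delta v\,\delta t\le C\,\diff\rho^{q}$ rather than $1/(2e)\ge q$; with that adjustment your argument is complete and coincides with the paper's.
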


\noindent Note that this result corresponds to the usual informal
derivation, but it is now stated as a formal theorem where the use
of nilpotent infinitesimals and the corresponding Taylor formula is
now precise.

The next natural steps thus concern the existence of infinitesimals
satisfying \eqref{eq:condInf} and how to obtain a true equality $=$
in the final heat equation for GSF. Conditions \eqref{eq:condInf}
hold if e.g.~we choose $\delta t=\diff\rho^{\frac{1}{3e}}$ and $\delta x_{i}=\diff\rho^{\frac{1}{5e}}$
(recall that $e\le j$ and note that these infinitesimals depend on
$j$); thereby, it easily follows that we can take $q=q(j)=\frac{14}{15j}$
and hence $k=k(j)=15j$.

\noindent Finally, assume that $Q_{\text{env,}V}(x,t,\delta t,\delta\bar{x})=Q_{CV,V}(x,t,\delta t,\delta\bar{x})+Q_{\text{ext,}V}(x,t,\delta t,\delta\bar{x})$
holds at $(x,t)$ and for all infinitesimals $\delta t$, $\delta\bar{x}$.
Thereby (using simplified notations)
\begin{equation}
Q_{\text{env,}V}=_{j}Q_{CV,V}+Q_{\text{ext,}V}\quad\forall j\in\mathbb{R}_{>0}.\label{eq:Qj}
\end{equation}

\noindent Lemma \ref{lem:heat} yields the heat equation with equality
up to order $k(j)=15j$. If we now let $j\to0^{+}$, then also $k(j)\to0^{+}$
and hence Thm.~\ref{thm:TaylorPeano-Infinitesimals}.\ref{enu:relationDandD_j}
proves the heat equation with $=$.

Even if it is true that the full equality $=$ implies $=_{k(j)}$
in the heat equation, the opposite implication (i.e.~that \ref{enu:heat_k}
above but with $=$ instead of $=_{k}$, implies \ref{enu:fluxes_j}
above with $=$ instead of $=_{j}$) cannot be proved simply by reversing
the previous steps because we would arrive at \eqref{eq:Qj} with
infinitesimals $\delta t=\delta t(j)$, $\delta\bar{x}=\delta\bar{x}(j)$
satisfying \eqref{eq:condInf} that would depend on $j$: taking $j\to0^{+}$
in \eqref{eq:Qj} would not get anything meaningful because $\delta t(j)$,
$\delta\bar{x}(j)\to0$.

\noindent The final result is then stated as follows:
\begin{thm}
\noindent \label{thm:heat}Let $B\subseteq\rti^{3}$, $B=\overline{\accentset{\circ}{B}}$,
and consider the GSF $\varrho$, $c$, $k:B\to{}^{\rho}\tilde{\mathbb{R}}$,
$u$, $F:B\times[0,\infty)\to{}^{\rho}\tilde{\mathbb{R}}$. Take a
point $(x,t)\in\accentset{\circ}{B}\times[0,\infty)$ and define $V$,
$Q_{CV,V}$, $Q_{\text{\emph{ext}},V}$ and $Q_{\text{\emph{env}},V}$
as in \eqref{eq:QCVV}, \eqref{eq:QextV}, \eqref{eq:QenvV}. Finally
assume that $Q_{\text{env,}V}(x,t,\delta t,\delta\bar{x})=Q_{CV,V}(x,t,\delta t,\delta\bar{x})+Q_{\text{ext,}V}(x,t,\delta t,\delta\bar{x})$
holds at $(x,t)$ and for all infinitesimals $\delta t$, $\delta\bar{x}$.
Then
\begin{equation}
c(x)\cdot\varrho(x)\cdot\dfrac{\partial u}{\partial t}(x,t)=\text{\emph{div}}[k\cdot\text{\emph{grad}}(u)](x,t)+F(x,t).\label{eq:heatEq}
\end{equation}
Moreover, if these conditions hold at all points $x\in\accentset{\circ}{B}$,
then equation \eqref{eq:heatEq} holds on the entire $B$ because
$B=\overline{\accentset{\circ}{B}}$.
\end{thm}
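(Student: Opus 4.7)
The plan is to apply Lemma~\ref{lem:heat} along a family of indices $j \in \R_{>0}$ and then pass to the limit $j \to 0^{+}$ via Thm.~\ref{thm:TaylorPeano-Infinitesimals}.\ref{enu:relationDandD_j}, finally propagating the equation from $\accentset{\circ}{B}$ to all of $B$ by sharp continuity.

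First I would fix an interior point $(x,t) \in \accentset{\circ}{B} \times [0,\infty)$ and an arbitrary $j \in \R_{>0}$. For this $j$, Thm.~\ref{thm:TaylorPeano-Infinitesimals}.\ref{enu:infTaylor2} supplies some $e = e(j) \in \R_{>0}$ with $e \le j$ such that the first-order nilpotent Taylor formula with equality $=_{j}$ simultaneously holds for each of the three GSF $k$, $\frac{\partial u}{\partial x_{i}}(-,t)$ and $u(x,-)$ on the neighbourhood $D_{1e}$. I would then pick the specific infinitesimals suggested in the text preceding the theorem, namely $\delta t := \diff\rho^{1/(3e)}$ and $\delta x_{i} := \diff\rho^{1/(5e)}$ for $i = 1,2,3$, and verify directly that the conditions in \eqref{eq:condInf} are all satisfied: $\delta v \cdot \delta t = \diff\rho^{1/e} \in D_{1e}$, the products $\delta t \cdot \delta s_{i} \cdot (\delta x_{i})^{2}$ and $(\delta t)^{2}\delta v$ are $=_{j} 0$, and the moderate lower bound $\delta v \cdot \delta t \ge \diff\rho^{q}$ holds with $q = q(j) = 14/(15j)$, giving $1/k(j) = 1/j - q(j) = 1/(15j)$ and hence $k(j) = 15j$.

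Next, since by hypothesis the flux balance $Q_{\text{env},V} = Q_{CV,V} + Q_{\text{ext},V}$ holds at $(x,t)$ for \emph{every} choice of infinitesimals $\delta t$, $\delta \bar{x}$, it holds in particular for the choice just made, so condition \ref{enu:fluxes_j} of Lemma~\ref{lem:heat} is met. Applying that lemma yields
\[
c(x) \cdot \varrho(x) \cdot \dfrac{\partial u}{\partial t}(x,t) =_{k(j)} \text{div}[k \cdot \text{grad}(u)](x,t) + F(x,t).
\]
Since $j \in \R_{>0}$ was arbitrary and $k(j) = 15j \to 0^{+}$ as $j \to 0^{+}$, Thm.~\ref{thm:TaylorPeano-Infinitesimals}.\ref{enu:relationDandD_j} promotes this family of $=_{k(j)}$-equalities to a genuine equality in $\rcrho$, i.e.\ \eqref{eq:heatEq} at the chosen interior point. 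To extend to the whole of $B$, I would invoke $B = \overline{\accentset{\circ}{B}}$ together with the sharp continuity of GSF (Thm.~\ref{thm:GSF-continuity}.\ref{enu:GSF-cont}) and the fact that derivatives at boundary points are sharp limits of interior derivatives (Thm.~\ref{thm:existenceOfDerivativesAtBorderPoints}): both sides of \eqref{eq:heatEq} are GSF in $x \in B$, so their equality on the dense subset $\accentset{\circ}{B}$ propagates to $B$.

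The main obstacle is purely bookkeeping: one must verify that the index $e = e(j)$ produced by Thm.~\ref{thm:TaylorPeano-Infinitesimals}.\ref{enu:infTaylor2} (which may shrink as $j \to 0^{+}$) is still compatible with the explicit exponents $1/(3e)$, $1/(5e)$ chosen for $\delta t$ and $\delta x_{i}$, so that all three conditions in \eqref{eq:condInf} hold \emph{simultaneously} for the same infinitesimals, and that the resulting $k(j) = 15j$ really tends to $0^{+}$. Care must also be taken that the converse direction (from $=_{k}$ in the heat equation back to $=$ in the flux balance) is \emph{not} claimed, since the infinitesimals depend on $j$ and the argument is genuinely one-directional, exactly as already noted in the discussion preceding the theorem.
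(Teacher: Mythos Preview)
Your proposal is correct and follows essentially the same route as the paper: fix $j>0$, invoke Thm.~\ref{thm:TaylorPeano-Infinitesimals}.\ref{enu:infTaylor2} to obtain $e\le j$, choose the explicit infinitesimals $\delta t=\diff\rho^{1/(3e)}$, $\delta x_i=\diff\rho^{1/(5e)}$, apply Lemma~\ref{lem:heat} to get the heat equation with $=_{k(j)}$, and then let $j\to 0^{+}$ using Thm.~\ref{thm:TaylorPeano-Infinitesimals}.\ref{enu:relationDandD_j}, finally extending to $\partial B$ by sharp continuity. One minor arithmetic slip: $\delta v\cdot\delta t=\diff\rho^{14/(15e)}$ rather than $\diff\rho^{1/e}$, but since $28/15\ge 1$ the membership $\delta v\cdot\delta t\in D_{1e}$ still holds and nothing in the argument is affected.
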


\subsection{Derivation of the wave equation for GSF}

\noindent In this section, we derive the wave equation in a similar
way to \cite{Vla71,Gio10}, with the difference that we extend its
applicability to GSF and not only to smooth functions. Consider a
string with given equilibrium position located on an interval $[a,b]\subseteq\rti$
for $a$, $b\in\rcrho$, $a<b$. Let this string now make small transversal
oscillations around its equilibrium position. The position $s_{t}\subseteq\rcrho^{2}$
of the string is always represented by the graph of a curve $\gamma:[a,b]\times[0,\infty)\to\rcrho^{2}$.
Furthermore, we set $\gamma_{xt}:=\gamma(x,t)$, $s_{t}:=\{\gamma_{xt}\in\rcrho^{2}\mid a\leq x\leq b\}$
for all $t\in[0,\infty)$. The curve $\gamma$ is supposed to be injective
with respect to $x\in(a,b)$, i.e.~$\gamma_{x_{1}t}\ne\gamma_{x_{2}t}$
for all $t\in[0,+\infty)$ and all $x_{1}$, $x_{2}\in(a,b)$ such
that $x_{1}\ne x_{2}$; therefore, the order relation on $(a,b)$
implies an order relation on the support $s_{t}$. For all pairs of
points $p=\gamma_{x_{p}t}$, $q=\gamma_{x_{q}t}\in s_{t}$ on the
string at time $t$, we can define the subbodies $[p:=\{\gamma_{xt}\mid x_{p}\leq x\leq b\}$,
$p]:=\{\gamma_{xt}\mid a\leq x\leq x_{p}\}$ and $[p|q]:=\{\gamma_{xt}\mid x_{p}\leq x\leq x_{q}\}$
corresponding to the parts of the string after the point $p\in s_{t}$,
before the same point and between the points $p$, $q\in s_{t}$.
Clearly, every subbody of the form $p]$ exerts a force on every other
subbody it is in contact with, i.e.~$[p|q]$ and $p]$. Moreover,
the force $\vec{F}(A,B)\in\rcrho$ exerted by the subbody $A$ on
the subbody $B$ satisfies the following equalities:

\noindent 
\begin{align}
\vec{F}([p|q],p]) & =\vec{F}([p,p])\label{eq:note1}\\
\vec{F}([q,[p|q]) & =\vec{F}([q,q])\label{eq:note2}\\
\vec{F}(p],[p|q]) & =-\vec{F}([p|q],p]),\label{eq:note3}
\end{align}

\noindent for all pairs of points $p$, $q\in s_{t}$ and time $t\in[0,\infty)$.
The third equation \eqref{eq:note3} corresponds to the action-reaction
principle.

\noindent We can now define the tension $\vec{T}$ at the point $\gamma_{xt}\in s_{t}$
and time $t\in[0,\infty)$ as

\noindent 
\begin{equation}
\vec{T}(x,t):=\vec{F}([\gamma_{xt},\gamma_{xt}]).\label{eq:5.21}
\end{equation}

\noindent Consider now the infinitesimal subbody $[x|x+\delta x]:=[\gamma_{xt}|\gamma_{x+\delta x,t}]\subseteq s_{t}$
located at time $t$ between the points $\gamma_{xt}\in s_{t}$ and
$\gamma_{x+\delta x,t}\in s_{t},$ and defined by the first order
infinitesimal $\delta x\in D_{1j}$, $\delta x>0$. We have an action
on this infinitesimal subbody due to mass forces of linear density
$\vec{G}:[a,b]\times[0,\infty)\to\rcrho^{2}$ that allows us to represent
Newton's law as follows:

\noindent 
\begin{equation}
\rho\cdot\delta x\cdot\dfrac{\partial^{2}\gamma}{\partial t^{2}}=\vec{F}(\gamma_{xt}],[x|x+\delta x])+\vec{F}([\gamma_{x+\delta x,t},[x|x+\delta x])+\vec{G}\cdot\varrho\cdot\delta x,\label{eq:58}
\end{equation}

\noindent where $\varrho:[a,b]\times[0,\infty)\to\rcrho^{2}$ is the
linear mass density, and all functions, unless stated otherwise, are
evaluated at $(x,t)\in(a,b)\times[0,\infty)$.

\noindent The contact forces appearing in Newton's law are caused
by the interaction of the infinitesimal subbody with other contacting
subbodies along the border $\partial[x|x+\delta x]=\{\gamma_{xt},\gamma_{x+\delta x,t}\}\subseteq\rcrho^{2}$.
Using now relations \eqref{eq:note2} and \eqref{eq:note3} with $q=\gamma_{x+\delta x,t}$
and $p=\gamma_{xt}$, so that $[p|q]=[x|x+\delta x]$, we see by \eqref{eq:58}
that 
\begin{equation}
\rho\cdot\delta x\cdot\dfrac{\partial^{2\gamma}}{\partial t^{2}}=-\vec{F}([x|x+\delta x],\gamma_{xt}])+\vec{F}([\gamma_{x+\delta x,t},\gamma_{x+\delta x,t}])+\vec{G}\cdot\rho\cdot\delta x.\label{5.22}
\end{equation}

\noindent By \eqref{eq:note1}, the definition of tension \eqref{eq:5.21}
and inserting it in \eqref{5.22}, we obtain
\begin{align}
\rho\cdot\delta x\cdot\dfrac{\partial^{2}\gamma}{\partial t^{2}} & =-\vec{F}([\gamma_{xt},\gamma_{xt}])+\vec{F}([\gamma_{x+\delta x,t},\gamma_{x+\delta x,t}])+\overrightarrow{G}\cdot\rho\cdot\delta x\nonumber \\
 & =-\vec{T}(x,t)+\vec{T}(x+\delta x,t)+\vec{G}\cdot\rho\cdot\delta x.\label{eq:59}
\end{align}

\noindent Note that, up to now, we have not used neither the small
oscillation nor the transversal oscillation hypothesis of the force
$\vec{G}$. As for the latter, it can be introduced with the assumption
\begin{equation}
\vec{G}(x,t)\cdot\vec{e}_{1}=0\quad\forall x,t,\label{5.23}
\end{equation}

\noindent where $(\vec{e}_{1},\vec{e}_{2})$ are the axial unit vectors.
Let now $\varphi(x,t)$ denote the non-oriented angle between the
tangent unit vector $\vec{t}(x,t):=\frac{\partial\gamma}{\partial x}(x,t)/\left|\frac{\partial\gamma}{\partial x}(x,t)\right|$
(a subsequent assumption will guarantee that $\vec{t}$ always exists)
at the point $\gamma_{x,t}$ and the $x$-axis, i.e.~the unique $\phi(x,t)\in[0,\pi]\subseteq\rti$
defined by
\begin{equation}
\vec{t}(x,t)=\cos(\phi(x,t))\vec{e}_{1}+\sin(\phi(x,t))\vec{e}_{2}.\label{eq:tangVers}
\end{equation}
Setting $(\gamma_{1},\gamma_{2})=\gamma$ for the two components of
the curve $\gamma$, from this equality directly follows
\begin{equation}
\frac{\partial\gamma_{1}}{\partial x}\sin\phi=\frac{\partial\gamma_{2}}{\partial x}\cos\phi\label{eq:sincos}
\end{equation}
The small oscillation hypothesis can then be formalized with the assumption
that this angle $\phi(x,t)$ is a first order infinitesimal (in the
following Thm.~\ref{thm:waveeq}, we will assume a weaker assumption),
i.e. 
\begin{equation}
\varphi(x,t)\in D_{1j}\quad\forall x,t.\label{5.24}
\end{equation}

\noindent This allows us to recreate the classical derivation in the
most faithful possible way. Furthermore, in the standard proof of
the wave equation, only curves of the specific form $\gamma_{xt}=(x,u(x,t))$
are considered (this implies that the tangent unit vector $\vec{t}(x,t)$
always exists). The Tayor-formula for nilpotent infinitesimals Thm.~\ref{thm:TaylorPeano-Infinitesimals}.\ref{enu:infTaylor}
yields $\sin(\phi)=_{j}\phi\in D_{1j}$ and $\cos(\phi)=_{j}1$ (note
that assumption \eqref{eq:relBetween-i-AndInfinite-f^(n+1)} holds
for any $j$ and $k$ for both $\sin(x)$ and $\cos(x)$), and hence
$\dfrac{\partial\gamma_{2}}{\partial x}=_{j}\varphi$ from \eqref{eq:sincos}.
Therefore, $\left(\dfrac{\partial\gamma_{2}}{\partial x}\right)^{2}=_{j}0$
and the total length of the string becomes 
\begin{equation}
L=\int_{a}^{b}\sqrt{1+\left[\dfrac{\partial\gamma_{2}}{\partial x}(x,t)\right]^{2}}\,\diff x=_{j}b-a\quad\forall t\in[0,\infty).\label{5.25}
\end{equation}

\noindent Following Hooke's law, this allows us to assume that the
tension is of constant modulus $T=|\vec{T}(x,t)|$ that is neither
depending on the position $x$ nor on the time $t$, i.e. 
\begin{equation}
\vec{T}(x,t)=T\cdot\vec{t}(x,t)\quad\forall x\in(a,b)\,\forall t\in[0,\infty).\label{5.26}
\end{equation}

\noindent Note that, as a second part of the hypothesis about nontransversal
oscillations of the string, we have that the tension $\vec{T}$ is
parallel to the tangent vector. We then project equation \ref{eq:59}
to the $y$-axis and obtain 
\begin{align}
\rho\cdot\delta x\cdot\dfrac{\partial^{2}u}{\partial t^{2}} & =-T\cdot\vec{t}(x,t)\cdot\vec{e}_{2}+T\cdot\vec{t}(x+\delta x,t)\cdot\vec{e}_{2}+\vec{G}\cdot\vec{e}_{2}\cdot\rho\cdot\delta x\nonumber \\
 & =-T\cdot\sin(\varphi(x,t))+T\cdot\sin(\varphi(x+\delta x),t)+G_{2}\cdot\rho\cdot\delta x\nonumber \\
 & =T\cdot\left[\frac{\partial u}{\partial x}(x+\delta x,t)\cos(\phi(x+\delta x,t))-\frac{\partial u}{\partial x}(x,t)\cos(\phi(x,t))\right]\label{eq:waveLastBut1}\\
 & \phantom{=}+G_{2}\cdot\rho\cdot\delta x,
\end{align}
where $G_{2}=\vec{G}\cdot\vec{e}_{2}$ is the second component of
$\vec{G}$. Now, assume that the first order Taylor formula for $\frac{\partial u}{\partial x}(-,t)$
holds on $D_{1e}$, with $e\le j$, and take $\delta x\in D_{1e}$,
$\delta x\ge\diff\rho^{q}$ (e.g.~$\delta x=\diff\rho^{\frac{1}{2e}+\frac{1}{2}}$).
Then, $\cos(\phi(x+\delta x,t))=_{j}1=_{j}\cos(\phi(x,t))$ and $\frac{\partial u}{\partial x}(x+\delta x,t)-\frac{\partial u}{\partial x}(x,t)=_{j}\frac{\partial^{2}u}{\partial x^{2}}(x,t)\delta x$,
and from \eqref{eq:waveLastBut1} we hence get
\[
\rho\cdot\delta x\cdot\dfrac{\partial^{2}u}{\partial t^{2}}=_{j}T\frac{\partial^{2}u}{\partial x^{2}}(x,t)\delta x+G_{2}\cdot\rho\cdot\delta x.
\]
We can now use the cancellation law Thm.~\ref{thm:cancLaw} to cancel
out the $\delta x$ obtaining
\begin{equation}
\rho\dfrac{\partial^{2}u}{\partial t^{2}}=_{k}T\frac{\partial^{2}u}{\partial x^{2}}(x,t)+G_{2}\rho,\label{eq:waveApprox}
\end{equation}
for $\frac{1}{k}=\frac{1}{j}-q$.

Can we take $j\to0^{+}$ (and hence $k\to0^{+}$) in \eqref{eq:waveApprox}?
Actually no, because all this deduction depends on the small oscillations
assumption \eqref{5.24}, and the only $\phi\in D_{1j}$ for all $j$
is $\phi=0$, i.e.~the string is not oscillating at all. In order
to underscore that this classical deduction of the wave equation leads
to an approximate equality only, we generalize the previous proof
in the following
\begin{thm}
\label{thm:waveeq}Let $a$, $b\in\mathbb{\rcrho}$, with $a<b$,
$\gamma:[a,b]\times[0,\infty)\to\rcrho^{2}$, $\rho:[a,b]\times[0,\infty)\to\rcrho$,
$\vec{G}$, $\vec{T}:[a,b]\times[0,\infty)\to\rcrho^{2}$ be GSF,
and let $T\in\rcrho$ be an invertible generalized number such that
both $T$ and $\frac{1}{T}$ are finite. Suppose that $\gamma(x,t)=(x,u(x,t))$
for all $x$, $t$, and let $\vec{t}(x,t)$ be the unit tangent vector
to $\gamma$. Assume that at least an approximate version of Hooke's
law and the second Newton's law
\begin{align}
\vec{T}(x,t) & =_{j}T\cdot\vec{t}(x,t),\tag{{Hooke}}\label{eq:71}\\
\rho\cdot\delta x\cdot\dfrac{\partial^{2}\gamma}{\partial t^{2}}(x,t) & =\vec{T}(x+\delta x,t)-\vec{T}(x,t)+\vec{G}\cdot\rho\cdot\delta x,\tag{{II\ Newton}}\label{eq:72}
\end{align}

\noindent hold for every point $(x,t)\in(a,b)\times[0,\infty)$ and
for an infinitesimal $\delta x=\diff\rho^{q}$ such that $\delta x\in D_{1e}$,
where the first order Taylor formula for $\frac{\partial u}{\partial x}(-,t)$
holds on $D_{1e}$ and $e\le j$. Finally, let $\phi(x,t)$ be the
non-ordered angle between $\vec{t}(x,t)$ and the $x$-axis, and suppose
that $\dfrac{\partial\varphi}{\partial x}(x,t)\ge\diff\rho^{p}$,
$\phi(x,t)<\frac{\pi}{2}$. Then we have:
\begin{enumerate}
\item \label{enu:wave_j}If $\rho(x,t)\cdot\dfrac{\partial^{2}u}{\partial t^{2}}(x,t)=_{j}T\cdot\dfrac{\partial^{2}u}{\partial x^{2}}(x,t)+G_{2}(x,t)\cdot\rho(x,t)$,
then $\cos^{3}(\varphi(x,t))=_{h}1$, where $\frac{1}{h}=\frac{1}{j}-p-2q$.
\item \label{enu:small-phi}If $\cos^{3}(\varphi(x,t))=_{j}1$, then $\rho(x,t)\cdot\dfrac{\partial^{2}u}{\partial t^{2}}(x,t)=_{k}T\cdot\dfrac{\partial^{2}u}{\partial x^{2}}(x,t)+G_{2}(x,t)\cdot\rho(x,t)$,
where $\frac{1}{k}=\frac{1}{j}-q$.
\end{enumerate}
\end{thm}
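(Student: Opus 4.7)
The plan is to derive a single intermediate approximate identity
\[
\rho\cdot\dfrac{\partial^{2}u}{\partial t^{2}} =_{k} T\cdot\cos^{3}(\varphi)\cdot\dfrac{\partial^{2}u}{\partial x^{2}} + G_{2}\cdot\rho, \qquad \tfrac{1}{k}=\tfrac{1}{j}-q,\qquad (\star)
\]
from which both halves of the theorem fall out by short manipulations with $=_{j}$ and Thm.~\ref{thm:cancLaw}. To reach $(\star)$ I would first parametrise: since $\gamma(x,t)=(x,u(x,t))$ and $\varphi<\pi/2$, the unit tangent $\vec t$ satisfies $\tan\varphi=\partial_{x}u$, so $\sin\varphi=\cos\varphi\cdot\partial_{x}u$ and, differentiating in $x$, $\partial_{x}\varphi=\cos^{2}\varphi\cdot\partial_{x}^{2}u$. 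Projecting Newton's law \eqref{eq:72} onto $\vec e_{2}$ and inserting the approximate Hooke's law \eqref{eq:71} at both endpoints yields
\[
\rho\,\delta x\,\partial_{t}^{2}u = [\vec T(x+\delta x,t)-\vec T(x,t)]\cdot\vec e_{2} + G_{2}\rho\,\delta x =_{j} T[\sin\varphi(x+\delta x,t)-\sin\varphi(x,t)] + G_{2}\rho\,\delta x.
\]

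Next I would apply the first-order nilpotent Taylor formula Thm.~\ref{thm:TaylorPeano-Infinitesimals}.\ref{enu:infTaylor2} to the GSF $x\mapsto T\sin\varphi(x,t)$ on $D_{1e}$ with $e\le j$, whose $x$-derivative is $T\cos\varphi\cdot\partial_{x}\varphi=T\cos^{3}\varphi\cdot\partial_{x}^{2}u$; this produces
\[
\rho\,\delta x\,\partial_{t}^{2}u =_{j} T\cos^{3}\varphi\cdot\partial_{x}^{2}u\cdot\delta x + G_{2}\rho\,\delta x,
\]
and Thm.~\ref{thm:cancLaw} applied to $\delta x=\diff\rho^{q}$ cancels the common factor and outputs $(\star)$. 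For part \ref{enu:small-phi}, the hypothesis $\cos^{3}\varphi=_{j}1$ combined with finiteness of $T\partial_{x}^{2}u$ gives, by Thm.~\ref{thm:TaylorPeano-Infinitesimals}.\ref{enu:prodNilp}, the identity $T\cos^{3}\varphi\cdot\partial_{x}^{2}u=_{j}T\partial_{x}^{2}u$; since $k>j$, Thm.~\ref{thm:TaylorPeano-Infinitesimals}.\ref{enu:relationBetweenD_j-And-D_l} upgrades this to $=_{k}$ and transitivity with $(\star)$ delivers the wave equation with $=_{k}$.

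For part \ref{enu:wave_j}, the hypothesis similarly implies $\rho\,\partial_{t}^{2}u=_{k}T\,\partial_{x}^{2}u+G_{2}\rho$; subtracting from $(\star)$ gives $T\,\partial_{x}^{2}u\,(\cos^{3}\varphi-1)=_{k}0$. The crux is then a second application of Thm.~\ref{thm:cancLaw}, and producing the correct lower bound on $|T\,\partial_{x}^{2}u|$ is the main obstacle, because the target exponent $1/h=1/j-p-2q$ asks for $|T\,\partial_{x}^{2}u|\ge\diff\rho^{p+q}$. From $\partial_{x}\varphi=\cos^{2}\varphi\cdot\partial_{x}^{2}u$ and $\cos^{2}\varphi\le 1$ one gets $\partial_{x}^{2}u\ge\partial_{x}\varphi\ge\diff\rho^{p}$, and finiteness of $1/T$ gives $|T|\ge 1/N$ for some $N\in\R_{>0}$. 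Since $1/N$ is a positive standard real while $\diff\rho^{q}$ is infinitesimal, $1/N\ge\diff\rho^{q}$, and therefore $|T\,\partial_{x}^{2}u|\ge\diff\rho^{p+q}$; Thm.~\ref{thm:cancLaw} then outputs $1/h=1/k-(p+q)=1/j-p-2q$, completing the argument. All remaining steps are routine verifications of the finiteness hypotheses required to invoke Thm.~\ref{thm:TaylorPeano-Infinitesimals}.\ref{enu:prodNilp}.
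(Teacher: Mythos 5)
Your route is essentially the paper's computation repackaged around the single intermediate identity $(\star)$: project \eqref{eq:72} onto $\vec{e}_{2}$, insert \eqref{eq:71}, Taylor--expand in $x$, cancel $\delta x=\diff\rho^{q}$ via Thm.~\ref{thm:cancLaw}, then cancel once more to isolate $\cos^{3}\varphi$. The exponent bookkeeping is correct in both parts, and in part \ref{enu:wave_j} your lower bound $|T\,\partial_{x}^{2}u|\ge\frac{1}{N}\diff\rho^{p}\ge\diff\rho^{p+q}$ is a legitimate (and slightly slicker) substitute for the paper's cancellation of $\delta x\cdot\frac{\partial\varphi}{\partial x}\ge\diff\rho^{q+p}$ followed by division by the finite $\cos^{2}\varphi$. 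There are, however, two places where you invoke bounds that are not among the hypotheses.

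First, you obtain $(\star)$ by applying the first-order nilpotent Taylor formula to $x\mapsto T\sin\varphi(x,t)$ with error $=_{j}$ on $D_{1e}$. The hypotheses only grant that formula for $x\mapsto\frac{\partial u}{\partial x}(x,t)$; for $\sin\varphi(-,t)$ condition \eqref{eq:relBetween-i-AndInfinite-f^(n+1)} involves $\frac{\partial^{2}\varphi}{\partial x^{2}}$ and $\left(\frac{\partial\varphi}{\partial x}\right)^{2}$, for which only the lower bound $\frac{\partial\varphi}{\partial x}\ge\diff\rho^{p}$ is available. The paper instead first uses the stated Taylor hypothesis to trade $\frac{\partial^{2}u}{\partial x^{2}}\,\delta x$ for $\tan\varphi(x+\delta x,t)-\tan\varphi(x,t)$ (recall $\tan\varphi=\frac{\partial u}{\partial x}$ since $\varphi<\frac{\pi}{2}$), and only afterwards compares the $\sin$- and $\tan$-differences. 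Second, and more seriously, in part \ref{enu:small-phi} you multiply $\cos^{3}\varphi=_{j}1$ by $T\,\partial_{x}^{2}u$, justified by ``finiteness of $T\partial_{x}^{2}u$''. No upper bound on $\frac{\partial^{2}u}{\partial x^{2}}$ is assumed or derivable — infinite second derivatives are exactly the case this framework is built for — and what your step actually requires is $\partial_{x}^{2}u=O(\diff\rho^{-q})$, which also does not follow; so this step can fail. It can be avoided by never multiplying by $\partial_{x}^{2}u$: write $\tan\varphi-\sin\varphi=\sin\varphi\cdot\frac{1-\cos\varphi}{\cos\varphi}$, note that $\cos^{3}\varphi=_{j}1$ forces $1-\cos\varphi=_{j}0$ with $\tan\varphi$ finite, conclude that the $\sin$- and $\tan$-differences agree up to $=_{j}$, and finish with the Taylor hypothesis for $\frac{\partial u}{\partial x}$ and one cancellation of $\delta x$ — which is in substance the paper's ``invert all the previous steps''. (To be fair, the paper's reversal itself silently multiplies by $\delta x\cdot\frac{\partial\varphi}{\partial x}$, so a finiteness assumption is glossed over there too; but your version needs the strictly stronger, unstated finiteness of $\partial_{x}^{2}u$.)
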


\noindent For example, the assumption of \ref{enu:small-phi} holds
if $\phi(x,t)\in D_{k\hat{\jmath}}$ and $\frac{(k+1)}{2}\hat{\jmath}=j$.
Finally, if $\phi(x,t)\in D_{3j}$ for all $x$, $t$, and $b-a$
is finite, then $\text{length}(\gamma(-,t))=_{2j}b-a$.
\begin{proof}
As usual, if the arguments of a function are missing, we mean they
are evaluated at $(x,t)$.

\noindent \ref{enu:wave_j}: Projecting \eqref{eq:72} on $\vec{e}_{2}$
and using \eqref{eq:71} and \eqref{eq:tangVers} we get
\[
\rho\delta x\frac{\partial^{2}u}{\partial t^{2}}=T\sin(\phi(x+\delta x,t))-T\sin(\phi(x,t))+G_{2}\rho\delta x.
\]
Therefore, the assumption of \ref{enu:wave_j} implies
\[
T\sin(\phi(x+\delta x,t))-T\sin(\phi(x,t))+G_{2}\rho\delta x=_{j}T\dfrac{\partial^{2}u}{\partial x^{2}}\delta x+G_{2}\rho\delta x.
\]
Since $\delta x\in D_{1e}$ and $e\le j$, we can use the first order
Taylor formula with $\frac{\partial u}{\partial x}(-,t)$ to get
\begin{multline*}
T\sin(\phi(x+\delta x,t))-T\sin(\phi(x,t))+G_{2}\rho\delta x=_{j}\\
T\delta x\left[\frac{\partial u}{\partial x}(x+\delta x,t)-\frac{\partial u}{\partial x}(x,t)\right]+G_{2}\rho\delta x.
\end{multline*}
Multiply by $\frac{1}{T}$ (which is finite, see Thm.~\ref{thm:cancLaw})
and use \eqref{eq:sincos} considering that $\phi(x,t)<\frac{\pi}{2}$
to obtain
\[
\left[\sin(\phi(x+\delta x,t))-\sin(\phi(x,t))\right]\delta x=_{j}\left[\tan(\phi(x+\delta x,t))-\tan(\phi(x,t))\right]\delta x.
\]
Using the cancellation law Thm.~\ref{thm:cancLaw} with $\frac{1}{k}:=\frac{1}{j}-q$,
this yields
\[
\sin(\phi(x+\delta x,t))-\sin(\phi(x,t))=_{k}\tan(\phi(x+\delta x,t))-\tan(\phi(x,t)).
\]
We can use the first order Taylor formula Thm.~\ref{thm:TaylorPeano-Infinitesimals}.\ref{enu:infTaylor}
both with $\sin(\phi(-,t))$ and $\tan(\phi(-,t))$ because $e\le j$
and hence $\delta x\in D_{1e}\subseteq D_{1j}\subseteq D_{1k}$ (note
that the derivatives of these functions are always finite because
$\phi(x,t)<\frac{\pi}{2}$)
\[
\delta x\cdot\cos(\phi)\cdot\frac{\partial\phi}{\partial x}=_{k}\delta x\frac{1}{\cos^{2}(\phi)}\cdot\frac{\partial\phi}{\partial x}.
\]
Simplifying by $\delta x\cdot\frac{\partial\phi}{\partial x}\ge\diff\rho^{q+p}$,
we obtain $\cos(\phi)=_{h}\frac{1}{\cos^{2}(\phi)}$, where $\frac{1}{h}:=\frac{1}{k}-p-q=\frac{1}{j}-p-2q$.
Since $\cos^{2}(\phi)$ is finite, using Thm.~\ref{thm:cancLaw}
we obtain the conclusion.

\noindent \ref{enu:small-phi}: It suffices to invert all the previous
steps starting from $\cos^{3}(\phi)=_{j}1$ and considering that we
always have to multiply by finite numbers. Only in the last step we
need to simplify by $\delta x$ and hence we switch from $=_{j}$
to $=_{k}$.

From Taylor formula with Peano remainder Thm.~\ref{thm:TaylorPeano-Infinitesimals}.\ref{enu:Peano-1}
we have $\cos^{3}(\phi)=\left(1-\frac{\phi^{2}}{2}+o(\phi^{3})\right)^{3}=1-\frac{3}{2}\phi^{2}+o(\phi^{3})$.
If $\phi\in D_{k\hat{\jmath}}$, then $|\phi^{k+1}|\le C\diff\rho^{\frac{1}{\hat{\jmath}}}$
and hence $\phi^{2}\le C\diff\rho^{\frac{2}{(k+1)\hat{\jmath}}}=C\diff\rho^{\frac{1}{j}}$
and $\left|\cos^{3}(\phi)-1\right|=\left|\frac{3}{2}\phi^{2}+o(\phi^{3})\right|\le\bar{C}\diff\rho^{\frac{1}{j}}$.
Note that this property includes the classical case $\phi\in D_{1j}$,
but also e.g.~$\phi\in D_{2j-1,1}$.

Finally, assume that $\phi(x,t)\in D_{3j}$ for all $x$, $t$. From
Taylor formula $\sin(\phi)=_{j}\phi-\frac{\phi^{3}}{6}$ and $\cos(\phi)=_{j}1-\frac{\phi^{2}}{2}$.
Therefore, \eqref{eq:sincos} yields $\phi-\frac{\phi^{3}}{6}=_{j}\frac{\partial u}{\partial x}\left(1-\frac{\phi^{2}}{2}\right)$.
Taking the square and considering that $\phi^{4}=_{j}0$, this implies
$\phi^{2}=_{j}\left(\frac{\partial u}{\partial x}\right)^{2}\left(1-\phi^{2}\right)$.
Multiplying both sides by $1+\phi^{2}$ and using again that $\phi^{4}=_{j}0$
we obtain $\left(\frac{\partial u}{\partial x}\right)^{2}(x,t)=_{j}\phi^{2}(x,t)$
for all $x$, $t$. The mean value theorem Thm.~\ref{thm:classicalThms}.\ref{enu:integralMeanValue}
and Thm.~\ref{thm:TaylorPeano-Infinitesimals}.\ref{enu:func=00003D_j}
yield $\text{length}(\gamma(-,t))=\sqrt{1+\left(\frac{\partial u}{\partial x}\right)^{2}(c,t)}\cdot(b-a)=_{j}\sqrt{1+\phi^{2}(c,t)}\cdot(b-a)$
for some $c\in[a,b]$. The Taylor formula with Peano remainder applied
to the function $\sqrt{1+x}$ gives $\text{length}(\gamma(-,t))=_{j}b-a+\frac{b-a}{2}\phi^{2}(c,t)+o(\phi)$,
which implies the conclusion because $\left|\frac{b-a}{2}\right|\phi^{2}(c,t)\le C\diff\rho^{\frac{1}{2j}}$.
\end{proof}
This theorem suggests the following comments and potential applications:
\begin{enumerate}
\item It highlights that the wave equation is intrinsically approximated
because it implies $\cos^{3}(\phi)=_{h}1$, which is necessarily only
an approximated relation.
\item It is formulated as a general mathematical theorem depending on two
assumptions corresponding to physical laws.
\item In our deduction, we do not conclude by ``magically'' transforming
approximate equalities $\simeq$ into true equalities $=$ or neglecting
little-oh terms despite keeping true equalities.
\item The validity of the wave equation for GSF can find possible applications
in geophysics. In seismology, we have for example elastodynamical
oscillations after earthquakes or simply the elastodynamical properties
of materials that have a rapid change in density like the seabed or
earth's crust. This leads to the seismological base equations of elastodynamics
with a special case being the isotropic wave equation where the setting
of GSF could be used to treat the special case with non-smooth coefficients.
A motivation for this topic can be found in \cite{BrHo,BuKe}.
\item Other potential applications can also be considered in global seismology,
where one is dealing with seismic wave propagation. In fact, hyberbolic
PDE in global seismology do have generalized functions as coefficients,
together with a singular structure created by geological and physical
processes. These processes are supposed to behave in a fractal way.
In the so-called \emph{seismic transmission problem}, we want to diagonalize
a first order system of PDE and then transform it to the second order
wave equation. This requires us to differentiate the coefficients,
which means that even though the original model medium varies continuously,
coefficients that are (highly) discontinuous will naturally appear
in this procedure. A possible way to deal with this is to embed the
fractal coefficients into GSF or in a Colombeau algebra. See e.g.~\cite{Hoer2}. 
\item We can finally think at using GSF in mathematical general relativity,
where one considers wave equations on Lorentzian manifolds with non-smooth
metric, i.e.~non-smooth coefficients in the corresponding wave equation,
see for example \cite{Hoer3}. Colombeau generalized functions is
already a tool used to prove local well-posedness of the wave equation
in space times that are of conical type. Cosmic strings are e.g.~objects
that can be treated within this theory. There has even been a generalization
of this result to a class of locally bounded space-times with discussion
of a static case and an extension to non-scalar equations. Similar
applications can hence be considered using GSF, because of their better
properties with respect to Colombeau theory.
\end{enumerate}

\section{\label{sec:Examples-of-applications}Examples of applications}

Nature is made up of different bodies, having boundaries and frequently
interacting in a non-smooth way. Even the simple motion of an elastic
bouncing ball seems to be more easily modeled using non-differentiable
functions than classical $\mathcal{C}^{2}$ ones, at least if we are
not interested to model the non-trivial behavior at the collision
times. Therefore, the motivation to introduce a suitable kind of generalized
functions formalism in a mathematical model is clear, and this would
undoubtedly be of an applicable advantage, since many relevant systems
are described by singular mathematical objects: non-smooth constraints,
collisions between two or more bodies, motion in different or in granular
media, discontinuous propagation of rays of light, even turning on
the switch of an electrical circuit, to name but a few, and only in
the framework of classical physics. In this section we show several
applications of the theory of GSF we reviewed above.

We will \emph{not} consider mathematical models of singular dynamical
systems \emph{at} the times when singularities occur. Indeed, this
would clearly require new physical ideas, e.g.~in order to consider
the nonlinear behavior of objects or materials for the entire duration
of the singularity. Like in every mathematical model, the correct
point of view concerns J.~von Neumann's \emph{reasonably wide area
}of applicability of a mathematical model, i.e.~the range of phenomena
where our model is expected to work (see \cite[pag.~492]{vonN}).
Therefore, it is \emph{not} epistemologically correct to use the theory
described in the present article to deduce a physical property of
our modeled systems when a singularity occurs. Stating it with a language
typically used in physics, \emph{we consider physical systems where
the duration of the singularity is negligible with respect to the
durations of the other phenomena that take place in the system}. Mathematically,
this means to consider as infinitesimal the duration of the singularities.
As a consequence, several quantities changing during this infinitesimal
interval of time have infinite derivatives. We can hence paraphrase
the latter sentence saying that the amplitude (of the derivatives)
of these physical quantities is much larger than all the other (finite)
quantities we can estimate in the system. However, this is a logical
consequence of our lacking of interest to include in our mathematical
model what happens during the singularity, constructing at the same
time a beautiful and sufficiently powerful mathematical model, and
not because these quantities really become infinite. Thereby, it is
\emph{not} epistemologically correct to state that, e.g., if a speed
is infinite at some singularity, this means that we must use relativity
theory: on the contrary, relativity theory is exactly a modeling setting
where infinite speeds are impossible!

On the other hand, the aforementioned ``wide area'' is now able
to include in a single equation the dynamical properties of our modeled
systems, without being forced to subdivide into cases of the type
``before/after the occurrence of each singularity''. Which can be
considered as not reasonable in several cases, e.g.~in the motion
of a particle in a granular medium or of a ray of light in an optical
fiber.

Finally, note that remaining far from the singularity (from the point
of view of the physical interpretation), is what allow us to state
that in several cases this kind of models are already experimentally
validated.\medskip{}

Moreover, the applications we are going to present always end up with
an ODE. Existence and uniqueness of the solution is therefore guaranteed
by Thm\@.~\ref{thm:PicLindFiniteContr}. Clearly, if an explicit
analytic solution is possible, this is preferable, but this is a rare
event, and frequently we have to opt for a numerical solution, usually
simply solving the corresponding $\eps$-wise ODE, for several values
of sufficiently small $\eps$. This mean that we are considering numerical
solutions of our differential equations as empirical laboratories
helping us to guess suitable properties and hence \emph{conjecture}
on the solutions. In principle, these properties \emph{must} be justified
by corresponding theorems. From this point of view, the fact that
GSF share with ordinary smooth functions a lot of classical theorems
(such as the intermediate value, the extreme value, the mean value,
Taylor theorems, etc.) is usually of great help. For example, pictures
of Heaviside's function and Dirac's delta in Fig.~\ref{fig:MollifierHeaviside}
are clearly obtained in the same way by numerical methods, but their
properties can be fully justified by suitable theorems, see e.g.~Rem.~\ref{rem:embedding_properties}.\ref{enu:embDelta}
and \ref{enu:embH} or Example \ref{enu:deltaCompDelta}.

Finally, we already saw in Sec.~\ref{sec:Embeddings} that if $\mu$
is a $1$-dimensional Colombeau mollifier, and $\delta$ is the $\iota^{b}$-embedding
of the Dirac delta, then $\delta(x)=b\mu(bx)$ for all $x\in\rti$.
Thereby, the Heaviside function is $H(x)=\int_{c}^{x}\delta(t)\,\diff t=\int_{bc}^{bx}\mu(t)\,\diff t$,
for all $x\in\rti$ and all $c\in\rti$ sufficiently far from $0$,
i.e.~such that $c<r<0$ for some $r\in\R_{<0}$. If the oscillations
in an infinitesimal neighborhood of $0$ shown in Fig.~\ref{fig:MollifierHeaviside}
have no modelling meaning, one can easily implement e.g.~a non-decreasing
Heaviside-like function by smoothly interpolating the constant functions
$y=0$ and $y=1$ in the intervals $(-\infty,a_{\eps}]_{\R}$ and
$[b_{\eps},+\infty)_{\R}$, where $a=[a_{\eps}]<0<b=[b_{\eps}]$ are
chosen depending on the model requirements.

\subsection{\label{subsec:Singular-variable-length}Singular variable length
pendulum}

As a first example, we want to study the dynamics of a pendulum with
singularly variable length, e.g.~because it is wrapping on a parallelepiped
(see Fig.~\ref{fig:variable_length_pendulum}; see \cite{MaHoAh12}
for a similar but non-singular case, and \cite{MuGl} for a similar
problem of jumps in the Lagrangian, but without the explicit use of
infinitesimals and generalized functions).

The pendulum length function is therefore $\Lambda(\theta)=H(\theta_{0}-\theta)L_{1}+L_{2}$,
where $H$ is the (embedding of the) Heaviside function. We always
assume that $L_{1}$, $L_{2}\in\rti_{>0}$ are finite and non-infinitesimal
numbers. From this it follows that for all $\theta$, $H(\theta_{0}-\theta)>\frac{\diff{\rho}-L_{2}}{L_{1}}\approx-\frac{L_{2}}{L_{1}}$
and hence that also $\Lambda(\theta)>\diff{\rho}$ is invertible.
The kinetic energy is given by: 
\begin{equation}
T(\theta,\dot{\theta})=\frac{1}{2}m\dot{\theta}^{2}\Lambda(\theta)^{2}.\label{e1c}
\end{equation}
The potential energy (the zero level being the suspension point of
the pendulum) is: 
\begin{equation}
U(\theta)=-mg\Lambda(\theta)\cos\theta-mg(1-H(\theta_{0}-\theta))L_{1}\cos\theta_{0}.\label{e1d}
\end{equation}

\noindent Let us define the Lagrangian \textit{$L$} for this problem
as 
\begin{equation}
L(\theta,\dot{\theta}):=T(\theta,\dot{\theta})-U(\theta).\label{e1a}
\end{equation}
\begin{figure}
\centering{}\includegraphics[scale=0.2]{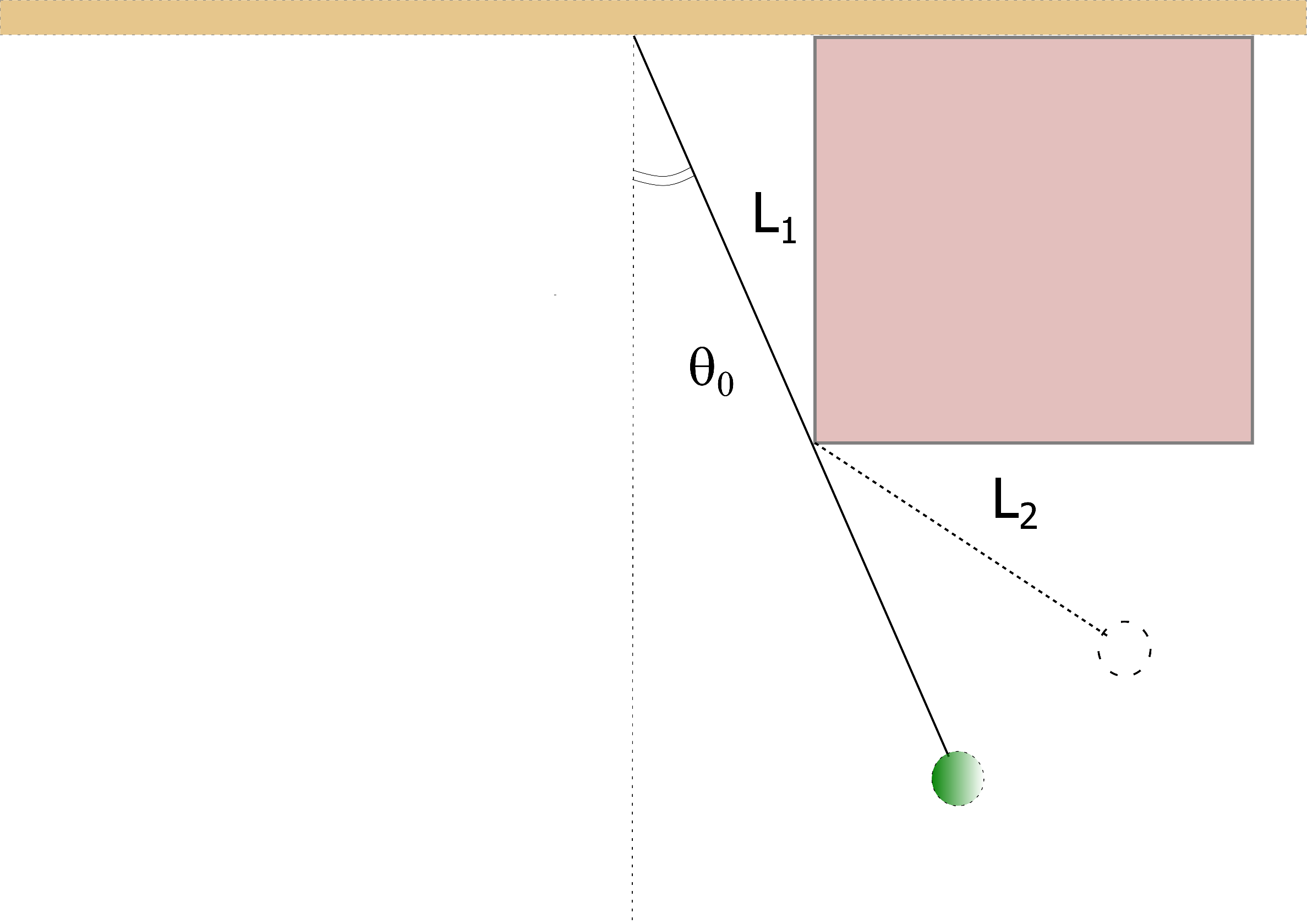}\caption{\label{fig:variable_length_pendulum} Oscillations of a pendulum wrapping
on a parallelepiped}
\end{figure}

\noindent The equation of motion is assumed to satisfy the Euler--Lagrange
equation, see also \cite{FGBL}, and can be written as: 
\begin{equation}
\frac{\partial L}{\partial\theta}=\frac{\diff{}}{\diff{t}}\frac{\partial L}{\partial\dot{\theta}}.\label{e1b}
\end{equation}

\noindent Thereby
\begin{equation}
\frac{\diff{}}{\diff{t}}\frac{\partial L}{\partial\dot{\theta}}=\frac{\diff{}}{\diff{t}}\frac{\partial}{\partial\dot{\theta}}\left(\frac{1}{2}m\dot{\theta}^{2}\Lambda(\theta)^{2}\right)=\frac{\diff{}}{\diff{t}}\left(m\dot{\theta}\Lambda(\theta)^{2}\right)=m\Lambda(\theta)^{2}\ddot{\theta}+2m\dot{\theta}\Lambda(\theta)\dot{\Lambda}(\theta),\label{e1e}
\end{equation}
where $\dot{\Lambda}(\theta):=\frac{\diff{}}{\diff{t}}\Lambda(\theta(t))$.
From \eqref{e1d}, the left side of the Euler--Lagrange equation
\eqref{e1b} reduces to 
\begin{equation}
\frac{\partial L}{\partial\theta}=\frac{\partial T}{\partial\theta}+\frac{\partial(-U)}{\partial\theta}=m\dot{\theta}^{2}\Lambda(\theta)\Lambda'(\theta)+mg\Lambda'(\theta)\left(\cos\theta-\cos\theta_{0}\right)-mg\Lambda(\theta)\sin\theta,\label{e1f}
\end{equation}
where
\begin{equation}
\Lambda'(\theta)=\frac{\diff{}}{\diff{\theta}}\left(H(\theta_{0}-\theta)L_{1}+L_{2}\right)=-\delta(\theta_{0}-\theta)L_{1},\label{e1g}
\end{equation}
and $\delta$ is the Dirac delta function. We then obtain the following
equation of motion: 
\begin{equation}
m\dot{\theta}^{2}\Lambda(\theta)\Lambda'(\theta)+mg\Lambda'(\theta)\left(\cos\theta-\cos\theta_{0}\right)-mg\Lambda(\theta)\sin\theta=m\Lambda(\theta)^{2}\ddot{\theta}+2m\dot{\theta}\Lambda(\theta)\dot{\Lambda}(\theta).\label{e1h}
\end{equation}
Taking into account that $\dot{\Lambda}(\theta)=\Lambda'(\theta)\dot{\theta}$,
we finally obtain the equation of motion for the variable length pendulum:
\begin{equation}
\ddot{\theta}+\dot{\theta}\frac{\dot{\Lambda}(\theta)}{\Lambda(\theta)}-g\frac{\dot{\Lambda}(\theta)}{\dot{\theta}\Lambda(\theta)^{2}}\left(\cos\theta-\cos\theta_{0}\right)+\frac{g}{\Lambda(\theta)}\sin\theta=0.\label{e1j}
\end{equation}
Note in \eqref{e1j} the nonlinear operations on the Schwartz distribution
$\Lambda$, on the GSF $\theta$ and the composition $t\mapsto\Lambda(\theta(t))$.
Before showing the numerical solution of \eqref{e1j}, let us consider
the simplest case of the dynamics far from the singularity and that
of small oscillations. The former, as we mentioned above, is the only
physically meaningful one.

\subsubsection{Description far from singularity and small oscillations}

For simplicity, let us consider the simplest case $\theta_{0}=0$.
Furthermore, we consider that the pendulum is initially at rest and
starts its movement at $t_{1}\in\rti$. The initial conditions we
use are hence: 
\begin{equation}
\begin{cases}
\theta(t_{1})=\theta_{1};\\
\dot{\theta}(t_{1})=0,
\end{cases}\label{e1l}
\end{equation}
with $\theta_{1}<0$. Assuming that at some time $t_{3}\in\rti$ we
have $\theta(t_{3})>0$, by the intermediate value theorem for GSF,
there exists $t_{2}\in\rti$ where we have the singularity, i.e.~$\theta(t_{2})=0$
and the length of the pendulum \emph{smoothly} (in the sharp topology)
changes from $L_{1}+L_{2}$ to $L_{2}$ after the rope touches the
parallelepiped. This change happens in an infinitesimal interval,
because by contradiction it is possible to prove that if $\Lambda(\theta)\in(L_{2},L_{1}+L_{2})$,
then $|\theta|\le\frac{-1}{\log\diff{\rho}}\approx0$.
\begin{defn}
Let $x$, $y\in\rti$. We say that $x$ \emph{is far from $y$ }if
$|x-y|\ge\diff{\rho}^{a}$ for all $a\in\R_{>0}$. More generally,
we say that $x$ \emph{is far from} $y$ \emph{with respect to the
class of infinitesimals $\mathcal{I}\subset\rti$}, if $|x-y|\ge i$
for all $i\in\mathcal{I}$.
\end{defn}

\noindent For example, if $|x|\ge r$ for some $r\in\R_{>0}$, then
$x$ is far from $0$, but also the infinitesimal number $x=\frac{-1}{k\log\diff{\rho}}$
($k\in\R_{>0})$ is far from $0$; similarly, the infinitesimal $x=\frac{-1}{k\log\log\diff{\rho}}$
if far from $0$ with respect to all the infinitesimals of the type
$\frac{-1}{h\log\diff{\rho}}$ for $h\in\R_{>0}$.

If $\theta$ is far from $0$ and $b\ge\diff{\rho}^{-a}$, $a\in\R_{>0}$,
then $|b\theta|\ge\diff{\rho}^{-a}|\theta|\ge\diff{\rho}^{-a/2}\ge1$.
Therefore, $H(-\theta)\in\{0,1\}$ and hence $\dot{\Lambda}(\theta(t))=0$.
Equation \eqref{e1j} becomes
\begin{equation}
\theta(t)\text{ is far from }0\ \Rightarrow\ \begin{cases}
\ddot{\theta}+\frac{g}{L_{1}+L_{2}}\sin\theta(t)=0 & \text{if }\theta(t)<0,\\
\ddot{\theta}+\frac{g}{L_{2}}\sin\theta(t)=0 & \text{if }\theta(t)>0.
\end{cases}\label{eq:ODEfar}
\end{equation}
If we assume that $\theta(t_{1})=\theta_{1}<0$ and $\theta(t_{3})>0$
are far from $0$, the sharp continuity of $\theta$ yields the existence
of $\delta_{1}$, $\delta_{3}\in\rti_{>0}$ such that
\begin{align}
\forall t & \in[t_{1},t_{1}+\delta_{1})\cup(t_{3}-\delta_{3},t_{3}]:\ \theta(t)\text{ is far from }0\nonumber \\
\forall t & \in[t_{1},t_{1}+\delta_{1}):\ \theta(t)<0\label{eq:intFar}\\
\forall t & \in(t_{3}-\delta_{3},t_{3}]:\ \theta(t)>0\nonumber 
\end{align}
(and hence $t_{2}\notin[t_{1},t_{1}+\delta_{1})\cup(t_{3}-\delta_{3},t_{3}]$
because $\theta(t_{2})=0$). Assuming that $t_{1}$, $t_{3}$ are
far from $t_{2}$, without loss of generality we can also assume to
have taken $\delta_{i}$ so small that also $t_{1}+\delta_{1}$ and
$t_{3}-\delta_{3}$ are far from $t_{2}$.

We now employ the non Archimedean framework of $\rti$ in order to
formally consider small oscillations, i.e.~$\theta_{1}\approx0$.
We first note that we cannot only assume $\theta_{1}$ infinitesimal,
because if $\theta_{1}$ is not far from $0$ then our solution will
not be physically meaningful. However, we already have seen that we
can take $\theta_{1}$ far from $0$ and infinitesimal at the same
time, e.g.~$\theta_{1}=\frac{-1}{\log\diff{\rho}}$. In other words,
$\theta_{1}$ is a ``large'' infinitesimal with respect to all the
infinitesimals of the form $\diff{\rho}^{a}$. Let $\vartheta_{1}$,
$\vartheta_{3}$ be the solution of the linearized problems
\begin{equation}
\begin{cases}
\ddot{\vartheta}_{1}+\frac{g}{L_{1}+L_{2}}\vartheta_{1}=0, & t_{1}\le t<t_{1}+\delta_{1}\\
\dot{\vartheta}_{1}(t_{1})=0,\ \vartheta_{1}(t_{1})=\theta_{1}
\end{cases}\label{eq:linODE}
\end{equation}
\[
\begin{cases}
\ddot{\vartheta}_{3}+\frac{g}{L_{2}}\vartheta_{3}=0, & t_{3}-\delta_{3}<t\le t_{3}\\
\dot{\vartheta}_{1}(t_{3})=\dot{\theta}(t_{3}),\ \vartheta_{1}(t_{3})=\theta(t_{3}),
\end{cases}
\]

\noindent i.e.~$\vartheta_{1}(t)=\theta_{1}\cos\left(\omega(t-t_{1})\right)$,
$\omega:=\sqrt{\frac{g}{L_{1}+L_{2}}}$, and $\vartheta_{3}(t)=\theta(t_{3})\cos\left(\omega'(t_{3}-t)\right)-\frac{\dot{\theta}(t_{3})}{\omega'}\sin\left(\omega'(t_{3}-t)\right)$,
$\omega'=\sqrt{\frac{g}{L_{2}}}$. We want to show that $\theta(t)\approx\vartheta_{i}(t)$
at least in an infinitesimal neighborhood of $t_{1}$ and $t_{3}$
exactly because $\theta_{1}\approx0$. For simplicity, we proceed
only for $\vartheta_{1}$, the other case being similar. For any $t\in[t_{1},t_{1}+\delta_{1})$,
we have that $\theta(t)<0$ is far from $0$ from \eqref{eq:intFar},
and hence $\ddot{\theta}+\frac{g}{L_{1}+L_{2}}\sin\theta(t)=0$ from
\eqref{eq:ODEfar}. Recalling the initial conditions, we obtain
\[
\theta(t_{1}+h)-\theta_{1}=-\omega^{2}\int_{t_{1}}^{t_{1}+h}\sin\theta(s)\,\diff{s}\quad\forall h\in(0,\delta_{1}).
\]
Similarly, integrating \eqref{eq:linODE}, we get
\[
\vartheta_{1}(t_{1}+h)-\theta_{1}=-\omega^{2}\int_{t_{1}}^{t_{1}+h}\vartheta(s)\,\diff{s}\quad\forall h\in(0,\delta_{1}).
\]
Using Taylor Thm.~\ref{thm:Taylor-1-1} at $t_{1}$ with increment
$h$ of these integral GSF, we obtain
\begin{align*}
\theta(t_{1}+h)-\vartheta(t_{1}+h) & =-\omega^{2}\left\{ \sin\theta_{1}-\theta_{1}+h\cos\theta_{1}\cdot\dot{\theta}(t_{1})-h\dot{\vartheta}_{1}(t_{1})+h^{2}R(h)\right\} =\\
 & =-\omega^{2}\left\{ \sin\theta_{1}-\theta_{1}+h^{2}R(h)\right\} ,
\end{align*}
where $R(-)$ is a suitable GSF. Thereby, $\theta(t_{1}+h)-\vartheta(t_{1}+h)\approx-\omega^{2}h^{2}R(h)\approx0$
for all $h\approx0$ sufficiently small because $\sin\theta_{1}\approx\theta_{1}$
since $\theta_{1}\approx0$.

Since each $t\in[t_{1},t_{1}+\delta_{1})\cup(t_{3}-\delta_{3},t_{3}]$
is far from $t_{2}$, we can also formally join the two solutions
$\vartheta_{i}$ using the Heaviside's function:
\begin{equation}
\theta(t)\approx\vartheta_{1}(t)+H(t_{2}-t)\left(\vartheta_{3}(t)-\vartheta_{1}(t)\right)\quad\forall t\in[t_{1},t_{1}+h)\cup(t_{3}-h,t_{3}].\label{eq:join}
\end{equation}
For the epistemological motivations previously stated, this infinitesimal
approximation cannot be extended to a neighborhood of $t_{2}$.

We close this section noting that all these deductions can be repeated
using any GSF $H\in\gsf(\rti,\rti)$ satisfying for all $x$ far from
zero $H(x)=1$ if $x>0$ and $H(x)=0$ if $x<0$.

\subsubsection{Numerical Solution}

The numerical solution of equation \eqref{e1j} has been computed
using Mathematica Solver NDSolve (see \cite{NDSolve}). Initial conditions
we used are: 
\begin{equation}
\begin{cases}
\theta(0)=0\text{ rad},\\
\dot{\theta}(0)=1\text{ rad/s}.
\end{cases}\label{eq:initial_cond_case_1}
\end{equation}
The graph of $\theta(t)$, and its derivative $\dot{\theta}(t)$,
based on the Mathematica definitions of $H(x)$ and $\delta(x)$ (see
\cite{Hdelta}) are shown in Figure \ref{fig:damping osc - solution and first derivative}.
\begin{figure}
\centering{}\includegraphics[width=0.75\textwidth]{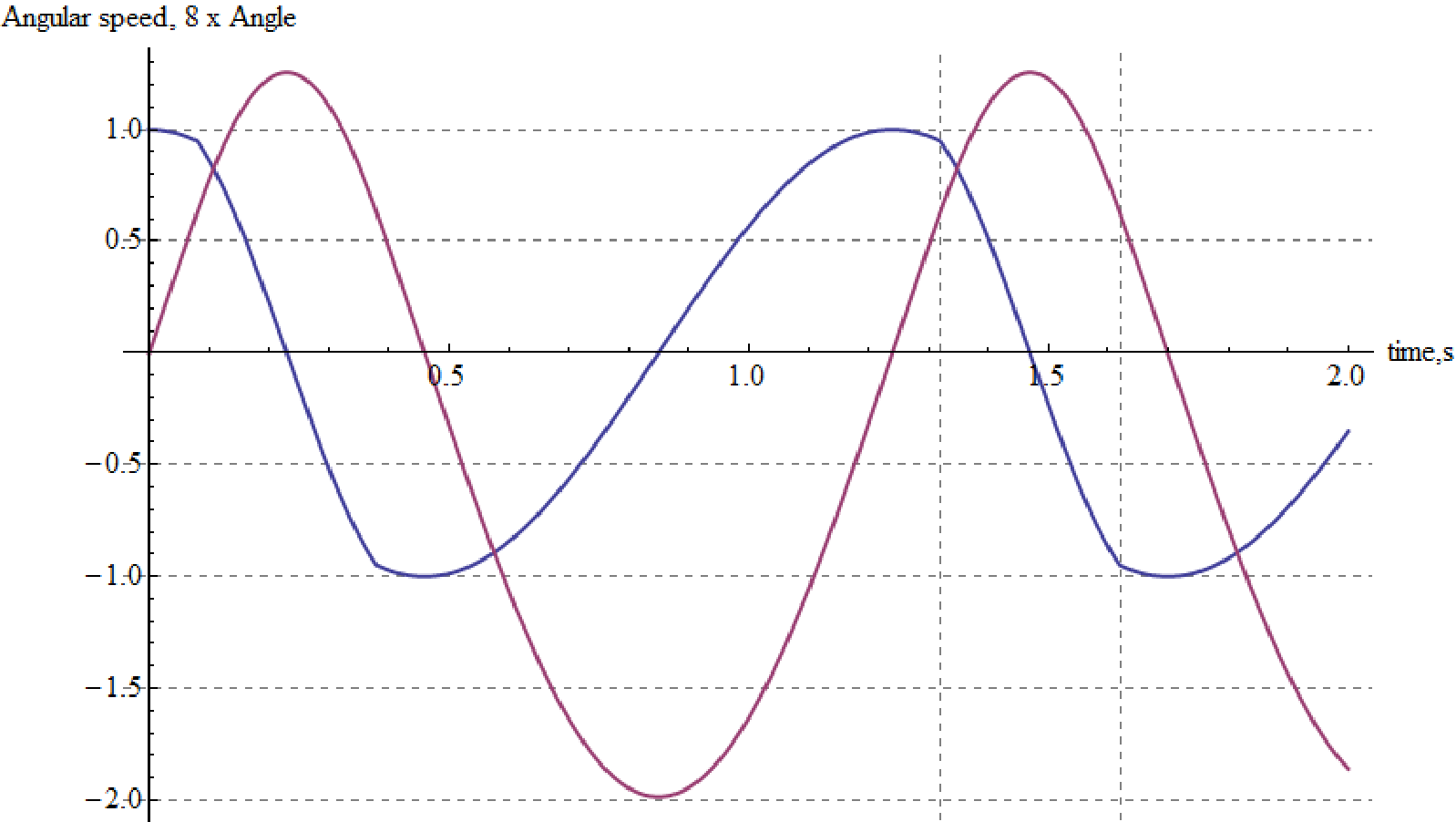}\caption{\label{fig:damping osc - solution and first derivative}8 times re-scaled
solution (violet line) in radians and its derivative in rad/s (at
$\theta=\theta_{0}=\pi/40$ rad we can see a corner point). Parameters
used: $L_{1}=0.4\text{ m}$, $L_{2}=0.2\text{ m}$, $g=9.8\text{ m/\ensuremath{\text{s}^{2}}}$.}
\end{figure}

In Figure \ref{fige2}, we show the second derivative graph. Directly
from \eqref{e1j} and \eqref{e1g} we can prove that when $\theta(t)=\theta_{0}$,
$\ddot{\theta}(t)$ is an infinite number and hence $\dot{\theta}(t)$
has a corner point. Because of the classical Mathematica implementation
of $H$ and $\delta$ we can say that these graphs represent the solution
far from the singularities.
\begin{figure}
\centering{}\includegraphics[width=0.75\textwidth]{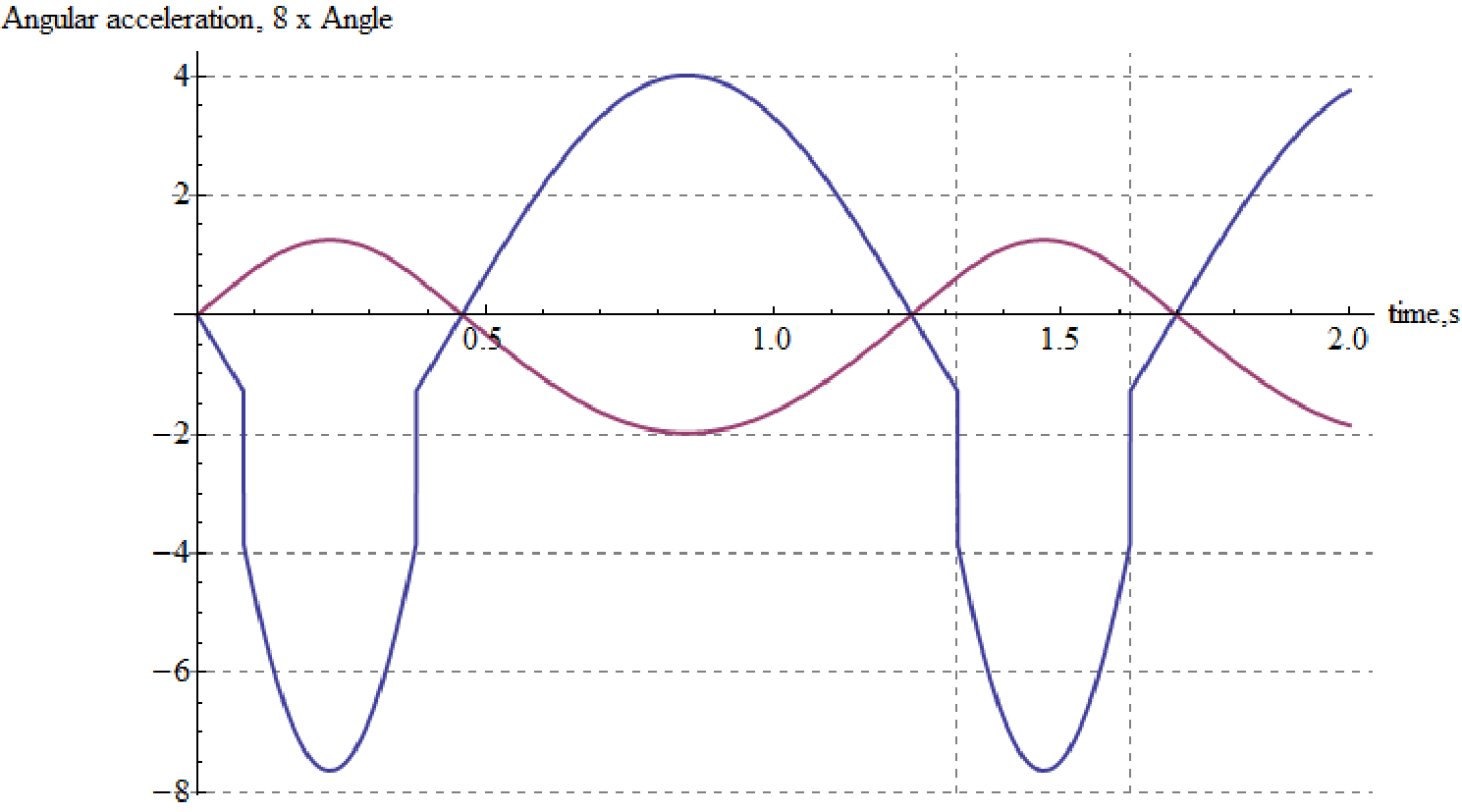}\caption{\label{fige2}8 times re-scaled solution in radians (violet line)
and its second derivative in rad/$\text{s}^{2}$}
\end{figure}

\subsection{Oscillations damped by two media}

\begin{figure}
\centering{}\includegraphics[scale=0.2]{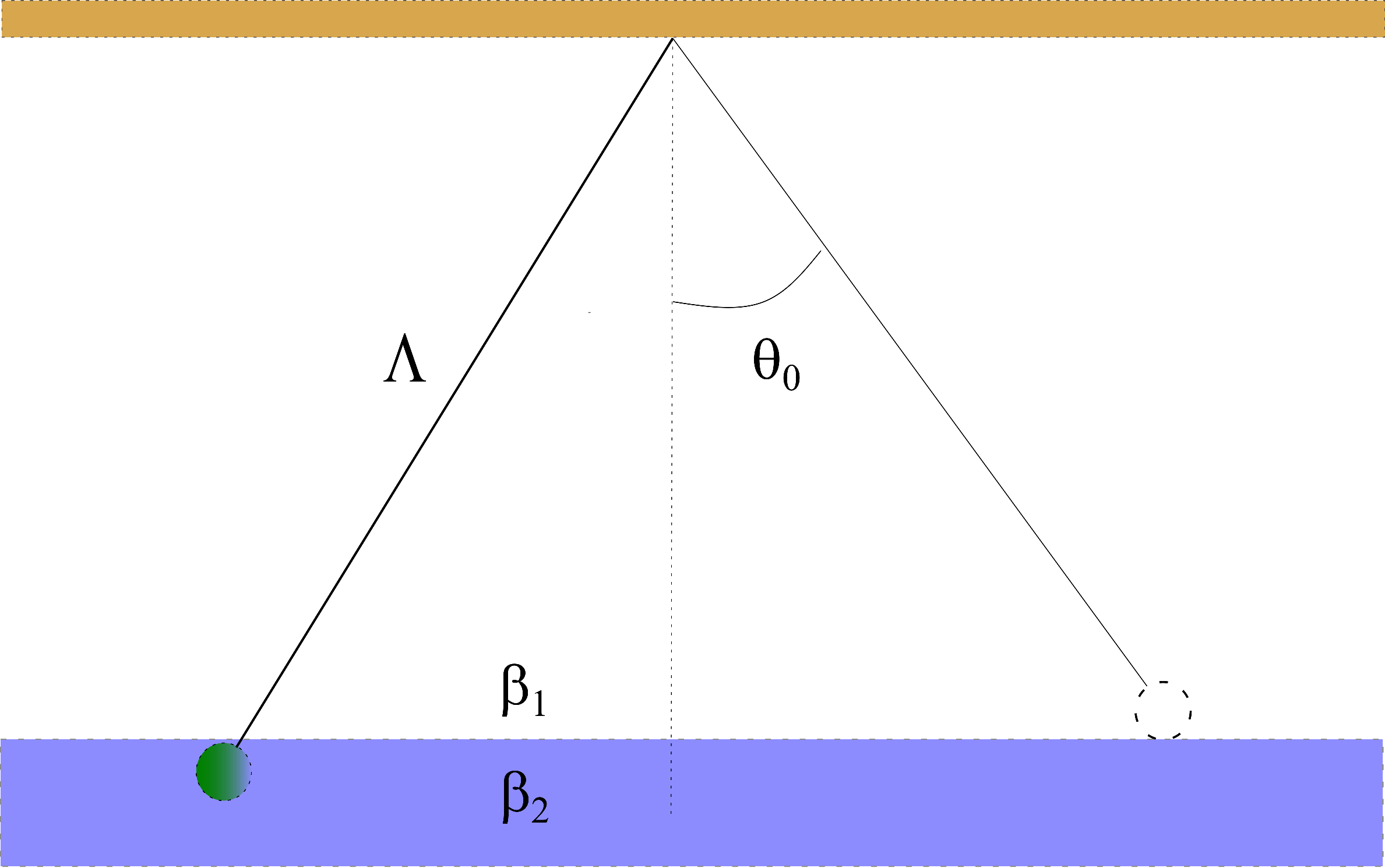}\caption{\label{fig:water_oscillations}Simple pendulum moving in two media}
\end{figure}

The second example concerns oscillations of a pendulum in the interface
of two media. Since we are not interested at the dynamics occurring
at singular times (i.e.~at the changing of the medium), this can
be considered only a toy model approximating the case of a very small
but sufficiently heavy moving particle.

We hence want to model the system employing a ``jump'' in the damping
coefficient $\beta$, i.e.~a finite change occurring in an infinitesimal
interval of time, see Fig.~\ref{fig:water_oscillations}. Since the
frictional forces acting in this case are not conservative, it is
well-known that the Euler-Lagrange equations cannot be assumed to
describe the dynamics of the system and we have to use the D'Alembert
principle, see \cite{FGBL} for details.

The kinetic energy is given by: 
\begin{equation}
T(\dot{\theta})=\frac{1}{2}m\dot{\theta}^{2}\Lambda^{2},\label{e2c}
\end{equation}
and the potential energy (the zero level is the suspension point of
the pendulum) is: 
\begin{equation}
U(\theta)=-mg\Lambda\cos\theta.\label{e2d}
\end{equation}

\noindent In case of fluid resistance proportional to the velocity,
we can introduce the generalized forces $Q$ as: 
\begin{equation}
Q(\dot{\theta})=-r\Lambda^{2}\dot{\theta},\label{e2e}
\end{equation}
where $r$ is a proportional coefficient depending on the media. Let's
define the Lagrangian \textit{$L$} as 
\begin{equation}
L(\theta,\dot{\theta}):=T(\dot{\theta})-U(\theta).\label{e2a}
\end{equation}
We hence assume that the equation of motion for this non-conservative
system is given by the D'Alembert's principle, i.e.
\begin{equation}
\frac{\diff{}}{\diff{t}}\frac{\partial L}{\partial\dot{\theta}}-\frac{\partial L}{\partial\theta}=Q.\label{e2b}
\end{equation}
Inserting \eqref{e2c}, \eqref{e2d} and \eqref{e2e} into \eqref{e2b}
we obtain the following equation of motion: 
\begin{equation}
m\Lambda^{2}\ddot{\theta}+mg\Lambda\sin\theta=-r\Lambda^{2}\dot{\theta}.\label{e2f}
\end{equation}
By introducing the damping coefficient $\beta(\theta):=r(\theta)/(2m)$
(we clearly assume that the mass $m\in\rti>0$ is invertible) we obtain
the classical form of the equation of motion for damped oscillations:
\begin{equation}
\ddot{\theta}+2\beta(\theta)\dot{\theta}+\frac{g\sin\theta}{\Lambda}=0.\label{eq:damping_equation}
\end{equation}

If the pendulum crosses the boundary between two media with damping
coefficients $\beta_{1}$ and $\beta_{2}$, we can model the system
using the Heaviside function $H$:
\begin{equation}
\beta(\theta)=\beta_{1}+\left(H(\theta+\theta_{0})-H(\theta-\theta_{0})\right)(\beta_{2}-\beta_{1}),\label{eq:beta_dependence}
\end{equation}
where $\theta=\pm\theta_{0}$ are the angles at which we have the
changing of the medium (singularities).

The numerical solution of \eqref{eq:damping_equation} with $\beta$
defined by \eqref{eq:beta_dependence} and initial conditions \eqref{eq:initial_cond_case_1}
is presented in Fig.~\ref{fig:Solution-beta-gen-1}. The numerical
solution has been computed using Mathematica Solver NDSolve, but with
an implementation of the Heaviside's function $H$ corresponding to
Thm\@.~\ref{thm:embeddingD'}, i.e.~as represented in Fig.~\ref{fig:MollifierHeaviside}.

\begin{figure}
\centering{}\includegraphics[scale=0.3]{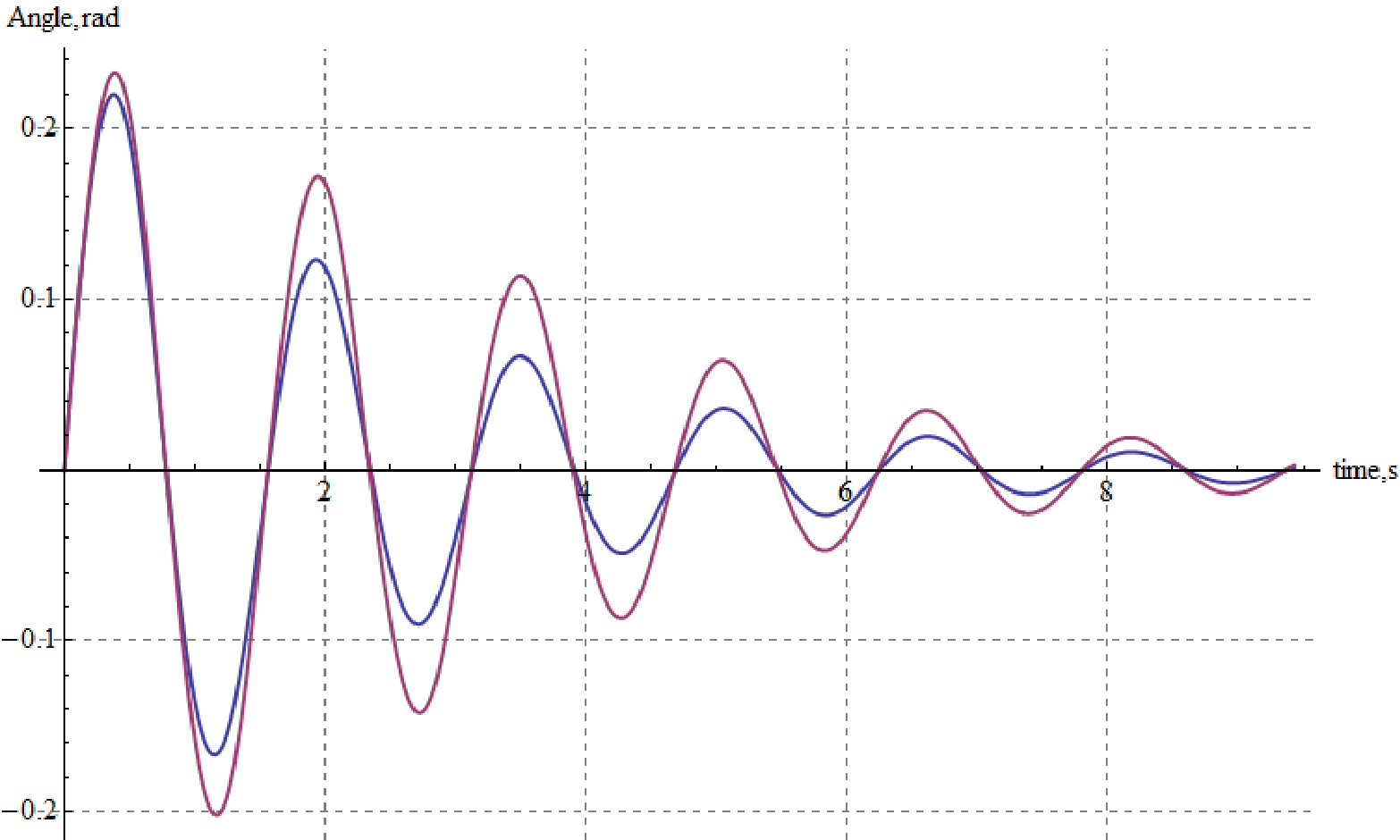}\caption{\label{fig:Solution-beta-gen-1}Solution $\theta$ of \eqref{eq:damping_equation}
(blue line). For comparison, the violet line is the case $\beta=\text{const}.=\beta_{1}$.
Used parameters: $\beta_{1}=0.0064$ (air), $\beta_{2}=0.3859$ (water),
$\theta_{0}=\pi/40\text{ rad}$, $\Lambda=0.6\text{ m}$, $g=9.8\text{ m/\ensuremath{\text{s}^{2}}}$.}
\end{figure}

We also include the graphs of the angular frequency $\dot{\theta}$
(which shows corner points) and of the angular acceleration $\ddot{\theta}$
(which shows ``jumps'', i.e.~infinite derivatives at singular times,
as we can directly see from \eqref{eq:damping_equation} and \eqref{eq:beta_dependence}).

\begin{figure}
\centering{}\includegraphics[scale=0.25]{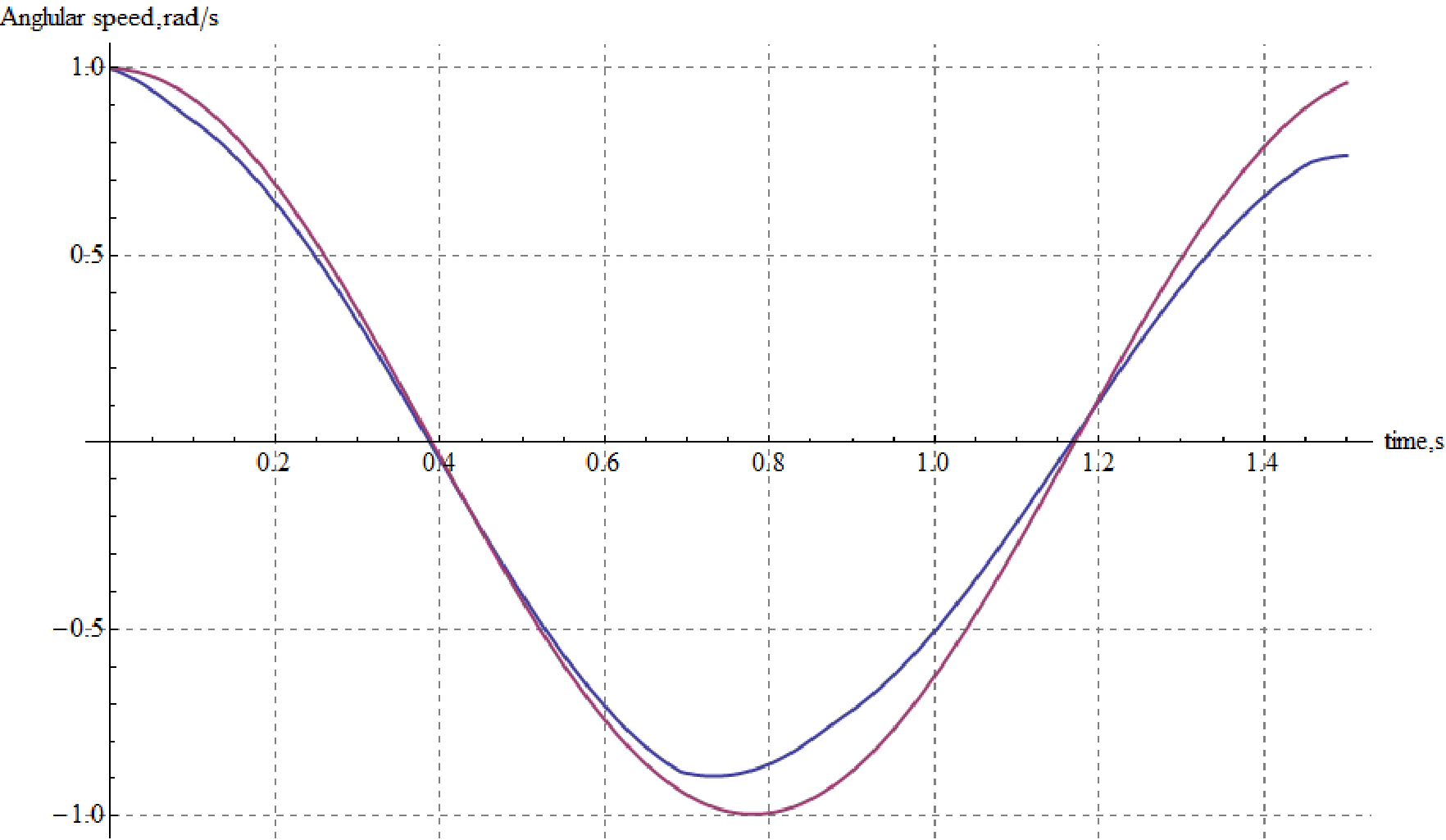}\includegraphics[scale=0.4]{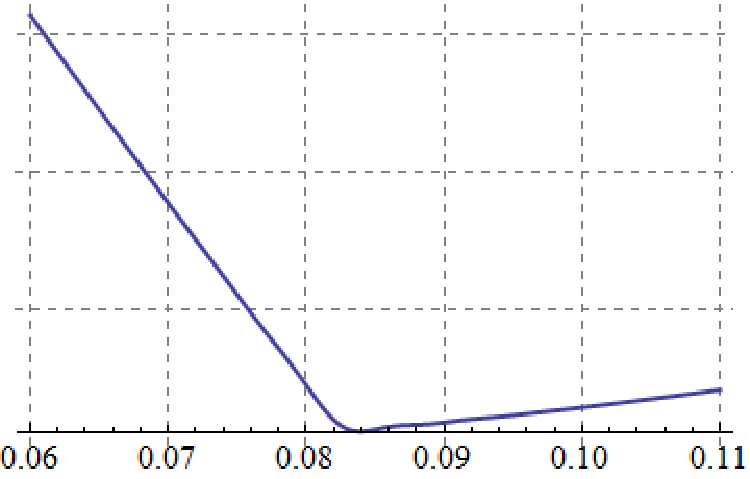}\caption{\label{fig:Difference-in-Derivative-Damping-1}First derivative $\dot{\theta}$
of the solution of \eqref{eq:damping_equation} (blue line). The case
with $\beta=\text{const}=\beta_{1}$ is also shown for comparison
(violet line). Note the corner points at the singular moments, for
example at $t=0.083\text{ s}$ (scaled in the right figure).}
\end{figure}

\begin{figure}
\begin{centering}
\includegraphics[scale=0.2]{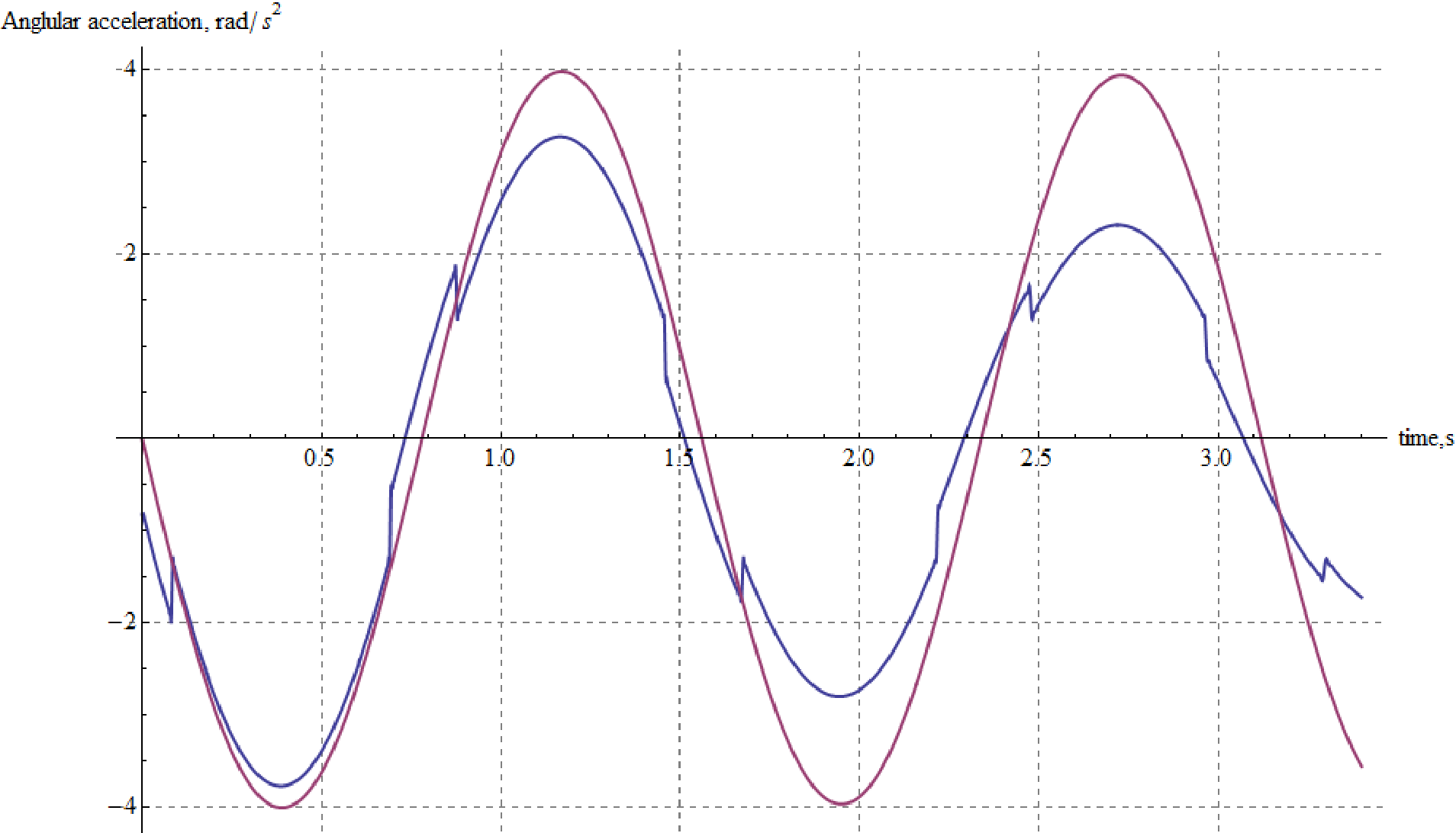}\includegraphics[scale=0.45]{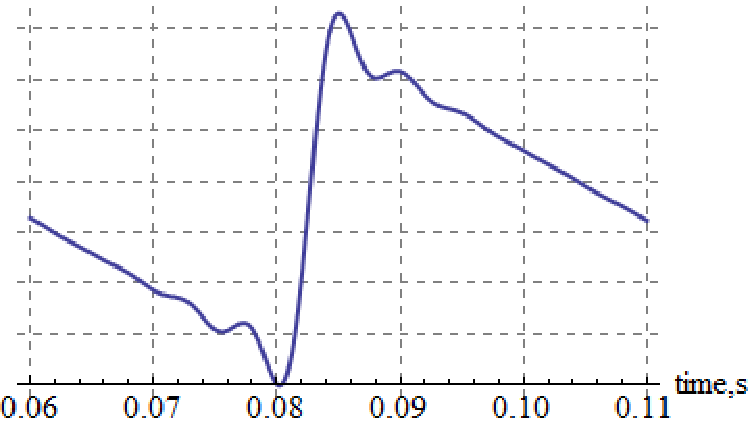}
\par\end{centering}
\caption{Second derivative $\ddot{\theta}$ of the solution of \eqref{eq:damping_equation}
(blue line). The case with $\beta=\text{const}=\beta_{1}$ is also
shown for comparison (violet line). Note the \textquotedblleft jumps\textquotedblright{}
at the singular moments, for example at $t=0.083\text{ s}$ (scaled
in the right figure). The infinitesimal oscillations are caused by
the embedding as GSF of the Heaviside function.}
\end{figure}

\subsection{Non linear strain-stress model}

In this section, we want to show how to construct a mathematical model
starting from an empirical function (the strain-stress relation for
a steel sample) and representing it as a GSF. Starting from Newton's
second law, we hence arrive at a single nonlinear equation describing
the behaviour of the steel sample. Since the empirical function is
not differentiable at the end of the linear part, the use of GSF is
therefore essential.

The strain-stress curve we consider is shown in Fig.~\ref{fig:Strain-stress-experiment}.

\begin{figure}
\includegraphics[scale=0.4]{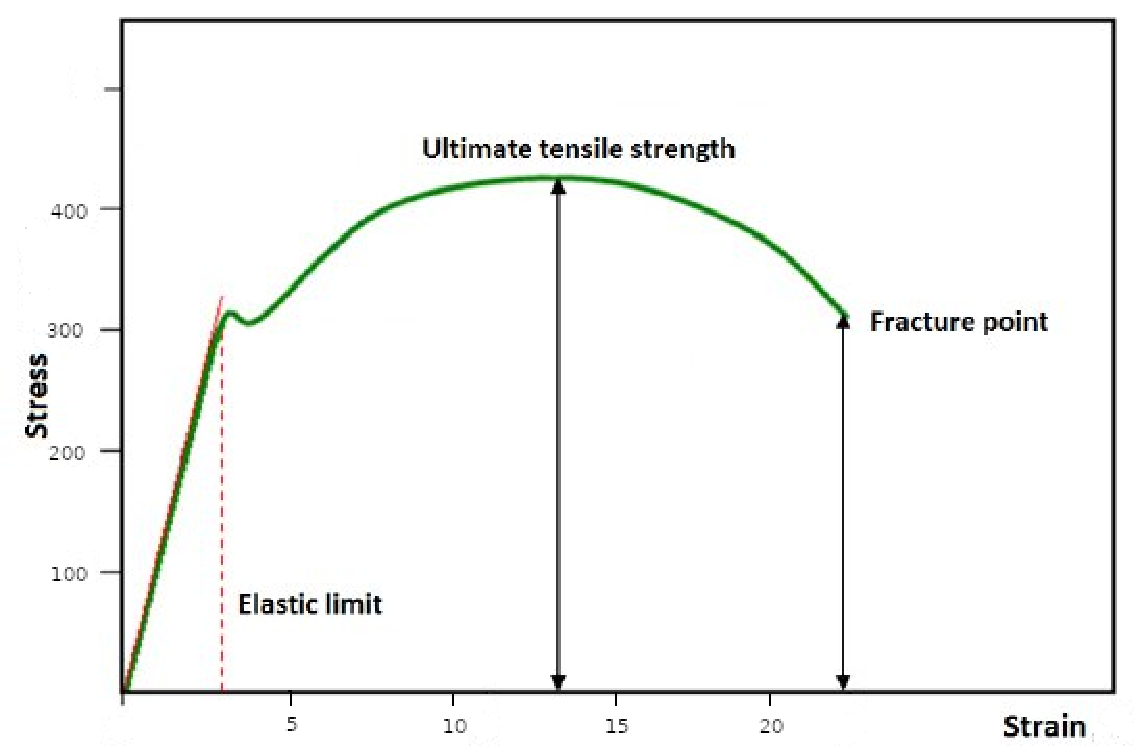}\caption{\label{fig:Strain-stress-experiment}Strain-stress empirical model
of the steel (see \cite{Otani}).}
\end{figure}

\noindent We recall that stress can be defined as $\sigma=\frac{F}{S_{0}}$,
where $F$ is the force applied to the sample, and $S_{0}$ is the
initial cross-section of the cylindrical sample. The strain $\varepsilon$
is usually introduced as $\varepsilon=\frac{L-L_{0}}{L_{0}}$, where
$L_{0}$ is the unstressed length and $L$ is the length after force
application. In order to reproduce the experimental dependence of
Fig.~\ref{fig:Strain-stress-experiment} we considered the parameters
$d=0.37\,$mm for the diameter of the steel cilinder, and $L_{0}=2.2\,$m
for the unstressed length of the sample. Thus, during the elastic
behaviour (linear part) we have a Young's modulus $E=\frac{\sigma}{\varepsilon}=2.13\cdot10^{11}\,\text{Pa}$,
a stiffness $k=\frac{ES_{0}}{L_{0}}=10423\,\frac{\text{N}}{\text{m}}$,
and hence the magnitude of the linear part of the force is given by
$F_{\text{l}}(x)=kx$. For the nonlinear part of the empirical law,
we use the Mathematica built-in function NonlinearModelFit, see \cite{NonlinModFit}.
The result is shown in Fig.~\ref{fig:Non-linear-part-modelling}.

\begin{figure}
\includegraphics[scale=0.3]{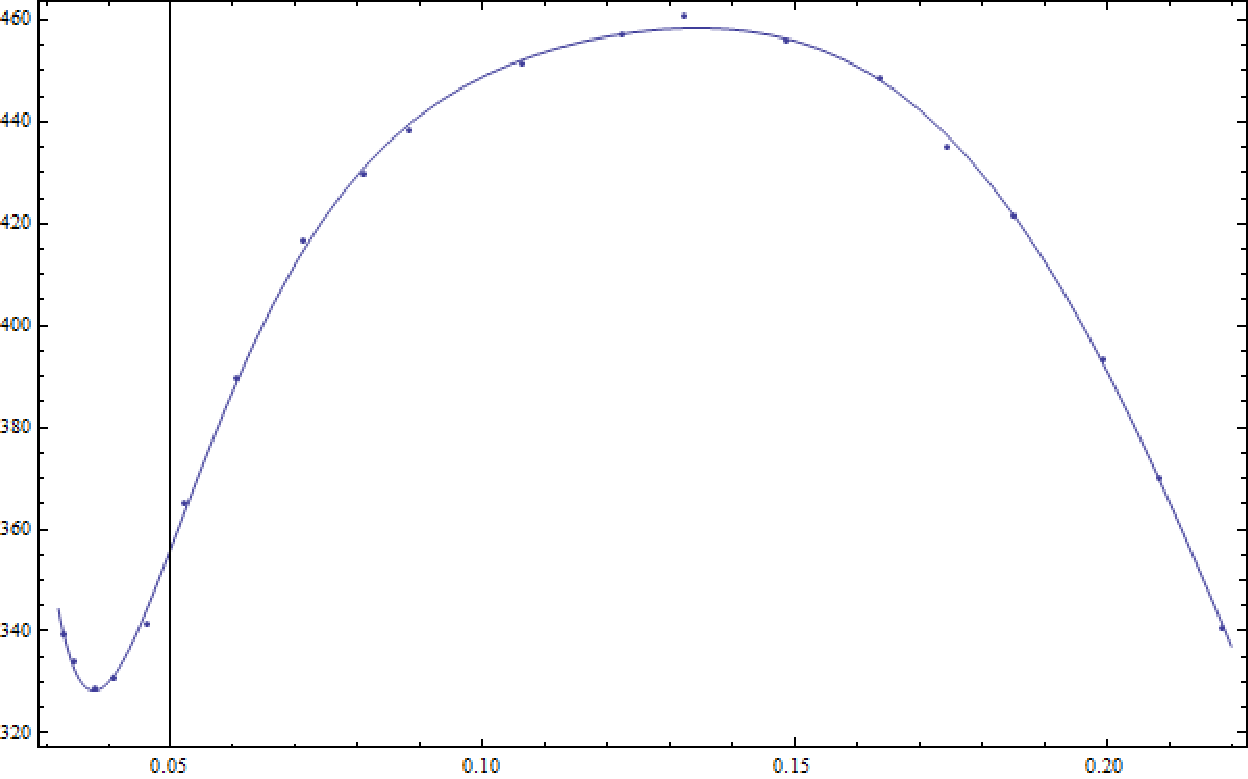}\caption{\label{fig:Non-linear-part-modelling}Non-linear part modelling of
the strain-stress curve using the NonlinearModelFit. The dots are
values extracted from the original data.}
\end{figure}

\noindent The resulting expression is

\begin{equation}
F_{\text{n}}(x)=a_{1}\exp(a_{2}x)+a_{3}\cos(a_{4}x)+a_{5}x+a_{6}x^{2}+a_{7}x^{3}+a_{8}x^{4}+a_{9}x^{5}+a_{10}x^{6}+a_{11}x^{7},\label{eq:nonlinear_force_part}
\end{equation}
where the coefficients $a_{k}$ are given in Tab.~\ref{tab:Coeff}.

\begin{table}
\begin{tabular}{|c|c|c|}
\hline 
$a_{1}=1.5\cdot10^{3}$ & $a_{5}=-9.9\cdot10^{4}$ & $a_{9}=-1.8\cdot10^{9}$\tabularnewline
\hline 
\hline 
$a_{2}=3.9$ & $a_{6}=2.8\cdot10^{6}$ & $a_{10}=4.8\cdot10^{9}$\tabularnewline
\hline 
$a_{3}=3.0$ & $a_{7}=-4.4\cdot10^{7}$ & $a_{11}=-5.1\cdot10^{9}$\tabularnewline
\hline 
$a_{4}=1.0\cdot10^{2}$ & $a_{8}=3.8\cdot10^{8}$ & \tabularnewline
\hline 
\end{tabular}\caption{\label{tab:Coeff}Coefficients used in non-linear part $F_{\text{n}}$
of the force.}
\end{table}
Using the Heaviside function, we can hence write the force $F$ acting
on the steel sample as

\begin{equation}
F(x)=-F_{\text{l}}(x)-(F_{\text{n}}(x)-F_{\text{l}}(x))H(x-x_{0}),\label{eq:force_with_nonlinear_part}
\end{equation}
and it is represented in Fig.~\ref{fig:Recomputed-strain-stress-model}.
Note that the negative sign is due to the fact that the force is directed
opposite to the elongation of the sample; moreover, $x_{0}=0.033$
according to Fig.~\ref{fig:Recomputed-strain-stress-model}. Thus,
the position $x$ of the steel sample satisfies the differential equation

\begin{equation}
\ddot{x}-\frac{F(x)}{m}=0.\label{eq:equation_with_nonlinear_force}
\end{equation}
Once again, note that the GSF $F$ is nonlinear and the term $F(x(t))$
is a composition of GSF.

\begin{figure}
\includegraphics[scale=0.25]{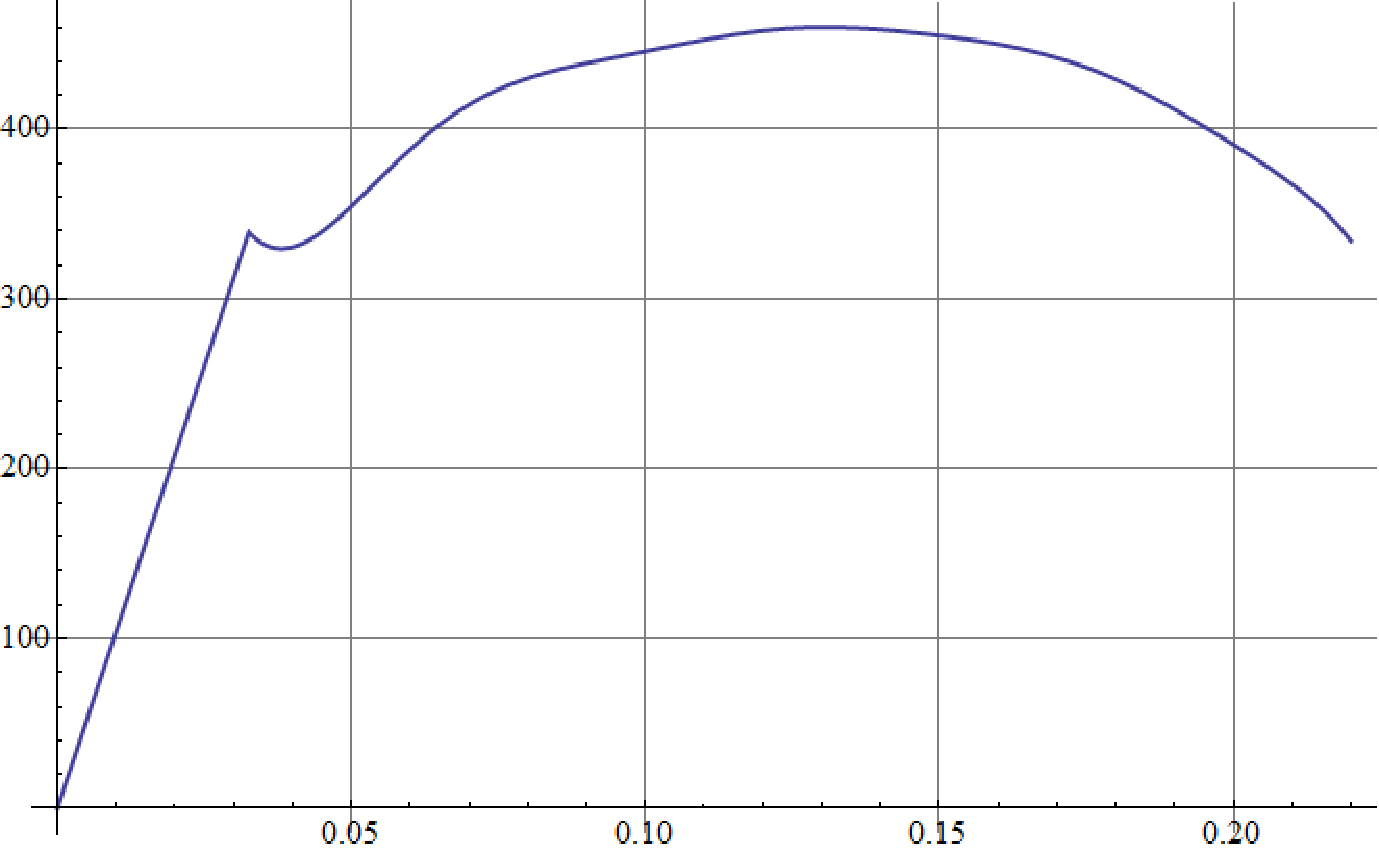}\caption{\label{fig:Recomputed-strain-stress-model}Recomputed strain-stress
model.}
\end{figure}

\noindent Far from the singularity $x=x_{0}$, the validity of \eqref{eq:equation_with_nonlinear_force}
can also be seen using the conservation of the mechanical energy.
In fact, if $x$ is far from $x_{0}$, in the sense that $x\le x_{1}$
for some $x_{1}\in\R_{<0}$, then $F(x)=-F_{\text{l}}(x)$ and we
are in the zone of Hooke's law; we thus have the potential energy:
$U(x)=\frac{kx^{2}}{2}$. Similarly, if $x\ge x_{2}$ for some $x_{2}\in\R_{>0}$,
then $F(x)=-F_{\text{n}}(x)$ and $U(x)=\frac{a_{1}}{a_{2}}\exp(a_{2}x)+\frac{a_{3}}{a_{4}}\sin(a_{4}x)+\frac{a_{5}}{2}x^{2}+\frac{a_{6}}{3}x^{3}+\frac{a_{7}}{4}x^{4}+\frac{a_{8}}{5}x^{5}+\frac{a_{9}}{6}x^{6}+\frac{a_{10}}{7}x^{7}+\frac{a_{11}}{8}x^{8}$.
Therefore, far from the singularity, the conservation of the mechanical
energy is equivalent to \eqref{eq:equation_with_nonlinear_force}.

Clearly, the nonlinear behaviour depends on the initial conditions:
if $x(0)=0.0\,\text{m}$ and $\dot{x}(0)=5\,\frac{\text{m}}{\text{s}}$,
we remain in the linear setting, whereas for $x(0)=0.0\,\text{m}$,
$\dot{x}(0)=15\,\frac{\text{m}}{\text{s}}$ we enter into the nonlinear
one, see Fig.~\ref{fig:solStressStrain}.

\begin{figure}
\includegraphics[scale=0.3]{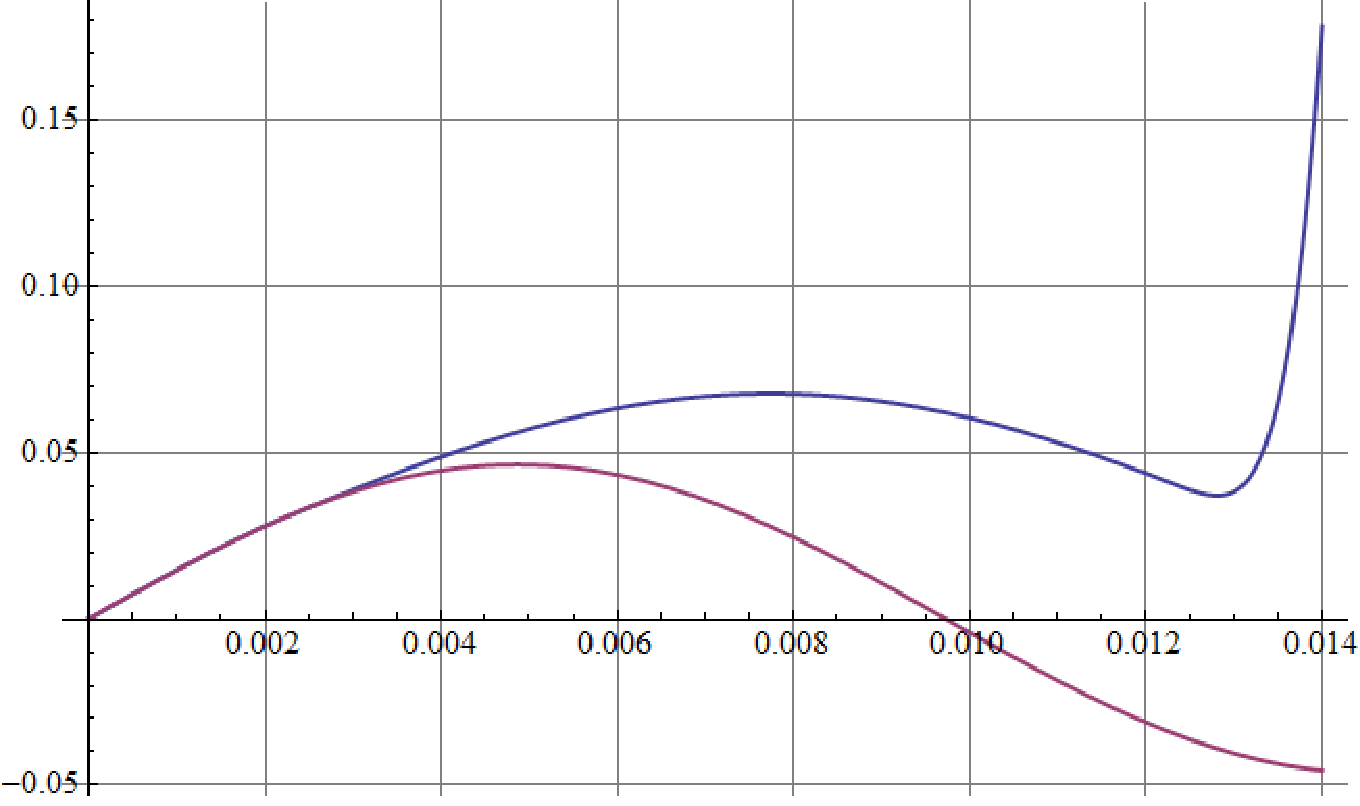}\caption{\label{fig:solStressStrain}The solution $x(t)$ of equation \ref{eq:equation_with_nonlinear_force}
with initial conditions $x(0)=0.0\,\text{m},$$\dot{x}(0)=15\,\frac{\text{m}}{\text{s}}$
(blue line) in comparison with the solution in the linear setting
for $x(0)=0.0\,\text{m}$, $\dot{x}(0)=5\,\frac{\text{m}}{\text{s}}$
(violet line).}
\end{figure}

\noindent See also \cite{C1,PruRaj2011,PruRaj2016,PruReTu2017} for
more complete models of this type in the setting of Colombeau theory.

\subsection{Discontinuous Lagrangians in optics (Snell's law derivation)}

A typical example where one would like to use the usual results of
calculus despite dealing with non differentiable functions, is geometrical
optics at the interface of two media, where usually the Lagrangian
function is not smooth. It is well-know, e.g., that the rigorous derivation
of Snell's law is a paradigmatic example, see e.g.~\cite{Zat}. The
main aim of this section is to show the features of the nonlinear
calculus of GSF by deriving Snell's law for plane stratified media
from the classical Fermat's principle. For example, in the following
deduction, the refraction index $n(x)$ can be any GSF, e.g.~the
embedding of a locally integrable function. For more general versions
of Snell's law in different media, see e.g.~\cite{Cole,RoFe}. For
the classical deduction where the refraction index $n(x)$ and the
light path $x=x(s)$ are $\mathcal{C}^{2}$ functions, see e.g.~\cite{LaGhTh11,NoSe}.

We assume that we are considering a body $B:=\langle B_{\eps}\rangle\subseteq\R^{3}$
represented by the strongly internal set generated by the net $B_{\eps}\subseteq\R^{3}$,
and that our light path satisfies the classical Fermat principle,
i.e.~the path of light between two given points $P$, $Q\in B$ is
the one which minimizes the travel time. In order to mathematically
state this principle for GSF, we introduce the space of paths (see
\cite{LeLuGi17,FGBL} for details):
\begin{equation}
\gsf_{\text{bd}}(P,Q):=\left\{ r\in\gsf([0,1],B)\mid r(0)=P,\ r(1)=Q\right\} ,\label{eq:gsfBD}
\end{equation}
and the travel time functional
\begin{equation}
T[r]:=\frac{1}{c}\int_{0}^{1}n(r)\,\diff r:=\frac{1}{c}\int_{0}^{1}n(r(s))\left|\dot{r}(s)\right|\,\diff s\quad\forall r\in\gsf_{\text{bd}}(P,Q).\label{eq:timeFunct}
\end{equation}
As usual, $c$ is the speed of light in vacuum, and $n\in\gsf(B,\rti_{\ge0})$
is the refraction index of the media $B$ we are considering. The
Fermat principle hence implies that the light path $r\in\gsf_{\text{bd}}(P,Q)$
is a weak extremal of the travel time functional $T[-]$, i.e.~it
satisfies
\begin{equation}
\delta T(r;h):=\left.\frac{\diff{}}{\diff x}T[r+xh]\right|_{x=0}=0\quad\forall h\in\gsf_{\text{bd}}(0,0).\label{eq:derT}
\end{equation}
Note that, since in \eqref{eq:gsfBD} we consider only paths $r\in\gsf([0,1],B)$
valued in the strongly internal set $B=\langle B_{\eps}\rangle$,
Thm.~\ref{thm:strongMembershipAndDistanceComplement} implies
\[
\forall h\in\gsf_{\text{bd}}(0,0)\,\exists\delta\in\rti_{>0}\,\forall x\in(-\delta,\delta):\ r+xh\in\gsf([0,1],B),
\]
and therefore, it is correct to consider the derivative in \eqref{eq:derT}.
Physically this means that we are considering only paths which lay
completely inside the body $B=\langle B_{\eps}\rangle$. The weak
extremal condition \eqref{eq:derT} is equivalent to the Euler-Lagrange
equations (see \cite{LeLuGi17}) for the Lagrangian $(r,v)\mapsto L(r,v):=n(r)\sqrt{v\cdot v}\in\gsf(B\times\rti^{3},\rti)$.
We use the notation $L(r,v)=L(r_{1},r_{2},r_{3},v_{1},v_{2},v_{3})$
for the variables of $L$. We also explicitly note the nonlinear operations
in this Lagrangian, and the composition $n(r(s))$ of GSF. We use
the customary notations $\vec{v}(s):=\frac{\diff{r}}{\diff s}(s)\in\rti^{3}$,
$v(s):=|\vec{v}(s)|\in\rti$, and $L[r](s):=L(r(s),\vec{v}(s))$.
We hence get
\begin{equation}
\frac{\diff{}}{\diff s}\left(\frac{\partial L}{\partial v_{j}}[r](s)\right)=\frac{\partial L}{\partial r_{j}}[r](s),\quad\forall j=1,2,3,\ \forall s\in[0,1].\label{eq:Lagr}
\end{equation}
We always assume that the frame of reference is chosen so that the
light path satisfies $v(s)\in\rti_{>0}$. Calculating the derivatives
$\partial L/\partial v_{j}$ and $\partial L/\partial r_{j}$ in Euler-Lagrange
equations, we get
\begin{align*}
\frac{\partial L}{\partial v_{j}}(r,v) & =n(r)\frac{v_{j}}{\sqrt{v\cdot v}}\\
\frac{\partial L}{\partial r_{j}}(r,v) & =\frac{\partial n}{\partial r_{j}}(r)\sqrt{v\cdot v},
\end{align*}
for all $(r,v)\in B\times\rti^{3}$ and all $j=1,2,3$. Substituting
in \eqref{eq:Lagr}, we obtain the eikonal equation

\begin{equation}
\frac{\diff{}}{\diff s}\left(n(r(s))\frac{\vec{v}(s)}{v(s)}\right)=\nabla n(r(s))v(s),\quad\forall s\in[0,1].\label{eq:Eikonal}
\end{equation}

We now consider the case of a plane stratified media, i.e.~where
$n$ changes only along one direction $\vec{k}\in\rti^{3}$, $|\vec{k}|=1$,
so that 
\begin{equation}
\nabla n(r(s))\parallel\vec{k}\quad\forall s\in[0,1].
\end{equation}

\noindent For simplicity, where it is clear, we omit the evaluation
at $s$. Thus, the cross product of the two vectors $\nabla n(r)$
and $\vec{k}$ is $\vec{k}\times\nabla n(r)=0=\vec{k}\times\nabla n(r)v$.
Using \eqref{eq:Eikonal} we get $\vec{k}\times\frac{\diff{}}{\diff s}(n(r)\frac{\vec{v}}{v})=0$,
and hence $\frac{\diff{}}{\diff s}(\vec{k}\times n(r(s))\frac{\vec{v}(s)}{v(s)})=0$
for all $s\in[0,1]$, i.e.~the function $s\in[0,1]\mapsto\vec{k}\times n(r(s))\frac{\vec{v}(s)}{v(s)}\in\rti^{3}$
is a constant $\vec{C}\in\rti^{3}$. Taking its magnitude $C=|\vec{C}|$
\[
\forall s\in[0,1]:\ C=|\vec{k}|\left|n(r(s))\right|\left|\frac{\vec{v}(s)}{v(s)}\right|\sin\phi(s),
\]
where $\phi(s)$ is the angle between $\vec{k}$ and $\vec{v}(s)=\frac{\diff{r}}{\diff s}(s)$.
We proved Snell's law for GSF:
\begin{thm}
\label{thm:Snell}Let $B_{\eps}\subseteq\R^{3}$, $B:=\langle B_{\eps}\rangle$,
$P$, $Q\in\langle B_{\eps}\rangle$, $n\in\gsf(B,\rti_{\ge0})$.
Assume that $r\in\gsf_{\text{\emph{bd}}}(P,Q)$ is a weak extremal
of the travel time functional \eqref{eq:timeFunct}, i.e.~it satisfies
\eqref{eq:derT}. Set $\vec{v}(s):=\frac{\diff{r}}{\diff s}(s)\in\rti^{3}$,
$v(s):=|\vec{v}(s)|\in\rti$ and assume that $v(s)>0$ for all $s\in[0,1]$.
Then the eikonal equation \eqref{eq:Eikonal} holds. Moreover, if
$\nabla n(r(s))\parallel\vec{k}$, where $\vec{k}\in\rti^{3}$, $|\vec{k}|=1$,
and $\phi(s)$ is the angle between $\vec{k}$ and $\vec{v}(s),$then
the quantity $n(r(s))\cdot\sin\phi(s)$ is constant for all $s\in[0,1]$.
\end{thm}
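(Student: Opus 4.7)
The proof plan is to follow essentially the informal derivation already laid out in the paragraphs preceding the theorem, but treating each step as a rigorous statement within GSF calculus. First, I would invoke the fundamental lemma of the calculus of variations for GSF (from the companion works \cite{LeLuGi17,FGBL} cited by the authors) to convert the weak extremality condition \eqref{eq:derT} into the Euler--Lagrange system \eqref{eq:Lagr} for the Lagrangian $L(r,v) = n(r)\sqrt{v \cdot v}$. This requires checking that $L$ is indeed a GSF on $B \times \rti^3$, which follows from the closure of $\gsf$ under composition (Thm.~\ref{thm:GSFcategory}) applied to $n$, the smooth map $v \mapsto \sqrt{v \cdot v}$ (valid once we restrict to the open set where $v \ne 0$), and multiplication.

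Next, I would use the differential calculus of Sec.~\ref{sec:Differential-calculus} (Thm.~\ref{thm:linearityLeibnizDer} for linearity, Leibniz and chain rules) to compute $\partial L/\partial v_j = n(r) v_j / \sqrt{v \cdot v}$ and $\partial L/\partial r_j = (\partial n/\partial r_j)(r) \sqrt{v \cdot v}$; here the hypothesis $v(s) > 0$ is essential because $<$ in $\rti$ means the difference is invertible (see the definition after Lem.~\ref{lem:mayer}), so $1/v(s) \in \rti$ is a well-defined GSF along the path by Lem.~\ref{lem:invDense} and the fact that the reciprocal of a GSF into $\rti_{>0}$ is again a GSF. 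Substituting into \eqref{eq:Lagr} yields the eikonal equation \eqref{eq:Eikonal}, which is conclusion (a).

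For conclusion (b), I would then cross-multiply \eqref{eq:Eikonal} by the constant vector $\vec{k}$: since $\nabla n(r(s)) \parallel \vec{k}$ and $|\vec{k}|=1$, the right-hand side becomes $\vec{k} \times \nabla n(r(s)) \cdot v(s) = 0$, and by the Leibniz rule pulled inside the $s$-derivative (constants commute with $\frac{\diff{}}{\diff s}$), the left-hand side becomes $\frac{\diff{}}{\diff s}\bigl(\vec{k} \times n(r(s)) \vec{v}(s)/v(s)\bigr) = 0$. By the uniqueness of primitives Thm.~\ref{thm:uniquenessOfPrimitives} (applied componentwise on $[0,1] \fcmp \rti$, which is functionally compact by Cor.~\ref{cor:intervalsFunctCmpt}), this vector is constant in $s$. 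Taking magnitudes and using $|\vec{v}(s)/v(s)| = 1$, together with the geometric identity $|\vec{k} \times \vec{u}| = \sin\phi$ for a unit vector making angle $\phi$ with $\vec{k}$, produces $n(r(s)) \sin\phi(s) = \mathrm{const}$.

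The main obstacle I anticipate is the passage from the weak extremal condition \eqref{eq:derT} to the Euler--Lagrange equations \eqref{eq:Lagr}: the authors cite this step to \cite{LeLuGi17,FGBL}, but it rests on a fundamental lemma of the calculus of variations in the GSF setting, which in the classical setting uses bump functions in a crucial way. Within GSF, this needs $\gsf_{\text{bd}}(0,0)$ to contain enough variations $h$ to localize, and the argument must be compatible with the non-Archimedean sharp topology (so that, e.g., the argument cannot be defeated by infinitesimal or infinite derivative blow-ups of $h$). A minor but related subtlety is that the uniqueness-of-primitives step requires invertibility of $v(s)$ throughout $[0,1]$ (not just $v(s) \geq 0$), which is built into the hypothesis $v(s) > 0$ but should be used explicitly when dividing by $v$ and when asserting that $\vec{v}/v$ is a GSF.
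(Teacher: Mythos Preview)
Your proposal is correct and follows essentially the same route as the paper: the argument preceding the theorem statement already carries out exactly these steps---pass from weak extremality to the Euler--Lagrange equations via \cite{LeLuGi17}, compute $\partial L/\partial v_j$ and $\partial L/\partial r_j$, substitute to obtain the eikonal equation, cross with $\vec{k}$ using $\nabla n(r)\parallel\vec{k}$, conclude constancy from vanishing derivative, and take magnitudes. Your identification of the main external input (the GSF fundamental lemma / Euler--Lagrange equivalence from \cite{LeLuGi17,FGBL}) and of the role of $v(s)>0$ for invertibility is accurate and matches the paper's implicit reliance on these points.
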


\subsection{Finite and infinite step potential}

Models of quantum mechanics such as the potential well or the step
potential with finite or infinite walls are clear and simple examples
showing features of various quantum mechanical effects, see e.g.~\cite{CoDiLa}.
However, the mathematics of such models is not very clear sometimes,
see again e.g.~\cite[pag.~34-40,~pag.~68]{CoDiLa} and authors' comments
about mathematical rigour. Once again, in this section we see how
the formalism of GSF theory allows one to completely recover a mathematically
and physically clear proof by formalizing the intuitive steps of \cite[pag.~68]{CoDiLa}.
We consider the step-potential problem, where the high of the potential
can be any finite or infinite generalized number; a similar approach
can be used for the rectangular potential wells.

In the following, we write $x\ll0$ if $\exists r\in\R_{<0}:\ x<r$,
and similarly for $x\gg0$, and we simply say that $x$ \emph{is far
from $0$}. The step function potential for the one-dimensional stationary
Schrodinger equation is a GSF $U\in\gsf(\rti,\rti)$ such that

\begin{equation}
U(x)=\begin{cases}
0 & x\ll0,\\
U_{0} & x\gg0,
\end{cases}\label{eq:step_potential}
\end{equation}
where $U_{0}\in\rti_{>0}$ is an arbitrary generalized number (finite
or infinite). For example, $U(x)=H(x)\cdot U_{0}$ satisfies these
conditions. However, as stated in \cite[pag.~34]{CoDiLa}, this is
actually an idealized model of the potential, and we cannot say it
is a physically meaningful model for infinitesimal $x\approx0$ (similarly
to what we have already seen e.g.~in Sec.~\ref{subsec:Singular-variable-length}
for the singular variable length pendulum).

The system satisfies the stationary Schrodinger's equation 
\begin{equation}
\left[-\frac{\hbar^{2}}{2m}\frac{\diff{}^{2}}{\diff x^{2}}+U(x)\right]\psi(x)=E\psi(x),\label{eq:Schrodinger-stationary}
\end{equation}
where $\hbar$ is the Planck's constant, $m\in\R_{>0}$ the mass of
the particle, $E$ the energy, and $\psi(x)$ the wave function. Using
Thm.~\ref{thm:linearODE}, we can state that there exists a $\psi\in\gsf(\rti,\rti)$
satisfying \eqref{eq:Schrodinger-stationary}. Repeating exactly the
usual calculations, for $x\ll0$ we have:
\begin{equation}
\psi(x)=\frac{1}{\sqrt{k_{1}}}\left(A_{1}e^{ik_{1}x}+A_{2}e^{-ik_{1}x}\right),\label{eq:psiLeft}
\end{equation}
where $k_{1}:=\sqrt{2mE/\hbar^{2}}$ and $A_{1}$, $A_{2}\in\rti$
are undefined constants. For $x\gg0$, we have:

\begin{equation}
\psi(x)=\frac{1}{\sqrt{k_{2}}}\left(B_{1}e^{ik_{2}x}+B_{2}e^{-ik_{2}x}\right),\label{eq:psiRight}
\end{equation}
where $k_{2}=\sqrt{2m(E-V_{0})/\hbar^{2}},$ $B_{1}$, $B_{2}\in\rti$
are undefined constants. As stated in \cite[pag.~68]{CoDiLa}, in
order to find these constants, we need some mathematically careful
steps to justify the corresponding initial conditions. Take any \emph{standard
real number} $\eta\in\R_{>0}$ and integrate \eqref{eq:Schrodinger-stationary}
on $[-\eta,\eta]\subseteq\rti$ to get
\begin{equation}
\frac{\diff{\psi}}{\diff x}(\eta)-\frac{\diff{\psi}}{\diff x}(-\eta)=\frac{2m}{\hbar^{2}}\int_{-\eta}^{\eta}\left[U(x)-E\right]\psi(x)\,\diff x.\label{eq:diffDer}
\end{equation}
As in \cite[pag.~68]{CoDiLa}, we assume that
\begin{align}
U(x)-E\text{ is finite for all finite }x\in\rti\label{eq:finite1}\\
\frac{\diff{\psi}}{\diff x}(\eta)\text{ is finite for all }\eta\in\R_{>0} & .\label{eq:finite2}
\end{align}
From \eqref{eq:diffDer} and the first of these assumptions, we obtain
\[
\left|\frac{\diff{\psi}}{\diff x}(\eta)-\frac{\diff{\psi}}{\diff x}(-\eta)\right|\le\frac{4m}{\hbar^{2}}\cdot C\cdot\eta
\]
for some $C\in\R_{>0}$ (coming from \eqref{eq:finite1}), i.e.
\begin{equation}
\lim_{\substack{\eta\to0^{+}\\
\eta\in\R_{>0}
}
}\left|\frac{\diff{\psi}}{\diff x}(\eta)-\frac{\diff{\psi}}{\diff x}(-\eta)\right|=0.\label{eq:FermatLimDer}
\end{equation}
Similarly, from \eqref{eq:finite2} and the fundamental theorem of
calculus for GSF Thm.~\ref{thm:intRules}.\ref{enu:fundamental},
we have
\[
\left|\psi(\eta)-\psi(0)\right|=\left|\int_{0}^{\eta}\frac{\diff{\psi}}{\diff x}(x)\,\diff x\right|\le\bar{C}\cdot\eta,
\]
for some $\bar{C}\in\R_{>0}$ (coming from \eqref{eq:finite2}), and
hence
\begin{equation}
\lim_{\substack{\eta\to0^{+}\\
\eta\in\R_{>0}
}
}\left|\psi(\eta)-\psi(0)\right|=0.\label{eq:FermaLimFnc}
\end{equation}
 Recall that from Thm.~\ref{thm:GSF-continuity}.\ref{enu:GSF-cont}
and Thm.~\ref{thm:FR-forGSF} it directly follows that both $\psi(x)$
and its derivative $\frac{\diff{\psi}}{\diff x}(x)$ are GSF, and
hence they are continuous in the sharp topology at each point $x\in\rti$.
Stated explicitly at $x=0$, this means that
\begin{align*}
\exists\lim_{\substack{x\to0\\
x\in\,\rti
}
}\frac{\diff{\psi}}{\diff x}(x) & =\frac{\diff{\psi}}{\diff x}(0),\\
\exists\lim_{\substack{x\to0\\
x\in\,\rti
}
}\psi(x) & =\psi(0),
\end{align*}
and these are different than \eqref{eq:FermatLimDer} and \eqref{eq:FermaLimFnc},
where $\eta\in\R_{>0}$. Indeed, balls $B_{\eta}(c)\subseteq\rti$,
for radii $\eta\in\R_{>0}$, generate a different topology on $\rti$
(called \emph{Fermat topology}, see e.g.~\cite{Gio-Kun-Ver15,Gio-Kun-Ver19}).
See Fig.~\ref{fig:The-fist-derivative-and-second-derivatives} for
an intuitive diagram of the solution $\psi$ in an infinitesimal neighborhood
of $x=0$: whereas $\frac{\diff{\psi}}{\diff x}(\eta)$ is continuous
for $\eta\to0$, $\eta\in\R_{>0}$, it is well-known (see \cite{CoDiLa})
that the same property does not hold for the second derivative. In
Fig.~\ref{fig:The-fist-derivative-and-second-derivatives}, the green
lines represents the solution for $x\ll0$ or $x\gg0$, and the blue
one the GSF function $\psi=[\psi_{\eps}(-)]$ (we actually represented
$\psi_{\eps}$ for $\eps$ sufficiently small); we therefore have
to think as infinitesimal the differences between blue and green lines,
and hence as infinite the second derivative at $x=0$.

From \eqref{eq:psiLeft}, \eqref{eq:psiRight} and \eqref{eq:FermatLimDer},
\eqref{eq:FermaLimFnc} we obtain that the constants are uniquely
determined by the system

\begin{equation}
\begin{cases}
\left(A_{1}+A_{2}\right)=\left(B_{1}+B_{2}\right)\\
k_{1}\left(A_{1}-A_{2}\right)=k_{2}\left(B_{1}-B_{2}\right)
\end{cases}
\end{equation}

\begin{figure}
\includegraphics[scale=0.27]{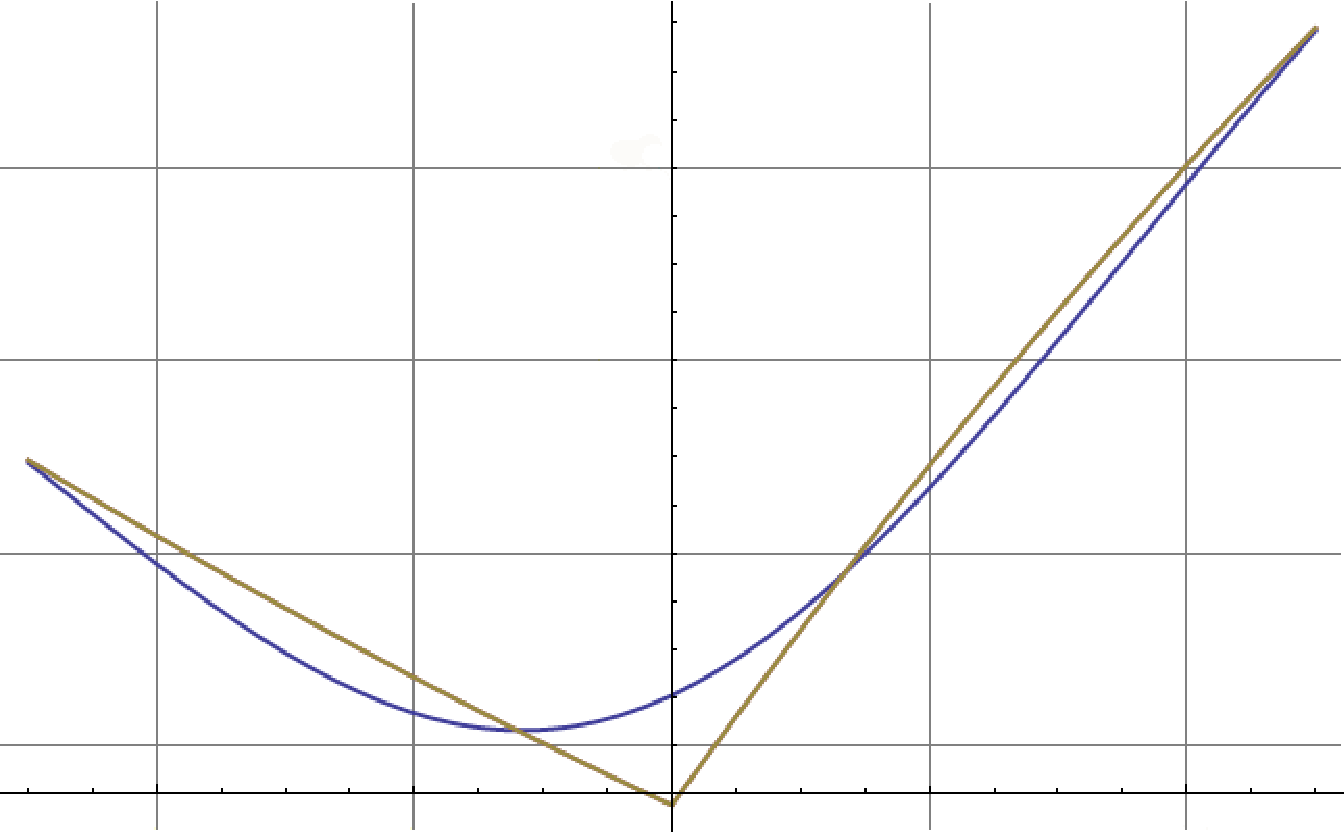}\includegraphics[scale=0.27]{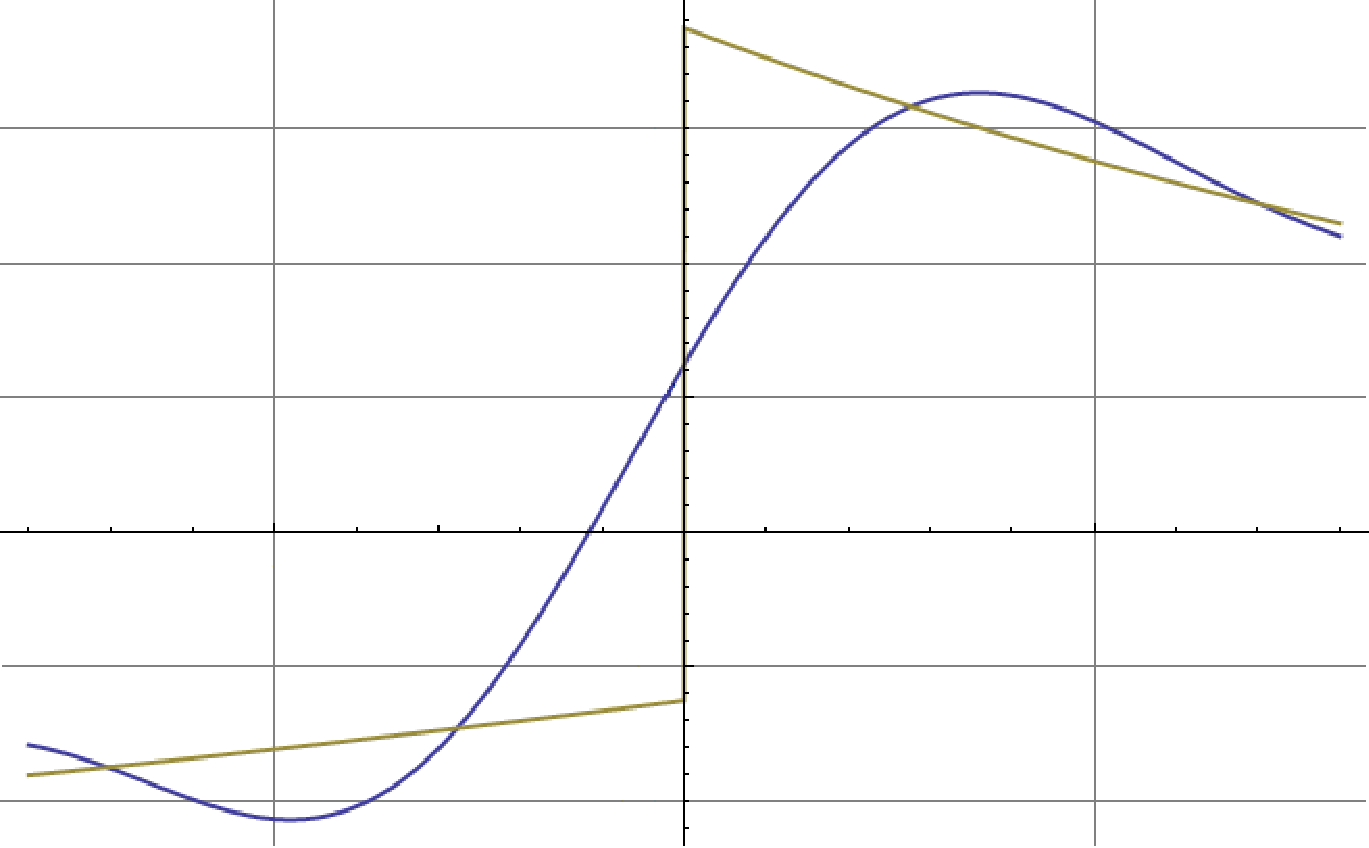}\caption{\label{fig:The-fist-derivative-and-second-derivatives}The fist derivative
(left) and the second derivative (right) of the wave function $\psi$
in an infinitesimal neighborhood of $0$ (blue lines). Green lines
are the same derivatives for $x\ll0$ and $x\gg0$.}
\end{figure}

\subsection{Heisenberg uncertainty principle}

We close this section of applications by mentioning how we can use
infinitesimal and infinite numbers and GSF theory to fully justify
the most frequent example of the Heisenberg uncertainty principle.

We do not have sufficient space here to present a complete list of
result about the so called \emph{hyperfinite Fourier transform} (see
\cite{MT-Gio22}), but we can surely present the main ideas:
\begin{enumerate}
\item Since in the ring $\rti$ we have infinite numbers $k\in\rti_{>0}$,
e.g.~$k=\diff\rho^{-1}$, and since every GSF is always integrable
on a functionally compact set of the form $K:=[-k,k]^{n}\subseteq\rti^{n}$
(see Thm.~\ref{thm:muMeasurableAndIntegral}), we can simply define
the \emph{hyperfinite Fourier transform} $\mathcal{F}_{k}\left(f\right)$
of any $f\in\gsf(K,\rccrho)$ as
\begin{equation}
\mathcal{F}_{k}\left(f\right)\left(\omega\right):=\intop_{K}f\left(x\right)e^{-ix\cdotp\omega}\,\diff{x}=\intop_{-k}^{k}\,\diff{x_{1}}\ldots\intop_{-k}^{k}f\left(x_{1},\ldots,x_{n}\right)e^{-ix\cdotp\omega}\,\diff{x_{n}}.\label{eq:Hpfinite_FT}
\end{equation}
\item The main feature of this transform is that, despite the fact that
essentially all the usual classical properties of the Fourier transform
can be proved for $\mathcal{F}_{k}\left(f\right)$, it is well-defined
for \emph{all} GSF $f\in\gsf(K,\rccrho)$, even if they are not of
tempered type. Clearly, this allows one to use the Fourier method
to find non-tempered solutions of differential equations. For example,
let $f\left(x\right)=e^{x}$ for all $\left|x\right|\leq k$, where
$k:=-\log\left(\diff\rho\right)$. The hyperfinite Fourier transform
$\mathcal{F}_{k}$ of $f$ is
\begin{align*}
\mathcal{F}_{k}\left(f\right)\left(\omega\right) & =\frac{1}{1-i\omega}\left(\frac{\diff{\rho}^{i\omega}}{\diff{\rho}}-\frac{\diff{\rho}}{\diff{\rho}^{i\omega}}\right)\quad\forall\omega\in\rti.
\end{align*}
Therefore, $\mathcal{F}_{k}(f)(\omega)$ is always an infinite complex
number for all finite numbers $\omega$ and hence the non-Archimedean
language is essential here.
\item The set $\text{supp}\left(f\right):=\overline{\left\{ x\in X\mid\left|f\left(x\right)\right|>0\right\} }$,
where $\overline{\left(\cdot\right)}$ denotes the relative closure
in $X$ with respect to the sharp topology, is called the \emph{support}
of $f$. Let $H\fcmp\rti^{n}$ be a functionally compact set (see
Def.~\ref{def:functCmpt-1}), we say that $f\in\Dgsf\left(H\right)$
if $f\in\gsf(\rti^{n},\rccrho)$ and $\text{supp}\left(f\right)\subseteq H$.
Such an $f$ is called \emph{compactly supported}.
\end{enumerate}
We can now state the uncertainty principle (see \cite{MTAG} for the
proof):
\begin{thm}
\label{thm:uncertainty}If $\psi\in\Dgsf(\rti)$, then $\left(\intop x^{2}\left|\psi\left(x\right)\right|^{2}\,\diff{x}\right)\left(\intop\omega^{2}\left|\mathcal{F}\left(\psi\right)\left(\omega\right)\right|^{2}\,\diff{\omega}\right)\geq\frac{1}{4}\Vert\psi\Vert_{2}\Vert\mathcal{F}(\psi)\Vert_{2}$.
\end{thm}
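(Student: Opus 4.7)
The plan is to transcribe the classical Heisenberg argument (integration by parts, Cauchy–Schwarz, Plancherel) into the GSF setting, exploiting the fact that $\psi \in \Dgsf(\rti)$ is compactly supported in some functionally compact $H \fcmp \rti$. Choose $k \in \rti_{>0}$ so large that $H \subseteq [-k,k]$; then $\psi$ and all its derivatives vanish identically on $\rti \setminus [-k,k]$, so every integral over $\rti$ that appears in the statement is in fact an integral over the functionally compact set $[-k,k]$ and can be manipulated by Thm.~\ref{thm:intRules}.

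First I would compute, using integration by parts (Thm.~\ref{thm:intRules}.\ref{enu:intByParts}) applied to $f(x) := x$ and $g(x) := |\psi(x)|^{2} = \psi(x)\overline{\psi(x)}$, the identity
\[
\int_{-k}^{k} |\psi(x)|^{2}\,\diff{x} \;=\; -\int_{-k}^{k} x\bigl(\psi'(x)\overline{\psi(x)} + \psi(x)\overline{\psi'(x)}\bigr)\,\diff{x} \;=\; -2\,\mathrm{Re}\!\int_{-k}^{k} x\,\overline{\psi(x)}\,\psi'(x)\,\diff{x},
\]
where the boundary term $[x|\psi|^{2}]_{-k}^{k}$ vanishes since $\pm k \notin \mathrm{supp}(\psi)$. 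Hence $\|\psi\|_{2}^{2} \le 2 \bigl|\!\int x\,\overline{\psi}\,\psi'\,\diff x\bigr|$.

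Next I would apply the Cauchy–Schwarz inequality for GSF-integrals valued in $\rccrho$, which follows in the usual way from the positivity of $\int |\overline{\psi}(x\psi' - t\psi')|^{2}\,\diff x$ for generalized scalars $t$, combined with Lem.~\ref{lem:invDense} to reduce to the invertible case when extracting the discriminant. This yields
\[
\Bigl|\!\int x\,\overline{\psi}\,\psi'\,\diff{x}\Bigr|^{2} \;\le\; \Bigl(\!\int x^{2}|\psi|^{2}\,\diff{x}\Bigr)\Bigl(\!\int |\psi'|^{2}\,\diff{x}\Bigr),
\]
so that $\tfrac{1}{4}\|\psi\|_{2}^{4} \le \bigl(\!\int x^{2}|\psi|^{2}\bigr)\bigl(\!\int |\psi'|^{2}\bigr)$.

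Finally I would convert $\int |\psi'|^{2}\,\diff{x}$ into a frequency-side integral by the Plancherel/Parseval theorem for the hyperfinite Fourier transform (Sec.~7, ref.~\cite{MT-Gio22}) together with the differentiation rule $\mathcal{F}_{k}(\psi')(\omega) = i\omega\,\mathcal{F}_{k}(\psi)(\omega)$. The latter is proved by a single integration by parts on $[-k,k]$; the boundary terms $[e^{-i\omega x}\psi(x)]_{-k}^{k}$ vanish because $\psi(\pm k) = 0$, which is exactly where compact support of $\psi$ is crucial (and where the hyperfinite Fourier transform, as opposed to an attempted improper one, is perfectly suited). Plancherel then gives $\int |\psi'|^{2}\,\diff{x} = C\!\int \omega^{2}|\mathcal{F}\psi|^{2}\,\diff\omega$ and $\|\psi\|_{2}^{2}\|\mathcal{F}\psi\|_{2}^{2} = C^{-1}\|\psi\|_{2}^{4}$ for the appropriate normalization constant $C$; substituting into the Cauchy–Schwarz bound yields the stated inequality.

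The main obstacle I foresee is not the algebra but justifying the two supporting results in the non-Archimedean framework: (i) the derivation rule for $\mathcal{F}_{k}$, which must be stated and proved so that boundary contributions at $\pm k$ are exactly zero (not merely infinitesimal) thanks to compact support of $\psi$ in the sharp topology; and (ii) a genuine Cauchy–Schwarz inequality over $\rccrho$, where one must be careful because $\rccrho$ is not totally ordered and $\int |\overline{\psi}\psi' - t|\psi'|^{2}|^{2}\,\diff x \geq 0$ requires the language of subpoints (Sec.~\ref{subsec:subpt}) or the density of invertibles (Lem.~\ref{lem:invDense}) to optimize over $t$ and reach the quadratic discriminant bound.
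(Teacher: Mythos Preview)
The paper does not actually prove this theorem: it only states the result and points to an external reference (\cite{MTAG}) for the proof, so there is no in-paper argument to compare against.

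Your outline is the standard Heisenberg proof transplanted into the GSF/hyperfinite setting, and it is the natural route. The two obstacles you isolate are exactly the right ones: a Cauchy--Schwarz inequality for $\rccrho$-valued integrals (where lack of total order and of field structure must be handled via subpoints or density of invertibles), and the derivation rule $\mathcal{F}_{k}(\psi')(\omega)=i\omega\,\mathcal{F}_{k}(\psi)(\omega)$ with \emph{exact}, not merely infinitesimal, vanishing of boundary terms at $\pm k$. For the latter you should be explicit that $\psi\in\Dgsf(\rti)$ gives $\mathrm{supp}(\psi)\subseteq H\fcmp\rti$ and that you can choose $k$ with $H\subseteq[-k',k']\subseteq\mathrm{int}[-k,k]$ for some $k'<k$, so that $\psi$ and all its derivatives are identically $0$ on a full sharp neighborhood of $\pm k$; merely having $\pm k\notin\mathrm{supp}(\psi)$ is not by itself enough in this topology.

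One warning: the right-hand side as printed, $\tfrac14\Vert\psi\Vert_{2}\Vert\mathcal{F}(\psi)\Vert_{2}$, is dimensionally inconsistent with what your argument (and the classical one) actually produces, namely $\tfrac14\Vert\psi\Vert_{2}^{2}\Vert\mathcal{F}(\psi)\Vert_{2}^{2}$. This is almost certainly a misprint in the statement; do not try to bend your proof to match the unsquared version.
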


On the contrary with respect the classical formulation in $L^{2}(\R)$
of the uncertainty principle, in Thm.~\ref{thm:uncertainty} we can
e.g.~consider $\psi=\delta\in\Dgsf(\rti)$, and we have
\[
\int x^{2}\delta(x)^{2}\,\diff{x}=\left[\int_{-1}^{1}x^{2}b_{\eps}^{2}\mu_{\eps}(b_{\eps}x)^{2}\,\diff{x}\right]
\]
where $\mu(x)=[\mu_{\eps}(x_{\eps})]$ is a Colombeau mollifier and
$b=[b_{\eps}]\in\rti$ satisfies $b\ge\diff{\rho}^{-a}$ for some
$a\in\R_{>0}$ (see embedding Thm.~\ref{thm:embeddingD'}). Since
normalizing the function $\eps\mapsto b_{\eps}^{2}\mu_{\eps}(b_{\eps}x)^{2}$
we get an approximate identity, we have $\lim_{\eps\to0^{+}}\int_{-1}^{1}x^{2}b_{\eps}^{2}\mu_{\eps}(b_{\eps}x)^{2}\,\diff{x}=0$,
and hence $\int x^{2}\delta(x)^{2}\,\diff{x}\approx0$ is an infinitesimal.
The uncertainty principle Thm.~\ref{thm:uncertainty} implies that
it is an invertible infinitesimal. Considering the HFT $\mathbb{1}:=\mathcal{F}(\delta)$,
we have
\[
\int\omega^{2}\mathbb{1}(\omega)^{2}\,\diff{\omega}\ge\int_{-r}^{r}\omega^{2}\,\diff{\omega}=2\frac{r^{3}}{3}\quad\forall r\in\R_{>0}.
\]
Consequently, $\int\omega^{2}\mathbb{1}(\omega)^{2}\,\diff{\omega}$
is an infinite number.

\section{Conclusions}

In all the presented examples, the model describes some kind of singular
dynamical system including abrupt changes, impulsive stimuli, nonlinear
discontinuities, infinite barriers, etc. This kind of problems are
ubiquitous in applied mathematics, essentially because the real world
is made of different bodies, having boundaries and frequently interacting
in a non-smooth way. In constructing a model for these systems is
hence important to achieve mathematical simplicity but, at the same
time, a physical reasonably high faithfulness of description.

On the one hand, the use of infinitesimal and infinite numbers has
always been a method to simplify a given problem. Unfortunately, frequently
this technique remains only informal, using ``sufficiently small
quantities'' or ``taking the limit for $\eps\to0$'', and then
transforming approximated equalities into true ones. As motivational
thoughts, these remain wonderful methods. In our examples, we tried
to show that a corresponding simple and intuitively clear mathematical
theory of these infinitesimal and infinite quantities is possible.
Surprisingly, this theory allows one to arrive at very similar, but
clear and rigorous, thoughts. Therefore, the risk of doing mistakes
is quite lower, and its teaching is also way more clear.

On the other hand, physical systems with singularities are naturally
represented by non-smooth functions. We presented a theory that allows
one to deal with such functions in a way, as if they were smooth,
with a lot of properties that GSF share with ordinary smooth functions.
This is as generalized functions are still informally used in physics
and engineering, despite the fact that Schwartz theory of distributions
is quite old nowadays. Using GSF theory, we can therefore state that
the searched mathematical simplicity in models of singular systems,
possibly with a clear use of infinitesimal or infinite quantities,
is really achievable.

\subsubsection*{Declarations}

\subsubsection*{Author contributions}

A.~Bryzgalov developed Sec.~\ref{sec:Examples-of-applications}.

\noindent K.~Islami developed Sec.~\ref{sec:Formal-deductions}
and some examples in Sec.~\ref{sec:Examples-of-applications}.

\noindent P.~Giordano coordinated the development and formally checked
and fixed all the results.

\subsubsection*{Conflict of interest}

The authors declare that they have no relevant financial or non-financial
interests to disclose.

\subsubsection*{Ethical approval}

We certify that this manuscript is original and has not been published
and will not be submitted elsewhere for publication while being considered
by Nonlinear Dynamics. And the study is not split up into several
parts to increase the quantity of submissions and submitted to various
journals or to one journal over time. No data have been fabricated
or manipulated (including images) to support our conclusions. No data,
text, or theories by others are presented as if they were our own.
And authors whose names appear on the submission have contributed
sufficiently to the scientific work and therefore share collective
responsibility and accountability for the results. Finally, this article
does not contain any studies with human participants or animals performed
by any of the authors.

\end{document}